\newfont{\fsc}{eusm10 scaled 1100}
\newtheorem{theo}{Theorem}
\newtheorem{defi}[theo]{Definition}
\newtheorem{prop}[theo]{Proposition}
\newtheorem{lemm}[theo]{Lemma}
\newtheorem{coro}[theo]{Corollary}
\newtheorem{obs}[theo]{Observation}
\newtheorem{exam}[theo]{Example}
\newtheorem{rema}[theo]{Remark}
\newenvironment{definition}{\begin{defi} \rm }{\end{defi}}
\newenvironment{theorem}{\begin{theo} \rm }{\end{theo}}
\newenvironment{proposition}{\begin{prop} \rm }{\end{prop}}
\newenvironment{lemma}{\begin{lemm} \rm }{\end{lemm}}
\newenvironment{corollary}{\begin{coro} \rm }{\end{coro}}
\newenvironment{observation}{\begin{obs} \rm }{\end{obs}}
\newenvironment{example}{\begin{exam} \rm }{\end{exam}}
\newenvironment{proof}{\begin{trivlist} \item[\hspace{\labelsep}\bf Proof:]}{\hfill $\Box$\end{trivlist}}
\newcommand{\brel}{\mathrel{\mathcal{B}}}
\newcommand{\rrel}{\mathrel{\mathcal{R}}}
\newcommand{\AF}{\mbox{$\cal A\!\!F$}}
\newcommand{\AFO}{\mathcal{A\!\!F\!\!O}}
\DeclareMathAlphabet{\mathbbm}{U}{bbm}{m}{n}
\newcommand{\N}{\mathbbm{N}}
\newcommand{\G}{{\rm G}}
\renewcommand{\H}{{\rm H}}
\newcommand{\K}{{\rm K}}
\title{Divide and Congruence III: From Decomposition of
  Modal Formulas to Preservation of Stability and Divergence}
\author{Wan Fokkink%
\institute{Department of Computer Science\\
Vrije University Amsterdam, The Netherlands}
\email{w.j.fokkink@vu.nl}
\and
Rob van Glabbeek
\institute{Data61, CSIRO, Sydney, Australia}
\institute{School of Computer Science and Engineering\\
University of New South Wales, Sydney, Australia}
\email{rvg@cs.stanford.edu}
   \and Bas Luttik
\institute{Department of Mathematics and Computer Science,
            Technische Universiteit Eindhoven,
            The Netherlands}
\email{s.p.luttik@tue.nl}}
\begin{document}
\maketitle

\newcommand{\plat}[1]{\raisebox{0pt}[0pt][0pt]{#1}} 
\newcommand{\diam}[1]{\langle#1\rangle}
\newcommand{\eps}{\langle\epsilon\rangle}
\newcommand{\trans}[1]{\stackrel{#1}{\longrightarrow}}
\newcommand{\ntrans}[1]{\mathord{\hspace{.4em}\not\hspace{-.4em}\trans{#1\;}}}
\newcommand{\epsarrow}{\stackrel{\epsilon}{\Longrightarrow}}
\newcommand{\bis}[2][]{\mathrel{\,\raisebox{.3ex}{$\underline{\makebox[.7em]{$\leftrightarrow$}}$}\,_{#2}^{\,#1}}}
\newcommand{\notbis}[2][]{\,\not\mathrel{\raisebox{.3ex}{$\underline{\makebox[.7em]{$\leftrightarrow$}}$}\,_{#2}^{#1}\,}}
\newcommand{\var}{{\it var}} \newcommand{\ar}{{\it ar}}
\newcommand{\IO}[1]{\mathbb{O}_{#1}} \renewcommand{\phi}{\varphi}
\newcommand{\qed}{\hfill$\Box$} \newcommand{\Red}[1]{\underline{#1}}
\newcounter{saveenumi} \newcommand{\aL}{\aleph\mathord\cap\Lambda}

\begin{abstract}
  In two earlier papers we derived congruence formats with regard to
  transition system specifications for weak semantics on the basis of
  a decomposition method for modal formulas. The idea is that a
  congruence format for a semantics must ensure that the formulas in
  the modal characterisation of this semantics are always decomposed
  into formulas that are again in this modal characterisation. The
  stability and divergence requirements that are imposed on many of
  the known weak semantics have so far been outside the realm of this
  method. Stability refers to the absence of a $\tau$-transition. We
  show, using the decomposition method, how congruence formats can be
  relaxed for weak semantics that are stability-respecting. This
  relaxation for instance brings the priority operator within the
  range of the stability-respecting branching bisimulation
  format. Divergence, which refers to the presence of an infinite
  sequence of $\tau$-transitions, escapes the inductive decomposition
  method. We circumvent this problem by proving that a congruence
  format for a stability-respecting weak semantics is also a
  congruence format for its divergence-preserving counterpart.
\end{abstract}

\vspace{4mm}

\noindent {\bf 1998 ACM Subject Classification:} F.3.2 Operational
semantics, F.4.1 Modal logic\vspace{2mm}

\noindent {\bf Keywords:} Structural Operational Semantics, Weak
Semantics, Modal Logic

\section{Introduction}\label{sec:introduction}

Structural operational semantics \cite{Plo04} provides specification
languages with an interpretation in terms of a mathematical notion
of behaviour. It generates a labelled transition
system, in which states are the closed terms over a (single-sorted,
first-order) signature, and transitions between states carry
labels. The transitions between states are obtained from a transition
system specification (TSS), which consists of a set of proof rules
called transition rules. States in labelled transition systems can be
identified by a wide range of behavioural equivalences, based on e.g.\
branching structure or decorated versions of execution sequences. {\sc
  Van Glabbeek} \cite{vGl93} classified so-called weak semantics,
which take into account the internal action $\tau$. A significant
number of the weak semantics based on a bisimulation relation carry a
stability or divergence requirement. Stability refers to the absence
of a $\tau$-transition and divergence to the presence of an infinite
sequence of $\tau$-transitions.

In general a behavioural equivalence induced by a TSS is not
guaranteed to be a congruence, i.e.\ the equivalence class of a term
$f(p_1,\ldots,p_n)$ need not be determined by $f$ and the equivalence
classes of its arguments $p_1,\ldots,p_n$. Being a congruence is an
important property, for instance in order to fit the equivalence into
an axiomatic framework. Respecting stability or preserving divergence
sometimes needs to be imposed in order to obtain a congruence
relation, for example in case of the priority operator \cite{BBK86}.

Behavioural equivalences can be characterised in terms of the
observations that an experimenter could make during a session with a
process. Modal logic captures such observations. A modal
characterisation of an equivalence on processes consists of a class
$C$ of modal formulas such that two processes are equivalent if and
only if they satisfy the same formulas in $C$. For instance,
Hennessy-Milner logic \cite{HM85} constitutes a modal characterisation
of (strong) bisimilarity. A cornerstone for the current paper is the
work in \cite{BFvG04} to decompose formulas from Hennessy-Milner logic
with respect to a structural operational semantics in the ntyft format
\cite{Gro93} without lookahead.  Here the \emph{decomposition} of a
modal formula $\phi$ w.r.t.\ a term $t=f(p_1,\ldots,p_n)$ is a
selection of $n$-tuples of modal formulas, one of which needs to be
satisfied by the processes $p_i$ in order for $t$ to satisfy $\phi$.
Based on this method, congruence formats can be generated for process
semantics from their modal characterisation, to ensure that the
equivalence is a congruence. Such formats help to avoid repetitive
congruence proofs and obtain insight into the congruence property. Key
idea is that congruence is ensured if formulas from the modal
characterisation of a semantics under consideration are always
decomposed into formulas that are again in this modal
characterisation. This approach was extended to weak semantics in
\cite{FvGdW12,FvG17}. It crosses the borders between process algebra,
structural operational semantics, process semantics, and modal logic.

Here we expand the latter work to weak semantics that respect
stability or preserve divergence.  We focus on \emph{branching
  bisimilarity} and \emph{rooted branching bisimilarity} \cite{GlWe96}
and consider for each a stability-respecting and two
divergence-preserving variants.  Divergence-preserving branching
bisimilarity \cite{GlWe96} is the coarsest congruence relation for the
parallel composition operator that only equates processes satisfying
the same formulas from the well-known temporal logic CTL$^\ast$ minus
the next-time operator $X$ \cite{GLT09b}.  With regard to stability
the expansion is relatively straightforward: we extend the modal
characterisation of the semantics with one clause to capture that a
semantics is stability-respecting, and study the decomposition of this
additional clause. Next we show how the congruence formats for
branching bisimilarity and rooted branching bisimilarity from
\cite{FvGdW12} can be relaxed, owing to the extended modal
characterisation for stability-respecting branching
bisimilarity. Notably, the transition rules for the priority operator
are within the more relaxed formats.

The divergence preservation property escapes the inductive
decomposition method, as it concerns an infinite sequence of
$\tau$-transitions. We overcome this problem by presenting a general
framework for lifting congruence formats from some weak semantics
$\approx$ to a finer weak semantics $\sim$, with the aim to turn
congruence formats for stability-respecting weak semantics into
congruence formats for their divergence-preserving counterparts.
(For the curious reader, it
roughly works as follows. Consider a TSS $P$ in a congruence format
for $\approx$, and for example a unary function symbol $f$. A new TSS
is created out of $P$, in which many occurrences of $\tau$ in
transition rules are suppressed by renaming them into some fresh
action name. Furthermore, processes are provided with what we call an
oracle transition that reveals some pertinent information on the
behaviour of the process, such as whether it can diverge.  In this way
we ensure that $\approx$ and $\sim$ coincide on the transformed
TSS\@. For each closed term $p$ of $P$ we moreover introduce a
constant $\hat p$ in the transformed TSS such that $p \sim q$ implies
$\hat p \sim \hat q$, and so $\hat p \approx \hat q$. We take care
that this entire transformation preserves the congruence format for
$\approx$. So $\hat p \approx \hat q$ implies
$f(\hat p) \approx f(\hat q)$, and hence, since $\approx$ and $\sim$
coincide on the transformed TSS,
$f(\hat p) \sim f(\hat q)$.\linebreak[4] Finally we cast $f(\hat p)$
and $f(\hat q)$ back to processes strongly bisimilar to $f(p)$ and
$f(q)$ in the original TSS $P$, taking care that $\sim$ is
preserved. Thus we have gone full circle: $p\sim q$ implies
$f(p)\sim f(q)$. This implies that the congruence format for $\approx$
is also a congruence format for $\sim$.) We show four instances where
this method can be applied.
In two cases $\approx$ is
stability-respecting branching bisimilarity and in two cases rooted
stability-respecting branching bisimilarity, while the definition of
$\sim$ is obtained from $\approx$ by replacing respect for stability
by preservation of one of the two aforementioned forms of divergence. Thus it can be
concluded that the congruence format for stability-respecting
branching bisimilarity is also applicable to divergence-preserving as
well as weakly divergence-preserving branching bisimilarity; and
likewise for the rooted counterparts of these semantics.

In \cite{BLY17}, it was argued that an adaptation of the classical operational
semantics of sequential composition in a process theory with the empty
process leads to a smoother integration of classical automata theory.
A detailed proof is given that rooted divergence-preserving branching
bisimilarity is a congruence for the resulting operator---here called
sequencing---, but also it is observed that this property can be inferred
from our congruence format.

An extended abstract of the present paper appeared as \cite{FvGL17}. This version includes elaborate proofs of the results,
provides additional explanations, and discusses sequencing as an extra example application.

\section{Preliminaries}

This section recalls the basic notions of labelled transition systems
and weak semantics and defines stability-respecting and
divergence-preserving branching bisimilarity
(Sect.~\ref{sec:equivalences_terms}) as well as a modal
characterisation of stability-respecting branching bisimilarity
(Sect.~\ref{sec:modal}). It also presents a brief introduction to
structural operational semantics (Sect.~\ref{sec:sos}) and recalls
some syntactic restrictions on transition rules
(Sect.~\ref{sec:ntytt}).

\subsection{Stability-respecting and divergence-preserving branching
  bisimilarity}\label{sec:equivalences_terms}

A \emph{labelled transition system (LTS)} is a triple
$(\mathbb{P},{\it Act},\rightarrow)$, with $\mathbb{P}$ a set of
\emph{processes}, $Act$ a set of \emph{actions}, and
$\rightarrow\;\subseteq\mathbb{P}\times Act \times\mathbb{P}$.  We
normally let $Act = A \cup\{\tau\}$ where $\tau$ is an \emph{internal
  action} and $A$ some set of external or observable actions not
containing $\tau$.  We write $A_\tau$ for $A\cup\{\tau\}$.  We use
$p,q$ to denote processes, $\alpha,\beta,\gamma$ for elements of
$A_\tau$, and $a,b$ for elements of $A$. We write
\plat{$p\trans\alpha q$} for $(p,\alpha,q)\in{\rightarrow}$,
\plat{$p{\trans\alpha}$} for
\plat{$\exists q\in\mathbb{P}:p\trans\alpha q$}, and
\plat{$p\ntrans\alpha$} for $\neg(p{\trans\alpha})$. Furthermore,
\plat{$\epsarrow$} denotes the transitive-reflexive closure of
\plat{$\trans{\tau}$}. If \plat{$p\ntrans\tau$}, then we say that
  $p$ is \emph{stable}.

\begin{definition}\label{def:bb}
  Let ${\brel}\subseteq\mathbb{P}\times\mathbb{P}$ be a symmetric
  relation.
  \begin{itemize}
  \item $\brel$ is a \emph{branching bisimulation} if $p\brel q$ and
    \plat{$p\trans{\alpha}p'$} implies that either $\alpha = \tau$ and
    $p'\,\brel\,q$, or\\ \plat{$q\epsarrow q' \trans{\alpha} q''$} for
    some $q'$ and $q''$ with $p\brel q'$ and $p'\brel q''$.
  \item $\brel$ is \emph{stability-respecting} if $p\brel q$ and
    $p \ntrans\tau$ implies that $q \epsarrow q' \ntrans\tau$ for some
    $q'$ with $p \brel q'$.
  \item $\brel$ is \emph{divergence-preserving} if it satisfies the
    following condition:
    \begin{enumerate}\itemsep 0pt
      \renewcommand{\theenumi}{D}
      \renewcommand{\labelenumi}{(\theenumi)}
    \item \label{cnd:rvgdivsim} if $p\brel q$ and there is an infinite
      sequence of processes $(p_k)_{k\in\N}$ such that $p=p_0$,
      $p_k\trans{\tau}p_{k+1}$ and $p_k\brel q$ for all $k\in\N$, then
      there exists an infinite sequence of processes
      $(q_\ell)_{\ell\in\N}$ such that $q=q_0$,
      \plat{$q_\ell\trans{\tau}q_{\ell+1}$} for all $\ell\in\N$, and
      $p_k\brel q_\ell$ for all $k,\ell\in\N$.
    \end{enumerate}
    The definition of a \emph{weakly divergence-preserving} relation
    is obtained by omitting the condition ``and $p_k\brel q_\ell$ for
    all $k,\ell\in\N$''. (The condition ``and $p_k\brel q$ for all
    $k\in\N$'' is then redundant.)
  \end{itemize}
  Processes $p,q$ are {\em branching bisimilar}, denoted $p\bis{b}q$,
  if there exists a branching bisimulation $\brel$ with $p\brel q$.
  They are {\em stability-respecting}, \emph{divergence-preserving} or
  \emph{weakly divergence-preserving} branching bisimilar, denoted
  $p\bis[s]{b}q$, $p\bis[\Delta]{b}q$ or $p\bis[\Delta\top\!]{b}q$, if
  moreover $\brel$ is stability-respecting, divergence-preserving or
  weakly divergence-preserving, respectively.
\end{definition}

\noindent
We have
${\bis{b}}\supset{\bis[s]{b}}\supset{\bis[\Delta\top\!]{b}}\supset{\bis[\Delta]{b}}$.

The relations $\bis{b}$, $\bis[s]{b}$, $\bis[\Delta]{b}$ and
$\bis[\Delta\top\!]{b}$ are equivalences
\cite{Bas96,vGl93,GLT09a}. However, they are not {\em congruences}
with respect to most process algebras from the literature, meaning
that the equivalence class of a process $f(p_1,\ldots,p_n)$, with $f$
an $n$-ary function symbol, is not always determined by the
equivalence classes of its arguments, i.e.\ the processes
$p_1,\ldots,p_n$. Therefore an additional rootedness condition is
imposed.

\begin{definition}\label{def:rbb}
  \emph{Rooted} branching bisimilarity, $\bis{rb}$, is the largest
  symmetric relation on $\mathbb{P}$ such that $p \bis{rb} q$ and
  \plat{$p \trans{\alpha} p'$} implies that
  \plat{$q \trans{\alpha}q '$} for some $q'$ with $p'\bis{b}q'$.
  Likewise, \emph{rooted} stability-respecting, divergence-preserving
  or weakly divergence-preserving branching bisimilarity, denoted by
  $p\bis[s]{rb}q$,\linebreak $p\bis[\Delta]{rb}q$ or
  $p\bis[\Delta\top\!]{rb}q$, is the largest symmetric relation
  ${\cal R}$ on $\mathbb{P}$ such that $p \mathrel{\cal R} q$ and
  \plat{$p \trans{\alpha} p'$} implies that
  \plat{$q \trans{\alpha} q'$} for some $q'$ with $p'\bis[s]{b}q'$,
  $p'\bis[\Delta]{b}q'$ or $p'\bis[\Delta\top\!]{b}q'$, respectively.
\end{definition}

\noindent
Our main aim is to develop congruence formats for stability-respecting
and divergence-preserving semantics.  These congruence formats will
impose syntactic restrictions on the transition rules that are used to
generate the underlying LTS. The congruence formats will be determined
using the characterising modal logics for the semantics.

We will sometimes refer to strong bisimulation semantics, which
ignores the special status of the $\tau$.
\begin{definition}\label{def:strong-b}
  A symmetric relation ${\brel}\subseteq\mathbb{P}\times\mathbb{P}$ is
  a \emph{strong bisimulation} if $p\brel q$ and \plat{$p\trans{a}p'$}
  implies that \plat{$q\trans{a} q'$} for some $q'$ with $p'\brel q'$.
  Processes $p,q$ are {\em strongly bisimilar}, denoted $p\bis{} q$,
  if there exists a strong bisimulation $\brel$ with $p\brel q$.
\end{definition}

\noindent
The various notions of bisimilarity defined above are examples of
so-called behavioural equivalences. For a general formulation of the
results in Sect.~\ref{sec:dpbbcong}, it is convenient to formally
define a notion of behavioural equivalence that includes at least the
examples above. Note that a common feature of their definitions is
that they associate with every LTS
$\G=(\mathbb{P}_\G,{\it Act}_\G,\rightarrow_\G)$ a binary relation
$\sim_\G$. (For instance, in the case of strong bisimilarity, the
relation $\sim_\G$ associated with $\G$ is defined as the binary
relation ${\bis{}}\subseteq\mathbb{P}_\G\times\mathbb{P}_G$ such that
$p\bis{} q$ if (and only if) there exists a strong bisimulation
${\brel}\subseteq\mathbb{P}_\G\times\mathbb{P}_G$ such that
$p\brel{} q$.)  One may, thus, think of a behavioural equivalence as a
family of binary relations indexed by LTSs. It turns out that we need
to impose just one extra condition on such families to arrive at a
suitable formalisation of the notion of behavioural equivalence. The
condition states that the relation associated with the disjoint union
of two LTSs restricted to one of the components coincides with the
relation associated with that component.

\begin{definition}\label{def:disjoint}
  Two LTSs $\G=(\mathbb{P}_\G,{\it Act}_\G,\rightarrow_\G)$ and
  $\H=(\mathbb{P}_{\,\H},{\it Act}_{\,\H},\rightarrow_{\,\H})$ are
  called \emph{disjoint} if
  $\mathbb{P}_\G \cap \mathbb{P}_{\,\H} = \emptyset$. In that case
  $\G\uplus\H$ denotes their union
  $(\mathbb{P}_\G\cup \mathbb{P}_{\,\H},{\it Act}_\G\cup {\it
    Act}_{\,\H},\rightarrow_\G \cup \rightarrow_{\,\H})$.
\end{definition}
A \emph{behavioural equivalence} $\sim$ on LTSs is a family of
equivalence relations $\sim_\G$, one for every LTS $\G$, such that for
each pair of disjoint LTSs
$\G=(\mathbb{P}_\G,{\it Act}_\G,\rightarrow_\G)$ and $\H$ we have
$g \sim_\G g' \Leftrightarrow g \sim_{\G\uplus \H} g'$ for any
$g,g'\in \mathbb{P}_\G$.

The notions of bisimilarity defined above clearly qualify as
behavioural equivalences. Given two behavioural equivalences $\sim$
and $\approx$, we write ${\sim} \subseteq {\approx}$ iff
${\sim_\G} \subseteq {\approx_\G}$ for each LTS $\G$.

\subsection{Modal characterisation}\label{sec:modal}

Modal logic formulas express properties on the behaviour of processes
in an LTS\@. Following \cite{vGl93}, we extend Hennessy-Milner logic
\cite{HM85} with the modal connectives $\eps\phi$ and
$\diam{\hat\tau}\phi$, expressing that a process can perform zero or
more, respectively zero or one, $\tau$-transitions to a process where
$\phi$ holds.

\begin{definition}\label{def:formulas}
  The class $\mathbb{O}$ of \emph{modal formulas} is defined as
  follows, where $I$ ranges over all index sets and $\alpha$ over
  $A_\tau$:
  \[
    \mathbb{O} \hspace*{2cm} \phi ~~::=~~
    {\displaystyle\bigwedge_{i\in I}}\,\phi_i ~~|~~ \neg\phi ~~|~~
    \diam{\alpha}\phi ~~|~~ \eps\phi ~~|~~ \diam{\hat\tau}\phi
  \]
  We use abbreviations $\top$ for the empty conjunction,
  $\phi_1\land\phi_2$ for $\bigwedge_{i\in\{1,2\}}\phi_i$,
  $\phi\diam{\alpha}\phi'$ for $\phi\land\diam{\alpha}\phi'$, and
  $\phi\diam{\hat\tau}\phi'$ for $\phi\land\diam{\hat\tau}\phi'$.
\end{definition}

\noindent
$p\models\phi$ denotes that process $p$ satisfies formula $\phi$. The
first two operators represent the standard Boolean operators
conjunction and negation. We define that
$p\models\diam{\alpha}\phi$ if \plat{$p\trans{\alpha}p'$} for some
$p'$ with $p'\models\phi$, $p\models\eps\phi$ if
\plat{$p\epsarrow p'$} for some $p'$ with $p'\models\phi$, and
$p\models\diam{\hat\tau}\phi$ if either $p\models\phi$ or
\plat{$p\trans{\tau}p'$} for some $p'$ with $p'\models\phi$.

For each $L\subseteq\mathbb{O}$, we write $p\sim_L q$ if $p$ and $q$
satisfy the same formulas in $L$. We say that $L$ is a \emph{modal
  characterisation} of some behavioural equivalence $\sim$ if $\sim_L$
coincides with $\sim$.
We write $\phi\equiv\phi'$ if $p\models\phi\Leftrightarrow p\models\phi'$
for all processes $p$. The class $L^\equiv$ denotes the closure of
$L\subseteq\mathbb{O}$ under $\equiv$. Trivially,
$p\sim_L q \Leftrightarrow p\sim_{L^\equiv}q$.

\begin{definition} \cite{vGl93} The subclasses $\IO{b}$ and $\IO{rb}$
  of $\mathbb{O}$ are defined as follows, where $a$ ranges over $A$
  and $\alpha$ over $A_\tau$:
  \[
    \begin{array}{ll}
      \IO{b}       & \phi ~~::=~~ {\displaystyle\bigwedge_{i\in I}}\,\phi_i ~~|~~ \neg\phi ~~|~~ \eps(\phi\diam{\hat{\tau}}\phi) ~~|~~ \eps(\phi\diam{a}\phi) \vspace{2mm}\\
      \IO{rb} \hspace*{2cm}      & \overline\phi ~~::=~~ {\displaystyle\bigwedge_{i\in I}}\,\overline\phi_i ~~|~~ \neg\overline\phi ~~|~~ \diam{\alpha}\phi ~~|~~ \phi ~~~~~~~(\phi\in\IO{b})
    \end{array}
  \]
\end{definition}

\noindent
$\IO{b}$ and $\IO{rb}$ are modal characterisations of $\bis{b}$ and
$\bis{rb}$, respectively (see \cite{FvGdW12}).

The idea behind stability is: (I) if $p\bis[s]{b} q$ and
$p\epsarrow p'\ntrans{\tau}$ with $p\bis[s]{b}p'$, then
$q\epsarrow q'\ntrans{\tau}$ with $q\bis[s]{b}q'$. In the definition
of $\bis[s]{b}$ this was formulated more weakly: (II) if
$p\bis[s]{b} q$ and $p\ntrans{\tau}$, then
$q\epsarrow q'\ntrans{\tau}$ with $q\bis[s]{b}q'$. To argue that
formulations (I) and (II) are equivalent, suppose $p\bis[s]{b}q$ and
$p\epsarrow p'\ntrans{\tau}$ with $p\bis[s]{b}p'$. Clearly
$q\epsarrow q''$ with $p'\bis[s]{b}q''$. Now the weaker property (II)
yields $q''\epsarrow q'\ntrans{\tau}$ with $q''\bis[s]{b}q'$. So
$q\epsarrow q'\ntrans{\tau}$ with $q\bis[s]{b}q'$. That formulations
(I) and (II) coincide is important to grasp the following modal
characterisation of stability-respecting branching bisimilarity,
because the additional clause at the end of the definition of
$\IO{b}^s$ is based on formulation (I).
\begin{definition} \cite{vGl93} The subclasses $\IO{b}^s$ and
  $\IO{rb}^s$ of $\mathbb{O}$ are defined as follows:
  \[
    \begin{array}{ll}
      \IO{b}^s & \phi ~~::=~~ {\displaystyle\bigwedge_{i\in I}}\,\phi_i ~~|~~ \neg\phi ~~|~~ \eps(\phi\diam{\hat{\tau}}\phi) ~~|~~ \eps(\phi\diam{a}\phi) ~~|~~ \eps(\neg\diam{\tau}\top\land\,\overline\phi) ~~~~~~~~(\overline\phi\in\IO{rb}^s) \vspace{2mm}\\
      \IO{rb}^s \hspace*{1cm}      & \overline\phi ~~::=~~ {\displaystyle\bigwedge_{i\in I}}\,\overline\phi_i ~~|~~ \neg\overline\phi ~~|~~ \diam{\alpha}\phi ~~|~~ \phi ~~~~~~~(\phi\in\IO{b}^s)
    \end{array}
  \]
\end{definition}

The additional clause $\eps(\neg\diam{\tau}\top\land\,\overline\phi)$
in the definition of $\IO{b}^s$ expresses stability. The first part
$\eps(\neg\diam{\tau}\top\,\ldots)$ captures
$p\epsarrow p'\ntrans{\tau}$, while the second part
$\ldots\,\land\,\overline\phi$ captures the stability-respecting
branching bisimulation class of $p'$. Note that since $p'$ is stable
and $\bis[s]{b}$ and $\bis[s]{rb}$ coincide on stable processes, we
can take the second part from 
$\IO{rb}^s$.  The proof of the following theorem is presented in the
appendix.

\begin{theorem}\label{thm:characterisation}
  $p\bis[s]{b} q\Leftrightarrow p\sim_{\IO{b}^s}q$ and
  $p\bis[s]{rb} q\Leftrightarrow p\sim_{\IO{rb}^s}q$, for all
  $p,q\in\mathbb{P}$.
\end{theorem}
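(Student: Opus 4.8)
The plan is to prove the two equivalences by establishing both implications for each, handling $\IO{b}^s$ and $\IO{rb}^s$ simultaneously by mutual recursion, mirroring the mutual recursion in their definitions. For the ``soundness'' direction, $p\bis[s]{b}q \Rightarrow p\sim_{\IO{b}^s}q$ (and the rooted analogue), I would argue that if $p\bis[s]{b}q$ then every formula $\phi\in\IO{b}^s$ satisfied by $p$ is also satisfied by $q$; by symmetry of $\bis[s]{b}$ this gives $p\sim_{\IO{b}^s}q$. This is an induction on the structure of $\phi$. The Boolean cases are routine. The cases $\eps(\phi_1\diam{\hat\tau}\phi_2)$ and $\eps(\phi_1\diam{a}\phi_2)$ are handled exactly as in the known modal characterisation of $\bis{b}$ from \cite{FvGdW12}, using the stuttering lemma for branching bisimulation (if $p\bis[s]{b}q$ and $p\epsarrow p'$ with all intermediate states branching bisimilar, one can match the $\tau$-sequence) together with the branching bisimulation transfer property. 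The genuinely new case is $\eps(\neg\diam\tau\top\land\overline\phi)$ with $\overline\phi\in\IO{rb}^s$: here I would use that $\brel\,:=\,\bis[s]{b}$ is stability-respecting in formulation (I) established in the text, so from $p\epsarrow p'\ntrans\tau$ with $p'\bis[s]{b}p$ we get $q\epsarrow q'\ntrans\tau$ with $q'\bis[s]{b}q$, hence $q'\bis[s]{b}p'$; since $p'$ is stable and $\bis[s]{b}$ agrees with $\bis[s]{rb}$ on stable processes, $p'\bis[s]{rb}q'$, and then the (rooted) induction hypothesis transfers $\overline\phi$ from $p'$ to $q'$. For $\IO{rb}^s$ one uses the definition of $\bis[s]{rb}$ directly on the initial transition and then invokes the $\IO{b}^s$ part on the targets.

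For the ``completeness'' direction, $p\sim_{\IO{b}^s}q \Rightarrow p\bis[s]{b}q$ (and the rooted analogue), I would show that the relations
\[
  {\brel}\;=\;{\sim_{\IO{b}^s}}
  \qquad\text{and}\qquad
  {\rrel}\;=\;{\sim_{\IO{rb}^s}}
\]
are, respectively, a stability-respecting branching bisimulation and a rooted one (with $\brel$ as the underlying branching bisimulation for $\rrel$). The branching-bisimulation clause and the rootedness clause are proved just as for $\bis{b}$ and $\bis{rb}$ in \cite{FvGdW12}: assuming the clause fails, one builds a distinguishing formula in $\IO{b}^s$ (resp. $\IO{rb}^s$), possibly as an infinite conjunction indexed by the potentially infinitely many ``bad'' $\epsarrow$-reachable states, using the $\eps(\cdot\diam{\hat\tau}\cdot)$ and $\eps(\cdot\diam{a}\cdot)$ clauses; this contradicts $p\brel q$ (resp. $p\rrel q$). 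The new obligation is that $\brel$ is stability-respecting: suppose $p\brel q$ and $p\ntrans\tau$ but there is no $q'$ with $q\epsarrow q'\ntrans\tau$ and $p\brel q'$. For each stable $q'$ with $q\epsarrow q'$ we then have $p\not\brel q'$, i.e. $p$ and $q'$ are distinguished by some $\psi_{q'}\in\IO{b}^s$; since $p$ itself is stable, $p\bis[s]{b}p$ gives $p\sim_{\IO{rb}^s}p$ and one can arrange $\psi_{q'}\in\IO{rb}^s$ (restricting to stable witnesses, where the two logics coincide). Taking $\overline\phi=\bigwedge_{q'}\psi_{q'}$ (picking the conjunct so that $p\models\psi_{q'}$, $q'\not\models\psi_{q'}$), the formula $\eps(\neg\diam\tau\top\land\overline\phi)$ lies in $\IO{b}^s$, is satisfied by $p$ (witness: $p$ itself, being stable with $p\models\overline\phi$), but is not satisfied by $q$, since every stable $\epsarrow$-successor $q'$ of $q$ fails some conjunct. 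This contradicts $p\sim_{\IO{b}^s}q$.

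The main obstacle I anticipate is the careful bookkeeping in the completeness direction when the set of relevant intermediate or $\epsarrow$-reachable states is infinite: one must ensure the distinguishing formulas constructed (for the branching, rootedness, and stability clauses) genuinely stay within the grammar of $\IO{b}^s$/$\IO{rb}^s$ — in particular that the conjunction used for stability is a conjunction of $\IO{rb}^s$-formulas sitting under a single $\eps(\neg\diam\tau\top\land\,\cdot\,)$, and that in the branching-bisimulation clause the formula witnessing ``no matching $q\epsarrow q'\trans\alpha q''$'' is assembled correctly as in \cite{FvGdW12}. A secondary subtlety is justifying the interplay claim that $\bis[s]{b}$ and $\bis[s]{rb}$ coincide on stable processes, which is needed on both sides to move the ``tail'' formula between $\IO{b}^s$ and $\IO{rb}^s$; this follows because a stable process has no initial $\tau$ to distinguish rooted from non-rooted behaviour, so any stability-respecting branching bisimulation containing a pair of stable processes already satisfies the rootedness clause at that pair.
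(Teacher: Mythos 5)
Your overall architecture coincides with the paper's: soundness by mutual structural induction on $\phi\in\IO{b}^s$ and $\overline\phi\in\IO{rb}^s$, and completeness by showing that ${\sim_{\IO{b}^s}}$ is a stability-respecting branching bisimulation and that ${\sim_{\IO{rb}^s}}$ satisfies the rooted transfer condition, using conjunctions of distinguishing formulas. Your completeness half, including the new stability obligation handled with the formula $\eps(\neg\diam{\tau}\top\land\overline\phi)$ where $\overline\phi$ is a conjunction of distinguishing formulas for the stable $\epsarrow$-successors of $q$, is essentially the paper's appendix argument; the only remark is that your detour to ``arrange $\psi_{q'}\in\IO{rb}^s$'' is unnecessary, since every formula of $\IO{b}^s$ is a formula of $\IO{rb}^s$ directly by the last production in the grammar of $\IO{rb}^s$.

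There is, however, a genuine gap in the soundness direction, precisely in the case that is new. From $p\models\eps(\neg\diam{\tau}\top\land\overline\phi)$ you obtain only a stable $p'$ with $p\epsarrow p'$ and $p'\models\overline\phi$; you do \emph{not} obtain $p'\bis[s]{b}p$, yet your appeal to formulation (I) requires exactly that hypothesis. For instance, if $p$ has only the transitions $p\trans{\tau}p_1$ and $p\trans{\tau}p_2$ with $p_1,p_2$ stable, $p_1$ able to do only $a$ and $p_2$ only $b$, then $p\models\eps(\neg\diam{\tau}\top\land\diam{a}\top)$ only via the witness $p_1$, which is not $\bis[s]{b}$-equivalent to $p$, so (I) is inapplicable as invoked. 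The repair is the same inner induction on the length of the $\tau$-sequence that you (and the paper) already use for $\eps(\phi_1\diam{\hat\tau}\phi_2)$ and $\eps(\phi_1\diam{a}\phi_2)$: each leading $\tau$-step of $p$ is matched via the branching transfer property (either its target remains related to $q$, or $q$ moves to related states), and stability is invoked only in the base case, where the satisfying process is itself stable; there formulation (II) yields $q\epsarrow q'\ntrans{\tau}$ with the two stable processes related by $\bis[s]{b}$, hence by $\bis[s]{rb}$, and the mutual induction hypothesis for $\overline\phi\in\IO{rb}^s$ finishes the case. The same caveat applies to your parenthetical ``with all intermediate states branching bisimilar'' in the other $\eps$-cases: no such property of the intermediate states is available from the semantics of $\eps$, and none is needed once the step-by-step matching is used.
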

\noindent

\subsection{Structural operational semantics}\label{sec:sos}

A \emph{signature} is a set $\Sigma$ of function symbols $f$ with
arity $\ar(f)$.  A function symbol of arity 0 is called a
\emph{constant}.  Let $V$ be an infinite set of variables, with
typical elements $x,y,z$; we assume $|\Sigma|, |A| \leq |V|$.  A
syntactic object is {\em closed} if it does not contain any variables.
The sets $\mathbb{T}(\Sigma)$ and $\mbox{T}(\Sigma)$ of terms over
$\Sigma$ and $V$ and closed terms over $\Sigma$, respectively, are
defined as usual; $t,u,v,w$ denote terms, $p,q$ denote closed terms,
and $\var(t)$ is the set of variables that occur in term $t$. A
\emph{substitution} $\sigma$ is a partial
function from $V$ to $\mathbb{T}(\Sigma)$. The result $\sigma(t)$
  of applying a substitution $\sigma$ to a term
  $t\in\mathbb{T}(\Sigma)$ is the term obtained by replacing in $t$
  all occurrences of variables $x$ in the domain of $\sigma$ by
  $\sigma(x)$. A \emph{closed substitution} is a substitution that is
  defined on all variables in $V$ and maps every variable to a closed
  term.

Structural operational semantics \cite{Plo04} generates an LTS in
which the processes are the closed terms. The labelled transitions
between processes are obtained from a transition system specification,
which consists of a set of proof rules called transition rules.

\begin{definition}\label{def:TSS}
  A (\emph{positive} or \emph{negative}) \emph{literal} is an
  expression \plat{$t \trans\alpha u$} or \plat{$t\ntrans\alpha$}. A
  \emph{(transition) rule} is of the form $\frac{H}{\lambda}$ with $H$
  a set of literals called the \emph{premises}, and $\lambda$ a
  literal called the \emph{conclusion}; the terms at the left- and
  right-hand side of $\lambda$ are called the \emph{source} and
  \emph{target} of the rule, respectively.  A rule
  $\frac{\emptyset}{\lambda}$ is also written $\lambda$.  A rule is
  {\em standard} if it has a positive conclusion.  A {\em transition
    system specification (TSS)}, written $(\Sigma,{\it Act},R)$,
  consists of a signature $\Sigma$, a set of actions $Act$, and a
  collection $R$ of transition rules over $\Sigma$.  A TSS is {\em
    standard} if all its rules are.
\end{definition}

\noindent
A TSS is meant to specify an LTS in which the transitions are the
closed positive literals that can be proved using the rules of the
TSS\@. It is straightforward to associate an appropriate notion of
provability in the special case of standard TSSs with only positive
premises. In the general case, in which the rules of the TSS may have
negative premises, consistency is a concern. Literals
\plat{$t \trans{\alpha} u$} and \plat{$t\ntrans{\alpha}$} are said to
\emph{deny} each other; a notion of provability associated with TSSs
is \emph{consistent} if it is not possible to prove two literals that
deny each other. To arrive at a consistent notion of provability in
the general case, we proceed in two steps: first we define the notion
of \emph{irredundant proof}, which on a standard TSS does not allow
the derivation of negative literals at all, and then arrive at a
notion of \emph{well-supported proof} that allows the derivation of
negative literals whose denials are manifestly underivable by
irredundant proofs.  In \cite{vGl04} it was shown that the notion of
well-supported provability is consistent.

\begin{definition}\label{def:proof} \cite{BFvG04}
  Let $P=(\Sigma,{\it Act},R)$ be a TSS. An {\em irredundant proof}
  from $P$ of a rule $\frac{H}{\lambda}$ is a well-founded tree with
  the nodes labelled by literals and some of the leaves marked
  ``hypothesis''. The root of this tree
    has label $\lambda$ and $H$ is the set of labels of the
    hypotheses. Moreover, the tree must satisfy the property that if
    $\mu$ is the label of a node that is not a hypothesis and $K$ is
    the set of labels of the children of this node, then
    $\frac{K}{\mu}$ is a substitution instance of a rule in $R$.
    If there exists such a tree for the rule
    \plat{$\frac{H}{\lambda}$}, then we say that it is irredundantly
    provable from $P$ (notation: \plat{$P \vdash_{\it irr}\frac{H}{\lambda}$}).
\end{definition}

\noindent
We note that if a leaf in a proof from $P$ is not marked as
hypothesis, then it is a substitution instance of a rule without
premises in $R$.

\begin{definition}\label{def:wsp} \cite{vGl04}
  Let $P=(\Sigma,{\it Act},R)$ be a standard TSS. A
  \emph{well-supported proof} from $P$ of a closed literal ${\lambda}$
  is a well-founded tree with the nodes labelled by closed literals,
  such that the root is labelled by $\lambda$, and if $\mu$ is the
  label of a node and $K$ is the set of labels of the children of this
  node, then:
  \begin{enumerate}
  \item either $\mu$ is positive and \plat{$\frac{K}{\mu}$} is a
    closed substitution instance of a rule in $R$;
  \item or $\mu$ is negative and for each set $N$ of closed negative
    literals with \plat{$\frac{N}{\nu}$} irredundantly provable from
    $P$ and $\nu$ a closed positive literal denying $\mu$, a literal
    in $K$ denies one in $N$.
  \end{enumerate}
  $P\vdash_{\it ws}\lambda$ denotes that a well-supported proof from
  $P$ of $\lambda$ exists.  A standard TSS $P$ is \emph{complete} if
  for each $p$ and $\alpha$, either
  \plat{$P\vdash_{\it ws}p\ntrans\alpha$} or there exists a closed
  term $q$ such that \plat{$P\vdash_{\it ws}p\trans\alpha q $}.
\end{definition}

\noindent
If $P=(\Sigma,{\it Act},R)$ is a complete TSS, then the LTS
\emph{associated with} $P$ is the LTS
$(\mbox{T}(\Sigma),{\it Act},\rightarrow)$ with
${\rightarrow}=\{(p,\alpha,q)\mid P\vdash_{\it ws}p{\trans\alpha}q\}$.
We do not associate an LTS with an incomplete TSS\@.

\subsection{Congruence formats}\label{sec:ntytt}

\noindent
Let $P=(\Sigma,{\it Act},R)$ be a transition system specification, and
let $\sim_P$ be an equivalence relation defined on the set of closed
terms $\mbox{T}(\Sigma)$. Then $\sim_{P}$ is a \emph{congruence} for
$P$ if, for each $f\in\Sigma$, we have that $p_i\sim_P q_i$ implies
$f(p_1,\dots,p_{\ar(f)})\sim_P f(q_1,\dots,q_{\ar(f)})$.  Note that
this is the case if for each open term $t \in \mathbb{T}(\Sigma)$ and
each pair of closed substitutions
$\rho,\rho':V \rightarrow \mbox{T}(\Sigma)$ we have
$(\forall x\in\var(t).~ \rho(x) \sim_P \rho'(x)) \Rightarrow \rho(t)
\sim_P \rho'(t)$.

Recall that we have associated with every complete TSS an LTS of which
the states are the closed terms of the TSS, and that a behavioural
equivalence $\sim$ associates with every such transition system an
equivalence on its set of states. Thus, $\sim$ associates with every
TSS $P$ an equivalence $\sim_P$ on its set of closed terms.  By a
\emph{congruence format} for a behavioural equivalence $\sim$ we mean
a class of TSSs such that for every TSS $P$ in the class the
equivalence $\sim_P$ is a congruence. Usually, a congruence format is
defined by means of a list of syntactic restrictions on the rules of
TSSs. We proceed to recap some terminology for syntactic restrictions
on rules \cite{BFvG04,Gro93,GV92}.

\begin{definition}\label{def:ntytt}
  An \emph{ntytt rule} is a rule in which the right-hand sides of
  positive premises are variables that are all distinct, and that do
  not occur in the source. An ntytt rule is an \emph{ntyxt rule} if
  its source is a variable, an \emph{ntyft rule} if its source
  contains exactly one function symbol and no multiple occurrences of
  variables, and an \emph{nxytt rule} if the left-hand sides of its
  premises are variables.
\end{definition}

\noindent
A well-known congruence format for strong
bisimulation semantics is the class of TSSs that consists of ntyft
  and ntyxt rules only \cite{Gro93,GV92}. Congruence formats for other semantics are
generally obtained by imposing additional restrictions on this
ntyft/ntyxt format.

\begin{definition}\label{def:decent}
  A variable in a rule is \emph{free} if it occurs neither in the
  source nor in right-hand sides of premises. A rule has
  \emph{lookahead} if some variable occurs in the right-hand side of a
  premise and in the left-hand side of a premise. A rule is
  \emph{decent} if it has no lookahead and does not contain free
  variables.
\end{definition}

\noindent
Each combination of syntactic restrictions on rules induces a
corresponding syntactic format for TSSs of the same name.  For
instance, a TSS is in decent ntyft format if it contains decent ntyft
rules only.

\begin{definition}\label{def:ready_sim}
  A TSS is in \emph{ready simulation format} if it consists of ntyft
  and ntyxt rules that have no lookahead.
\end{definition}

\noindent
In congruence formats for weak semantics, lookahead of two consecutive
actions from $A$ must be forbidden. To see this, consider the
extension of CCS \cite{Mil89} with a unary operator $f$ defined by the
rule
\[
  \frac{x\trans a y\quad y\trans b z}{f(x)\trans c z} \enskip.
\]
Then $ab{\bf 0}\bis{rb} a\tau b{\bf 0}$, whereas
$f(ab{\bf 0})\not\bis{rb}f(a\tau b{\bf 0})$.  Therefore congruence
formats for weak semantics are generally obtained by imposing
additional restrictions on the ready simulation format.

\section[Modal decomposition]{Modal decomposition}\label{sec:decompmethod}

  \begin{figure}[htb]
    \begin{center}
      \begin{tikzpicture}[baseline,double equal sign distance]
        \node (A) at (2,4)
        {$\forall {x\in\var(t)}.\ \rho(x)\models \psi(x)$}; \node (B)
        at (10,4) {$\forall {x\in\var(t)}.\ \rho'(x)\models\psi(x)$};
        \node (C) at (2,2) {$\rho(t)\models\varphi$}; \node (D) at
        (10,2) {$\rho'(t)\models\varphi$}; \node (MD) at (6,3)
        {$\begin{array}{c}\text{decomposition mapping}\\
            \psi: V\rightarrow L
          \end{array}$};
        \node (E) at (6,0) {$\rho(t)\sim_L\rho'(t)$};
        \node (F) at (6,6) {$\forall {x\in\var(t)}.\ \rho(x)\sim_L \rho'(x)$};
        \draw[decorate,decoration={brace,amplitude=10pt}] (0,4.5) -- (12,4.5);
        \draw[decorate,decoration={brace,amplitude=10pt}] (12,1.5) -- (0,1.5);
        \draw[double,-implies] (A) -- (B);
        \draw[double,-implies] (B) -- (D);
        \draw[double,-implies] (C) -- (A);
        \draw[double,dashed,-implies] (C)--(D);
        \draw[double,-implies] (F) -- (6,5);
        \draw[double,-implies] (6,1)--(E);
        \draw[->] (MD)--(2.5,3) node[above, midway] {$\exists$};
        \draw[->] (MD)-- (9.5,3) node[above, midway] {$\forall$};

      \end{tikzpicture}
    \end{center}
    \caption{Modal decomposition.}\label{fig:moddecomp}
  \end{figure}
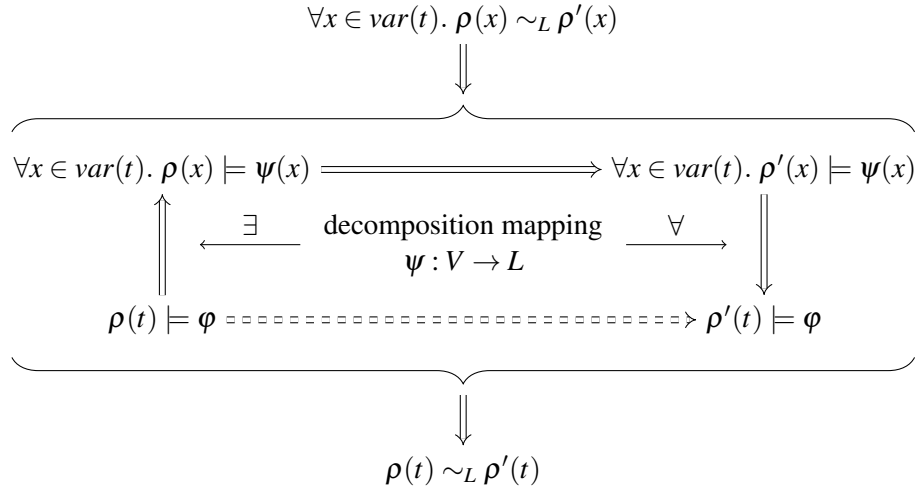

  Our goal in this paper is to present and discuss congruence formats
  for (rooted) stability-respecting and divergence-preserving
  branching bisimilarity. To derive these congruence formats and prove
  their correctness, we shall use the modal decomposition approach
  first proposed in \cite{BFvG04}, which conveniently employs a modal
  characterisation of the behavioural equivalence and a modal
  decomposition result. In this section, we shall explain the main
  ideas underlying the approach.

  Let $\sim$ be the behavioural equivalence under consideration. To
  prove that some class of TSSs is a congruence format for $\sim$, we
  should establish that $\sim_P$ is a congruence for every TSS $P$ in
  the class under consideration. That is, we should establish that for
  every $t\in\mathbb{T}(\Sigma)$ and for every pair of closed
  substitutions $\rho,\rho':V\rightarrow \mbox{T}(\Sigma)$ such that
  $\rho(x)\sim_P\rho'(x)$ for all $x\in\var(t)$ we have that
  $\rho(t)\sim_P\rho'(t)$.  Now, if $L$ is a modal characterisation of
  $\sim$, then our proof obligation can be reformulated as: for every
  $t\in\mathbb{T}(\Sigma)$ and for every pair of closed substitutions
  $\rho,\rho':V\rightarrow \mbox{T}(\Sigma)$, if, for all
  $x\in\var(t)$, $\rho(x)$ and $\rho'(x)$ satisfy the same formulas in
  $L$, then $\rho(t)$ and $\rho'(t)$ satisfy the same formulas in $L$.
  Clearly, by symmetry and using standard logical reasoning, it
  suffices to assume that $\rho(x)$ and $\rho'(x)$ satisfy the same
  formulas in $L$ for all $x\in\var(t)$, consider an arbitrary formula
  $\varphi\in L$ such that $\rho(t)\models\varphi$, and prove that
  $\rho'(t)\models\varphi$.
 
  Figure~\ref{fig:moddecomp} schematically illustrates how a modal
  decomposition result can support the correctness argument.  To
  effectively use the assumption that $\rho(x)$ and $\rho'(x)$ satisfy
  the same formulas in $L$, it is convenient to express a general
  correspondence between the satisfaction of formulas in $L$ by
  $\rho(t)$ and the satisfaction of formulas in $L$ by $\rho(x)$ for
  all $x\in\var(t)$.  With every $t\in\mathbb{T}(\Sigma)$ and every
  formula $\varphi\in L$ we wish to associate an assignment $\psi$ of
  formulas to the variables occurring in $t$ such that, for all closed
  substitutions $\rho$, we have that
  \begin{equation} \label{eq:decompprop} \rho(t) \models \varphi\
    \Leftrightarrow\ \forall x\in\var(t).\ \rho(x)\models
    \psi(x)\enskip.
  \end{equation}
  There does not, in general, exist an assignment $\psi$ that
  satisfies \eqref{eq:decompprop} for all substitutions $\rho$. We
  shall see, however, that we can associate with every
  $t\in\mathbb{T}(\Sigma)$ and $\varphi\in L$ a set of assignments
  $\psi$ satisfying the implication from right to left for all
  substitutions $\rho$; we call such assignments \emph{decomposition
    mappings} for $t$ and $\varphi$. Moreover, we shall see that for
  every substitution $\rho$ there exists a decomposition mapping that
  also satisfies the implication from left to right.

  The approach described above is applied in \cite{FvGdW12} to derive
  congruence formats for (rooted) branching bisimilarity and (rooted)
  $\eta$-bisimilarity, and establish their correctness.
  Here we do the same for (rooted) \emph{stability-respecting}
  branching bisimilarity. To this end, we have to show that if the TSS
  is in the congruence format for (rooted) stability-respecting
  branching bisimilarity, to be presented in
  Sect.~\ref{sec:congruence}, then the decomposition mappings
  associated with formulas in the modal characterisation for (rooted)
  stability-respecting branching bisimilarity yield again formulas in
  the modal characterisation for (rooted) stability-respecting
  branching bisimilarity.
  Since the modal
  characterisation of (rooted) stability-respecting branching
  bisimilarity is a small extension of the modal characterisation of (rooted)
  branching bisimilarity, we can build upon the preservation result
  for the latter semantics. 
 
  In the remainder of this section, we shall further explain and
  provide some intuitions for the technical ingredients of the modal
  decomposition approach, referring to the result from
  \cite{FvGdW12}. First, we shall introduce in
  Sect.~\ref{sec:ruloids} the auxiliary notion of \emph{ruloid}
  associated with a TSS, which allows us, for every term $t$ and every
  closed substitution $\rho$, to characterise the derivability of a
  transition from $\rho(t)$ in terms of the (non-)derivability of
  transitions from $\rho(x)$ (with $x$ ranging over $\var(t)$). Then
  we associate, in Sect.~\ref{sec:decomposition}, with every term a
  set of decomposition mappings. In Sect.~\ref{sec:rbbf} we present
  and briefly discuss the congruence format for (rooted) branching
  bisimilarity that was proposed in \cite{FvGdW12}.

\subsection{Ruloids}\label{sec:ruloids}

  Our modal decomposition result should establish a correspondence
  between the satisfaction of formulas in $L$ by $\rho(t)$ and the
  satisfaction of formulas in $\mathbb{O}$ by $\rho(x)$ for all
  $x\in\var(t)$. Since the satisfaction of formulas of the shape
  $\diam{\alpha}\varphi$ refers to the transition relation, it is
  convenient to characterise the provability in a presupposed TSS $P$
  of a transition from $\rho(t)$ in terms of the provability or
  refutability in $P$ of transitions from the $\rho(x)$. The
  characterisation is provided by the notion of $P$-ruloid. A
  $P$-ruloid is a decent nxytt rule \plat{$\frac{H}{\lambda}$} that is
  irredundantly provable in a TSS that is the result of a
  transformation of $P$. In \cite{BFvG04}, the transformation is
  presented at length and proved correct. Here we shall explain it
  superficially, providing just enough detail to be able to
  convincingly argue later that our congruence formats are preserved
  under his transformation.

First $P$ is converted to a standard TSS in decent ntyft format.  In
this conversion from \cite{GV92}, free variables in a rule are
replaced by arbitrary closed terms, and if the source is of the form
$x$, then this variable is replaced by a term
$f(x_1,\ldots,x_{\ar(f)})$ for each $n$-ary function symbol $f$ in the
signature of $P$, where the variables $x_1,\ldots,x_{\ar(f)}$ are
fresh.  Next, using a construction from \cite{FvG96}, left-hand sides
of positive premises in rules of $P$ are reduced to variables.  In the
final transformation step, non-standard rules with a negative
conclusion \plat{$t\ntrans\alpha$} are introduced.  The motivation is
that instead of the notion of well-founded provability of
Def.~\ref{def:wsp}, we want a more constructive notion like
Def.~\ref{def:proof}, by making it possible that a negative premise is
matched with a negative conclusion. A non-standard rule
$\frac{H}{f(x_1,\ldots,x_{\ar(f)})\ntrans{\alpha}}$ is obtained by
picking one premise from each standard rule with a conclusion of the
form \plat{$f(x_1,\ldots,x_{\ar(f)})\trans{\alpha}t$}, and including
the denial of each of the selected premises as a premise in $H$.

The resulting TSS, which is in decent ntyft format, is denoted by
$P^+$.  In \cite{BFvG04} it was established, for all closed
  literals $\mu$, that $P\vdash_{\it ws}\mu$ if and only if $\mu$ is
  irredundantly provable from $P^+$.  The $P$-ruloids are those
decent nxytt rules that are irredundantly provable from $P^+$.
  In \cite[Lem.~13---aliased 8.2]{BFvG04}
  it was established that $P\vdash_{\it ws}\rho(t)\trans{a}q$, with
  $\rho$ a closed substitution, if and only if there is a ruloid
  \plat{$\frac{H}{t\trans{a}t'}$} and a closed substitution $\rho'$
  that agrees with $\rho$ on $\var(t)$ such that $\rho(t')=q$ and
  $P\vdash_{\it ws}\rho'(H)$ .
  
\subsection{Decomposition of modal formulas}\label{sec:decomposition}

The decomposition method proposed in \cite{FvGdW12} gives a special
treatment to arguments of function symbols that are deemed
\emph{patient}; a predicate marks the arguments that get this special
treatment.

\begin{definition}\label{def:patience_rule} \cite{Blo95,Fok00}
  Let $\Gamma$ be a predicate on arguments of function symbols.  A
  standard ntyft rule is a {\em $\Gamma$-patience rule} if it is of
  the form
  \[
    \frac{x_i\trans{\tau}y}{f(x_1,\ldots,x_{\ar(f)})\trans{\tau}f(x_1,\ldots,x_{i-1},y,x_{i+1},\ldots,x_{\ar(f)})}
  \]
  with $\Gamma(f,i)$.  A TSS is \emph{$\Gamma$-patient} if it contains
  all $\Gamma$-patience rules.  A standard ntytt rule is {\em
    $\Gamma$-patient} if it is irredundantly provable from the
  $\Gamma$-patience rules; else it is called {\em $\Gamma$-impatient}.
\end{definition}

\noindent
A patience rule for an argument $i$ of a function symbol $f$ expresses
that terms $f(p_1,\ldots,p_{\ar(f)})$ can mimic the $\tau$-transitions
of argument $p_i$ (cf.\ \cite{Blo95,Fok00}).  Typically, in process
algebra, there are patience rules for the arguments of parallel composition and
for the first argument of sequential composition, but not for the
arguments of the alternative composition $+$ or for the second
argument of sequential composition.

\begin{definition}\label{def:liquid/frozen} \cite{BFvG04}
  Let $\Gamma$ be a predicate on
  $\{(f,i)\mid 1 \leq i \leq ar(f),~f \in \Sigma \}$. If
  $\Gamma(f,i)$, then argument $i$ of $f$ is {\em $\Gamma$-liquid};
  otherwise it is {\em $\Gamma$-frozen}. An occurrence of $x$ in $t$
  is {\em $\Gamma$-liquid} if either $t=x$, or
  $t=f(t_1,\ldots,t_{\ar(f)})$ and the occurrence is $\Gamma$-liquid
  in $t_i$ for a liquid argument $i$ of $f$; otherwise the occurrence
  is {\em $\Gamma$-frozen}.
\end{definition}

We now show how to decompose formulas from $\mathbb{O}$.  To each term
$t$ and formula $\phi$ we assign a set $t^{-1}(\phi)$ of decomposition
mappings $\psi:V\rightarrow\mathbb{O}$. Each of these mappings
$\psi\in t^{-1}(\phi)$ has the property that, for all closed
substitutions $\rho$, $\rho(t)\models\phi$ if $\rho(x)\models\psi(x)$
for all $x\in\var(t)$. Vice versa, whenever $\rho(t)\models\phi$,
there is a decomposition mapping $\psi \in t^{-1}(\phi)$ with
$\rho(x)\models\psi(x)$ for all $x \in \var(t)$.

\begin{definition}\label{def:decomposition} \cite{FvGdW12}
  Let $P=(\Sigma,A_\tau,R)$ be a $\Gamma$-patient standard TSS in
  ready simulation format. We define
  $\cdot^{-1}:\mathbb{T}(\Sigma)\times\mathbb{O}\rightarrow\mbox{\fsc
    P}(V\mathbin{\rightarrow}\mathbb{O})$ as the function that for
  each $t\in\mathbb{T}(\Sigma)$ and $\varphi\in\mathbb{O}$ returns the
  smallest set
  $t^{-1}(\varphi)\in\mbox{\fsc P}(V\mathbin{\rightarrow}\mathbb{O})$
  of decomposition mappings $\psi:V\mathbin{\rightarrow}\mathbb{O}$
  satisfying the following six conditions. Let $t$ denote a univariate
  term, i.e.\ without multiple occurrences of the same variable.
  (Cases \ref{dec1}--\ref{dec5} associate with every univariate term
  $t$ a set $t^{-1}(\varphi)$. Then, in Case~\ref{dec6}, the
  definition is generalised to terms that are not univariate, using
  that every term can be obtained by applying a non-injective
  substitution to a univariate term.)

  \begin{enumerate}
  \item\label{dec1} $\psi\in t^{-1}(\bigwedge_{i\in I}\phi_i)$ iff
    there are $\psi_i\in t^{-1}(\phi_i)$ for each $i\in I$ such that
    \[
      \psi(x)= \displaystyle{\bigwedge_{i\in I}\psi_i(x)}\qquad\text{
        for all }x\in V
    \]

  \item\label{dec2} $\psi\in t^{-1}(\neg\phi)$ iff there is a function
    $h:t^{-1}(\phi)\rightarrow\var(t)$ such that
    \[
      \psi(x)=\left\{
        \begin{array}{ll}
          \displaystyle{\bigwedge_{\chi\in h^{-1}(x)}\!\!\!\neg\chi(x)} & \mbox{if $x\in\var(t)$}\vspace{2mm}\\
          \top & \mbox{if $x\notin\var(t)$}
        \end{array}
      \right.
    \]

  \item\label{dec3} $\psi\in t^{-1}(\diam\alpha\phi)$ iff there is a
    $P$-ruloid $\frac{H}{t\trans{\alpha}u}$ and a
    $\chi\in u^{-1}(\phi)$ such that
    \[
      \psi(x)=\left\{
        \begin{array}{ll}
          \displaystyle{\chi(x)\ \land\ \bigwedge_{x\trans{\beta}y\in H}\!\!\!\!\!\diam{\beta}\chi(y)
          \ \land\  \bigwedge_{x\ntrans{\gamma}\in H}\!\!\!\!\!\neg\diam{\gamma}\top\;} & \mbox{if $x\in\var(t)$}\\
          \top & \mbox{if $x\notin\var(t)$}
        \end{array}
      \right.
    \]

  \item\label{dec4} $\psi\in t^{-1}(\eps\phi)$ iff one of the
    following holds: \vspace{2mm}

    \begin{enumerate}
    \item
      \label{4a}
      either there is a $\chi\in t^{-1}(\phi)$ such that
      \[
        \psi(x)=\left\{
          \begin{array}{ll}
            \eps\chi(x) & \textrm{if $x$ occurs $\Gamma$-liquid in $t$}\vspace{1mm}\\
            \chi(x) & \textrm{otherwise}\\
          \end{array}
        \right.
      \]

    \item
      \label{4b}
      or there is a $\Gamma$-impatient $P$-ruloid
      $\frac{H}{t\trans{\tau}u}$ and a $\chi\in u^{-1}(\eps\phi)$ such
      that
      \[
        \psi(x)=\left\{
          \begin{array}{@{}ll@{}}
            \top & \mbox{if $x\notin\var(t)$}\vspace{2mm}\\
            \displaystyle{\eps\Big(\chi(x)\ \land\ \!\!\!\!\!\!\bigwedge_{x\trans{\beta}y\in H}\!\!\!\!\!\diam{\beta}\chi(y)
            \ \land\  \!\!\!\!\!\!\bigwedge_{x\ntrans{\gamma}\in H}\!\!\!\!\!\neg\diam{\gamma}\top\Big)} &
                                                                                                           \mbox{\begin{tabular}{@{}l@{}}if $x$ occurs\\ $\Gamma$-liquid in $t$\end{tabular}}\vspace{2mm}\\
            \displaystyle{\chi(x) \land \!\!\!\!\!\!\bigwedge_{x\trans{\beta}y\in H}\!\!\!\!\!\diam{\beta}\chi(y)
            \ \land \!\!\!\!\!\bigwedge_{x\ntrans{\gamma}\in
            H}\!\!\!\!\!\neg\diam{\gamma}\top\;} &
                                                   \mbox{otherwise}
          \end{array}
        \right.
      \]
    \end{enumerate}

  \item\label{dec5} $\psi\in t^{-1}(\diam{\hat\tau}\phi)$ iff one of
    the following holds: \vspace{2mm}

    \begin{enumerate}
    \item
      \label{5a}
      either $\psi\in t^{-1}(\phi)$; \vspace{2mm}

    \item
      \label{5b}
      or there is an $x_0$ that occurs $\Gamma$-liquid in $t$, and a
      $\chi\in t^{-1}(\phi)$ such that
      \[
        \psi(x)=\left\{
          \begin{array}{ll}
            \diam{\hat\tau}\chi(x) & \textrm{if $x=x_0$} \vspace{1mm}\\
            \chi(x) & \textrm{otherwise}\\
          \end{array}
        \right.
      \]

    \item
      \label{5c}
      or there is a $\Gamma$-impatient $P$-ruloid
      $\frac{H}{t\trans{\tau}u}$ and a $\chi\in u^{-1}(\phi)$ such
      that
      \[
        \psi(x)=\left\{
          \begin{array}{@{}ll@{}}
            \displaystyle{\chi(x) \land \!\!\!\!\!\!\bigwedge_{x\trans{\beta}y\in H}\!\!\!\!\!\diam{\beta}\chi(y)
            \ \land \!\!\!\!\!\bigwedge_{x\ntrans{\gamma}\in
            H}\!\!\!\!\!\neg\diam{\gamma}\top\;} & \mbox{if $x\in\var(t)$}\\
            \top &         \mbox{otherwise}
          \end{array}
        \right.
      \]
    \end{enumerate}

  \item\label{dec6} $\psi\in\sigma(t)^{-1}(\phi)$ for a non-injective
    substitution $\sigma:\var(t)\rightarrow V$ iff there is a
    $\chi\in t^{-1}(\phi)$ such that
    \[
      \psi(x)=\bigwedge_{z\in\sigma^{-1}(x)}\chi(z)\qquad\text{ for
        all }x\in V
    \]
  \end{enumerate}
\end{definition}

\noindent
The following theorem will be the key to the forthcoming congruence
results.

\begin{theorem}\label{thm:decomposition} \cite{FvGdW12}
  Let $P=(\Sigma,A_\tau,R)$ be a $\Gamma$-patient complete standard
  TSS in ready simulation format. For each term
  $t\in\mathbb{T}(\Sigma)$, closed substitution $\rho$, and
  $\phi\in\mathbb{O}$:
  \[
    \rho(t)\models\phi ~~\Leftrightarrow~~ \exists\psi\in
    t^{-1}(\phi)~\,\forall x\in \var(t):~\rho(x)\models\psi(x)
  \]
\end{theorem}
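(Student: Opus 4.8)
The statement is Theorem~\ref{thm:decomposition}, the modal decomposition result, which we take from \cite{FvGdW12}. The proof proceeds by structural induction on the modal formula $\phi\in\mathbb{O}$, mirroring the six clauses of Definition~\ref{def:decomposition}, and within several of those clauses by a further case analysis. First I would set up the induction carefully: the crucial technical reduction is to the case where $t$ is \emph{univariate}, since Case~\ref{dec6} then handles general terms via a non-injective substitution $\sigma:\var(t)\to V$. For the reduction step one checks that if $\psi\in t^{-1}(\phi)$ witnesses the equivalence for the univariate $t$ with some closed substitution $\rho\circ\sigma$, then $\psi'(x):=\bigwedge_{z\in\sigma^{-1}(x)}\psi(z)$ works for $\sigma(t)$ with $\rho$; this is a routine manipulation of conjunctions, using that $\rho(\sigma(t)) = (\rho\circ\sigma)(t)$ and that $\rho(x)=\rho(\sigma(z))$ whenever $z\in\sigma^{-1}(x)$.

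**Base and easy inductive cases.** For $\phi=\bigwedge_{i\in I}\phi_i$ (Case~\ref{dec1}), one combines the decomposition mappings $\psi_i$ componentwise; both directions follow immediately from the induction hypothesis and the semantics of conjunction. For $\phi=\neg\phi'$ (Case~\ref{dec2}), the $(\Leftarrow)$ direction uses that the choice function $h$ ensures $\rho(t)\not\models\phi'$ by contradiction with the induction hypothesis applied to every $\chi\in t^{-1}(\phi')$; the $(\Rightarrow)$ direction constructs $h$ by, for each $\chi\in t^{-1}(\phi')$, picking a variable $x$ with $\rho(x)\not\models\chi(x)$ --- such $x$ must exist, again by the induction hypothesis, since $\rho(t)\not\models\phi'$. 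For $\phi=\diam{\alpha}\phi'$ (Case~\ref{dec3}), one invokes the ruloid correspondence (\cite[Lem.~13]{BFvG04}, recalled in Sect.~\ref{sec:ruloids}): $\rho(t)\trans{\alpha}q$ iff there is a $P$-ruloid $\frac{H}{t\trans{\alpha}u}$ and a closed substitution $\rho'$ agreeing with $\rho$ on $\var(t)$ with $\rho(u)=q$ and $P\vdash_{\it ws}\rho'(H)$; the shape of $\psi(x)$ exactly encodes satisfaction of the $H$-premises concerning $x$ (positive premises as $\diam{\beta}\chi(y)$, negative premises as $\neg\diam{\gamma}\top$), and $\chi\in u^{-1}(\phi')$ carries the induction hypothesis for the target. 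Here one must also be careful that the fresh variables $y$ in the right-hand sides of positive premises of the ruloid do not clash with $\var(t)$, which is guaranteed by the ntytt/decency conditions.

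**The $\langle\epsilon\rangle$ and $\langle\hat\tau\rangle$ cases --- the main obstacle.** The real work is in Cases~\ref{dec4} and~\ref{dec5}, because $\eps$ and $\diam{\hat\tau}$ range over \emph{sequences} of $\tau$-steps, and these $\tau$-steps can be produced either by patience rules (which let $f(p_1,\dots,p_n)$ inherit a $\tau$-move of a $\Gamma$-liquid argument) or by $\Gamma$-impatient ruloids. This is exactly where $\Gamma$-patience enters: a $\tau$-transition $\rho(t)\trans{\tau}q$ decomposes, via the ruloid characterisation, into either a patience step (handled by the $\eps\chi(x)$/$\diam{\hat\tau}\chi(x)$ on the liquid occurrence, clauses~\ref{4a}/\ref{5b}) or an impatient ruloid (clauses~\ref{4b}/\ref{5c}). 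For $\eps\phi'$ in particular, the $(\Rightarrow)$ direction requires an \emph{inner induction} on the length of the witnessing $\tau$-path $\rho(t)\epsarrow p'\models\phi'$: either the path is empty (or consists purely of patience steps inside liquid arguments), giving clause~\ref{4a} with $\chi\in t^{-1}(\phi')$; or the first ``real'' step is an impatient ruloid $\frac{H}{t\trans{\tau}u}$, after which the remaining path witnesses $\rho'(u)\models\eps\phi'$ and we apply the (outer) induction hypothesis to get $\chi\in u^{-1}(\eps\phi')$, then wrap the premises of $H$ inside an $\eps(\cdots)$ so that the liquid variables can themselves perform the inherited $\tau$-steps. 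The subtle points are: (i) showing that a $\tau$-path from $\rho(t)$ always admits such a decomposition into a patience-prefix plus at most one leading impatient ruloid step, which rests on the structure of $P^+$ and the fact that $P$ contains \emph{all} $\Gamma$-patience rules; (ii) the placement of $\eps$ only on $\Gamma$-liquid occurrences --- a $\tau$-move of $\rho(x)$ propagates to $\rho(t)$ only through liquid arguments, so frozen occurrences get $\chi(x)$ without the $\eps$; and (iii) checking that the nested $\eps(\,\chi(x)\land\bigwedge\diam{\beta}\chi(y)\land\bigwedge\neg\diam{\gamma}\top\,)$ formula in clause~\ref{4b} is both sound (the $(\Leftarrow)$ direction: these witnesses really do let $\rho(t)$ reach a $\phi'$-state via $\tau$'s) and complete. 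I expect clause~\ref{4b} and its interaction with the ruloid machinery to be the hardest part; clause~\ref{dec5} is similar but slightly easier since $\diam{\hat\tau}$ allows only one optional $\tau$-step, so no inner induction is needed --- one just splits on whether that step is taken ($\phi'$ already holds: clause~\ref{5a}), taken by a patience rule (clause~\ref{5b}), or taken by an impatient ruloid (clause~\ref{5c}). Throughout, one repeatedly uses that $P$ is \emph{complete} so that negative premises $\neg\diam{\gamma}\top$ are correctly reflected by $P\vdash_{\it ws}\rho(x)\ntrans{\gamma}$, and that $P$ is in ready simulation format so that the ruloids exist and are decent nxytt rules with the required variable-distinctness properties.
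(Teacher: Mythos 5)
Your proposal is correct and follows essentially the approach of the original proof: the present paper does not prove Thm.~\ref{thm:decomposition} itself but imports it from \cite{FvGdW12}, where the argument is precisely the structural induction on $\phi$ you describe (reduced to univariate terms via clause~\ref{dec6}), the ruloid correspondence for the $\diam{\alpha}$ case, and an inner induction on the length of the witnessing $\tau$-path for $\eps\phi'$, with completeness of $P$ used to handle negative premises. The only slip is terminological: in clause~\ref{4b} the existence of $\chi\in u^{-1}(\eps\phi')$ in the completeness direction is justified by that inner induction on the remaining path length (the formula $\eps\phi'$ has not decreased), not by the outer induction on formula size.
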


\subsection{Deriving the (rooted) branching bisimulation
    format} \label{sec:rbbf}

  Def.~\ref{def:decomposition} yields for every term $t$ and every
  formula in the modal language $\mathbb{O}$ a set of decomposition
  mappings. Note that the modal language $\mathbb{O}_b$
  ($\mathbb{O}_{rb}$), which characterises (rooted) branching
  bisimilarity, is a sublanguage of $\mathbb{O}$. In order to prove
  that (rooted) branching bisimilarity is a congruence for a TSS $P$, in
  view of Thm.~\ref{thm:decomposition} it therefore suffices to
  provide an argument that the decomposition mappings associated with
  formulas in $\mathbb{O}_b$ ($\mathbb{O}_{rb}$) assign formulas in
  $\mathbb{O}_b$ ($\mathbb{O}_{rb}$) to all variables. The congruence
  format should facilitate this argument. By carefully studying the
  syntactic shapes of the formulas assigned to variables by the
  decomposition mappings associated with a term $t$ and a formula
  $\varphi\in\mathbb{O}_b$ ($\varphi\in\mathbb{O}_{rb}$), we can
  derive sufficient syntactic conditions on ruloids associated with
  $P$, which play a role in the definition of the decomposition
  mappings. These syntactic conditions on the ruloids, in turn, give
  rise to sufficient conditions on the rules of $P$.

  Let us illustrate this approach by considering a $\Gamma$-patient
  standard TSS $P$ in ready simulation format, a formula
  $\varphi\in\mathbb{O}_b$ of the shape
  $\eps(\varphi_1\diam{a}\varphi_2)$, with
  $\varphi_1,\varphi_2\in\mathbb{O}_b$, a term $t$, and a variable
  $x\in\var(t)$. We first determine the general shape of the formula
  $\psi(x)$ assigned to $x$ by a decomposition mapping
  $\psi\in t^{-1}(\varphi)$, and then formulate sufficient conditions
  on the rules of $P$ that restrict the general shape in such a way
  that we can be certain that $\psi(x)\in\mathbb{O}_b$.

  Since $\varphi=\eps(\varphi_1\diam{a}\varphi_2)$, the decomposition
  mapping $\psi$ satisfies clause~\ref{dec4} of
  Def.~\ref{def:decomposition} and hence satisfies one of its two
  subclauses \ref{4a} or \ref{4b}. For the purpose of the explanation
  of the branching bisimulation format, it is enough to consider the
  case that $\varphi$ satisfies subclause \ref{4a}.

  If, on the one hand, $x$ occurs $\Gamma$-liquid in $t$, then from
  clauses \ref{dec1} and \ref{dec3} it follows that there exist a
  $P$-ruloid $\frac{H}{t\trans{\alpha}u}$ and decomposition mappings
  $\psi_1\in t^{-1}(\varphi_1)$ and $\psi_2\in u^{-1}(\varphi_2)$ such
  that
  \begin{equation*}
    \psi(x)=
    \eps\left(\psi_1(x)\
      \land\
      \psi_2(x)\ \land\
      \bigwedge_{x\trans{\beta}y\in H}\!\!\!\!\!\diam{\beta}\psi_2(y)
      \ \land\  \bigwedge_{x\ntrans{\gamma}\in
        H}\!\!\!\!\!\neg\diam{\gamma}\top\right) \enskip.
  \end{equation*}
  If we assume, inductively, that $\psi_1(x)$ and $\psi_2(y)$ are
  formulas in $\mathbb{O}_b$, then $\psi(x)\in\mathbb{O}_b$ if we can
  write the argument of $\eps$ in $\psi(x)$ as
  $\varphi'\diam{a}\varphi''$ or as
  $\varphi'\diam{\hat\tau}\varphi''$.  Clearly, this means that $H$
  can contain at most one positive premise $x\trans{\beta}y$, with
  $\beta\neq\tau$,  and
  cannot contain a negative premise \plat{$x\ntrans{\gamma}$}.

  If, on the other hand, $x$ does not occur $\Gamma$-liquid in $t$,
  then from clauses \ref{dec1} and \ref{dec3} it follows that there
  exist a $P$-ruloid $\frac{H}{t\trans{\alpha}u}$ and decomposition
  mappings $\psi_1\in t^{-1}(\varphi_1)$ and
  $\psi_2\in u^{-1}(\varphi_2)$ such that
  \begin{equation*}
    \psi(x)=
    \psi_1(x)\
    \land
    \psi_2(x)\
    \land\
    \bigwedge_{x\trans{\beta}y\in H}\!\!\!\!\!\diam{\beta}\psi_2(y)
    \ \land\  \bigwedge_{x\ntrans{\gamma}\in
      H}\!\!\!\!\!\neg\diam{\gamma}\top \enskip,
  \end{equation*}
  and it is clear that $x$ cannot occur at all in $H$.

  The analysis above yields a rudimentary syntactic requirement for
  the $P$-ruloids \plat{$\frac{H}{t\trans{\alpha}u}$}: \textit{$H$
    may not contain negative premises and, for all variables
    $x\in\var(t)$, if $x$ has no $\Gamma$-liquid occurrence in $t$,
    then it does not have a $\Gamma$-liquid occurrence in $H$ either,
    and if $x$ does have a $\Gamma$-liquid occurrence in $t$, then it
    can have at most one $\Gamma$-liquid occurrence of $x$ in
    $H$, which must be in a positive premise $x\trans{\beta}y$ with
    $\beta\neq\tau$}.
  
  As is shown in \cite{FvGdW12}, the requirement derived above can be
  relaxed if $\Gamma$ is defined as $\Lambda\cap\aleph$, where
  $\Lambda$ and $\aleph$ are two auxiliary predicates. Intuitively,
  the predicate $\Lambda$ marks arguments that contain processes that
  have started executing (but may currently be unable to execute),
  while $\aleph$ marks arguments that contain processes that
  can execute immediately.  For example, in process algebra, $\Lambda$
  and $\aleph$ hold for the arguments of the parallel composition $t_1\|t_2$ and for
  the first argument of sequential composition $t_1{\cdot}t_2$; they
  can contain processes that started to execute in the past, and these
  processes can continue their execution immediately. In absence of
  the empty process, $\Lambda$ and $\aleph$ do not hold for the second argument of sequential
  composition; it contains a process that did not yet start to
  execute, and cannot execute immediately. If immediate
    termination is possible, i.e., in the presence of the empty
    process,
  then this argument becomes $\aleph$-liquid; cf.\ the sequencing operator
  example in Sect.~\ref{sec:sequencing}.
  $\Lambda$ does not hold and $\aleph$ holds for the
  arguments of alternative composition $t_1+t_2$; they contain
  processes that did not yet start to execute, but that can start
  executing immediately.

  Below, we recall the (rooted) branching bisimulation format proposed
  in \cite{FvGdW12}. Due to the use of the two predicates $\Lambda$
  and $\aleph$, the syntactic restrictions are technically more
  complicated than sketched above. We refer the reader to
  \cite{FvGdW12} for further explanations and examples.

\begin{definition}\label{def:rbbsafe}
  Let $\aleph$ and $\Lambda$ be predicates on
  $\{(f,i)\mid 1 \leq i \leq ar(f),~f \in \Sigma \}$.  A standard
  ntytt rule $r=\frac{H}{t\trans\alpha u}$ is {\em rooted branching
    bisimulation safe} w.r.t.\ $\aleph$ and $\Lambda$ if it satisfies
  the following conditions.
  \begin{enumerate}
  \item Right-hand sides of positive premises occur only
    $\Lambda$-liquid in $u$.
  \item If $x\in\var(t)$ occurs only $\Lambda$-liquid in $t$, then $x$
    occurs only $\Lambda$-liquid in $r$.
  \item If $x\in\var(t)$ occurs only $\aleph$-frozen in $t$, then $x$
    occurs only $\aleph$-frozen in $H$.
  \item \label{mainrbb} If $x\in\var(t)$ has exactly one
    $\aleph$-liquid occurrence in $t$ and this occurrence is also
    $\Lambda$-liquid, then $x$ has at most one $\aleph$-liquid
    occurrence in $H$ and this occurrence must be in a positive
    premise. If, moreover, this premise is labelled $\tau$, then $r$
    must be $\aL$-patient.
  \end{enumerate}
  A standard TSS is in {\em rooted branching bisimulation format} if
  it is in ready simulation format and, for some $\aleph$ and
  $\Lambda$, it is $\aL$-patient and only contains rules that are
  rooted stability-respecting branching bisimulation safe w.r.t.\
  $\aleph$ and $\Lambda$. It is in {\em branching bisimulation format}
  if moreover $\Lambda$ is universal, i.e., $\Lambda(f,i)$ for all
  $f\in\Sigma$ and $i=1,\ldots,\ar(f)$.
\end{definition}

\section{Stability-respecting branching bisimilarity as a
  congruence}\label{sec:congruence}

  We present, in Sect.~\ref{sec:formats}, congruence formats for
  stability-respecting branching bisimilarity and rooted
  stability-respecting branching bisimilarity, as relaxations of the
  congruence formats for branching bisimilarity and rooted branching
  bisimilarity, respectively. We illustrate the usefulness of the
  formats with applications to the priority operator and an operator
  for sequencing. In Sect.~\ref{sec:correctness} we prove the
  correctness of the formats. 

\subsection{The (rooted) stability-respecting branching bisimulation
  format}\label{sec:formats}

We define when a standard ntytt rule is
rooted stability-respecting branching bisimulation safe, and base the
rooted stability-respecting branching bisimulation format on that
notion. As with the branching bisimulation format, to define the
stability-respecting branching bisimulation format we add one additional restriction to its
rooted counterpart: $\Lambda$ is universal.  Our
aim for the rest of this section will be to prove that the (rooted)
stability-respecting branching bisimulation format guarantees that
(rooted) stability-respecting branching bisimilarity is a congruence.

\begin{definition}\label{def:rooted_bra_bisimulation_safe}
  Let $\aleph$ and $\Lambda$ be predicates on
  $\{(f,i)\mid 1 \leq i \leq ar(f),~f \in \Sigma \}$.  A standard
  ntytt rule $r=\frac{H}{t\trans\alpha u}$ is {\em rooted
    stability-respecting branching bisimulation safe} w.r.t.\ $\aleph$
  and $\Lambda$ if it satisfies the following conditions.
  \begin{enumerate}
  \item \label{rhs} Right-hand sides of positive premises occur only
    $\Lambda$-liquid in $u$.\vspace{-1pt}
  \item \label{Lambda} If $x\in\var(t)$ occurs only $\Lambda$-liquid
    in $t$, then $x$ occurs only $\Lambda$-liquid in $r$.\vspace{-1pt}
  \item \label{aleph} If $x\in\var(t)$ occurs only $\aleph$-frozen in
    $t$, then $x$ occurs only $\aleph$-frozen in $H$.\vspace{-1pt}
  \item \label{main} Suppose that $x$ has exactly one $\aleph$-liquid
    occurrence in $t$, and that this occurrence is also
    $\Lambda$-liquid.\vspace{-1pt}
    \begin{enumerate}
    \item \label{main1} If $x$ has an $\aleph$-liquid occurrence in a
      negative premise in $H$ or more than one $\aleph$-liquid
      occurrence in the positive premises in $H$, then there is a
      premise \plat{$v\ntrans\tau$} in $H$ such that $x$ occurs
      $\aleph$-liquid in $v$.\vspace{-1pt}
    \item \label{main2} If there is a premise $w\trans{\tau}y$ in $H$
      and $x$ occurs $\aleph$-liquid in $w$, then $r$ is
      $\aL$-patient.
    \end{enumerate}
  \end{enumerate}
\end{definition}
Conditions \ref{rhs}--\ref{aleph} have been copied from Def.~\ref{def:rbbsafe}, and condition
\ref{main2} is part of condition 4 in that definition.  Condition
\ref{main1}, however, establishes a relaxation of condition 4 in the
definition of rooted branching bisimulation safeness, where it was
required that $x$ has at most one $\aleph$-liquid occurrence in $H$,
which must be in a positive premise. Here, owing to stability, we can
be more tolerant, as long as $x\ntrans\tau$ can be derived.  As a
consequence of this relaxation we will see that the rule for the
priority operator is rooted stability-respecting branching
bisimulation safe, while it is not rooted branching bisimulation safe.

\begin{definition}\label{def:bra_bisimulation_format}
  A standard TSS is in {\em rooted stability-respecting branching
    bisimulation format} if it is in ready simulation format and, for
  some $\aleph$ and $\Lambda$, it is $\aL$-patient and only contains
  rules that are rooted stability-respecting branching bisimulation
  safe w.r.t.\ $\aleph$ and $\Lambda$.

  This TSS is in {\em stability-respecting branching bisimulation
    format} if moreover $\Lambda$ is universal.
\end{definition}

\paragraph{Application to the priority operator}
\label{sec:priority}

\renewcommand{\epsilon}{\varepsilon} 

The {\em priority} operator \cite{BBK86} is a unary function the
definition of which is based on an ordering $<$ on atomic actions.
The term $\Theta(p)$ executes the transitions of the term $p$, with
the restriction that a transition \plat{$p\trans {\alpha} q$} only gives
rise to a transition \plat{$\Theta(p)\trans {\alpha} \Theta(q)$} if there
does not exist a transition \plat{$p\trans{\beta} q'$} with $\beta>\alpha$.
This intuition is captured by the rule for the priority operator below.\vspace{-1.7ex}
\[
  \frac{x\trans {\alpha} y~~~~~~~~x\ntrans{\beta}\mbox{ for all }
    \beta>\alpha}{\Theta(x)\trans {\alpha} \Theta(y)}
\]

\noindent
The priority operator does not preserve [rooted] branching bisimilarity
(cf.\ \cite[pp.\ 130--132]{Vaa90}), as shown by the following example.

\begin{example}\label{ex:priority}
  Consider the following two LTSs:

  \vspace{2mm}

  \centerline{\begin{picture}(0,0)%
\includegraphics{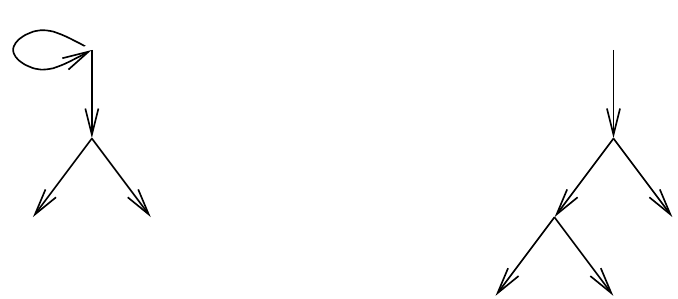}%
\end{picture}%
\setlength{\unitlength}{4144sp}%
\begingroup\makeatletter\ifx\SetFigFont\undefined%
\gdef\SetFigFont#1#2#3#4#5{%
  \reset@font\fontsize{#1}{#2pt}%
  \fontfamily{#3}\fontseries{#4}\fontshape{#5}%
  \selectfont}%
\fi\endgroup%
\begin{picture}(3087,1362)(1606,-1693)
\put(4456,-736){\makebox(0,0)[lb]{\smash{{\SetFigFont{10}{12.0}{\rmdefault}{\mddefault}{\updefault}{\color[rgb]{0,0,0}$a$}%
}}}}
\put(4006,-1456){\makebox(0,0)[rb]{\smash{{\SetFigFont{10}{12.0}{\rmdefault}{\mddefault}{\updefault}{\color[rgb]{0,0,0}$a$}%
}}}}
\put(4276,-1456){\makebox(0,0)[lb]{\smash{{\SetFigFont{10}{12.0}{\rmdefault}{\mddefault}{\updefault}{\color[rgb]{0,0,0}$b$}%
}}}}
\put(4546,-1096){\makebox(0,0)[lb]{\smash{{\SetFigFont{10}{12.0}{\rmdefault}{\mddefault}{\updefault}{\color[rgb]{0,0,0}$a$}%
}}}}
\put(4276,-1096){\makebox(0,0)[rb]{\smash{{\SetFigFont{10}{12.0}{\rmdefault}{\mddefault}{\updefault}{\color[rgb]{0,0,0}$\tau$}%
}}}}
\put(4411,-466){\makebox(0,0)[b]{\smash{{\SetFigFont{10}{12.0}{\rmdefault}{\mddefault}{\updefault}{\color[rgb]{0,0,0}$q$}%
}}}}
\put(2071,-736){\makebox(0,0)[lb]{\smash{{\SetFigFont{10}{12.0}{\rmdefault}{\mddefault}{\updefault}{\color[rgb]{0,0,0}$a$}%
}}}}
\put(2161,-1096){\makebox(0,0)[lb]{\smash{{\SetFigFont{10}{12.0}{\rmdefault}{\mddefault}{\updefault}{\color[rgb]{0,0,0}$b$}%
}}}}
\put(1891,-1096){\makebox(0,0)[rb]{\smash{{\SetFigFont{10}{12.0}{\rmdefault}{\mddefault}{\updefault}{\color[rgb]{0,0,0}$a$}%
}}}}
\put(1621,-601){\makebox(0,0)[rb]{\smash{{\SetFigFont{10}{12.0}{\rmdefault}{\mddefault}{\updefault}{\color[rgb]{0,0,0}$\tau$}%
}}}}
\put(2049,-466){\makebox(0,0)[b]{\smash{{\SetFigFont{10}{12.0}{\rmdefault}{\mddefault}{\updefault}{\color[rgb]{0,0,0}$p$}%
}}}}
\end{picture}%
}

  \vspace{2mm}

\noindent
Clearly $p\bis{b}q$. Note that on the other hand $p\notbis[s]{b}q$,
because $q$ is stable while $p$ cannot perform a sequence of
$\tau$-transitions to a stable state.

Suppose that $a<b$. Let us try to extend the ordering $<$ such that
$\Theta(p)\bis{b}\Theta(q)$. Since $\Theta(p)$ cannot execute the
trace $aa$, we must declare $a<\tau$, to block this trace in
$\Theta(q)$. But then $\Theta(p)$ can only execute an infinite
$\tau$-sequence while $\Theta(q)$ can execute $a$. Hence
$\Theta(p)\notbis{b}\Theta(q)$ for every ordering $<$ (with $a<b$).

Moreover, if $p_0$ and $q_0$ are processes with as only transitions
$p_0 \trans{\tau} p$ and $q_0 \trans{\tau} q$, then
$p_0 \bis{rb} q_0$, but $\Theta(p_0)\notbis{rb}\Theta(q_0)$ for every
ordering $<$ (with $a<b$).
\end{example}
So inevitably, as observed in \cite{FvGdW12}, the rule for the
priority operator is not in the rooted branching bisimulation format.
Namely, the $\aL$-liquid argument $x$ in the source occurs
$\aleph$-liquid in the negative premises, which violates the more
restrictive condition \ref{main} of the rooted branching bisimulation
format.

We proceed to show that the rule for the priority operator does
  satisfy the relaxed condition \ref{main} of
  Def.~\ref{def:rooted_bra_bisimulation_safe}. To this end, first note
  that in view of the target $\Theta(y)$, by condition \ref{rhs} of
Def.~\ref{def:rooted_bra_bisimulation_safe}, the argument of $\Theta$
must be chosen $\Lambda$-liquid. And in view of condition \ref{aleph}
of Def.~\ref{def:rooted_bra_bisimulation_safe}, the argument of
$\Theta$ must be chosen $\aleph$-liquid.  Then the rule above is
rooted stability-respecting branching bisimulation safe, if the
following condition on the ordering on atomic actions is satisfied: if
there is a $\beta$ such that $\beta>\alpha$, then $\tau>\alpha$. Namely,
this guarantees that condition \ref{main1} of
Def.~\ref{def:rooted_bra_bisimulation_safe} is satisfied: if there is
a negative premise \plat{$x\ntrans{\beta}$}, then there is also a negative
premise $x\ntrans\tau$.
Moreover, note that then the rule for the priority operator with $\alpha=\tau$
constitutes a patience rule because there can be no $\beta$ with $\beta>\tau$. Furthermore,
since the argument of $\Theta$ is $\Lambda$-liquid, this operator is within
the stability-respecting branching bisimulation format.
Thm.~\ref{thm:congruence1} and Thm.~\ref{thm:congruence2}, which
are presented at the end of Sect.~\ref{sec:congruence}, will therefore
imply the following congruence results.

\begin{corollary}
  $\bis[s]{b}$ and $\bis[s]{rb}$ are congruences for the priority operator.
\end{corollary}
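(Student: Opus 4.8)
The plan is to verify that the priority operator's single rule
\[
  \frac{x\trans{\alpha} y\quad x\ntrans{\beta}\text{ for all }\beta>\alpha}{\Theta(x)\trans{\alpha}\Theta(y)}
\]
fits the (rooted) stability-respecting branching bisimulation format of Def.~\ref{def:bra_bisimulation_format}, after a suitable choice of the predicates $\aleph$ and $\Lambda$ and, if necessary, a mild extension of the action ordering $<$; the corollary then follows from Thm.~\ref{thm:congruence1} and Thm.~\ref{thm:congruence2}. First I would fix $\aleph(\Theta,1)$ and $\Lambda(\Theta,1)$ to both hold, i.e.\ the single argument of $\Theta$ is both $\aleph$-liquid and $\Lambda$-liquid. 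This is forced: the right-hand side $y$ of the positive premise occurs inside $\Theta(y)$, so by condition~\ref{rhs} the argument must be $\Lambda$-liquid; and since $x$ occurs in a negative premise $x\ntrans\beta$ (in general), for condition~\ref{aleph} not to be violated the argument must be $\aleph$-liquid. With these choices $\Lambda$ is universal on the one function symbol $\Theta$, so establishing the rooted format here automatically gives the non-rooted (stability-respecting) format as well.

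Next I would check the four safeness conditions of Def.~\ref{def:rooted_bra_bisimulation_safe} for this rule. Conditions~\ref{rhs}, \ref{Lambda}, \ref{aleph} are immediate: $y$ occurs $\Lambda$-liquid in $\Theta(y)$; the only variable of the source, $x$, occurs $\Lambda$-liquid throughout (it occurs only in $\Theta(x)$, in premises $x\trans\alpha y$ and $x\ntrans\beta$, all $\Lambda$-liquid since the argument of $\Theta$ is $\Lambda$-liquid), and $x$ is not $\aleph$-frozen in $t$ so condition~\ref{aleph} is vacuous. The substantive point is condition~\ref{main}. The variable $x$ has exactly one $\aleph$-liquid occurrence in $t=\Theta(x)$, which is also $\Lambda$-liquid, so conditions~\ref{main1} and \ref{main2} apply. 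For \ref{main1}: $x$ does occur $\aleph$-liquid in negative premises $x\ntrans\beta$ (whenever some $\beta>\alpha$ exists), so we must exhibit a premise $v\ntrans\tau$ in $H$ with $x$ occurring $\aleph$-liquid in $v$. This is exactly where the condition on $<$ comes in: if we require that whenever $\beta>\alpha$ for some $\beta$, also $\tau>\alpha$, then the premise $x\ntrans\tau$ is among the premises of the rule, and $v=x$ works. For \ref{main2}: a premise $w\trans\tau y$ arises only when $\alpha=\tau$; but then there is no $\beta>\tau$ (by the same condition on $<$, since $\tau$ is maximal among actions bigger than anything), so the instance of the rule with $\alpha=\tau$ has empty negative part and is precisely the $\aL$-patience rule for argument~$1$ of $\Theta$ — hence $r$ is $\aL$-patient, as required. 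This also shows the TSS (CCS plus $\Theta$) is $\aL$-patient once we include the $\Theta$-instance with $\alpha=\tau$, which the rule scheme already does, alongside the CCS patience rules for parallel composition.

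The main obstacle — really the only nontrivial issue — is the side condition on the ordering $<$: the format only applies when the ordering is extended (if necessary) so that $\tau$ dominates every action that is dominated by something, i.e.\ $(\exists\beta>\alpha)\Rightarrow\tau>\alpha$. I would note that such an extension always exists (put $\tau$ above all non-maximal actions, keeping $<$ a strict partial order), and — crucially — that it does not change the transition relation of $\Theta$ on processes built without $\tau$ in a way that matters, so no generality is lost for the intended use; for processes that do use $\tau$, this is the natural and expected behaviour of priority with respect to internal actions. Once this is in place, all hypotheses of Def.~\ref{def:bra_bisimulation_format} are met for both the rooted and non-rooted (universal $\Lambda$) variants, and the congruence results Thm.~\ref{thm:congruence1} and Thm.~\ref{thm:congruence2} — which assert that TSSs in these formats make $\bis[s]{rb}$ resp.\ $\bis[s]{b}$ congruences — yield the corollary directly.
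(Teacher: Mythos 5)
Your proposal is correct and follows essentially the same route as the paper: choose the single argument of $\Theta$ to be both $\aleph$- and $\Lambda$-liquid (forced by conditions~\ref{rhs} and~\ref{aleph}), observe that condition~\ref{main1} holds exactly because the requirement ``$\exists\beta>\alpha \Rightarrow \tau>\alpha$'' puts $x\ntrans{\tau}$ among the premises, that the $\alpha=\tau$ instance is then a patience rule (so condition~\ref{main2} holds and the TSS is $\aL$-patient), and conclude via Thm.~\ref{thm:congruence1} and Thm.~\ref{thm:congruence2}. The one caveat is your aside that extending $<$ so that $\tau$ dominates every non-maximal action ``loses no generality'': such an extension changes the transition relation of $\Theta$ (it blocks more transitions of processes that combine $\tau$ with prioritised actions), and for orderings violating the condition $\bis[s]{b}$ genuinely fails to be a congruence (e.g.\ with only $a<b$, $\tau.(a+b)+a \bis[s]{b} a+b$ but applying $\Theta$ breaks this), so the corollary must be read, as in the paper, under the stated condition on~$<$.
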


\paragraph{Application to sequencing}\label{sec:sequencing}
\newcommand{\seqc}{\ensuremath{\mathop{;}}}

The binary \emph{sequencing operator} $\seqc$ is a variant of
sequential composition that does not rely on a notion of successful
termination. Intuitively, the process $p \seqc q$ behaves as its
left-hand side argument until that can no longer do any transitions;
then it proceeds with its right-hand side argument. The rules below,
which appeared e.g.\ in \cite{Blo94}, formalise this behaviour:\vspace{-4pt}
\[
  \frac{x\trans{\alpha} x'}{x\seqc{} y \trans{\alpha} x'\seqc{}y}
  \qquad\qquad \frac{x\ntrans{\alpha}\mbox{ for all } \alpha\in
    A_\tau~~~~~~~~y\trans{\beta} y'}{x\seqc{}y\trans{\beta} y'}
\]
Sequencing does not preserve [rooted] branching bisimilarity, as shown
by the following example.
\begin{example}
  Consider
  three processes $p$, $q$, and $r$ with $p\trans{\tau}p$ and $r\trans{a}\_$
  while $q$ cannot perform any transitions.
Then $p \bis{b} q$, but $p\seqc r \bis{} p \not\bis{b} r \bis{} q\seqc r$. Note
that $p\not\bis[s]{b} q$ since $q$ is stable while $p$ cannot perform
a sequence of $\tau$-transitions to a stable state.

Using processes $p_0$ and $q_0$ as in Ex.~\ref{ex:priority} shows that
also $\bis{rb}$ fails to be a congruence for sequencing.
\end{example}

We proceed to show that if both arguments of $\seqc$ are chosen to be
$\aleph$-liquid, and only the first argument of $\seqc$ is chosen to
be $\Lambda$-liquid, then both rules are rooted stability-respecting
branching bisimulation safe w.r.t. $\aleph$ and $\Lambda$ (see
Def.~\ref{def:rooted_bra_bisimulation_safe}):
\begin{enumerate}
\item The right-hand side $x'$ of the positive premise in the first
  rule occurs $\Lambda$-liquid in $x'\seqc y$.
\item In both rules, the variable $x$ has only $\Lambda$-liquid
  occurrences.
\item In both rules, both variables $x$ and $y$ have $\aleph$-liquid
  occurrences in the source.
\item In both rules, only the variable $x$ has exactly one
  $\aleph$-liquid occurrence in the source that is also
  $\Lambda$-liquid.
  \begin{enumerate}
  \item The variable $x$ does not have more than one $\aleph$-liquid
    occurrence in the positive premises of the rules. It does have
    $\aleph$-liquid occurrences in negative premises of the second
    rule, but, since $\alpha$ ranges over all actions in $A_\tau$,
    there is also a premise $x\ntrans{\tau}$.
  \item Clearly the first rule with $\alpha=\tau$, which has a premise $x\trans\tau x'$
    with an $\aleph$-liquid occurrence of $x$, is
    $\aleph{\cap}\Lambda$-patient.
  \end{enumerate}
\end{enumerate}
Thm.~\ref{thm:congruence2}, which is presented at the end of
Sect.~\ref{sec:congruence}, will therefore imply the following congruence result.

\begin{corollary}
  $\bis[s]{rb}$ is a congruence for the sequencing operator.
\end{corollary}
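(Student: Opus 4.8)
The plan is to recognise the transition system specification (TSS) at hand as an instance of the rooted stability-respecting branching bisimulation format of Def.~\ref{def:bra_bisimulation_format} and then to appeal to Thm.~\ref{thm:congruence2}, the congruence theorem for that format. Concretely, I would take an arbitrary standard TSS $P$ that is already in the format w.r.t.\ some predicates $\aleph$ and $\Lambda$ --- CCS will do --- and form the extension $P'$ obtained by adding the binary symbol $\seqc$ with its two rules, declaring $\aleph(\seqc,1)$, $\aleph(\seqc,2)$ and $\Lambda(\seqc,1)$ but \emph{not} $\Lambda(\seqc,2)$, and leaving the predicates on the function symbols of $P$ unchanged. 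Since $P$ is already in the format, the only remaining obligations are that $P'$ is still in ready simulation format, that $P'$ is still $\aL$-patient, and that the two new rules are rooted stability-respecting branching bisimulation safe w.r.t.\ the extended $\aleph$ and $\Lambda$; the corollary for $\seqc$ then follows at once from Thm.~\ref{thm:congruence2}.

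The first two obligations are immediate. Both sequencing rules are ntyft rules (Def.~\ref{def:ntytt}): each source is a single function symbol applied to distinct variables, the right-hand side $x'$, resp.\ $y'$, of the unique positive premise is a fresh variable not occurring in the source, and every left-hand side of a premise is a variable; and neither rule has lookahead, the possibly infinite set of negative premises $x\ntrans{\alpha}$ in the second rule being irrelevant for this. The only argument of $\seqc$ that is $\aL$-liquid is the first, and the first rule instantiated with $\alpha=\tau$ is precisely its $\aL$-patience rule, while the second argument, being $\Lambda$-frozen, is not $\aL$-liquid and hence requires no patience rule; so $P'$ is $\aL$-patient. The third obligation is exactly what was checked in the enumeration directly preceding this corollary, namely conditions~\ref{rhs}--\ref{main} of Def.~\ref{def:rooted_bra_bisimulation_safe}. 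The one point worth re-emphasising is that in condition~\ref{main1} the surplus $\aleph$-liquid occurrences of $x$ occur only in the negative premises $x\ntrans{\alpha}$ of the second rule, among which $x\ntrans{\tau}$ is present as required; and condition~\ref{main2} holds because the first rule instantiated with $\alpha=\tau$ is its own $\aL$-patience rule.

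I expect the one genuinely delicate point to be the bookkeeping with the predicates rather than any calculation. One must check that the choice $\neg\Lambda(\seqc,2)$ is forced: were the second argument declared $\Lambda$-liquid, hence $\aL$-liquid, then $\aL$-patience would oblige us to add a patience rule for it, which is neither among the rules of $\seqc$ nor sound for sequencing. Consequently $\Lambda$ cannot be taken universal, so the construction places $P'$ only in the rooted format --- which is exactly why the corollary speaks of $\bis[s]{rb}$ and not of $\bis[s]{b}$, consistently with the example above where already plain $\bis{b}$ fails to be a congruence for $\seqc$. For full rigour one should also recall that Thm.~\ref{thm:congruence2} is stated for a \emph{complete} TSS, so $P$, and with it $P'$, is tacitly assumed complete; checking completeness of $P'$ from that of $P$ is routine.
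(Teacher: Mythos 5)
Your proposal is correct and follows essentially the same route as the paper: it verifies conditions \ref{rhs}--\ref{main} of Def.~\ref{def:rooted_bra_bisimulation_safe} for the two sequencing rules with $\aleph$ liquid on both arguments and $\Lambda$ liquid only on the first, checks ready simulation format and $\aL$-patience, and then invokes Thm.~\ref{thm:congruence2}. Your additional remarks on completeness and on why $\Lambda$ cannot be universal match the paper's surrounding discussion.
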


\noindent
Note that we cannot take $\Lambda$ to be universal, for then by
condition \ref{main2} we would need the second rule for $\beta=\tau$
to be $\aleph{\cap}\Lambda$-patient.  Yet, as is easy to check,
$\bis[s]{b}$ is a congruence for sequencing.  This shows that our
(unrooted) stability-respecting branching bisimulation format does not
cover all relevant operators from the literature.

We argue that it is not straightforward to formulate a congruence format for
stability-preserving branching bisimilarity that includes the rules
for the sequencing operator. The second rule for the
sequencing operator is not a patience rule for two reasons: it has
negative premises and the target is not of the form $x \seqc{}
y'$. Putting $x\seqc{} y'$ as the target would not change the semantics in an
essential way, so only the first reason is relevant. The following
example shows that a mild generalisation of the notion of patience
rule allowing some negative premises would not work.

\begin{example}
Consider a binary function symbol $f$ defined by the following three rules:
\[
  \frac{x \trans{\tau} x'~~~~~~~~y \ntrans{\tau}}{f(x,y) \trans{\tau} f(x',y)}
  \qquad\qquad
  \frac{x \ntrans{\tau}~~~~~~~~y \trans{\tau} y'}{f(x,y) \trans{\tau} f(x,y')}
  \qquad\qquad
  \frac{x \trans{a} x'}{f(x,y) \trans{a} f(x',y)}
\]
Similar to the second argument of the sequencing operator, the `patience rules'
for the two arguments of $f$ both carry an additional negative
premise. Making both arguments $\Lambda$- and $\aleph$-liquid, all
requirements of the stability-respecting branching bisimulation format
are met, including \ref{main2} if the first two rules are considered
patience rules.

However, $\bis[s]{b}$ is not a congruence for $f$. Suppose $p \trans{\tau} p' \trans{a} p''$ and $q \trans{\tau} q'$.
Then clearly $p \bis[s]{b} p'$, but $f(p,q) \not\bis[s]{b} f(p',q)$ because $f(p,q)$ cannot
perform any transition while $f(p',q)$ can perform an $a$-transition.
\end{example}

\subsection{Correctness} \label{sec:correctness}

  To prove that (rooted) stability-respecting branching bisimilarity
  is indeed a congruence for every complete standard TSS $P$ in the
  (rooted) stability-respecting branching bisimulation format, we use
  the modal decomposition method discussed in
  Sect.~\ref{sec:decompmethod}. It suffices to establish that the
  transformation to ruloids preserves the syntactic restrictions of
  the format, and that the format, in turn, ensures that decomposition
  mappings associated with a formula in $\mathbb{O}^s_{b}$
  ($\mathbb{O}^s_{rb}$) assign to the variables again formulas in
  $\mathbb{O}^s_{b}$ ($\mathbb{O}^s_{rb}$).
  
\paragraph{Preservation of syntactic restrictions}

The definition of modal decomposition is based on
$P$-ruloids. Therefore we must verify that if $P$ is in rooted
stability-respecting branching bisimulation format, then the
$P$-ruloids are rooted stability-respecting branching bisimulation
safe (Prop.~\ref{prop:preservation_bra_bisimulation_safe}).  The key
part of the proof is to show that the syntactic restriction of decent
rooted stability-respecting branching bisimulation safety is preserved
under irredundant provability
(Lem.~\ref{lem:preservation_branching_bisimulation_safe}).

In the proof of the preservation lemma below, rules with a negative
conclusion will play an important role. Therefore the notion of rooted
stability-respecting branching bisimulation safety needs to be
extended to non-standard rules. The following definition coincides
with the definition of rooted branching bisimulation safe for
non-standard rules in \cite{FvGdW12}.

\begin{definition}\label{def:nonstandard-branching}
  An ntytt rule $r=\frac{H}{t\ntrans\alpha}$ is {\em rooted
    stability-respecting branching bisimulation safe} w.r.t.\ $\aleph$
  and $\Lambda$ if it satisfies conditions \ref{Lambda} and
  \ref{aleph} of Def.~\ref{def:rooted_bra_bisimulation_safe}.
\end{definition}

\begin{lemma}
  \label{lem:negative-tau}
  Let $Q$ be an $\aL$-patient TSS in decent ntyft format.  If an ntytt
  rule $\frac{H}{t\ntrans\tau}$ is provable from $Q^+$ and $x$ occurs
  $\aL$-liquid in $t$, then $H$ contains a premise
  \plat{$v\ntrans\tau$} where $x$ occurs $\aL$-liquid in $v$.
\end{lemma}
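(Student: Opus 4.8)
The plan is to prove Lemma~\ref{lem:negative-tau} by induction on the structure of an irredundant proof of the rule $\frac{H}{t\ntrans\tau}$ from $Q^+$, where $Q^+$ is the TSS (in decent ntyft format) obtained from $Q$ by the transformation recalled in Sect.~\ref{sec:ruloids}. Since $t$ has an $\aL$-liquid occurrence of $x$, in particular $t$ is not a single variable, so write $t = f(t_1,\ldots,t_{\ar(f)})$ with the occurrence of $x$ being $\aL$-liquid in $t_j$ for some $\aL$-liquid argument $j$ of $f$. The root of the proof tree is a substitution instance of a rule of $Q^+$ with a negative conclusion; by the construction of $Q^+$, every rule with conclusion of the shape $g(x_1,\ldots,x_{\ar(g)})\ntrans\tau$ is a non-standard rule, obtained by selecting one premise from each standard rule of the decent-ntyft form of $Q$ whose conclusion is $g(x_1,\ldots,x_{\ar(g)})\trans\tau v$, and taking the denials of the selected premises.

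First I would observe that, because $Q$ is $\aL$-patient, among the standard rules with conclusion $f(x_1,\ldots,x_{\ar(f)})\trans\tau v$ there is the $\aL$-patience rule $\frac{x_j\trans\tau y}{f(\bar x)\trans\tau f(x_1,\ldots,x_{j-1},y,x_{j+1},\ldots,x_{\ar(f)})}$ for the $\aL$-liquid argument $j$. Hence the non-standard rule $\rho$ whose instance sits at the root of the proof tree must have selected \emph{some} premise from this patience rule — and its only premise is $x_j\trans\tau y$ — so $\rho$ contains a premise $x_j\ntrans\tau$. Applying the substitution $\sigma$ that instantiates $\rho$ at the root (so $\sigma(x_k) = t_k$), this yields a premise $t_j\ntrans\tau$ among the premises $K$ of the root instance. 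Now $t_j\ntrans\tau$ is a \emph{closed-form} negative literal in the sense that its left-hand side need not be a variable; it is either itself a leaf of the proof tree (a hypothesis, hence a member of $H$), in which case we are done since $x$ occurs $\aL$-liquid in $t_j$ and therefore in $v := t_j$; or it is the conclusion of a sub-proof, to which I apply the induction hypothesis. For the inductive step I need the occurrence of $x$ in $t_j$ to be $\aL$-liquid, which it is by the choice of $j$ and the definition of $\aL$-liquid occurrences (Def.~\ref{def:liquid/frozen}); the induction hypothesis then produces a premise $v\ntrans\tau$ in $H$ with $x$ occurring $\aL$-liquid in $v$, as required.

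The main subtlety — and where I would spend most of the care — is the interface between the proof being over $Q^+$-derivations and the statement being phrased for ntytt rules. The reduction step of the $P^+$-construction (using \cite{FvG96}) reduces left-hand sides of positive premises to variables, but leaves negative premises with possibly non-variable left-hand sides, and the non-standard rules introduced in the last step have sources $f(x_1,\ldots,x_{\ar(f)})$ with distinct $x_i$. So a root instance of such a rule against the substitution $\sigma$ with $\sigma(x_k) = t_k$ is legitimate precisely because the $x_k$ are distinct; the premise $x_j\ntrans\tau$ of the non-standard rule is therefore instantiated to exactly $t_j\ntrans\tau$. I should also check that the proof tree being \emph{well-founded} (Def.~\ref{def:proof}) is what licenses induction, and that the hypotheses of the tree are exactly the literals of $H$ — so a negative literal that is a hypothesis is genuinely in $H$. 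One more point to verify: the non-standard rule $\rho$ might select from the patience rule a premise other than $x_j\trans\tau y$ — but that patience rule has only one premise, so there is no choice, and $\rho$ is forced to include $x_j\ntrans\tau$.

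A cleaner way to package the argument, avoiding a delicate induction, is to appeal to the correspondence from Sect.~\ref{sec:ruloids}: since $\frac{H}{t\ntrans\tau}$ is irredundantly provable from $Q^+$, for the closed instance one has, by the correctness of the $P^+$-transformation, that whenever a closed substitution $\rho'$ satisfies $\rho'(H)$ then $Q\vdash_{\it ws}\rho'(t)\ntrans\tau$. Suppose, for contradiction, that $H$ contains no premise $v\ntrans\tau$ with $x$ occurring $\aL$-liquid in $v$. Then I would construct a closed substitution $\rho'$ agreeing with a ``diverging'' witness on the $\aL$-liquid occurrence of $x$ — concretely, $\rho'(x)$ is a process with a $\tau$-loop — while still satisfying every literal of $H$ (this is where the assumption that no negative-$\tau$ premise ``guards'' $x$ is used, together with the fact that all \emph{positive} premises in a decent nxytt rule have variable left-hand sides so they only constrain $\rho'$ via independent fresh variables). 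By $\aL$-patience of $Q$, feeding the $\tau$-loop through the liquid argument path from $t$ down to the occurrence of $x$ gives $\rho'(t)\trans\tau\rho'(t)$, i.e.\ $Q\not\vdash_{\it ws}\rho'(t)\ntrans\tau$ — contradicting the correspondence. I would present the structural-induction version as the main line, since it is self-contained and more robust, and mention the semantic argument only as intuition; the expected main obstacle is carefully managing the bookkeeping of which premise of the non-standard rule is instantiated to $t_j\ntrans\tau$ and confirming that this literal is either a hypothesis in $H$ or admits the induction hypothesis with the $\aL$-liquidity of $x$ preserved.
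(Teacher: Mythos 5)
Your main-line argument is essentially the paper's own proof: structural induction on the irredundant proof from $Q^+$, using $\aL$-patience to force every non-standard rule with conclusion $f(x_1,\ldots,x_{\ar(f)})\ntrans\tau$ to contain the premise $x_j\ntrans\tau$ for the $\aL$-liquid argument $j$ (the patience rule having a single premise), then instantiating it to $t_j\ntrans\tau$ and either finding it as a hypothesis in $H$ or applying the induction hypothesis to its subproof, with $\aL$-liquidity of $x$ preserved in $t_j$. The only blemish is your claim that $t$ cannot be a single variable and the corresponding omission of the induction basis where the root itself is marked ``hypothesis''; the paper dispatches this case trivially, since then $t\ntrans\tau\in H$ and $x$ occurs $\aL$-liquid in $t$ itself.
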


\begin{proof}
  Recall from Sect.~\ref{sec:ruloids} that also the TSS $Q^+$ is in
  decent ntyft format.  Let \plat{$\frac{H}{t\ntrans\tau}$} be
  provable from $Q^+$, by means of a proof $\pi$. We apply structural
  induction with respect to $\pi$.

  \vspace{2mm}

  \noindent {\em Induction basis}: The case where $\pi$ has only one
  node, marked ``hypothesis'', is trivial, as then $H$ contains
  $t\ntrans\tau$.

  \vspace{2mm}

  \noindent {\em Induction step}: Let $r$ be the decent ntyft rule and
  $\sigma$ the substitution used at the bottom of $\pi$. Let
  $f(x_1,\ldots,x_{\ar(f)})\ntrans{\tau}$ be the conclusion of
  $r$. Then $\sigma(f(x_1,\ldots,x_{\ar(f)}))=t$. Since $x$ occurs
  $\aL$-liquid in $t$, it occurs $\aL$-liquid in $\sigma(x_{i_0})$
  where $i_0$ is an $\aL$-liquid argument of $f$. In view of the
  patience rule for this argument of $f$ and the construction of rules
  with a negative premise it follows that $r$ has a premise
  $x_{i_0}\ntrans\tau$. So a rule
  $\frac{H'}{\sigma(x_{i_0})\ntrans{\tau}}$ is provable from $Q^+$ by
  means of a strict subproof of $\pi$, where $H'\subseteq H$. By
  induction $H'$ contains a premise \plat{$v\ntrans\tau$} where $x$
  occurs $\aL$-liquid in $v$.
\end{proof}

\begin{lemma}
  \label{lem:preservation_branching_bisimulation_safe}
  Let $P$ be an $\aL$-patient TSS in decent ntyft format, in which
  each rule is rooted stability-respecting branching bisimulation safe
  w.r.t.\ $\aleph$ and $\Lambda$.  Moreover, $P$ is either a standard
  TSS or a TSS of the form $Q^+$ with $Q$ an $\aL$-patient standard
  TSS in decent ntyft format.  Then each ntytt rule irredundantly
  provable from $P$ is rooted stability-respecting branching
  bisimulation safe w.r.t.\ $\aleph$ and $\Lambda$.
\end{lemma}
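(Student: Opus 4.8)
The plan is to prove the statement by structural induction on an irredundant proof $\pi$ from $P$ of an ntytt rule $r = \frac{H}{t \trans{\alpha} u}$ (or $\frac{H}{t\ntrans{\alpha}}$ in the non-standard case). For the induction basis, $\pi$ is a single node marked ``hypothesis'', so $r$ is of the form $\frac{\{t\trans{\alpha}u\}}{t\trans{\alpha}u}$ with $t=x$ a variable (since $P$-ruloids are nxytt, left-hand sides of premises are variables, so $t$ itself here must be a variable); such a trivial rule is easily checked to be rooted stability-respecting branching bisimulation safe w.r.t.\ any $\aleph,\Lambda$. For the induction step, let the bottom of $\pi$ be a substitution instance $\sigma(r_0)$ of a rule $r_0 = \frac{H_0}{t_0 \trans{\alpha} u_0}$ of $P$, where $t_0 = f(x_1,\ldots,x_{\ar(f)})$, and where $H$ is obtained by collecting, for each premise $\mu\in H_0$ with left-hand side some $x_i$, the set of hypotheses of the sub-proof $\pi_\mu$ proving $\frac{H_\mu}{\sigma(\mu)}$, together with the literals proved directly as hypotheses. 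Each $\pi_\mu$ is a strict sub-proof, so by the induction hypothesis the rule $\frac{H_\mu}{\sigma(\mu)}$ is rooted stability-respecting branching bisimulation safe.

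First I would set up the bookkeeping carefully: $t = \sigma(t_0)$, $u = \sigma(u_0)$, and for a variable $x$, its $\aleph$-liquid / $\Lambda$-liquid occurrences in $t$ and in $H$ can be traced through $\sigma$ back to occurrences in $t_0$, $u_0$, $H_0$ and in the sub-proofs. Conditions \ref{rhs} and \ref{Lambda} of Def.~\ref{def:rooted_bra_bisimulation_safe} follow as in \cite{FvGdW12}: right-hand sides of positive premises of $r$ are right-hand sides of premises in some $\pi_\mu$, hence occur only $\Lambda$-liquid in the corresponding target, and these targets are substituted into $u_0$ only at $\Lambda$-liquid positions because $r_0$ satisfies condition \ref{rhs}; and a variable occurring only $\Lambda$-liquid in $t$ must, via $\sigma$ and condition \ref{Lambda} for $r_0$ plus the induction hypothesis, occur only $\Lambda$-liquid throughout $r$. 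Condition \ref{aleph} is handled analogously, using condition \ref{aleph} for $r_0$ and for each $\frac{H_\mu}{\sigma(\mu)}$: a variable that occurs only $\aleph$-frozen in $t=\sigma(t_0)$ occurs only $\aleph$-frozen in each $\sigma(x_i)$ for $\aleph$-liquid $i$, but premises of $H_0$ with an $\aleph$-liquid left-hand side $x_i$ force $x_i$ to be $\aleph$-liquid in $t_0$, so $x$ occurs $\aleph$-frozen in the relevant $\sigma(\mu)$, and by induction only $\aleph$-frozen in $H_\mu$.

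The crux is condition \ref{main}, and this is where the extra care compared to \cite{FvGdW12} is needed. Suppose $x$ has exactly one $\aleph$-liquid occurrence in $t=\sigma(t_0)$ which is also $\Lambda$-liquid. Then this occurrence sits in a unique $\sigma(x_{i_0})$ with $i_0$ an $\aL$-liquid argument of $f$; moreover $x_{i_0}$ occurs exactly once $\aL$-liquid in $t_0$, and since $r_0$ is rooted stability-respecting branching bisimulation safe, $x_{i_0}$'s $\aleph$-liquid occurrences in $H_0$ are controlled by conditions \ref{main1}–\ref{main2} for $r_0$. I must combine this with the fact that the $\aleph$-liquid occurrences of $x$ in $H$ arise either (a) inside some $\sigma(\mu)$ via an $\aleph$-liquid occurrence of $x$ in $\sigma(x_{i_0})$ that is substituted into the left-hand side of $\mu$ — impossible if $\mu$ has a variable left-hand side, which it need not here since $P$ is in ntyft, not nxytt, format, so $\mu$'s left-hand side is a variable $x_j$ and $x$ occurs $\aleph$-liquid in $\sigma(x_j)$ — forcing $j=i_0$, or (b) via the induction hypothesis applied to the sub-proofs. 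For case \ref{main1}: if $x$ has an $\aleph$-liquid occurrence in a negative premise of $H$, or more than one in the positive premises, I trace this back; if it comes from an $\aleph$-liquid occurrence of $x_{i_0}$ in a negative premise of $H_0$ or from $x_{i_0}$ occurring $\aleph$-liquid more than once in positive premises of $H_0$, then condition \ref{main1} for $r_0$ gives a premise $v_0\ntrans\tau$ in $H_0$ with $x_{i_0}$ occurring $\aleph$-liquid in $v_0$; since left-hand sides of premises of $r_0$ are single variables, $v_0 = x_{i_0}$, so $H$ contains $\sigma(x_{i_0})\ntrans\tau$ — but wait, this is a negative premise with a composite left-hand side, and here Lemma~\ref{lem:negative-tau} applied to the sub-proof $\pi_{x_{i_0}\ntrans\tau}$ (which proves $\frac{H'}{\sigma(x_{i_0})\ntrans\tau}$ from $P$) delivers a premise $v\ntrans\tau$ in $H'\subseteq H$ with $x$ occurring $\aL$-liquid in $v$, as required. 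The remaining subcase — where the offending $\aleph$-liquid occurrence of $x$ in $H$ arises inside a sub-proof $\pi_\mu$ rather than from $x_{i_0}$ being $\aleph$-liquid in $H_0$ — is settled by the induction hypothesis applied to $\frac{H_\mu}{\sigma(\mu)}$, noting $x$ occurs $\aleph$-liquid exactly once and $\Lambda$-liquid in the left-hand side of $\sigma(\mu)$, hence in the ``source'' of that ruloid, so condition \ref{main1} for it yields the desired $v\ntrans\tau$ in $H_\mu\subseteq H$. For condition \ref{main2}: if $H$ contains $w\trans\tau y$ with $x$ occurring $\aleph$-liquid in $w$, then tracing back, either this premise comes from $H_0$ via $x_{i_0}\trans\tau y$ with $x_{i_0}$ $\aleph$-liquid in the source — so by \ref{main2} for $r_0$, $r_0$ is $\aL$-patient, and then $\sigma(r_0)=r$ with all sub-proofs being patience instances forces $r$ to be $\aL$-patient too (this requires checking that the sub-proofs used are exactly the patience-rule proofs, using that $P$ is $\aL$-patient and the sub-proof for $x_{i_0}\trans\tau y$ is itself of patience shape by induction) — or it arises within some $\pi_\mu$, where again the induction hypothesis gives $\aL$-patience of $\frac{H_\mu}{\sigma(\mu)}$ and one stitches the patience proofs together.

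The main obstacle is precisely this last bookkeeping in conditions \ref{main1} and \ref{main2}: keeping track of which $\aleph$-liquid occurrences of a variable in $H$ descend from occurrences in $H_0$ versus from occurrences manufactured inside a sub-proof, and, in the $\aL$-patience subcase of \ref{main2}, verifying that $\aL$-patience is genuinely inherited by the whole composite rule $r$ — i.e., that if $r_0$ is an $\aL$-patience rule and each sub-proof $\pi_\mu$ is either a hypothesis or (by induction) an $\aL$-patient proof, then $\pi$ proves an $\aL$-patient rule. This step leans on the description of $P^+$ from Sect.~\ref{sec:ruloids} (in particular on how non-standard negative-conclusion rules are built by selecting one premise per standard rule) and on Lemma~\ref{lem:negative-tau} to turn a negative premise $x_{i_0}\ntrans\tau$ with composite left-hand side into a genuine premise $v\ntrans\tau$ of the ruloid with $x$ occurring $\aL$-liquid in $v$.
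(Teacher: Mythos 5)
Your proposal is correct and follows essentially the same route as the paper's proof: structural induction on the irredundant proof, with conditions \ref{rhs}--\ref{aleph} and \ref{main2} of Def.~\ref{def:rooted_bra_bisimulation_safe} handled as in the argument for rooted branching bisimulation safety in \cite{FvGdW12}, and condition \ref{main1} established by tracing the $\aleph$-liquid occurrences of $x$ in $H$ back through the bottom rule $r_0$ --- either to a single $\aleph$-liquid occurrence in one positive premise (settled by the induction hypothesis for that subproof) or to a situation triggering condition \ref{main1} of $r_0$ itself, which yields a negative $\tau$-premise that is converted into a premise $v\ntrans{\tau}$ in $H$ with $x$ occurring $\aL$-liquid in $v$, directly when $P$ is standard (negative premises are then hypotheses) and via Lemma~\ref{lem:negative-tau} when $P=Q^+$. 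Two harmless slips worth correcting: left-hand sides of premises of a decent ntyft rule need not be variables (so the negative $\tau$-premise guaranteed by condition \ref{main1} of $r_0$ need not be $x_{i_0}\ntrans{\tau}$, though the same appeal to Lemma~\ref{lem:negative-tau} still goes through since $x$ occurs $\aL$-liquid in its instantiated left-hand side), and in the induction basis the source of the hypothesis rule $\frac{t\trans{\alpha}u}{t\trans{\alpha}u}$ need not be a variable.
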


\begin{proof}
  For all conditions of Def.~\ref{def:rooted_bra_bisimulation_safe}
  and Def.~\ref{def:nonstandard-branching} except \ref{main1}, the
  preservation proof coincides with the corresponding proof for rooted
  branching bisimulation safeness in
  \cite[Lem.~5---aliased~4.5]{FvGdW12}. Therefore we here focus only
  on condition \ref{main1}.  Let an ntytt rule
  $\frac{H}{t\trans{\alpha} u}$ be irredundantly provable from $P$, by
  means of a proof $\pi$. We prove, using structural induction with
  respect to $\pi$, that this rule satisfies condition \ref{main1} of
  Def.~\ref{def:rooted_bra_bisimulation_safe}.

  \vspace{2mm}

  \noindent {\em Induction basis}: Suppose $\pi$ has only one node,
  marked ``hypothesis''. Then $\frac{H}{t\trans{\alpha} u}$ equals
  \plat{$\frac{t\trans{\alpha}u}{t\trans{\alpha}u}$} (so $u$ is a
  variable).  This rule trivially satisfies condition \ref{main1} of
  Def.~\ref{def:rooted_bra_bisimulation_safe}.

  \vspace{2mm}

  \noindent {\em Induction step}: Let $r$ be the decent ntyft rule and
  $\sigma$ the substitution used at the bottom of $\pi$. By
  assumption, $r$ is decent, ntyft, and rooted stability-respecting
  branching bisimulation safe w.r.t.\ $\aleph$ and $\Lambda$.  Let $r$
  be of the form
  \[
    \frac{\{v_k\trans{\beta_k}y_k\mid k\in
      K\}\cup\{w_\ell\ntrans{\gamma_\ell}\mid \ell\in L\}}
    {f(x_1,\ldots,x_{\ar(f)})\trans{\alpha}v}
  \]
  Then $\sigma(f(x_1,\ldots,x_{\ar(f)}))=t$ and
  $\sigma(v)=u$. Moreover, decent ntytt rules
  $r_k = \frac{H_k}{\sigma(v_k)\trans{\beta_k}\sigma(y_k)}$ for each
  $k\in K$ and
  $r_\ell = \frac{H_\ell}{\sigma(w_\ell)\ntrans{\gamma_\ell}}$ for
  each $\ell\in L$ are irredundantly provable from $P$ by means of
  strict subproofs of $\pi$, where
  $H=\bigcup_{k\in K}H_k\cup\bigcup_{\ell\in L}H_\ell$.  By induction,
  they are rooted stability-respecting branching bisimulation safe
  w.r.t.\ $\aleph$ and $\Lambda$.

  Suppose that $x$ has exactly one $\aleph$-liquid occurrence in $t$,
  which is also $\Lambda$-liquid. Then there is an
  $i_0\in \{1,\ldots,\ar(f)\}$ with $\aleph(f,i_0)$ and
  $\Lambda(f,i_0)$ such that $x$ has exactly one $\aleph$-liquid
  occurrence in $\sigma(x_{i_0})$, which is also
  $\Lambda$-liquid. Furthermore, for each
  $i\in \{1,\ldots,\ar(f)\}\backslash\{i_0\}$, $\neg\aleph(f,i)$ or
  $x$ occurs only $\aleph$-frozen in $\sigma(x_i)$. Since the ntyft
  rule $r$ is rooted stability-respecting branching bisimulation safe
  w.r.t.\ $\aleph$ and $\Lambda$, by condition \ref{aleph} of
  Def.~\ref{def:rooted_bra_bisimulation_safe}, if $\neg\aleph(f,i)$,
  then $x_i$ occurs only $\aleph$-frozen in $v_k$ for all $k\in K$, as
  well as in $w_\ell$ for all $\ell\in L$. We distinguish three
  possible cases, and argue each time that condition \ref{main1} of
  Def.~\ref{def:rooted_bra_bisimulation_safe} is satisfied.

  \vspace{2mm}

  \noindent {\sc Case 1}: $x_{i_0}$ has no $\aleph$-liquid occurrences
  in the premises of $r$. Then $x$ has no $\aleph$-liquid occurrences
  in $\sigma(v_k)$ for $k\in K$ and $\sigma(w_\ell)$ for $\ell\in
  L$. So by condition \ref{aleph} of
  Def.~\ref{def:rooted_bra_bisimulation_safe} and decency of the $r_k$
  and $r_\ell$, $x$ has no $\aleph$-liquid occurrences in $H$.

  \vspace{2mm}

  \noindent {\sc Case 2}: $x_{i_0}$ has exactly one $\aleph$-liquid
  occurrence in the premises of $r$, in $v_{k_0}$ for some $k_0\in
  K$. By condition \ref{Lambda} of
  Def.~\ref{def:rooted_bra_bisimulation_safe} this occurrence is also
  $\Lambda$-liquid. Then $x$ has exactly one $\aleph$-liquid
  occurrence in $\sigma(v_{k_0})$, and this occurrence is also
  $\Lambda$-liquid. So by condition \ref{main1} of
  Def.~\ref{def:rooted_bra_bisimulation_safe}, if $x$ has an
  $\aleph$-liquid occurrence in a negative premise in $H_{k_0}$ or
  more than one in the positive premises in $H_{k_0}$, then there is a
  premise \plat{$w\ntrans\tau$} in $H_{k_0}\subseteq H$ where $x$
  occurs $\aleph$-liquid in $w$.  Furthermore, $x$ has no
  $\aleph$-liquid occurrences in $\sigma(v_k)$ for
  $k\in K\setminus\{k_0\}$ and $\sigma(w_\ell)$ for $\ell\in L$. So by
  condition \ref{aleph} of
  Def.~\ref{def:rooted_bra_bisimulation_safe}, $x$ has no
  $\aleph$-liquid occurrences in $H_k$ for $k\in K\setminus\{k_0\}$
  and $H_\ell$ for $\ell\in L$.

  \vspace{2mm}

  \noindent {\sc Case 3}: $x_{i_0}$ occurs $\aleph$-liquid in
  $w_{\ell_0}$ for some $\ell_0\in L$ or has more than one
  $\aleph$-liquid occurrence in the $v_k$ for $k\in K$. Then, by
  condition \ref{main1} of
  Def.~\ref{def:rooted_bra_bisimulation_safe}, $x_{i_0}$ occurs
  $\aleph$-liquid in $w_{\ell_1}$ for some $\ell_1\in L$ with
  $\gamma_{\ell_1}=\tau$. By condition \ref{Lambda} of
  Def.~\ref{def:rooted_bra_bisimulation_safe} this occurrence is also
  $\Lambda$-liquid. It follows that $x$ occurs $\aL$-liquid in
  $\sigma(w_{\ell_1})$.  In case $P$ is a standard TSS the premise
  \plat{$w_\ell\ntrans{\gamma_\ell}$} occurs in $H$, and otherwise, by
  Lem.~\ref{lem:negative-tau}, $H_{\ell_1}\subseteq H$ contains a
  premise \plat{$w\ntrans\tau$} where $x$ occurs $\aL$-liquid in $w$.
\end{proof}

\noindent
The following proposition can now be proved in the same way as the
corresponding Prop.~2---aliased~4.6 ---for rooted branching
bisimulation safeness in \cite{FvGdW12}.

\begin{proposition}
  \label{prop:preservation_bra_bisimulation_safe}
  Let $P$ be an $\aL$-patient TSS in ready simulation format, in which
  each rule is rooted stability-respecting branching bisimulation safe
  w.r.t.\ $\aleph$ and $\Lambda$.  Then each $P$-ruloid is rooted
  stability-respecting branching bisimulation safe w.r.t.\ $\aleph$
  and $\Lambda$.
\end{proposition}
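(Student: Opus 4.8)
The plan is to follow the three-step transformation of $P$ into $P^+$ described in Sect.~\ref{sec:ruloids} and check that each step preserves $\aL$-patience, the decent ntyft format, and rooted stability-respecting branching bisimulation safety (using Def.~\ref{def:nonstandard-branching} for the non-standard rules added in the last step). Once this is in place, $P^+$ is a TSS of the form $Q^+$, where $Q$ is the $\aL$-patient standard TSS in decent ntyft format produced by the first two steps; hence Lemma~\ref{lem:preservation_branching_bisimulation_safe} applies to $P^+$ itself and yields that every ntytt rule irredundantly provable from $P^+$ is rooted stability-respecting branching bisimulation safe w.r.t.\ $\aleph$ and $\Lambda$. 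Since a $P$-ruloid is by definition a decent nxytt rule (in particular an ntytt rule) irredundantly provable from $P^+$, the proposition follows. So the real content is the preservation claim for the transformation.

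For conditions \ref{rhs}--\ref{aleph} of Def.~\ref{def:rooted_bra_bisimulation_safe} and for condition \ref{main2}, the needed preservation under the transformation is exactly what is proved for rooted branching bisimulation safeness in \cite[Prop.~2---aliased~4.6]{FvGdW12}, because these conditions are copied verbatim from the rooted branching bisimulation format; I would simply invoke that argument. The $\aL$-patience rules of $P$ are already decent ntyft (indeed nxytt) rules with no free variables, so the conversion of \cite{GV92} and the premise-reduction of \cite{FvG96} leave them unchanged, and the final step only adds rules; consequently both $Q$ and $P^+ = Q^+$ remain $\aL$-patient, and $P^+$ is in decent ntyft format as already observed in Sect.~\ref{sec:ruloids}. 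Thus the only genuinely new verification is condition \ref{main1}.

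For condition \ref{main1} I would proceed step by step. In the conversion to ntyft format, substituting closed terms for free variables only removes variable occurrences and cannot create new $\aleph$-liquid ones; replacing a variable source $x$ by $f(x_1,\ldots,x_{\ar(f)})$ distributes the unique $\aL$-liquid occurrence of $x$ among the $x_i$ while performing the same replacement in the premises, so a negative $\tau$-premise witnessing \ref{main1} is carried over to a negative $\tau$-premise in which the relevant variable still occurs $\aleph$-liquid. In the premise-reduction step, reducing the left-hand side of a positive premise to a variable neither creates nor deletes $\aleph$-liquid occurrences of a source variable in a way that breaks \ref{main1}, and it retains any negative $\tau$-premises needed as witnesses. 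For a non-standard rule $\frac{H}{f(x_1,\ldots,x_{\ar(f)})\ntrans\alpha}$, Def.~\ref{def:nonstandard-branching} demands only conditions \ref{Lambda} and \ref{aleph}; these hold because each premise of such a rule is the denial of a premise of a standard rule concluding $f(x_1,\ldots,x_{\ar(f)})\trans{\alpha}t$, and those standard rules satisfy \ref{Lambda} and \ref{aleph}, so \ref{main1} is not even required there.

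I expect the main obstacle to be the conversion to ntyft format: turning a variable source into a term $f(x_1,\ldots,x_{\ar(f)})$ is the step that alters the shape of the source most drastically, and one has to be careful that the bookkeeping of condition \ref{main1}---in particular its interaction with $\aL$-patience, which guarantees the presence of the witnessing premise $v\ntrans\tau$, cf.\ Lemma~\ref{lem:negative-tau}---survives. Since the corresponding check for the conditions inherited from the rooted branching bisimulation format is already carried out in \cite{FvGdW12}, and since condition \ref{main1} is a relaxation of the old condition~4 apart from the extra negative-$\tau$-premise witness, which substitution plainly preserves, the remaining work is routine and the proposition follows exactly as \cite[Prop.~2]{FvGdW12}.
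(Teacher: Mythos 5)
Your proposal is correct and follows essentially the paper's own route: the paper proves this proposition precisely by observing that it goes "in the same way as" Prop.~2 (aliased 4.6) of \cite{FvGdW12}, i.e.\ by checking that the transformation to $P^+$ preserves $\aL$-patience, the decent ntyft format and the (relaxed) safety conditions---with Def.~\ref{def:nonstandard-branching} covering the non-standard rules---and then applying Lem.~\ref{lem:preservation_branching_bisimulation_safe} to the ntytt rules irredundantly provable from $P^+$. Your isolation of condition~\ref{main1} as the only genuinely new verification, with everything else delegated to \cite{FvGdW12}, is exactly how the paper organises the argument.
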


\paragraph{Preservation of modal characterisations}

Given a standard TSS in rooted stability-respecting branching
bisimulation format, w.r.t.\ some $\aleph$ and $\Lambda$,
Def.~\ref{def:decomposition} yields decomposition mappings
$\psi\mathbin\in t^{-\!1}\hspace{-1pt}(\phi)$, with $\Gamma:=\aL$. In
this section we prove that if $\phi\in\IO{b}^s$, then
$\psi(x)\in\IO{b}^{s\,\equiv}$ if $x$ occurs only $\Lambda$-liquid in
$t$. (That is why in the stability-respecting branching bisimulation
format, $\Lambda$ must be universal.) Furthermore, we prove that if
$\phi\in\IO{rb}^s$, then $\psi(x)\in\IO{rb}^{s\,\equiv}$ for all
variables $x$. From these preservation results we will deduce the promised congruence
results for unrooted and rooted stability-respecting branching
bisimilarity, respectively.

In \cite{FvGdW12} the following proposition was proved in two separate steps: Prop.~3 and
Prop.~4---aliased~4.7 and~4.8---in that paper. This was possible since $\IO{rb}$ incorporates $\IO{b}$ but not vice versa.
However, since there is a circular dependency between the definitions of $\IO{b}^s$ and
$\IO{rb}^s$, here we need to prove the corresponding results for these modal characterisations
simultaneously. The third part of the proposition below is merely a tool in proving the other parts.

\begin{proposition}\label{prop:rbrab-rbrab}
Let $P$ be an $\aL$-patient standard TSS in
ready simulation format, in which each rule is
rooted stability-respecting branching bisimulation safe w.r.t.\ $\aleph$ and $\Lambda$.
\begin{enumerate}
\item
For each term $t$ and variable $x$ that occurs only $\Lambda$-liquid in $t$:
\[
\phi\in\IO{b}^s\ \Rightarrow\ \forall\psi\in t^{-1}(\phi):\psi(x)\in\IO{b}^{s\,\equiv}
\]
\item
For each term $t$ and variable $x$:
\[
\overline{\phi}\in\IO{rb}^s\ \Rightarrow\ \forall\psi\in t^{-1}(\overline{\phi}):\psi(x)\in\IO{rb}^{s\,\equiv}
\]
\item
For each term $t$ and variable $x$ that occurs only $\Lambda$-liquid and $\aleph$-frozen in $t$:
\[
\overline{\phi}\in\IO{rb}^s\ \Rightarrow\ \forall\psi\in t^{-1}(\overline{\phi}):\psi(x)\in\IO{b}^{s\,\equiv}
\]
\end{enumerate}
\end{proposition}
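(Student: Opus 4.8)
The plan is to prove the three statements simultaneously by induction on the structure of the formula $\phi$ (resp.\ $\overline\phi$), following the shape of the modal decomposition in Def.~\ref{def:decomposition}. This mirrors the argument in \cite{FvGdW12} for $\IO{b}$ and $\IO{rb}$, but because $\IO{b}^s$ and $\IO{rb}^s$ are defined by mutual recursion (the clause $\eps(\neg\diam{\tau}\top\land\overline\phi)$ of $\IO{b}^s$ refers to $\IO{rb}^s$, whose base case is $\IO{b}^s$), the three claims must be established in one combined induction rather than sequentially. First I would recall that the $P$-ruloids used in Def.~\ref{def:decomposition} (with $\Gamma:=\aL$) are themselves rooted stability-respecting branching bisimulation safe, by Prop.~\ref{prop:preservation_bra_bisimulation_safe}; this is what licenses the syntactic analysis of $H$ in each case. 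For the cases of conjunction (Case~\ref{dec1}), negation (Case~\ref{dec2}), and non-injective substitution (Case~\ref{dec6}), the proof is routine: $\IO{b}^{s\,\equiv}$ and $\IO{rb}^{s\,\equiv}$ are each closed under conjunction and negation (up to $\equiv$), so the inductive hypothesis transfers directly; the same structural bookkeeping as in \cite{FvGdW12} applies.

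**The modal-operator cases.**
The substance is in the clauses that introduce modalities. For part~2, a formula $\overline\phi\in\IO{rb}^s$ of the form $\diam{\alpha}\phi$ with $\phi\in\IO{b}^s$ is handled via Case~\ref{dec3}: for $x\in\var(t)$ the decomposition mapping produces $\psi(x)=\chi(x)\land\bigwedge_{x\trans{\beta}y\in H}\diam{\beta}\chi(y)\land\bigwedge_{x\ntrans{\gamma}\in H}\neg\diam{\gamma}\top$, where $\chi\in u^{-1}(\phi)$; since right-hand sides of premises occur only $\Lambda$-liquid in $u$ (condition~\ref{rhs}), the inductive hypothesis (part~1, using $\Lambda$-liquidity) gives $\chi(y)\in\IO{b}^{s\,\equiv}$, and then $\diam{\beta}\chi(y)$ and $\neg\diam{\gamma}\top$ are each in $\IO{rb}^{s\,\equiv}$, so $\psi(x)\in\IO{rb}^{s\,\equiv}$. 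For part~1, the key clauses are $\eps(\psi_1\diam{a}\psi_2)$ and $\eps(\psi_1\diam{\hat\tau}\psi_2)$ (Case~\ref{dec4}) and the new stability clause $\eps(\neg\diam{\tau}\top\land\overline\phi)$. Here one uses conditions~\ref{Lambda}, \ref{aleph} and especially the relaxed condition~\ref{main1}: when $x$ occurs (only) $\Lambda$-liquid in $t$, condition~\ref{main1} guarantees that any problematic multiplicity of $\aleph$-liquid occurrences of $x$ in $H$ — which would spoil the single-$\diam{a}$ shape required by $\IO{b}^s$ — is accompanied by a negative premise $v\ntrans\tau$ with $x$ occurring $\aleph$-liquid in $v$. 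The trick, exactly as foreshadowed in the discussion of the priority operator, is that this lets us fold the extra conjuncts of $\psi(x)$ under the stability modality: instead of trying to write $\psi(x)$ as $\eps(\cdots\diam{a}\cdots)$ in $\IO{b}^s$, we recognise it (up to $\equiv$) as a formula of the form $\eps(\neg\diam{\tau}\top\land\overline\phi)$, since $x\ntrans\tau$ is derivable, and $\overline\phi\in\IO{rb}^s$ absorbs the remaining positive and negative observations. This is where the mutual recursion bites, and it is why part~2 must be available as inductive hypothesis when proving part~1.

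**The auxiliary part 3, and the main obstacle.**
Part~3 is the bridging tool: when $x$ occurs only $\Lambda$-liquid \emph{and $\aleph$-frozen} in $t$, condition~\ref{aleph} forces $x$ to occur only $\aleph$-frozen in $H$, so the decomposition of $\diam\alpha\phi$ (Case~\ref{dec3}) yields for such $x$ a $\psi(x)$ in which no $\diam{\beta}$-conjunct coming from a positive premise and no $\neg\diam{\gamma}\top$-conjunct involving $x$ can appear — $\psi(x)$ reduces to $\chi(x)$, which is in $\IO{b}^{s\,\equiv}$ by part~1. Thus part~3 downgrades the conclusion of part~2 from $\IO{rb}^{s\,\equiv}$ to $\IO{b}^{s\,\equiv}$ precisely for the $\aleph$-frozen variables, and this is needed inside the stability clause: the subformula $\overline\phi\in\IO{rb}^s$ sitting under $\eps(\neg\diam{\tau}\top\land-)$ must be decomposed so that its contribution to a $\Lambda$-liquid, $\aleph$-frozen variable lands back in $\IO{b}^s$, keeping the whole $\eps(\neg\diam{\tau}\top\land\overline\psi)$ legal. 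I expect the main obstacle to be the stability clause itself: carefully checking that in every sub-case of Case~\ref{dec4} — and in Case~\ref{dec5} for $\diam{\hat\tau}$ — the conjuncts collected into $\psi(x)$ can always be reassembled, modulo logical equivalence, into one of the permitted shapes of $\IO{b}^s$ or $\IO{rb}^s$, exploiting condition~\ref{main1} to supply the needed $\neg\diam{\tau}\top$ whenever more than one $\aleph$-liquid positive premise or an $\aleph$-liquid negative premise on $x$ is present, and handling the patience-rule side condition~\ref{main2} so that $\tau$-labelled premises on $\aleph$-liquid variables only arise from patience rules and hence are already accounted for by the $\eps(\cdot)$ outer modality. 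Getting the $\equiv$-closure bookkeeping right — commuting conjunctions past $\eps$ and $\neg\diam{\tau}\top$, and recognising when a conjunction of $\diam{\beta}$-formulas and negated diamonds is $\IO{rb}^s$-shaped — is the delicate but ultimately mechanical core of the argument.
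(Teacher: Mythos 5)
Your strategy is essentially the paper's own proof: the same simultaneous induction on the formula and on the construction of $\psi$ across all three claims, with Prop.~\ref{prop:preservation_bra_bisimulation_safe} licensing the syntactic analysis of the ruloids, condition~\ref{main1} of Def.~\ref{def:rooted_bra_bisimulation_safe} supplying the premise $x\ntrans\tau$ that lets the $\eps(\phi_1\diam{a}\phi_2)$ and $\eps(\phi_1\diam{\hat\tau}\phi_2)$ decompositions be folded into the stability clause, and claim~3 playing exactly the bridging role you describe for $\Lambda$-liquid, $\aleph$-frozen variables. The only point your sketch leaves implicit is the sub-case of the new clause $\eps(\neg\diam{\tau}\top\land\overline\phi)$ in which $x$ occurs $\aleph$-liquid and Def.~\ref{def:decomposition}.\ref{4a} applies: there the needed conjunct $\neg\diam{\tau}\top$ is supplied not by condition~\ref{main1} but by the $\aL$-patience ruloid, through the decomposition of the subformula $\neg\diam{\tau}\top$ itself.
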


\begin{trivlist} \item[\hspace{\labelsep}\bf Proof:]
We apply simultaneous induction on the structure of $\phi$, resp.\ $\overline\phi$, and the construction of $\psi$.
We only treat the case where $t$ is univariate. The case where $t$ is not univariate can be dealt with in
the same way as in the proofs of the corresponding Prop.~3 and Prop.~4---aliased~4.7 and~4.8---in \cite{FvGdW12}.
Let $\psi\in t^{-1}(\phi)$. If $x\mathbin{\notin}\var(t)$ then $\psi(x)\mathbin{\equiv}\top\mathbin\in\IO{b}^{s\,\equiv}$.
So suppose $x$ occurs exactly once in $t$.

We start with the first claim of the proposition.
The cases where $\phi$ is of the form $\bigwedge_{i\in I}\phi_i$ or $\neg\phi'$ can be dealt with as in the proof of
Prop.~3---aliased 4.7---in \cite{FvGdW12}. We therefore focus on the other cases.

\begin{itemize}
\item $\phi=\eps(\phi_1\diam{\hat\tau}\phi_2)$ with $\phi_1,\phi_2\in\IO{b}^s$. Let the occurrence of $x$ in $t$ be $\Lambda$-liquid. According to Def.~\ref{def:decomposition}.\ref{dec4} we can distinguish two cases.

\begin{description}
\item[{\sc Case 1:}] $\psi(x)$ is defined based on Def.~\ref{def:decomposition}.\ref{4a}. Then $\psi(x)=\eps\chi(x)$ if $x$ occurs $\aleph$-liquid in $t$, or $\psi(x)=\chi(x)$ if $x$ occurs $\aleph$-frozen in $t$, for some $\chi\in t^{-1}(\phi_1\diam{\hat\tau}\phi_2)$. By Def.~\ref{def:decomposition}.\ref{dec1}, $\chi(x)=\chi_1(x)\land\chi_2(x)$ with $\chi_1\in t^{-1}(\phi_1)$ and $\chi_2\in t^{-1}(\diam{\hat\tau}\phi_2)$. By induction on formula size, $\chi_1(x)\in\IO{b}^{s\,\equiv}$. For $\chi_2(x)$, according to Def.~\ref{def:decomposition}.\ref{dec5}, we can distinguish three cases. Cases 1.1 and 1.2 where $\chi_2(x)$ is defined based on Def.~\ref{def:decomposition}.\ref{5a} and Def.~\ref{def:decomposition}.\ref{5b}, respectively, proceed in the same way is in the proof of \cite[Prop.~3]{FvGdW12}. We focus on the third case.

\item[{\sc Case 1.3:}]  $\chi_2(x)$ is defined based on
  Def.~\ref{def:decomposition}.\ref{5c}, employing an $\aL$-impatient $P$-ruloid
  $\frac{H}{t\trans{\tau}u}$ and a $\xi\in u^{-1}(\phi_2)$.
  So $\chi_2(x) = \xi(x)\land\bigwedge_{x\trans{\beta}y\in H}\diam{\beta}\xi(y)
              \land\bigwedge_{x\ntrans{\gamma}\in H}\neg\diam{\gamma}\top$.
  By Prop.~\ref{prop:preservation_bra_bisimulation_safe},
  $\frac{H}{t\trans{\tau}u}$ is rooted stability-respecting branching bisimulation safe.
  Since the occurrence of $x$ in $t$ is $\Lambda$-liquid, by condition
  \ref{Lambda} of Def.~\ref{def:rooted_bra_bisimulation_safe}, $x$
  occurs only $\Lambda$-liquid in $u$. Therefore, by induction on formula size,
  $\xi(x)\in\IO{b}^{s\,\equiv}$. Case 1.3.1 where
  the occurrence of $x$ in $t$ is $\aleph$-frozen still proceeds in the same way is in the proof of \cite[Prop.~3]{FvGdW12}.
  We focus on the other case.
  
\item[{\sc Case 1.3.2:}]
  The occurrence of $x$ in $t$ is $\aleph$-liquid. If $H$ has at most one premise of the form \plat{$x\trans{\beta}y$}, for which
  $\beta\neq\tau$, and none of the form $x\ntrans{\gamma}$, then still the proof proceeds in the same way is in the proof of \cite[Prop.~3]{FvGdW12}.
  However, the more relaxed condition \ref{main1} of Def.~\ref{def:rooted_bra_bisimulation_safe} allows $H$ to have more than one
  premise of the form \plat{$x\trans{\beta}y$} with $\beta\neq\tau$, or premises $x\ntrans{\gamma}$. Only, then $H$ must also contain
  the premise $x\ntrans{\tau}$. Thus 
  $\psi(x) \equiv \eps \big(\neg\diam{\tau}\top\land\chi_1(x)\land \xi(x)\land
  \bigwedge_{x\trans{b}y\in H}\diam{b}\xi(y)\land\bigwedge_{x\ntrans{c}\in H}\neg\diam{c}\top\big)$.
  By condition \ref{rhs} of Def.~\ref{def:rooted_bra_bisimulation_safe} the right-hand sides $y$ of positive premises in $H$ occur only
  $\Lambda$-liquid in $u$, so by induction $\xi(y)\in\IO{b}^{s\,\equiv}$. Hence the conjuncts $\diam{b}\xi(y)$
  are in $\IO{rb}^{s\,\equiv}$. It follows that $\psi(x)\in\IO{b}^{s\,\equiv}$.

\item[{\sc Case 2:}] $\psi(x)$ is defined based on
  Def.~\ref{def:decomposition}.\ref{4b}, employing an $\aL$-impatient $P$-ruloid
  $\frac{H}{t\trans{\tau}u}$ and a $\chi\in u^{-1}(\eps(\phi_1\diam{\hat\tau}\phi_2))$.
  By Prop.~\ref{prop:preservation_bra_bisimulation_safe},
  $\frac{H}{t\trans{\tau}u}$ is rooted stability-respecting branching bisimulation safe.
  Since the occurrence of $x$ in $t$ is $\Lambda$-liquid, by condition
  \ref{Lambda} of Def.~\ref{def:rooted_bra_bisimulation_safe}, $x$
  occurs only $\Lambda$-liquid in $u$. Therefore, by induction on the
  construction of $\psi$, $\chi(x)\in\IO{b}^\equiv$. Case 2.1 where
  the occurrence of $x$ in $t$ is $\aleph$-frozen proceeds in the same way is in the proof of \cite[Prop.~3]{FvGdW12}.
  We focus on the other case.

\item[{\sc Case 2.2:}] \hypertarget{2.2}{The occurrence of $x$ in $t$ is $\aleph$-liquid.}
  Then
  $\psi(x)=\eps\big(\chi(x)\ \land\bigwedge_{x\trans{\beta}y\in H}\diam{\beta}\chi(y)
              \land \bigwedge_{x\ntrans{\gamma}\in H}\neg\diam{\gamma}\top\big)$.
  If $H$ has at most one premise of the form \plat{$x\trans{\beta}y$}, for which
  $\beta\neq\tau$, and none of the form $x\ntrans{\gamma}$, then still the proof proceeds in the same way is in the proof of \cite[Prop.~3]{FvGdW12}.
  However, the more relaxed condition \ref{main1} of Def.~\ref{def:rooted_bra_bisimulation_safe} allows $H$ to have more than one
  premise of the form \plat{$x\trans{\beta}y$} with $\beta\neq\tau$, or premises $x\ntrans{\gamma}$. Only, then $H$ must also contain
  the premise $x\ntrans{\tau}$. Thus
  $\psi(x) \equiv \eps\big(\neg\diam{\tau}\top\land\chi(x)\land\bigwedge_{x\trans{b}y\in H}\diam{b}\chi(y)
              \land\bigwedge_{x\ntrans{c}\in H}\neg\diam{c}\top\big)$.
  By condition \ref{rhs} of Def.~\ref{def:rooted_bra_bisimulation_safe} the right-hand sides $y$ of positive premises in $H$ occur only
  $\Lambda$-liquid in $u$, so by induction $\xi(y)\in\IO{b}^{s\,\equiv}$. Hence the conjuncts $\diam{b}\chi(y)$
  are in $\IO{rb}^{s\,\equiv}$. It follows that $\psi(x)\in\IO{b}^{s\,\equiv}$.
\end{description}
\end{itemize}

\noindent
The proof of the case $\phi=\eps(\phi_1\diam{a}\phi_2)$ in $\IO{b}$
from \cite[Prop.~3]{FvGdW12} needs to be adapted in a similar fashion as
the case $\phi=\eps(\phi_1\diam{\hat\tau}\phi_2)$. We take the liberty to omit this adaptation here, and continue with
the additional clause in the modal characterisation of $\IO{b}^s$, compared to $\IO{b}$.

\begin{itemize}

\item $\phi=\eps(\neg\diam{\tau}\top\land\,\overline{\phi})$ with $\overline{\phi}\in\IO{rb}^s$. Let the occurrence of $x$ in $t$ be $\Lambda$-liquid. According to Def.~\ref{def:decomposition}.\ref{dec4} we can distinguish two cases.

\begin{description}
\item[{\sc Case 1:}] $\psi(x)$ is defined based on Def.~\ref{def:decomposition}.\ref{4a}. Then $\psi(x)=\eps\chi(x)$ if $x$ occurs $\aleph$-liquid in $t$, or $\psi(x)=\chi(x)$ if $x$ occurs $\aleph$-frozen in $t$, for some $\chi\in t^{-1}(\neg\diam{\tau}\top\land\,\overline{\phi})$. By Def.~\ref{def:decomposition}.\ref{dec1}, $\chi(x)=\chi_1(x)\land\chi_2(x)$ with $\chi_1\in t^{-1}(\neg\diam{\tau}\top)$ and $\chi_2\in t^{-1}(\overline{\phi})$. By Def.~\ref{def:decomposition}.\ref{dec2}
there is a function $h:t^{-1}(\diam{\tau}\top)\rightarrow\var(t)$ such that $\chi_1(x)=\wedge_{\xi\in h^{-1}(x)}\neg\xi(x)$.

\item[{\sc Case 1.1:}] $x$ occurs $\aleph$-liquid in $t$. By Def.~\ref{def:decomposition}.\ref{dec3}, for
  each $\xi\in h^{-1}(x)$, $\xi(x)$ is of the form $\bigwedge_{x\trans{\beta}y\in H}\diam{\beta}\top\wedge\bigwedge_{x\ntrans{\gamma}\in H}\neg\diam{\gamma}\top$ for some
  $P$-ruloid $\frac{H}{t\trans{\tau}u}$. Note that such formulas are in $\IO{rb}^s$. Moreover, by
  induction on formula size, $\chi_2(x)\in\IO{rb}^{s\,\equiv}$. Since the occurrence of $x$ in $t$
  is $\aL$-liquid, there is an $\aL$-patient ruloid $\frac{x\trans{\tau}y}{t\trans{\tau} t'}$. This gives
  rise to a $\xi\in h^{-1}(x)$ such that $\xi(x)=\neg\diam{\tau}\top$. Concluding,
  $\psi(x)=\eps(\chi_1(x)\land\chi_2(x))$ is of the form
  $\eps(\neg\diam{\tau}\top\land\overline{\phi}')$ for some $\overline{\phi}'\in\IO{rb}^s$. So
  $\psi(x)\in\IO{b}^s$.

\item[{\sc Case 1.2:}] $x$ occurs $\aleph$-frozen in $t$. Then by condition \ref{aleph} of
  Def.~\ref{def:rooted_bra_bisimulation_safe}, $x$ does not occur in $H$. Since moreover
  $\omega(x)\equiv\top$ for each $\omega\in t^{-1}(\top)$, by
  Def.~\ref{def:decomposition}.\ref{dec3}, $\xi(x)\equiv\top$ for each $\xi\in h^{-1}(x)$. So either
  $\chi_1(x)\equiv\neg\top$ if $h^{-1}(x)$ is non-empty or $\chi_1(x)\equiv\top$ if $h^{-1}(x)$ is
  empty. In the first case $\psi(x)\equiv\neg\top$, so then we are done. In the second case,
  by induction on formula size, using the third claim of this proposition, $\psi(x)\equiv\chi_2(x)\in\IO{b}^{s\,\equiv}$.

\item[{\sc Case 2:}] $\psi(x)$ is defined based on Def.~\ref{def:decomposition}.\ref{4b}.
This case proceeds in the same way as case 2 of $\phi=\eps(\phi_1\diam{\hat\tau}\phi_2)$.
\end{description}
\end{itemize}
This completes the proof of the first claim of the proposition. We continue with the second claim of the proposition.
The cases where $\overline{\phi}$ is of the form $\bigwedge_{i\in I}\overline{\phi}_i$ or $\neg\overline{\phi}'$ or $\diam\alpha\phi$ can be dealt with as in the proof of Prop.~4 in \cite{FvGdW12}. We focus on the only other case.

\begin{itemize}
\item
$\overline{\phi}\in\IO{b}^s$. The cases where $\overline{\phi}$ is of the form $\bigwedge_{i\in I}\phi_i$ or $\neg\phi'$ or $\eps(\phi_1\diam{\hat\tau}\phi_2)$ or $\eps(\phi_1\diam{a}\phi_2)$ can be dealt with as in the proof of Prop.~4 in \cite{FvGdW12}. We focus on the only new case here.

\item[$\ast$]
$\overline{\phi}=\eps(\neg\diam{\tau}\top\land\overline{\phi}')$ with $\overline{\phi}'\in\IO{rb}^s$. If the occurrence of $x$ in $t$ is $\Lambda$-liquid, then we already proved in the corresponding case for the first claim of the proposition that $\psi(x) \in \IO{b}^{s\,\equiv}\subset\IO{rb}^{s\,\equiv}$. So we can assume that this occurrence is $\Lambda$-frozen. According to Def.~\ref{def:decomposition}.\ref{dec4} we can distinguish two cases.

\begin{description}
\item[{\sc Case 1:}] $\psi(x)$ is defined based on Def.~\ref{def:decomposition}.\ref{4a}. Then, since $x$ occurs $\Lambda$-frozen in $t$, $\psi(x)=\chi(x)$ for some $\chi\in t^{-1}(\neg\diam{\tau}\top\land\overline{\phi}')$. By Def.~\ref{def:decomposition}.\ref{dec1}, $\chi(x)=\chi_1(x)\land\chi_2(x)$ with $\chi_1\in t^{-1}(\neg\diam{\tau}\top)$ and $\chi_2\in t^{-1}(\overline{\phi}')$. By induction on formula size, $\chi_1(x)\in\IO{rb}^{s\,\equiv}$ and $\chi_2(x)\in\IO{rb}^{s\,\equiv}$. Hence, $\psi(x)=\chi_1(x)\land\chi_2(x)$ is in $\IO{rb}^{s\,\equiv}$.

\item[{\sc Case 2:}] $\psi(x)$ is defined based on Def.~\ref{def:decomposition}.\ref{4b}, using an $\aL$-impatient $P$-ruloid $\frac{H}{t\trans\tau u}$ and a $\chi\in u^{-1}(\eps(\neg\diam{\tau}\top\land\overline{\phi}'))$. As the occurrence of $x$ in $t$ is $\Lambda$-frozen,\\
$\psi(x)=\chi(x)\land \bigwedge_{x\trans\beta y\in H}\diam{\beta}\chi(y)\land \bigwedge_{x\ntrans\gamma\in H}\neg\diam{\gamma}\top$.
By induction on the construction of $\psi$, $\chi(x)\in\IO{rb}^\equiv$. Moreover, by condition \ref{rhs} of Def.~\ref{def:rooted_bra_bisimulation_safe} the $y$ occur only $\Lambda$-liquid in $u$, so we proved before that the $\chi(y)$ are in $\IO{b}^{s\,\equiv}$. Hence $\psi(x)\in\IO{rb}^{s\,\equiv}$.
\end{description}
\end{itemize}
This completes the proof of the second claim of the proposition. We finish with the last claim.
The cases where $\overline\phi$ is of the form $\bigwedge_{i\in I}\overline\phi_i$ or $\neg\overline\phi'$ can be dealt with as in the proof of
Prop.~3 in \cite{FvGdW12}, and the case $\overline\phi=\phi\in\IO{b}$ follows immediately from the first claim.
We therefore focus on the remaining case: $\overline\phi=\diam\alpha\phi$ with $\phi\in\IO{b}$.

$\psi(x)$ is defined based on Def.~\ref{def:decomposition}.\ref{dec3}, for some $P$-ruloid $\frac{H}{t\trans{\alpha}u}$
and $\chi\in u^{-1}(\phi)$. Since the occurrence of $x$ in $t$ is $\aleph$-frozen, by condition \ref{aleph} of
Def.~\ref{def:rooted_bra_bisimulation_safe}, $x$ does not occur in $H$. Hence,
$\psi(x)=\chi(x)$. By Prop.~\ref{prop:preservation_bra_bisimulation_safe}, $\frac{H}{t\trans{\alpha}u}$
is rooted stability-respecting branching bisimulation safe.
  Since the occurrence of $x$ in $t$ is $\Lambda$-liquid, by condition
  \ref{Lambda} of Def.~\ref{def:rooted_bra_bisimulation_safe}, $x$
  occurs only $\Lambda$-liquid in $u$. Therefore, by the first claim of this proposition, $\chi(x)\in\IO{b}^\equiv$.
\qed
\end{trivlist}

Now the promised congruence results for $\bis[s]{b}$ and $\bis[s]{rb}$ can be proved in the same way as their counterparts for $\bis{b}$ and $\bis{rb}$ in \cite{FvGdW12}.

\begin{theorem}\label{thm:congruence1}
Let $P$ be a complete standard TSS in stability-respecting branching bisimulation format. Then $\bis[s]{b}$ is a congruence for $P$.
\hfill $\Box$
\end{theorem}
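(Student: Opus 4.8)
The plan is to assemble the theorem from three ingredients that are already available: the modal characterisation of $\bis[s]{b}$ (Thm.~\ref{thm:characterisation}), the modal decomposition result (Thm.~\ref{thm:decomposition}), and the preservation result for the characterising logic $\IO{b}^s$ (Prop.~\ref{prop:rbrab-rbrab}(1)). First I would fix a complete standard TSS $P=(\Sigma,A_\tau,R)$ in stability-respecting branching bisimulation format. By Def.~\ref{def:bra_bisimulation_format} this means $P$ is in ready simulation format and there are predicates $\aleph$ and $\Lambda$, with $\Lambda$ \emph{universal}, such that $P$ is $\aL$-patient and every rule of $P$ is rooted stability-respecting branching bisimulation safe w.r.t.\ $\aleph$ and $\Lambda$. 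Setting $\Gamma:=\aL$, the TSS $P$ is then a $\Gamma$-patient complete standard TSS in ready simulation format, so Thm.~\ref{thm:decomposition} applies to it with the decomposition-mapping sets $t^{-1}(\phi)$ of Def.~\ref{def:decomposition}.

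Next I would reduce the congruence property to a statement about modal formulas. By the reformulation of congruence recalled in Sect.~\ref{sec:ntytt}, it suffices to prove that for every term $t\in\mathbb{T}(\Sigma)$ and every pair of closed substitutions $\rho,\rho'$ with $\rho(x)\bis[s]{b}\rho'(x)$ for all $x\in\var(t)$, one has $\rho(t)\bis[s]{b}\rho'(t)$. By Thm.~\ref{thm:characterisation}, $\bis[s]{b}$ coincides with $\sim_{\IO{b}^s}$, and hence also with $\sim_{\IO{b}^{s\,\equiv}}$, so the hypothesis gives $\rho(x)\sim_{\IO{b}^{s\,\equiv}}\rho'(x)$ for each $x\in\var(t)$ and the goal becomes $\rho(t)\sim_{\IO{b}^s}\rho'(t)$. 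By symmetry it is then enough to take an arbitrary $\phi\in\IO{b}^s$ with $\rho(t)\models\phi$ and derive $\rho'(t)\models\phi$.

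Then I would run the decomposition argument. From $\rho(t)\models\phi$ and the left-to-right direction of Thm.~\ref{thm:decomposition} we obtain a $\psi\in t^{-1}(\phi)$ with $\rho(x)\models\psi(x)$ for all $x\in\var(t)$. Since $\Lambda$ is universal, every variable occurs only $\Lambda$-liquid in $t$, so Prop.~\ref{prop:rbrab-rbrab}(1) gives $\psi(x)\in\IO{b}^{s\,\equiv}$ for every $x\in\var(t)$ (and $\psi(x)\equiv\top\in\IO{b}^{s\,\equiv}$ trivially when $x\notin\var(t)$). Combining $\rho(x)\models\psi(x)$ with $\rho(x)\sim_{\IO{b}^{s\,\equiv}}\rho'(x)$ yields $\rho'(x)\models\psi(x)$ for all $x\in\var(t)$. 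Feeding the same mapping $\psi$ back into the right-to-left direction of Thm.~\ref{thm:decomposition}, now for the substitution $\rho'$, gives $\rho'(t)\models\phi$. Hence $\rho(t)\sim_{\IO{b}^s}\rho'(t)$, i.e.\ $\rho(t)\bis[s]{b}\rho'(t)$, so $\bis[s]{b}$ is a congruence for $P$.

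I do not expect a genuine obstacle in this final step, since the hard ingredient, the preservation result Prop.~\ref{prop:rbrab-rbrab}, has already been established; the argument merely mirrors the one for $\bis{b}$ in~\cite{FvGdW12}. The two points that need care are: invoking the universality of $\Lambda$ exactly at the place where Prop.~\ref{prop:rbrab-rbrab}(1) is applied, so that preservation covers \emph{all} variables of $t$ rather than only the $\Lambda$-liquid ones; and using that Thm.~\ref{thm:decomposition} is a biconditional valid for every closed substitution, so that the single witnessing mapping $\psi$ obtained from $\rho$ also serves as a witness for $\rho'$ once its images are seen to be satisfied by $\rho'$.
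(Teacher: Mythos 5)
Your proposal is correct and follows exactly the route the paper intends: the congruence claim is reduced via Thm.~\ref{thm:characterisation} to preservation of $\IO{b}^s$-formulas, and then established by combining both directions of Thm.~\ref{thm:decomposition} with Prop.~\ref{prop:rbrab-rbrab}(1), where the universality of $\Lambda$ ensures the preservation result applies to every variable of $t$. This is precisely the argument the paper leaves implicit by referring to the corresponding proofs for $\bis{b}$ in \cite{FvGdW12} (cf.\ the schema of Fig.~\ref{fig:moddecomp}), so nothing further is needed.
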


\begin{theorem}\label{thm:congruence2}
Let $P$ be a complete standard TSS in rooted stability-respecting branching bisimulation format. Then $\bis[s]{rb}$ is a congruence for $P$.
\hfill $\Box$
\end{theorem}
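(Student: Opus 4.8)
The plan is to reduce the congruence claim to the modal characterisation of $\bis[s]{rb}$ and then feed the decomposition theorem into the preservation result for that characterisation. Let $P=(\Sigma,A_\tau,R)$ be a complete standard TSS in rooted stability-respecting branching bisimulation format, witnessed by predicates $\aleph$ and $\Lambda$, and put $\Gamma:=\aL$. By Def.~\ref{def:bra_bisimulation_format} the TSS $P$ is then a $\Gamma$-patient complete standard TSS in ready simulation format --- so Thm.~\ref{thm:decomposition} applies with this $\Gamma$ --- and every rule of $P$ is rooted stability-respecting branching bisimulation safe w.r.t.\ $\aleph$ and $\Lambda$, so Prop.~\ref{prop:rbrab-rbrab} applies as well. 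Moreover $\IO{rb}^s$ is a modal characterisation of $\bis[s]{rb}$ by Thm.~\ref{thm:characterisation}.

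First I would recall from Sect.~\ref{sec:ntytt} that to show $\bis[s]{rb}$ is a congruence for $P$ it suffices to prove: for every term $t\in\mathbb{T}(\Sigma)$ and all closed substitutions $\rho,\rho':V\to\mbox{T}(\Sigma)$ with $\rho(x)\bis[s]{rb}\rho'(x)$ for all $x\in\var(t)$, one has $\rho(t)\bis[s]{rb}\rho'(t)$. By Thm.~\ref{thm:characterisation} the hypothesis reads $\rho(x)\sim_{\IO{rb}^s}\rho'(x)$ for all $x\in\var(t)$, and the goal reads $\rho(t)\sim_{\IO{rb}^s}\rho'(t)$; by symmetry and standard logical reasoning (cf.\ Sect.~\ref{sec:decompmethod}) it is enough to take an arbitrary $\overline\phi\in\IO{rb}^s$ with $\rho(t)\models\overline\phi$ and show $\rho'(t)\models\overline\phi$. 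Since $\IO{rb}^s\subseteq\mathbb{O}$, the left-to-right direction of Thm.~\ref{thm:decomposition} supplies a decomposition mapping $\psi\in t^{-1}(\overline\phi)$ with $\rho(x)\models\psi(x)$ for all $x\in\var(t)$. By Prop.~\ref{prop:rbrab-rbrab}(2), $\psi(x)\in\IO{rb}^{s\,\equiv}$ for every variable $x$. Because $\sim_L$ and $\sim_{L^\equiv}$ coincide, $\rho(x)\sim_{\IO{rb}^s}\rho'(x)$ gives $\rho(x)\models\psi(x)\Leftrightarrow\rho'(x)\models\psi(x)$, hence $\rho'(x)\models\psi(x)$ for all $x\in\var(t)$. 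Applying the right-to-left direction of Thm.~\ref{thm:decomposition} to this same $\psi$ then yields $\rho'(t)\models\overline\phi$, which closes the argument. (The companion statement Thm.~\ref{thm:congruence1} follows by the identical chain of reasoning with $\Lambda$ universal, using $\IO{b}^s$ and Prop.~\ref{prop:rbrab-rbrab}(1), whose side-condition ``$x$ occurs only $\Lambda$-liquid in $t$'' holds automatically when $\Lambda$ is universal.)

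The hard part of the whole development has already been done in Prop.~\ref{prop:rbrab-rbrab}, which in turn rests on the ruloid-preservation Prop.~\ref{prop:preservation_bra_bisimulation_safe}; relative to those, the present theorem is pure bookkeeping. The two points that need a moment's care are, first, that the three structural ingredients of the format (ready simulation format, $\aL$-patience, and rooted stability-respecting branching bisimulation safety of every rule) match precisely the hypotheses of Thms.~\ref{thm:characterisation} and~\ref{thm:decomposition} and of Prop.~\ref{prop:rbrab-rbrab} once one sets $\Gamma=\aL$, with completeness entering only so that $P$ has an associated LTS on which both $\bis[s]{rb}$ and $\sim_{\IO{rb}^s}$ live; and second, that Prop.~\ref{prop:rbrab-rbrab}(2) delivers membership in the $\equiv$-closure $\IO{rb}^{s\,\equiv}$ rather than in $\IO{rb}^s$ itself, which is harmless exactly because testing a formula and any $\equiv$-equivalent one against $\rho$ and $\rho'$ produces the same verdicts. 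No fresh induction on the shape of $t$ or of $\overline\phi$ is required here, as all such case analysis was carried out once and for all inside the proof of Prop.~\ref{prop:rbrab-rbrab}; consequently I expect essentially no obstacle beyond faithfully matching up the hypotheses.
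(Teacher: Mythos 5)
Your proposal is correct and follows essentially the same route as the paper, which proves Thm.~\ref{thm:congruence2} by invoking the standard decomposition-based argument of \cite{FvGdW12}: combine the modal characterisation (Thm.~\ref{thm:characterisation}), the decomposition theorem (Thm.~\ref{thm:decomposition} with $\Gamma=\aL$), and the preservation result (Prop.~\ref{prop:rbrab-rbrab}.2), exactly as you do. Your handling of the two delicate points (matching the format's hypotheses to those of the cited results, and the harmlessness of landing in $\IO{rb}^{s\,\equiv}$ rather than $\IO{rb}^s$) is also the intended one.
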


\section{Divergence-preserving branching bisimilarity as a congruence} \label{sec:dpbbcong}

To get a modal characterisation of (weakly) divergence-preserving
  branching bisimilarity, a modality that captures divergence needs to be added to the modal
  logic. A modal characterisation of
  divergence-preserving branching bisimilarity would be obtained by
  adding a unary modality $\Delta$ to the 
  modal logic for branching bisimilarity, with the interpretation
    $p \models \Delta\varphi$ if there is an infinite trace
    $p=p_0\trans{\tau}p_1\trans{\tau}p_2\trans{\tau}\cdots$
  such that $p_i\models\varphi$ for all $i\in\N$.

Modal formulas $\Delta\phi$ to capture divergence elude the inductive decomposition method from
Def.~\ref{def:decomposition}, because they ask for the existence of an infinite sequence of
$\tau$-transitions.
The following example shows that the decomposition method does not readily extend to modalities $\Delta\phi$.

  \begin{example}\label{exa:parallelcomp}
Let $A=\{a_i,\overline{a_i}\mid i\in\N\}$ and let
$A_{\iota}=A\cup\{\iota\}$. The interleaving parallel composition operator $\|$ is usually defined by the following two rules, where $\alpha$ ranges over $A_{\iota}\cup\{\tau\}$:
\[
\frac{x_1\trans\alpha y}{x_1\|x_2\trans\alpha y\|x_2} \qquad\qquad \frac{x_2\trans\alpha y}{x_1\|x_2\trans\alpha x_1\|y}
\]
We extend the operational semantics of this operator with communication rules, for all $i\in\N$:
\[
\frac{x_1\trans{a_i}y_1~~~~~~x_2\trans{\overline{a_i}}y_2}{x_1\|x_2\trans
  \iota y_1\|y_2}
\qquad\qquad
\frac{x_1\trans{\overline{a_i}}y_1~~~~~~x_2\trans{a_i}y_2}{x_1\|x_2\trans \iota y_1\|y_2}
\]
Consider the following two collections of processes $p_i,q_i$ for $i\in\N$. We define $p_i\trans\tau p_{i+1}$ and $p_i\trans{a_i}{\bf 0}$ and $q_i\trans\tau q_{i+1}$ and \plat{$q_i\trans{\overline{a_i}}{\bf 0}$} for all $i\in\N$. We have $p_0\|q_0\models\Delta(\eps\diam{\iota}\top)$. There is no obvious way to decompose this modal property of $p_0\|q_0$ into modal properties of its arguments $p_0$ and $q_0$.
\end{example}%

\noindent
  Rather than modifying the decomposition method in order to derive
  new congruence formats for (weakly) divergence-preserving branching
  bisimilarity, we shall argue in this section that the
  stability-respecting branching bisimilarity format is also a
  congruence format for weakly divergence-preserving branching
  bisimilarity and divergence-preserving branching bisimilarity, and
  similarly for their rooted variants.

  Suppose that $P$ is a $\Gamma$-patient TSS (for some predicate $\Gamma$) on which $\bis[s]{b}$ is a congruence,
  and suppose that we wish to show that $\bis[\Delta\top]{b}$ is a
  congruence on $P$ too. Then, using that
  ${\bis[\Delta\top]{b}}\subseteq{\bis[s]{b}}$, it suffices to argue that,
  for all $p_1,\dots,p_n$ and $q_1,\dots,q_n$ such that
  $p_i\bis[\Delta\top]{b}q_i$ ($1\leq i \leq n$), we have that
  $f(p_1,\dots,p_n)\bis[s]{b}f(q_1,\dots,q_n)$ implies
  $f(p_1,\dots,p_n)\bis[\Delta\top]{b} f(q_1,\dots,q_n)$. The feature
  that distinguishes $\bis[\Delta\top]{b}$ from $\bis[s]{b}$ is that
  $\bis[\Delta\top]{b}$ preserves divergences whereas $\bis[s]{b}$
  does not. Now suppose that $f(p_1,\dots,p_n)$ admits a
  divergence. Then we can distinguish two cases:
  \begin{enumerate}
  \item\label{case:patience}
    the process $f(p_1,\dots,p_n)$ directly inherits this
    divergence from one of its arguments through a patience rule, say,
    $p_i$ admits a divergence and the $i$th argument of $f$ is
    $\Gamma$-liquid; or
  \item none of the $p_i$ for $i$ a $\Gamma$-liquid argument of $f$
    admits a divergence, but somehow there is an interplay between $f$
    and its arguments $p_1,\dots,p_n$ that facilitates the divergence.
  \end{enumerate}
  We want to argue that in both cases $f(q_1,\dots,q_n)$ necessarily admits a
  divergence as well. In the first case, using that $p_i
  \bis[\Delta\top]{b} q_i$, also $q_i$ admits a divergence. So by the
  patience rule for the $i$th argument of $f$, $f(q_1,\dots,q_n)$
  admits a divergence. It hence suffices to reduce the second case to
  the first. The following example roughly illustrates how such a
  reduction can be achieved.
  
  \begin{example}
    Let $A=\{a,\overline{a}\}$ and consider the parallel composition
    operator $\mid$ of CCS for this set of actions $A$
    defined by the following four rules (with $\alpha$ ranging over
    $A_{\tau}$): \
    \[
      \frac{x_1\trans\alpha y}{x_1\mathbin{\mid}x_2\trans\alpha
        y\mathbin{\mid}x_2} \qquad \frac{x_2\trans\alpha
        y}{x_1\mathbin{\mid}x_2\trans\alpha x_1\mathbin{\mid}y} \qquad
      \frac{x_1\trans{a}y_1~~~~~~x_2\trans{\overline{a}}y_2}{x_1\mathbin{\mid}x_2\trans\tau
        y_1\mathbin{\mid}y_2} \qquad
      \frac{x_1\trans{\overline{a}}y_1~~~~~~x_2\trans{a}y_2}{x_1\mathbin{\mid}x_2\trans\tau
        y_1\mathbin{\mid}y_2}
    \]
    Note that the operational semantics of $\mid$ includes patience
    rules for both arguments (specific instances of the two left-most
    rules), but also rules that may lead to divergences not directly
    inherited from arguments. For instance, if $p_1$ is the process with
    just one transition $p_1\trans{a}p_1$ and $p_2$ is the process with just
    one transition \plat{$p_2\trans{\overline{a}}p_2$}, then $p_1\mathbin{\mid}p_2$
    admits a divergence, while $p_1$ and $p_2$ do not admit divergences.

    Now also consider the parallel composition operator $\|$ defined in
    Ex.~\ref{exa:parallelcomp} in combination with a unary
    abstraction operator $\tau_{\iota}$ defined by the following two
    rules, where $\alpha$ ranges over $A_{\tau}$ (i.e., $\alpha\neq\iota$):
    \[
      \frac{x\trans\alpha y}{\tau_{\iota}(x)\trans\alpha
        \tau_{\iota}(y)}
    \qquad\qquad
      \frac{x\trans{\iota}y}{\tau_{\iota}(x)\trans\tau \tau_{\iota}(y)} 
    \]

    \noindent
    It is not hard to see that in a TSS $P$ that includes the CCS
    parallel composition $\mid$, the parallel composition operator 
    $\|$ as defined in Ex.~\ref{exa:parallelcomp} and the unary
    abstraction operator $\tau_{\iota}$, we have, for all processes
    $p_1$ and $p_2$, that ${p_1\mathbin{\mid}p_2} \bis{}
    {\tau_{\iota}(p_1\mathbin{\|}p_2)}$. Hence, since
    ${\bis{}}\subseteq{\bis[\Delta\top]{b}}$, in order to show that
    $\bis[\Delta\top]{b}$ is a congruence for $\mid$ it suffices to
    establish that it is a congruence for $\|$ and that it is a
    congruence for $\tau_{\iota}$. Namely, then $p_1\bis[\Delta\top]{b}\!q_1$ and
    $p_2\bis[\Delta\top]{b}\!q_2$ yield $p_1\mathbin{\mid}p_2\bis{}\tau_{\iota}(p_1\mathbin{\|}p_2)
    \bis[\Delta\top]{b}\!\tau_{\iota}(q_1\mathbin{\|}q_2)\bis{}q_1\mathbin{\mid}q_2$.
    Moreover, it is well-known and easy to see that
    $\tau_{\iota}$ is compositional for a wide range of
    behavioural equivalences, including the ones mentioned in
    Sect.~\ref{sec:equivalences_terms}.
    Since all operational rules for $\|$ with a $\tau$-labelled
    conclusion are patience rules, only case~\ref{case:patience} of
    $p_1\|p_2$ admitting a divergence can apply.
   \end{example}

  \noindent
  The idea discussed in the example above can be
  generalised: every operator $f$ can be expressed as a combination of $\tau_{\iota}$ and
  an \emph{abstraction-free variant} $f'$ of $f$, for which all rules with a $\tau$-labelled
  conclusion are patience rules. The main ingredient of our proof that
  on every TSS $P$ in the stability-respecting branching bisimulation format
  both $\bis[\Delta\top]{b}$ and $\bis[\Delta]{b}$ are congruences,
  and similarly for the rooted case, is a transformation that
  allows us to establish compositionality of an operator $f$ by
  establishing the compositionality of its abstraction-free variant.

  Although the idea is fairly simple, the formal technicalities are
  quite involved. It is therefore convenient to first formulate, in
  Sect.~\ref{sec:Framework}, sufficient conditions on a presupposed
  transformation on TSSs and associated encoding and decoding
  functions for them to be suitable for lifting a congruence format
  for some behavioural equivalence to a finer equivalence.
Instead of dealing with specific equivalences $\bis[\Delta]{b}$ and $\bis[s]{b}$,
we work with parametric equivalences $\sim$ and $\approx$ where $\sim$ is finer than $\approx$; they
will later be instantiated with
$\bis[\Delta]{b}$ and $\bis[s]{b}$. This allows a reuse of our work with $\bis[\Delta\top]{b}$ and
$\bis[s]{b}$ in the roles of $\sim$ and $\approx$, as well as with $\bis[\Delta\top]{rb}$ and
$\bis[\Delta]{rb}$ in the role of $\sim$ and $\bis[s]{rb}$ in the role of $\approx$.
In Sect.~\ref{sec:Abstraction-free} we introduce the machinery needed in Sect.~\ref{sec:Framework}
and show that it has the required properties, except for those that depend on the choice
of $\sim$ and $\approx$. Finally, in Sect.~\ref{sec:Apply} we apply our framework to derive congruence
formats for $\bis[\Delta\top]{b}$, $\bis[\Delta]{b}$, $\bis[\Delta\top]{rb}$ and $\bis[\Delta]{rb}$.

\subsection{A general framework for lifting congruence formats to finer equivalences}
\label{sec:Framework}

Our general proof idea is illustrated in Fig.~\ref{fig:Proof idea}.
Here $P$ (on the left) is a TSS in our congruence format for $\approx$.
We want to show that on $P$ also $\sim$ is a congruence.
So consider an operator $f$, for simplicity depicted as unary.
Given two processes $p$ and $q$ in $P$ (closed terms in its signature) with $p \sim q$,
we need to show that $f(p)\sim f(q)$. Fig.~\ref{fig:Proof idea} shows a roundabout trajectory from
$p$ to $f(p)$. Imagine a similar trajectory from $q$ to $f(q)$---not depicted in
Fig.~\ref{fig:Proof idea}, but hovering above the page.

\begin{figure}[t]
\centering
\input{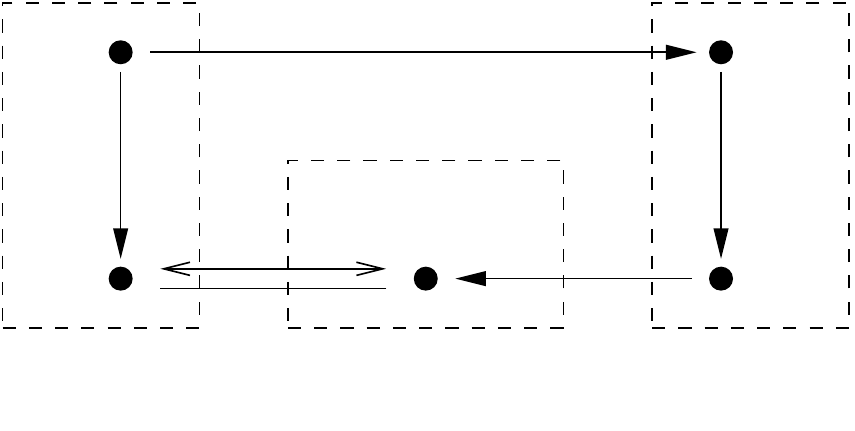_t}
\caption{Proof idea \label{fig:Proof idea}}
\end{figure}

First we apply a transformation $\AFO$ on $P$. The TSS $\AFO(P)$ will be \emph{abstraction-free} in the sense
that it only allows patience rules and rules without premises to carry a conclusion with the label
$\tau$. Moreover, the transformation $\AFO$ needs to be such that
on $\AFO(P)$ the behavioural equivalences $\sim$ and $\approx$ coincide, and to realise this
property it will introduce \emph{oracle transitions}  that reveal some
pertinent information on the behaviour of a process, such as whether it can diverge.
We ensure that $\sim$-equivalence is preserved under the addition of these oracle transitions.

The transformation $\AFO$ will modify the rules of $P$, but we still
want to find for each closed term $p$ of $P$ a faithful representant
of $p$ inside $\AFO(P)$. To this end we introduce for each closed term $p$ of $P$
a constant $\hat p$ in $\AFO(P)$, in such a way that $p \sim q$
implies $\hat p \sim \hat q$.\footnote{In all our applications we have
$p \sim q$ iff $\hat p \sim \hat q$, but our general framework does not require this.}
  Note that by including these processes as
  constants, added oracle transition $\hat p \trans{\omega}\surd$ are decent ntyft rules without premises, which
  therefore do not compromise the $\approx$-congruence format that $\AFO(P)$
  needs to satisfy.
Each $n$-ary operator $f$ of $P$ remains an $n$-ary operator $f$ of $\AFO(P)$.
Since ${\sim}\subseteq{\approx}$ we have $\hat p \approx \hat q$.
We argue that if $P$ is within our congruence format for $\approx$, then
the TSS $\AFO(P)$ is also within this congruence format,
and conclude from $\hat p \approx \hat q$
that $f(\hat p) \approx f(\hat q)$. An important result, deferred to Sect.~\ref{sec:Apply}, is that on $\AFO(P)$
the equivalences $\sim$ and $\approx$ coincide. Hence $f(\hat p) \sim f(\hat q)$.

Finally, we decode the processes $f(\hat p)$ and $f(\hat q)$, aiming to return to
the LTS generated by $P$, but actually ending up in another LTS $\K$. Our decoding function
${\it dec}$ exactly undoes the effects of the encoding ${\it enc}$, which sent $p$ to $\hat p$,
so that ${\it dec}(f(\hat p))$ is strongly bisimilar with $f(p)$.
A crucial property of the function ${\it dec}$, also deferred to Sect.~\ref{sec:Apply},
is that it is compositional for $\sim$, meaning that from
$f(\hat p) \sim f(\hat q)$ we may conclude
${\it dec}(f(\hat p)) \sim {\it dec}(f(\hat q))$. By imposing the requirement
that $\sim$ contains strong bisimilarity, this implies that $f(p)\sim f(q)$.

We now formalise this proof idea. 
An LTS $G$ is called \emph{disjoint} from a complete TSS $P$ if it is disjoint from the LTS
$G_P$ associated with $P$. In that case $P\uplus G$ denotes the union of $G_P$ and $G$ (cf.\ Def.~\ref{def:disjoint}).

\begin{theorem}\label{thm:Main}
  Let $\sim$ and $\approx$ be behavioural equivalences on LTSs, with ${\bis{}} \subseteq {\sim} \subseteq {\approx}$.
  Let $\mathfrak{F}$ be a congruence format for $\approx$, included in the decent ntyft format,
  and let $\AFO$ be an operation on standard TSSs, where for each TSS $P=(\Sigma,{\it Act},R)$
  the signature $\hat \Sigma$ of $\AFO(P)$ contains $\Sigma$ enriched by a fresh constant $\hat p$ for each closed term $p$
  in $\mbox{T}(\Sigma)$, such that, for each complete standard TSS $P$ in decent ntyft format:
  \begin{enumerate}
  \item\label{Completeness} also $\AFO(P)$ is a complete standard TSS,
  \item\label{Preserved under AFO} if $P$ is in $\mathfrak{F}$-format then so is $\AFO(P)$,
  \item\label{Enc} $p \sim_P q ~~\Rightarrow~~ \hat p \sim_{\AFO(P)} \hat q$,
  \item\label{Coincide} $\sim_{\AFO(P)}$ and $\approx_{\AFO(P)}$ coincide, and
  \setcounter{saveenumi}{\theenumi}
  \end{enumerate}
  there is an LTS $\K=(\mathbb{P}_{\K},Act_\K,\rightarrow_{\K})$, disjoint from $P$,
  as well as a function ${\it dec}:\mbox{T}(\hat \Sigma) \rightarrow \mathbb{P}_{\K}$ such that:
  \begin{enumerate}
  \setcounter{enumi}{\thesaveenumi}
  \item\label{dec} $p \sim_{\AFO(P)} q  ~~\Rightarrow~~ {\it dec}(p) \sim_{\K} {\it dec}(q)$, ~and
  \item\label{Enc-dec} $f(p_1,\dots,p_n) \bis{P\uplus\K} {\it dec}(f(\hat p_1,\dots,\hat p_n))$
    for any $n$-ary $f\in\Sigma$ and $p_1,\dots,p_n\in \mbox{T}(\Sigma)$.
  \end{enumerate}
  Then $\mathfrak{F}$ is also a congruence format for $\sim$.
\end{theorem}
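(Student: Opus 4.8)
The plan is to verify, for an arbitrary complete standard TSS $P=(\Sigma,{\it Act},R)$ in $\mathfrak{F}$-format, that $\sim_P$ is a congruence for $P$, by walking the roundabout trajectory of Fig.~\ref{fig:Proof idea} one leg at a time, each leg being a single application of one of the hypotheses of Thm.~\ref{thm:Main}. Since $\mathfrak{F}$ is included in the decent ntyft format, $P$ is in decent ntyft format, so conditions~\ref{Completeness}--\ref{Enc-dec} all apply to it. By the definition of congruence recalled in Sect.~\ref{sec:ntytt} it suffices to fix an $n$-ary $f\in\Sigma$ and closed terms $p_1,\dots,p_n,q_1,\dots,q_n\in\mbox{T}(\Sigma)$ with $p_i\sim_P q_i$ for all $i$, and to derive $f(p_1,\dots,p_n)\sim_P f(q_1,\dots,q_n)$. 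Note that the $\hat p_i$ and $\hat q_i$ are constants of the enriched signature $\hat\Sigma$, so $f(\hat p_1,\dots,\hat p_n)$ and $f(\hat q_1,\dots,\hat q_n)$ are closed terms over $\hat\Sigma$, on which ${\it dec}$ is defined.

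First I would transport the premise into $\AFO(P)$: by condition~\ref{Enc}, $\hat p_i\sim_{\AFO(P)}\hat q_i$ for each $i$, and since ${\sim}\subseteq{\approx}$ also $\hat p_i\approx_{\AFO(P)}\hat q_i$. Next, by condition~\ref{Completeness} the TSS $\AFO(P)$ is complete and standard, and by condition~\ref{Preserved under AFO} it is in $\mathfrak{F}$-format, so $\approx_{\AFO(P)}$ is a congruence for $\AFO(P)$; applying this to the operator $f$, which is still an $n$-ary operator of $\AFO(P)$, gives $f(\hat p_1,\dots,\hat p_n)\approx_{\AFO(P)}f(\hat q_1,\dots,\hat q_n)$. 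Condition~\ref{Coincide} upgrades this to $f(\hat p_1,\dots,\hat p_n)\sim_{\AFO(P)}f(\hat q_1,\dots,\hat q_n)$, and condition~\ref{dec} pushes it through the decoding to $ {\it dec}(f(\hat p_1,\dots,\hat p_n))\sim_{\K}{\it dec}(f(\hat q_1,\dots,\hat q_n))$. Finally condition~\ref{Enc-dec} gives $f(p_1,\dots,p_n)\bis{P\uplus\K}{\it dec}(f(\hat p_1,\dots,\hat p_n))$ and $f(q_1,\dots,q_n)\bis{P\uplus\K}{\it dec}(f(\hat q_1,\dots,\hat q_n))$, which, as ${\bis{}}\subseteq{\sim}$, are $\sim_{P\uplus\K}$-equations.

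I do not expect a hard combinatorial step inside this theorem itself; the substantive work is deferred to verifying conditions~\ref{Completeness}--\ref{Enc-dec} for the concrete transformation $\AFO$ in the later sections. The only point here that requires care is that $\sim$ and $\bis{}$ are families of relations indexed by LTSs, so before chaining by transitivity I must bring every relation onto the same LTS. The tool for this is the defining clause of a behavioural equivalence (Def.~\ref{def:disjoint} and the paragraph following it): for disjoint LTSs the relation on one component is the restriction of the relation on the disjoint union. Since $\K$ is disjoint from $P$, the $\sim_{\K}$-equation above is also a $\sim_{P\uplus\K}$-equation, so the three equations $f(p_1,\dots,p_n)\sim_{P\uplus\K}{\it dec}(f(\hat p_1,\dots,\hat p_n))$, $ {\it dec}(f(\hat p_1,\dots,\hat p_n))\sim_{P\uplus\K}{\it dec}(f(\hat q_1,\dots,\hat q_n))$ and $ {\it dec}(f(\hat q_1,\dots,\hat q_n))\sim_{P\uplus\K}f(q_1,\dots,q_n)$ all live in $P\uplus\K$; transitivity and symmetry of the equivalence $\sim_{P\uplus\K}$ then yield $f(p_1,\dots,p_n)\sim_{P\uplus\K}f(q_1,\dots,q_n)$. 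Both sides are states of the LTS $G_P$ associated with $P$, a component of $P\uplus\K$, so the same restriction property gives $f(p_1,\dots,p_n)\sim_P f(q_1,\dots,q_n)$. As $f$ and the $p_i,q_i$ were arbitrary, $\sim_P$ is a congruence for every complete standard TSS $P$ in $\mathfrak{F}$-format, i.e.\ $\mathfrak{F}$ is a congruence format for $\sim$.
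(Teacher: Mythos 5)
Your proposal is correct and follows essentially the same route as the paper's proof: transport the premises via condition~\ref{Enc}, use conditions~\ref{Completeness}, \ref{Preserved under AFO} and~\ref{Coincide} to get $f(\hat p_1,\dots,\hat p_n)\sim_{\AFO(P)}f(\hat q_1,\dots,\hat q_n)$, push through ${\it dec}$ with condition~\ref{dec}, and close the circle with condition~\ref{Enc-dec} and the disjoint-union property of behavioural equivalences. The minor reordering (going via $\approx_{\AFO(P)}$ using ${\sim}\subseteq{\approx}$ and upgrading afterwards, rather than noting that condition~\ref{Coincide} makes $\sim_{\AFO(P)}$ itself a congruence) and your more explicit handling of the final transitivity chain in $P\uplus\K$ are immaterial differences.
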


\begin{proof}
  Let $P=(\Sigma,{\it Act},R)$ be a complete standard TSS in $\mathfrak{F}$-format.
  We will show that $\sim_P$ is a congruence for $P$.
  So let $f\in\Sigma$ be an $n$-ary function symbol,
  and let $p_i,q_i\in \mbox{T}(\Sigma)$ with $p_i \sim_P q_i$ for $i=1,\dots,n$.
  We need to show that $f(p_1,\dots,p_n) \sim_P f(q_1,\dots,q_n)$.

  By requirements~\ref{Completeness} and~\ref{Preserved under AFO},
  $\AFO(P)$ is a complete standard TSS in $\mathfrak{F}$-format; hence
  $\approx_{\AFO(P)}$ is a congruence for $\AFO(P)$.
  By requirement~\ref{Enc} $\hat p_i\sim_{\AFO(P)}\hat q_i$ for $i=1,\dots,n$.
  By requirement~\ref{Coincide} also $\sim_{\AFO(P)}$ is a congruence for $\AFO(P)$.
  So $f(\hat p_1,\dots,\hat p_n) \sim_{\AFO(P)} f(\hat q_1,\dots,\hat q_n)$.
  Hence, by requirement~\ref{dec}, $${\it dec}(f(\hat p_1,\dots,\hat p_n))
  \sim_{\K} {\it dec}(f(\hat q_1,\dots,\hat q_n)).$$
  Therefore, by two applications of
  requirement~\ref{Enc-dec}, and the definition of a behavioural equivalence,\\
  $f(p_1,\dots,p_n) \sim_{P\uplus\K} f(q_1,\dots,q_n)$
  and consequently $f(p_1,\dots,p_n) \sim_{P} f(q_1,\dots,q_n)$.
\end{proof}

\subsection{Abstraction-freeness}
\label{sec:Abstraction-free}

In this section we introduce the machinery needed for Thm.~\ref{thm:Main}, namely
the conversion $\AFO$ on TSSs and the function {\it dec} into the LTS $\K$. We also establish
requirements~\ref{Completeness} and~\ref{Enc-dec} of
Thm.~\ref{thm:Main}, leaving \ref{Preserved under AFO}--\ref{dec} to the applications of Thm.~\ref{thm:Main} in Sect.~\ref{sec:Apply}
for specific instances of $\sim$ and $\approx$.
Since we are only interested in TSSs with $\tau$-transitions, we here take
the set of actions $Act$ used in Sect.~\ref{sec:Framework} to be $A_\tau$.

\subsubsection[The conversion AFO]{The conversion $\AFO$}

Let again $\Gamma$ denote a predicate that marks the ($\Gamma$-liquid) arguments of function symbols.
In \cite{FvGdW05} we called a standard TSS \emph{abstraction-free} w.r.t.\ $\Gamma$ if only its
$\Gamma$-patience rules carry the label $\tau$ in their conclusion.
Here we use a slightly more liberal definition of abstraction-freeness that also allows rules
that have no premises, and a conclusion of the form \plat{$c \trans\tau d$} for constants $c$ and $d$.

The following conversion turns any $\Gamma$-patient standard TSS $P=(\Sigma,A_\tau,R)$ into a
$\Gamma$-patient and abstraction-free TSS \plat{$\AFO^{O,\zeta}_\Gamma(P)$}.
It is parameterised by the choice of a fresh set of actions $O$, so
$O \cap A_\tau = \emptyset$, and
a partial function $\zeta:\mbox{T}(\Sigma)\rightharpoonup O$, which we call an \emph{oracle}.
The choice of $O$ and $\zeta$ varies for different applications of Thm.~\ref{thm:Main}.
This choice will be made in Sect.~\ref{sec:Apply} in such a way that
requirements~\ref{Enc} and~\ref{Coincide} of Thm.~\ref{thm:Main} are met, for specific instances of $\sim$ and $\approx$.

\begin{definition}
\label{def:AFO}
Given a $\Gamma$-patient standard TSS $P=(\Sigma,A_\tau,R)$.
Let $\hat\Sigma$ be the signature $\Sigma$, enriched with a fresh constant $\surd$ and a fresh constant $\hat p$
for each closed term $p \in\mbox{T}(\Sigma)$.
Pick a fresh action $\iota\notin A_\tau\cup O$.
We define the TSS \plat{$\AFO^{O,\zeta}_\Gamma(P)$} as $(\hat\Sigma,A_\tau\cup O \cup \{\iota\},R')$ where the rules
in $R'$ are obtained from the rules in $R$ as follows:
\begin{enumerate}
\item
$R_1$ is obtained from $R$ by adding for each rule $r$ and each non-empty subset $S$ of positive $\tau$-premises of $r$, a copy of $r$ with as only difference that the labels $\tau$ in the premises in $S$ are replaced by $\iota$;
\item
$R_2$ is obtained from $R_1$ by replacing, in every rule that has a
conclusion with the label $\tau$ and is not a $\Gamma$-patience rule,
the $\tau$-label in the conclusion by $\iota$;
\item
$R_3$ is obtained from $R_2$ by adding the premise $v\ntrans{\iota}$ to each rule
  with a negative $\tau$-premise $v\ntrans{\tau}$;
\item
  $R_4$ is obtained from $R_3$ by the addition of a rule $\hat p \trans\alpha \hat q$ for each transition $p \trans\alpha q$ {\it ws}-provable from $P$;
\item
  $R_5$ is obtained from $R_4$ by the addition of a rule \plat{$\hat p \trans{\zeta(p)} \surd$} for each
  $p\in\mbox{T}(\Sigma)$ with $\zeta(p)$ defined;\vspace{2pt}
\item
  $R'$ adds to $R_5$ a rule \plat{$\displaystyle\frac{x_k \trans{\omega} y}{f(x_1,\dots,x_n)\trans{\omega} y}$} for each
  $\omega\in O$, $f\in\Sigma$ and argument $k$ with $\Gamma(f,k)$.\vspace{1ex}
\end{enumerate}
\end{definition}
Step 2 above makes the resulting TSS abstraction-free by renaming $\tau$-labels in conclusions of
non-patience rules into $\iota$.
To ensure that still the same transitions are derived, modulo the conversion of some $\tau$ into $\iota$-labels,
step 1 above allows positive premises labelled $\iota$ to be used instead of $\tau$ in all rules, and
step 3 achieves the same purpose for negative premises.
These three steps result in a $\Gamma$-patient and abstraction-free
TSS that could be called $\AF_\Gamma(P)$.

Conceptually, steps 1, 2 and 3 on the one hand and steps 4 and 5 on
the other hand are independent. Listing steps 1, 2 and 3 before steps
4 and 5 makes it evident that the transformations 1--3 do not apply to
the rules added by 4 and 5.

For convenience in proofs, and to better explain steps 4 and 5 of Def.~\ref{def:AFO}, we consider an auxiliary LTS $\G=(\mathbb{P}_\G,A_\tau,\rightarrow_\G)$
with $\mathbb{P}_\G=\{\hat p \mid p \in \mbox{T}(\Sigma)\}$ and \plat{$\hat p \trans{\alpha}_\G \hat q$} iff
\plat{$P \vdash_{\it ws} p \trans\alpha q$},\vspace{-2pt} and an auxiliary LTS $\H=(\mathbb{P}_{\,\H},A_\tau \cup O,\rightarrow_{\,\H})$
with $\mathbb{P}_{\,\H}=\mathbb{P}_\G \cup \{\surd\}$ and
${\rightarrow_{\,\H}} := {\rightarrow_\G} \cup \{\hat p \trans{\zeta(p)} \surd \mid \zeta(p)$ defined$\}$.
The LTS $\G$ is simply a disjoint copy of the LTS generated by $P\!$.

\begin{lemma}\label{lem:GH}
If $p \sim_P q$ for some $p,q\in\mbox{T}(\Sigma)$, then $\hat p \sim_\G \hat q$.
\end{lemma}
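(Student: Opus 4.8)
The plan is to exploit that the auxiliary LTS $\G$ is, by construction, nothing but a renamed copy of the LTS $G_P$ generated by $P$, and that a behavioural equivalence cannot tell such copies apart. So the proof reduces to exhibiting the renaming explicitly as an isomorphism of LTSs and then invoking invariance of $\sim$ under isomorphisms. Since $\sim_P$ is $\sim_{G_P}$ (the states of $G_P$ being exactly the closed terms), this immediately yields the claim.

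Concretely, I would first check that the map $h\colon \mbox{T}(\Sigma)\rightarrow\mathbb{P}_\G$ given by $h(p)=\hat p$ is an isomorphism from $G_P$ to $\G$. It is a bijection: the constants $\hat p$ are pairwise distinct (one fresh constant per closed term $p$), and $\mathbb{P}_\G$ is by definition exactly $\{\hat p\mid p\in\mbox{T}(\Sigma)\}$. It preserves and reflects transitions: by definition of the LTS generated by $P$ we have $p\trans{\alpha}q$ in $G_P$ iff $P\vdash_{\it ws}p\trans{\alpha}q$, and by definition of $\G$ this is precisely the condition for $\hat p\trans{\alpha}_\G\hat q$. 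Hence $h$ is an LTS isomorphism.

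Then I would invoke that every behavioural equivalence $\sim$ is preserved along isomorphisms of LTSs: if $h\colon G_1\rightarrow G_2$ is an LTS isomorphism, then $g\sim_{G_1}g'$ implies $h(g)\sim_{G_2}h(g')$. Applying this to our $h$ turns $p\sim_P q$ into $\hat p\sim_\G\hat q$, as required. This appeal to isomorphism-invariance is the only nontrivial ingredient, and it is a mild one: it is immediate for every concrete equivalence considered in this paper, all of which are defined purely in terms of the transition relation. If one wishes to argue it from the bare definition of behavioural equivalence, one uses that $G_P$ and $\G$ are disjoint — the $\hat p$ being fresh, they lie outside $\Sigma$ — so that $p\sim_P q\Leftrightarrow p\sim_{G_P\uplus\G}q$ and $\hat p\sim_\G\hat q\Leftrightarrow\hat p\sim_{G_P\uplus\G}\hat q$ by the disjoint-union clause of Definition of behavioural equivalence, and then one transports the relation across the swap automorphism of $G_P\uplus\G$ that interchanges $p$ and $\hat p$. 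Note that the lemma only asks for the forward implication, so even the weaker statement that isomorphisms preserve $\sim$ suffices.
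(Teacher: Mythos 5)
Your reduction to an LTS isomorphism identifies the right picture, but the step on which everything hinges---that a behavioural equivalence is invariant under isomorphisms, or under the swap automorphism of $G_P\uplus\G$---is exactly what the paper's abstract notion of behavioural equivalence does not provide. A behavioural equivalence is defined (after Def.~\ref{def:disjoint}) as \emph{any} family of equivalence relations $\sim_\G$ compatible with disjoint unions; nothing in that definition forces $\sim_\G$ to depend only on the transition structure, and one can construct families satisfying the disjoint-union clause that are not isomorphism-invariant. Your fallback argument is circular: after passing to $G_P\uplus\G$ you ``transport the relation across the swap automorphism'', which is again an appeal to (auto)morphism-invariance rather than to the disjoint-union clause. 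The remark that invariance is immediate for every concrete equivalence in the paper does not suffice either, because Lem.~\ref{lem:GH} is used inside the general framework of Thm.~\ref{thm:Main}, where $\sim$ is an arbitrary behavioural equivalence subject only to ${\bis{}}\subseteq{\sim}\subseteq{\approx}$.

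The missing ingredient is precisely the hypothesis ${\bis{}}\subseteq{\sim}$, which you never invoke. The paper closes the gap as follows: the relation $\{(p,\hat p),(\hat p,p)\mid p\in\mbox{T}(\Sigma)\}$ is a strong bisimulation on $P\uplus\G$, so $p\bis{P\uplus\G}\hat p$; hence $p\sim_{P\uplus\G}\hat p$ by ${\bis{}}\subseteq{\sim}$. By the disjoint-union clause, $p\sim_P q$ gives $p\sim_{P\uplus\G}q$, so transitivity yields $\hat p\sim_{P\uplus\G}\hat q$, and a final application of the disjoint-union clause restricts this to $\hat p\sim_\G\hat q$. In effect one \emph{proves} the instance of isomorphism-invariance you need, for equivalences containing $\bis{}$, rather than assuming it; with that replacement for your ``transport'' step, your argument becomes the paper's proof.
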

\begin{proof}
  We have $p \bis{P\uplus\G} \hat p$ for each $p\in\mbox{T}(\Sigma)$, because the relation
  $\{(p,\hat p),(\hat p,p) \mid p\in\mbox{T}(\Sigma)\}$ is a strong bisimulation.
  So $\hat p \bis{P\uplus\G} p \sim_{P\uplus\G} q \bis{P\uplus\G} \hat q$, using the definition
  of a behavioural equivalence, and thus $\hat p \sim_{P\uplus\G} \hat q$, using that ${\bis{}} \subseteq {\sim}$.
  Hence $\hat p \sim_\G \hat q$, again by the definition of a behavioural equivalence.
\end{proof}
The LTS $\H$ adds \emph{oracle transitions} to $\G$.
The idea is that $\zeta(p)$ is particular for the $\sim$-equivalence class of $p\in\mbox{T}(\Sigma)$,
which on the one hand ensures that $\hat p \sim_{\,\H} \hat q$ iff $\hat p \sim_\G \hat q$, and on the
other hand  enforces that $\sim$ and $\approx$ coincide on $\H$. Namely, if $\hat p\approx_{\,\H} \hat q$
then the oracle action of $\hat p$ can be matched by $\hat q$ (and
vice versa), which implies $\hat p\sim_{\,\H} \hat q$.

\begin{example}\label{exa:oracle}
Take $\sim$ to be weakly divergence-preserving branching bisimilarity, and
$\approx$ to be stability-respecting branching bisimilarity.
Let us say that a process $p$ in an LTS is
\emph{divergent} if there exists an infinite sequence of processes
$(p_k)_{k\in\N}$ such that $p=p_0$ and $p_k\trans{\tau}p_{k+1}$ for all
$k\in\N$, i.e.\ if $p \mathbin{\models} \Delta\top\!$.
Take $O\mathbin=\{\Delta\top\}$ and let $\zeta(p)\mathbin={\Delta\top}$ iff $p$ is divergent.
Thus in $\H$ all divergent states of $\G$ have a fresh outgoing transition labelled $\Delta\top$.

With this definition of $\H$, we have $\hat p \sim_{\,\H} \hat q$ iff $\hat p \sim_\G \hat q$:
any weakly divergence-preserving branching bisimulation $\brel$ on $\G$ relates divergent states
with divergent states only, and thus is also a weakly divergence-preserving branching bisimulation
on $\H$ (when adding $\surd\brel\surd$). Furthermore, as we will show
in Prop.~\ref{prop:Coincide}, due to the construction of $\H$, $\hat p \sim_{\,\H} \hat q$ iff $\hat p \approx_\H \hat q$.
\end{example}

\noindent
Steps 4 and 5 of Def.~\ref{def:AFO} incorporate the entire LTS $\H$
into $\AF_\Gamma(P)$: each state appears as a constant and
each transition appears as a rule without premises. The operators from $\Sigma$ can now be
applied to arguments of the form $\hat p$.  Finally, step 6 lets any term $f(x_1,\dots,x_n)$ inherit the
oracle transitions from its $\Gamma$-liquid arguments.
Steps 4, 5 and 6 preserve abstraction-freeness; for step 4 this uses
the relaxed definition of abstraction-freeness that allows to
incorporate $\tau$-transitions between constants as rules without premises.

\begin{example}
  Let $P$ have the rules
  \[
  \frac{x_1\trans\tau y}{g(x_1,x_2,x_3)\trans\tau g(y,x_2,x_3)} \hspace*{1cm}
  \frac{x_1\trans a y_1\quad x_1\trans\tau y_2\quad x_3 \trans\tau y_3}
  {g(x_1,x_2,x_3)\trans\tau x_2} \hspace*{1cm}
  \frac{x_2 \trans\tau y \quad x_3\ntrans\tau}{g(x_1,x_2,x_3)\trans a y}
  \]
  where $\Gamma(g,1)$. Then $\AFO^{O,\zeta}_\Gamma(P)$ has the rules
  \[
  \frac{x_1\trans\tau y}{g(x_1,x_2,x_3)\trans\tau g(y,x_2,x_3)} \hspace*{1cm}
  \frac{x_1\trans a y_1\quad x_1\trans\tau y_2\quad x_3 \trans\tau y_3}
  {g(x_1,x_2,x_3)\trans \iota x_2} \hspace*{1cm}
  \frac{x_2 \trans\tau y \quad x_3\ntrans\tau \quad x_3\ntrans \iota}{g(x_1,x_2,x_3)\trans a y}
  \]
  \[
  \frac{x_1\trans \iota y}{g(x_1,x_2,x_3)\trans \iota g(y,x_2,x_3)} \hspace*{1cm}
  \frac{x_1\trans a y_1\quad x_1\trans \iota y_2\quad x_3 \trans \iota y_3}
  {g(x_1,x_2,x_3)\trans \iota x_2} \hspace*{1cm}
  \frac{x_2 \trans \iota y \quad x_3\ntrans\tau \quad x_3\ntrans \iota}{g(x_1,x_2,x_3)\trans a y}
  \]
  \[
  \frac{x_1\trans a y_1\quad x_1\trans \iota y_2\quad x_3 \trans\tau y_3}
  {g(x_1,x_2,x_3)\trans \iota x_2} \hspace*{.75cm}
  \frac{x_1\trans a y_1\quad x_1\trans\tau y_2\quad x_3 \trans \iota y_3}
  {g(x_1,x_2,x_3)\trans \iota x_2} \hspace*{.75cm}
  \frac{x_1 \trans{\omega} y}{g(x_1,x_2,x_3)\trans{\omega} y}~\mbox{($\omega{\in} O$)}
  \]
  By step 1, the first and third rule spawn one copy and the second rule spawns three copies
  in which part or all of the $\tau$-labels of premises are renamed into $\iota$.
  By step 2, in each rule that has a $\tau$-label in the conclusion, except the patience rule for
  the first argument of $g$, this label is renamed into $\iota$.
  By step 3, in the third rule and its copy, a negative $\iota$-premise is added.
  By step 6, an $\omega$-rule is added for the first argument of $g$.
  Since there are no closed terms in this example, steps 4 and 5 are void.
\end{example}
Clearly, for any $\Gamma$-patient standard TSS $P$, the standard TSS $\AFO^{O,\zeta}_\Gamma(P)$ is
$\Gamma$-patient and abstraction-free w.r.t.\ $\Gamma$.
Henceforth, we drop the superscripts $O$ and $\zeta$, and $\AFO(P)$ denotes $\AFO_\Gamma(P)$ for the
largest predicate $\Gamma$ for which $P$ is $\Gamma$-patient.
The signature of $\AFO(P)$ contains the signature of $P$ enriched by a fresh constant $\hat p$ for
each closed term $p$ in $P$, as required in Thm.~\ref{thm:Main}.

\begin{lemma}\label{lem:HAFO}
Let $P$ be a complete standard TSS in ntyft format.
If $\hat p \sim_{\,\H} \hat q$ for some $p,q\in\mbox{T}(\Sigma)$, then $\hat p \sim_{\AFO(P)} \hat q$.
\end{lemma}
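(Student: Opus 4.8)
The plan is to argue that $\H$ is, literally, a transition-closed fragment of the LTS associated with $\AFO(P)$, and then to transfer $\sim$ from $\H$ up to $\AFO(P)$ by the same disjoint-union-plus-strong-bisimulation move that underlies the proof of Lem.~\ref{lem:GH}.

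First I would determine the outgoing transitions of the states $\hat p$ (for $p\in\mbox{T}(\Sigma)$) and $\surd$ in the LTS of $\AFO(P)$. Since these are \emph{fresh} constants, i.e.\ they do not belong to $\Sigma$, while every rule produced by steps~1--3 and step~6 of Def.~\ref{def:AFO} has a conclusion whose source is of the form $f(x_1,\dots,x_{\ar(f)})$ with $f\in\Sigma$, no closed substitution instance of such a rule has source $\hat p$ or $\surd$. Hence a well-supported proof of a transition out of $\hat p$ must bottom out in one of the premise-less rules added in steps~4 and~5, and there is no rule at all with source $\surd$. It follows that in the LTS of $\AFO(P)$ the state $\surd$ has no outgoing transitions, and $\hat p\trans{\alpha}r$ holds iff $r=\hat q$ with $P\vdash_{\it ws}p\trans{\alpha}q$, or $r=\surd$ with $\alpha=\zeta(p)$ --- which is precisely the transition relation of $\H$. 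Thus the restriction of the LTS of $\AFO(P)$ to $\mathbb{P}_{\,\H}=\{\hat p\mid p\in\mbox{T}(\Sigma)\}\cup\{\surd\}$ coincides with $\H$, and $\mathbb{P}_{\,\H}$ is closed under the transition relation of $\AFO(P)$.

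Next I would pass to a disjoint copy $\H^\ast$ of $\H$, obtained via a renaming bijection $\theta$ and chosen disjoint from $\AFO(P)$, and observe that the symmetric closure of $\{(r,\theta(r))\mid r\in\mathbb{P}_{\,\H}\}$ is a strong bisimulation on $\H^\ast\uplus\AFO(P)$: by the transition-closedness just established, from related states $r$ and $\theta(r)$ both sides can only move, with equal labels, to targets $r'$ and $\theta(r')$ that are again related. Hence $r\bis{\H^\ast\uplus\AFO(P)}\theta(r)$ for every $r\in\mathbb{P}_{\,\H}$. Now suppose $\hat p\sim_{\,\H}\hat q$. As $\theta$ is an isomorphism this gives $\theta(\hat p)\sim_{\H^\ast}\theta(\hat q)$, whence $\theta(\hat p)\sim_{\H^\ast\uplus\AFO(P)}\theta(\hat q)$ by the defining property of a behavioural equivalence; combining with the two strong bisimilarities above, using ${\bis{}}\subseteq{\sim}$ and transitivity, yields $\hat p\sim_{\H^\ast\uplus\AFO(P)}\hat q$; and one more application of the defining property of a behavioural equivalence delivers $\hat p\sim_{\AFO(P)}\hat q$.

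I expect the only real obstacle to be the transition-closedness claim: one has to be sure that \emph{every} rule of $\AFO(P)$ save those of steps~4 and~5 --- including the copies spawned in step~1, the relabelled rules of step~2, the rules of step~3 carrying the extra $v\ntrans\iota$ premise, and the oracle-inheritance rules of step~6 --- has a source headed by a symbol of $\Sigma$, and hence never fires with source $\hat p$ or $\surd$. The rest is routine: the argument tacitly uses that the LTS of $\AFO(P)$ is defined (requirement~\ref{Completeness}) and that $\sim$, like any behavioural equivalence considered here, is invariant under isomorphism of LTSs.
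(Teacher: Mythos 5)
Your proof is correct and follows essentially the same route as the paper's: you observe that, since every rule of $\AFO(P)$ other than those added in steps~4 and~5 has a source headed by an operator of $\Sigma$ (equivalently, there are no rules with a variable source), the constants $\hat p$ and $\surd$ have exactly the transitions of $\H$, and you then transfer $\sim$ exactly as in Lem.~\ref{lem:GH} via strong bisimilarity, ${\bis{}}\subseteq{\sim}$ and the disjoint-union property of behavioural equivalences. The only difference is that you are more explicit than the paper about the fact that $\H$ is not disjoint from the LTS of $\AFO(P)$, handling this with a renamed copy $\H^\ast$ and (derivable) isomorphism-invariance, which is a welcome sharpening rather than a deviation.
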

\begin{proof}
Since $\AFO(P)$ has no rules whose source is a variable, the only derivable transitions of the form
$\hat p \trans\alpha q^*$ with $\alpha\in A_\tau\cup O \cup \{\iota\}$ are the ones from $\H$, with
$q^*$ of the form $\hat q$ or $\surd$. For this reason any process $\hat p$ in $\H$ is strongly
bisimilar to the process $\hat p$ in $\AFO(P)$. Using this, the lemma follows just as Lem.~\ref{lem:GH}.
\end{proof}

\noindent
As an immediate consequence of Lem.~\ref{lem:GH} and Lem.~\ref{lem:HAFO} we have the following corollary.
\begin{corollary}\label{cor:req3}
Requirement \ref{Enc} of Thm.~\ref{thm:Main} is met if
$\hat p \sim_\G \hat q$ implies $\hat p \sim_{\,\H} \hat q$.\qed
\end{corollary}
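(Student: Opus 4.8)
The plan is to obtain Requirement~\ref{Enc} of Thm.~\ref{thm:Main} --- that is, $p \sim_P q \Rightarrow \hat p \sim_{\AFO(P)} \hat q$ --- by composing the two lemmas just established with the extra hypothesis of the corollary. Concretely, I would assume $p \sim_P q$ for some $p,q \in \mbox{T}(\Sigma)$. By Lem.~\ref{lem:GH} this gives $\hat p \sim_\G \hat q$. Applying the supposed implication yields $\hat p \sim_{\,\H} \hat q$. Finally, since in the setting of Thm.~\ref{thm:Main} the TSS $P$ is a complete standard TSS in (decent) ntyft format, Lem.~\ref{lem:HAFO} is applicable and delivers $\hat p \sim_{\AFO(P)} \hat q$, which is exactly Requirement~\ref{Enc}. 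So the proof is just the three-step chain $p \sim_P q \Rightarrow \hat p \sim_\G \hat q \Rightarrow \hat p \sim_{\,\H} \hat q \Rightarrow \hat p \sim_{\AFO(P)} \hat q$.

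There is no genuine obstacle here: all the content resides in Lem.~\ref{lem:GH} and Lem.~\ref{lem:HAFO}, whose proofs already exploit that $\bis{}\subseteq{\sim}$ together with the fact that the LTSs in play sit inside one another up to strong bisimilarity (a disjoint copy of the LTS generated by $P$ inside $\G$, and the states of $\G$ inside $\AFO(P)$, with the transitions of $\H$ being the only transitions out of the $\hat p$ in $\AFO(P)$). The only point warranting a line of care is to note that the hypotheses of Lem.~\ref{lem:HAFO} --- completeness and the ntyft shape of $P$ --- are indeed guaranteed in the context in which this corollary will be invoked, namely inside the proof of Thm.~\ref{thm:Main}; this is immediate from the statement of that theorem. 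Hence no further argument is needed.
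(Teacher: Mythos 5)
Your proposal is correct and is exactly the argument the paper intends: the corollary is stated as an immediate consequence of Lem.~\ref{lem:GH} and Lem.~\ref{lem:HAFO}, chained through the hypothesised implication $\hat p \sim_\G \hat q \Rightarrow \hat p \sim_{\,\H} \hat q$. Your extra remark that the completeness and decent ntyft hypotheses of Lem.~\ref{lem:HAFO} are supplied by the setting of Thm.~\ref{thm:Main} is the right (and only) point of care.
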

The inference $\hat p \sim_\G \hat q \Rightarrow \hat p \sim_{\,\H} \hat q$ depends on the
choice of $O$ and $\zeta$, and is deferred to Sect.~\ref{sec:Apply}.

The oracle inheritance rules in $\AFO_\Gamma(P)$---introduced in step 6 of Def.~\ref{def:AFO}---ensure that a
closed term $f(p_1,\dots,p_n)$ has an outgoing $\omega$-transition, for $\omega\in O$, iff one of its $\Gamma$-liquid
arguments $p_k$ has such a transition. Ultimately, all such oracle transitions stem from $\H$.
Using that in $\AFO_\Gamma(P)$ any term $p\in\mbox{T}(\hat\Sigma)$ can
uniquely (up to renaming of variables) be
written as $\rho(t)$ with $t\in\mathbb{T}(\Sigma)$ and $\rho:\var(t)\rightarrow\mathbb{P}_{\,\H}$,
this observation can be phrased as follows.

\begin{observation}\label{obs:Liquid oracle}
Let $t\in\mathbb{T}(\Sigma)$ and $\rho:\var(t)\rightarrow\mathbb{P}_{\,\H}$;
then $\AFO_\Gamma(P) \vdash_{\it ws} \rho(t) {\trans{\omega}}$ iff $t$ has a $\Gamma$-liquid
occurrence of a variable $x$ with $\rho(x){\trans{\omega}_{\,\H}}$.
\hfill$\Box$
\end{observation}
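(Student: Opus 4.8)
The plan is to prove the equivalence by structural induction on the term $t\in\mathbb{T}(\Sigma)$, after first nailing down exactly which rules of $\AFO_\Gamma(P)$ can have a conclusion labelled by some $\omega\in O$. Inspecting Def.~\ref{def:AFO}: steps 1--3 only rename $\tau$-labels into $\iota$ (in premises and in conclusions of non-patience rules) and add $\iota$-premises, so they never create an $O$-labelled conclusion; step 4 adds rules $\hat p\trans{\alpha}\hat q$ with $\alpha\in A_\tau$ only; step 5 adds the rules $\hat p\trans{\zeta(p)}\surd$, whose conclusion is labelled by $\zeta(p)\in O$; and step 6 adds the oracle-inheritance rules $\frac{x_k\trans{\omega}y}{f(x_1,\dots,x_n)\trans{\omega}y}$ for $\Gamma(f,k)$ and $\omega\in O$. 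Hence the only rules of $\AFO_\Gamma(P)$ with an $\omega$-labelled conclusion are those of steps 5 and 6. Since the step-6 rules carry a single positive premise and no negative premises, and the step-5 rules are premise-free, the well-supported provability of an $\omega$-labelled literal never appeals to case~2 of Def.~\ref{def:wsp}, so there are no consistency subtleties to worry about.

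For the induction basis $t=x$, we have $\rho(t)=\rho(x)\in\mathbb{P}_{\,\H}$, so $\rho(x)$ is either $\surd$ or a constant $\hat p$. As already observed in the proof of Lem.~\ref{lem:HAFO}, the only rules of $\AFO_\Gamma(P)$ whose source is such a constant are those introduced in steps 4 and 5, so the process $\rho(x)$ in $\AFO_\Gamma(P)$ is strongly bisimilar to the process $\rho(x)$ in $\H$; hence $\AFO_\Gamma(P)\vdash_{\it ws}\rho(x)\trans{\omega}$ iff $\rho(x)\trans{\omega}_{\,\H}$. This is exactly the claim, since the single occurrence of $x$ in $t=x$ is $\Gamma$-liquid by Def.~\ref{def:liquid/frozen}.

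For the induction step, let $t=f(t_1,\dots,t_n)$ with $f\in\Sigma$, so $\rho(t)=f(\rho(t_1),\dots,\rho(t_n))$ has head symbol $f\in\Sigma$. Since $\hat p$ and $\surd$ are constants outside $\Sigma$, no step-5 rule and no rule coming from steps 1--4 has a source matching $f(\dots)$ with an $\omega$-labelled conclusion, so the only closed instances of rules that can derive an $\omega$-transition from $\rho(t)$ are instances of the step-6 rules $\frac{x_k\trans{\omega}y}{f(x_1,\dots,x_n)\trans{\omega}y}$ with $\Gamma(f,k)$. As such a rule has only the positive premise $x_k\trans{\omega}y$, we obtain $\AFO_\Gamma(P)\vdash_{\it ws}\rho(t)\trans{\omega}$ iff there is a $\Gamma$-liquid argument $k$ of $f$ with $\AFO_\Gamma(P)\vdash_{\it ws}\rho(t_k)\trans{\omega}$. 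By the induction hypothesis the latter is equivalent to $t_k$ having a $\Gamma$-liquid occurrence of a variable $x$ with $\rho(x)\trans{\omega}_{\,\H}$, and by Def.~\ref{def:liquid/frozen} such an occurrence inside a $\Gamma$-liquid argument of $f$ is precisely a $\Gamma$-liquid occurrence of $x$ in $t$. This closes the induction.

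The main obstacle is the bookkeeping in the first paragraph: one must be sure that the renamings performed in steps 1--3 never produce an $O$-labelled conclusion, and that the fresh constants $\hat p$ and $\surd$ cannot be unified with the source $f(x_1,\dots,x_n)$ of an original (decent ntyft) rule, so that in the inductive step the step-6 oracle-inheritance rules are genuinely the only contributors. Once this is settled, the remainder is a routine induction mirroring the clause structure of Def.~\ref{def:liquid/frozen}.
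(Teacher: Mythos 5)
Your proof is correct and follows exactly the reasoning the paper leaves implicit: the Observation is stated without proof, justified only by the remark that oracle transitions stem from $\H$ and are inherited solely through the step-6 rules for $\Gamma$-liquid arguments, and your structural induction on $t$ (after checking that only the step-5 and step-6 rules carry $O$-labelled conclusions and that the fresh constants cannot match sources of other rules) is precisely the formalisation of that argument. Nothing is missing; the bookkeeping you flag in the first and last paragraphs is indeed all that needs checking.
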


\newcommand{\p}{\textit{\v{p}}}
\newcommand{\q}{\textit{\v{q}}}
\newcommand{\vr}{\textit{\v{r}}}
\newcommand{\vN}{\textit{\v{N}}}

\noindent
We proceed to compare provability in a standard TSS $P$ with
provability in its abstraction-free counterpart
$\AFO(P)$, in order to verify requirements 1 and 6 of
Thm.~\ref{thm:Main}.
The most laborious part in this comparison lays in step 4 of
Def.~\ref{def:AFO}. Therefore we deal with this step separately from
the other five. Hence, we define for any standard TSS  $P=(\Sigma,A_\tau,R)$ a TSS $\mathcal{G}(P)=(\hat\Sigma,A_\tau,R'')$, which
constitutes an intermediate between $P$ and $\AFO(P)$. It is built by only applying
step 4 from the construction of $\AFO(P)$, with $R$ in the role of $R_3$.

\subsubsection[Comparison of P and G(P)]{Comparison of $P$ and $\mathcal{G}(P)$}

We restrict our analysis to standard TSSs $P=(\Sigma,A_\tau,R)$ in decent ntyft format.
For $p\in\mbox{T}(\hat\Sigma)$, let $\p\in\mbox{T}(\Sigma)$ be
  obtained from $p$ by replacing every subterm $\hat{q}$ in $p$ by $q$.
Likewise, for $N$ a set of closed negative literals, $\vN := \{\p\ntrans\alpha \mid (p \ntrans\alpha)\in N\}$.

  When we are merely interested in proving, from $P$ or $\mathcal{G}(P)$, literals of the form $p \ntrans\alpha$ or
  $p{\trans\alpha}$, possibly under some hypotheses, where the target of a positive literal is existentially quantified,
  we can equivalently use a version of $P$ resp.\ $\mathcal{G}(P)$ in which all right-hand sides of the premises and
  conclusions of rules have been dropped. Here we use that $P$ and
  $\mathcal{G}(P)$ are in decent ntyft format. This we do below.

\begin{lemma}\label{lem:O irredundant}
  Let $P=(\Sigma,A_\tau,R)$ be a standard TSS in decent ntyft format. Let $\alpha\in A_\tau$ and $p\in \mbox{T}(\hat\Sigma)$.
  \begin{itemize}
  \item If $\displaystyle\mathcal{G}(P) \vdash_{\it irr} \frac{N}{p {\trans \alpha}}$, with $N$
    a set of closed negative literals, then
    $\displaystyle P \vdash_{\it irr} \frac{\vN \cup N'}{\p {\trans \alpha}}$
    for a set $N'$ of closed negative literals $\lambda$ with $P\vdash_{\it ws}\lambda$.\vspace{-4pt}
  \item If $\displaystyle P \vdash_{\it irr} \frac{N'}{\p {\trans \alpha}}$, with $N'$
    a set of closed negative literals, then either $P \not\vdash_{\it ws} \lambda$ for a literal\vspace{-2pt}
    $\lambda\in N'$, or $\displaystyle\mathcal{G}(P) \vdash_{\it irr} \frac{N}{p{\trans \alpha}}$ for some $N$ with $\vN \subseteq N'$.
  \end{itemize}
\end{lemma}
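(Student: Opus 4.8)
The plan is to prove both bullets simultaneously by structural induction on the irredundant proof tree, exploiting that $\mathcal{G}(P)$ differs from $P$ only by the presence of the rules $\hat p \trans\alpha \hat q$ (for $p\trans\alpha q$ $ws$-provable from $P$), and that these new rules have no premises. Since we are working with the target-free versions of $P$ and $\mathcal{G}(P)$ (legitimate because both are in decent ntyft format, as the remark preceding the lemma points out), a rule of $\mathcal{G}(P)$ applied at a node is either a target-free instance of an original rule of $P$ or the premiseless rule $\hat p\trans\alpha{}$ coming from a $ws$-provable transition $p\trans\alpha q$.

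For the \textbf{first bullet}, suppose $\mathcal{G}(P)\vdash_{\it irr}\frac{N}{p\trans\alpha}$ via a tree $\pi$. I would do induction on $\pi$. If the root is a hypothesis, then $N=\{p\ntrans\alpha\}$ — but wait, the conclusion is positive, so actually the hypothesis case means the single node is labelled $p\trans\alpha$ and is marked hypothesis; then $\p\trans\alpha$ is itself a hypothesis-derivable rule of $P$ with $N'=\emptyset$ and $\vN$ the translation — this base case is immediate. If the root applies a premiseless rule $\hat q\trans\alpha{}$ of $\mathcal{G}(P)$ with $p=\hat q$ (so $\p=q$), then $P\vdash_{\it ws}q\trans\alpha$; unfolding a $ws$-proof of this transition (which uses only rules of $P$ plus possibly negative literals whose denials are underivable) yields the required irredundant $P$-proof of $\frac{N'}{\p\trans\alpha}$ with $N=\emptyset$ and each $\lambda\in N'$ satisfying $P\vdash_{\it ws}\lambda$. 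If the root applies an instance $\frac{K}{p\trans\alpha}$ of an original rule of $P$, I apply the induction hypothesis to each positive premise's subproof and collect the negative premises; reassembling gives a $P$-irredundant proof of $\frac{\vN\cup N'}{\p\trans\alpha}$, where $N'$ accumulates the negative side-literals introduced when unfolding $\hat q\trans\alpha\hat q'$ rules deeper in the tree, all of which are $ws$-provable from $P$.

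For the \textbf{second bullet}, suppose $P\vdash_{\it irr}\frac{N'}{\p\trans\alpha}$ via a tree $\pi$. Now I want to lift $\pi$ to a $\mathcal{G}(P)$-proof of $p\trans\alpha$, but the obstacle is that the term $p$ may contain subterms $\hat q$ where $\p$ has $q$, and the rule of $P$ applied at the root of $\pi$ may fire on the structure of $q$ (e.g.\ $q=f(\dots)$) in a way that has no counterpart when the argument is the constant $\hat q$. The resolution: whenever the root of $\pi$ fires on a subterm that corresponds to some $\hat q$ in $p$, the entire subproof below establishes (target-free) $q\trans\alpha$ from $P$, possibly under negative hypotheses from $N'$; if one of those hypotheses $\lambda$ fails $P\vdash_{\it ws}\lambda$ we are in the first disjunct and done, otherwise $q\trans\alpha$ is actually $ws$-provable from $P$ (using that the hypotheses are $ws$-provable and $ws$-provability is closed under this kind of composition), hence $\hat q\trans\alpha\hat q'$ is a rule of $\mathcal{G}(P)$, giving a one-node $\mathcal{G}(P)$-proof of $p=\hat q\trans\alpha$ with $N=\emptyset\subseteq N'$. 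If the root of $\pi$ does \emph{not} fire inside any such $\hat q$, the same rule instance applies in $\mathcal{G}(P)$ to $p$, and I recurse into the subproofs, threading the ``either some $\lambda$ fails $ws$, or we get a $\mathcal{G}(P)$-proof'' disjunction and taking the union of the resulting $N$'s.

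The \textbf{main obstacle} I anticipate is the bookkeeping in the second bullet: managing the case split between ``the $P$-rule at the root acts on genuine $\Sigma$-structure of $p$'' versus ``it acts on a $\hat q$-constant'', and in the latter case correctly arguing that the sub-derivation — which has negative hypotheses — collapses into $ws$-provability of the transition from $q$ so that the premiseless $\mathcal{G}(P)$-rule becomes available. This requires carefully invoking the characterisation of $ws$-provability (via irredundant proofs with manifestly-underivable negative literals, Def.~\ref{def:wsp}) to convert ``$P\vdash_{\it irr}\frac{N'}{\p\trans\alpha}$ with all $\lambda\in N'$ $ws$-provable'' into ``$P\vdash_{\it ws}\p\trans\alpha$''. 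The first bullet is the easier direction since there the new $\mathcal{G}(P)$-rules only simplify proofs (they have no premises) and we merely unfold them back into $P$-derivations.
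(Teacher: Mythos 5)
Your proposal matches the paper's proof in essence: both bullets are handled by induction on the irredundant proof tree, with the case split on whether $p$ is a hatted constant $\hat q$ (where one passes between the premiseless rule of $\mathcal{G}(P)$ and a full $P$-derivation via well-supported provability of $q\trans{\alpha}$, grafting or truncating ws-proofs at the topmost negative literals) or $p=f(p_1,\dots,p_n)$ with $f\in\Sigma$ (where the same decent ntyft rule of $P$ is re-instantiated with the adapted substitution and the induction hypothesis is applied to the positive-premise subproofs, collecting the negative premises). The only cosmetic differences are that the paper runs two separate inductions rather than a simultaneous one, and the hypothesis-root case you briefly worry about is vacuous since $N$ consists of negative literals while the conclusion is positive.
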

\begin{proof}
For the first statement we apply induction on the irredundant proof $\pi$ of {$\frac{N}{p {\trans \alpha}}$}\vspace{-1pt}
from $\mathcal{G}(P)$.

First assume that $p$ has the form $\hat q$, so that $\p=q$.
Since $\mathcal{G}(P)$ has no rules whose source is a variable, the last step of $\pi$ must be an application of
a rule \plat{$\hat q \trans\alpha \hat r$}, introduced in step 4 of the construction of $\mathcal{G}(P)$, and $N=\emptyset$.
It follows that \plat{$P\vdash_{\it ws} q \trans\alpha r$}. Let $N'$ be the unique set of negative literals
occurring in the well-supported proof of $q \trans\alpha r$ from $P$ that have no negative literals
below them. Then \plat{$P\vdash_{\it irr} \frac{N'}{q \trans\alpha r}$} and each literal $\lambda\in N'$ is
{\it ws}-provable from $P$.

Alternatively, $p=f(p_1,\dots,p_n)$ with $f\in\Sigma$.
Then the last step of $\pi$ must be an application of a decent ntyft
rule $\frac{H}{f(x_1,\dots,x_n){\trans\alpha}}$ in $P$
and the substitution $\sigma:\{x_1,\dots,x_n\}\rightarrow\mbox{T}(\hat\Sigma)$ with $\sigma(x_i)=p_i$ for $i=1,\dots,n$.
Let $\sigma':\{x_1,\dots,x_n\}\rightarrow\mbox{T}(\Sigma)$ be the substitution with $\sigma'(x_i)=\p_i$ for $i=1,\dots,n$.
  With every positive premise $\mu=(t {\trans\gamma})$ in $H$ we may
  associate a set of closed negative literals $N_\mu \subseteq N$ such
  that $\mathcal{G}(P) \vdash_{\it irr} \frac{N_\mu}{\sigma(t) {\trans
      \gamma}}$ and such that $N$ is the union of all $N_\mu$.
  So, by induction, $P \vdash_{\it irr} \frac{\vN_\mu \cup N'_\mu}{\sigma'(t) {\trans \gamma}}$
for a set $N'_\mu$ of closed negative literals $\lambda$ with $P\vdash_{\it ws}\lambda$.
Let $N'$ be the union of all the $N'_\mu$. Note that $\vN$ is the union of all the $\vN_\mu$ and
the literals $\sigma'(u)\ntrans\gamma$ with $(u\ntrans\gamma)\in H$.
It follows that $P \vdash_{\it irr} \frac{\vN \cup N'}{\p {\trans \alpha}}$.

For the second statement assume $P \vdash_{\it irr} \frac{N'}{\p {\trans \alpha}}$ and
$P \vdash_{\it ws} \lambda$ for all $\lambda\in N'$.\vspace{-2pt}
Then  $P \vdash_{\it ws} \p {\trans \alpha}$.
We show that $\mathcal{G}(P) \vdash_{\it irr} \frac{N}{p{\trans \alpha}}$ for some $N$
with $\vN \subseteq N'$, by induction on the irredundant proof $\pi$ of {$\frac{N'}{\p {\trans \alpha}}$} from $P$.
First assume that $p$ has the form $\hat q$, so that $\p=q$.
Then $P \vdash_{\it ws} q {\trans \alpha}$, so
$\mathcal{G}(P) \vdash_{\it irr} \hat q {\trans \alpha}$, meeting the requirement of the lemma.

Alternatively, $p=f(p_1,\dots,p_n)$ with $f\in\Sigma$, so that $\p=f(\p_1,\dots,\p_n)$.
Then the last step of $\pi$ must be an application of a decent ntyft rule $\frac{H}{f(x_1,\dots,x_n){\trans\alpha}}$
and the substitution $\sigma':\{x_1,\dots,x_n\}\rightarrow\mbox{T}(\Sigma)$ with $\sigma'(x_i)=\p_i$ for $i=1,\dots,n$.
Let $\sigma:\{x_1,\dots,x_n\}\rightarrow\mbox{T}(\hat\Sigma)$ be the substitution with $\sigma(x_i)=p_i$ for $i=1,\dots,n$.
For each positive premise $\mu=(t {\trans\gamma})$ in $H$ we have $P \vdash_{\it irr} \frac{N'_\mu}{\sigma'(t) {\trans \gamma}}$
for some $N'_\mu \subseteq N'$ and $P \vdash_{\it ws} \lambda$ for all $\lambda\in N'_\mu$,
and thus, by induction, \plat{$\mathcal{G}(P) \vdash_{\it irr} \frac{N_\mu}{\sigma(t){\trans\gamma}}$}\vspace{2pt}
for some $N_\mu$ with $\vN_\mu \subseteq N'_\mu$. Let $N$ be the union of all the $N_\mu$, and
the literals $\sigma(u)\ntrans\gamma$ with $(u\ntrans\gamma)\in H$. Then $\vN \subseteq N'$.
It follows that $\mathcal{G}(P) \vdash_{\it irr} \frac{N}{p{\trans \alpha}}$.
\end{proof}

\begin{proposition}\label{prop:O complete}
  Let $P=(\Sigma,A_\tau,R)$ be a complete standard TSS in decent ntyft format, $\alpha\in A_\tau$ and $p\in \mbox{T}(\hat\Sigma)$.
  \begin{itemize}
  \item $P \vdash_{\it ws} \p \trans \alpha q^* ~~\Leftrightarrow~~ \exists q\in \mbox{T}(\hat\Sigma).\;
    q^*=\q \wedge \mathcal{G}(P) \vdash_{\it ws} p \trans \alpha q$.
  \item $P \vdash_{\it ws} \p \ntrans \alpha ~~\Leftrightarrow~~ \mathcal{G}(P) \vdash_{\it ws} p \ntrans \alpha$.
  \end{itemize}
\end{proposition}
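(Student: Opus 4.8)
The plan is to prove the four implications contained in the two equivalences simultaneously, by two inductions. I would prove the two ``$\Rightarrow$'' implications together by induction on the size of the well-supported proof from $P$ of the literal about $\p$, and the two ``$\Leftarrow$'' implications together by induction on the size of the well-supported proof from $\mathcal{G}(P)$ of the literal about $p$. Treating the positive and negative statements jointly is unavoidable: the positive statement at a rule application needs the negative statement for that rule's negative premises, and the negative statement needs the positive one in order to lift the transitions that witness the well-supportedness clause of Def.~\ref{def:wsp}. In every step I would first split on the shape of the source term, which is either $\hat r$ or $f(p_1,\dots,p_n)$ with $f\in\Sigma$ a genuine function symbol; this exhausts $\mbox{T}(\hat\Sigma)$, and for the second shape de-hatting commutes with instantiating the $P$-rule whose conclusion source is $f(x_1,\dots,x_n)$, since a term over $\Sigma$ occurring in a rule contains no hatted symbols. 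I would also use freely the remark preceding Lem.~\ref{lem:O irredundant} that, since $P$ and $\mathcal{G}(P)$ are in decent ntyft format, right-hand sides of premises and conclusions may be dropped when only the existence of a target matters.

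For ``$\Leftarrow$'': when $p=\hat r$ the claim is immediate from step~4 of Def.~\ref{def:AFO}, the only $\mathcal{G}(P)$-rules with source $\hat r$ being the premiseless rules $\hat r\trans\beta\hat s$, one per $P$-ws-provable transition $r\trans\beta s$; so $\mathcal{G}(P)\vdash_{\it ws}\hat r\trans\alpha q$ forces $q=\hat s$ with $P\vdash_{\it ws}r\trans\alpha s$, while $\mathcal{G}(P)\vdash_{\it ws}\hat r\ntrans\alpha$ forces $P\vdash_{\it ws}r\ntrans\alpha$ by completeness of $P$. When $p=f(p_1,\dots,p_n)$ the positive case re-runs the $P$-rule at the root of the $\mathcal{G}(P)$-proof after de-hatting its substitution, using the positive induction hypothesis on the sub-proofs of positive premises and the negative one on those of negative premises. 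The negative case is more delicate: I would assemble a $P$-ws-proof of $\p\ntrans\alpha$ that provides, for each irredundant proof $\frac{N'}{\p\trans\alpha}$ from $P$, a child denying a literal of $N'$, obtained via Lem.~\ref{lem:O irredundant}(2): either some $\lambda\in N'$ is not $P$-ws-provable, so completeness of $P$ gives a $P$-ws-provable positive literal denying $\lambda$; or $\frac{N}{p\trans\alpha}$ is irredundantly provable from $\mathcal{G}(P)$ with $\vN\subseteq N'$, the given $\mathcal{G}(P)$-ws-proof of $p\ntrans\alpha$ supplies a child denying some $\mu\in N$, and the positive ``$\Leftarrow$'' hypothesis converts that child into a $P$-ws-provable literal denying $\check\mu\in\vN\subseteq N'$.

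The ``$\Rightarrow$'' direction is the mirror image. When $p=\hat r$: $P\vdash_{\it ws}r\trans\alpha q^*$ means the premiseless rule $\hat r\trans\alpha\widehat{q^*}$ is in $\mathcal{G}(P)$, so $q:=\widehat{q^*}$ works; and $P\vdash_{\it ws}r\ntrans\alpha$ means there is no such rule, so the single-node tree witnesses $\mathcal{G}(P)\vdash_{\it ws}\hat r\ntrans\alpha$. When $p=f(p_1,\dots,p_n)$ the positive case re-runs the $P$-rule at the root of the $P$-proof, invoking the positive resp.\ negative ``$\Rightarrow$'' hypothesis on its positive resp.\ negative premises to lift them into $\mathcal{G}(P)$ and taking the targets the new substitution supplies. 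The negative case mirrors the previous one: to witness $\mathcal{G}(P)\vdash_{\it ws}p\ntrans\alpha$ I need, for every irredundant proof $\frac{N}{p\trans\alpha}$ from $\mathcal{G}(P)$, a $\mathcal{G}(P)$-ws-provable positive literal denying a member of $N$; Lem.~\ref{lem:O irredundant}(1) yields $\frac{\vN\cup N''}{\p\trans\alpha}$ irredundantly provable from $P$ with all of $N''$ being $P$-ws-provable, the given $P$-ws-proof of $\p\ntrans\alpha$ gives a child denying a literal of $\vN\cup N''$, that literal cannot lie in $N''$ by consistency of well-supported provability, hence it lies in $\vN$, and the positive ``$\Rightarrow$'' hypothesis converts the child into the required literal over $\hat\Sigma$.

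I expect the main obstacle to be keeping the mutual recursion well-founded — every appeal from a negative case is to the positive case on a strictly smaller proof, and every appeal from a positive case to the negative one is on a strictly smaller sub-proof — and, hand in hand with that, checking that the ``extra'' negative literals that Lem.~\ref{lem:O irredundant} introduces (the set $N''$ in part~(1), and the ``$P\not\vdash_{\it ws}\lambda$'' alternative in part~(2)) never break the matching of denials; this is precisely where consistency of well-supported provability and completeness of $P$ enter. A secondary, routine-but-fiddly point is making the ``drop right-hand sides'' device interface smoothly with Def.~\ref{def:wsp}, which is phrased with targets present.
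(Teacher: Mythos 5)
Your proposal is correct and follows essentially the same route as the paper's proof: simultaneous induction on the well-supported proofs in each direction, a case split on whether the source is a constant $\hat r$ or headed by some $f\in\Sigma$, and Lem.~\ref{lem:O irredundant} combined with consistency of $\vdash_{\it ws}$ and completeness of $P$ to discharge the negative-literal cases. The only cosmetic difference is that you phrase the induction as being on proof \emph{size}, whereas well-supported proofs may be infinitely branching, so it should be structural induction on the well-founded proof tree --- which is what your appeals to strict subproofs in effect use.
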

\begin{proof}
  ``$\Rightarrow$'': We prove both statements by simultaneous induction on the well-founded proof $\pi$
  from $\p \trans \alpha q^*$ or $\p \ntrans \alpha$ from $P$. First consider the case $P \vdash_{\it ws} \p \trans \alpha q^*$.

  Assume that $p$ has the form $\hat r$, so that $\p=r$.
  It follows immediately from step 4 of the construction of $\mathcal{G}(P)$ that
  $\mathcal{G}(P) \vdash_{\it ws} \hat r \trans \alpha \hat q^*$. So take $q:=\hat q^*$.

  Alternatively, $p=f(p_1,\dots,p_n)$ with $f\in\Sigma$, so that $\p=f(\p_1,\ldots,\p_n)$.
  Then the last step of $\pi$ must be an application of a decent ntyft rule
  $\frac{H}{f(x_1,\dots,x_n)\trans\alpha t}$
  and a substitution $\sigma':V\rightharpoonup\mbox{T}(\Sigma)$ with $\sigma'(x_i)=\p_i$ for $i=1,\dots,n$
  and $\sigma'(t)=q^*$.
  Let $\sigma'':\{x_1,\dots,x_n\}\rightarrow\mbox{T}(\hat\Sigma)$ be the substitution with $\sigma''(x_i)=p_i$ for $i=1,\dots,n$.
  For each negative premise $u {\ntrans\gamma}$ in $H$ we have $P \vdash_{\it ws} \sigma'(u) \ntrans \gamma$,
  and thus, by induction, $\mathcal{G}(P) \vdash_{\it ws} \sigma''(u)\ntrans\gamma$. Likewise,
  for each positive premise $\mu=(u \trans\gamma y_\mu)$ in $H$, $P \vdash_{\it ws} \sigma'(u) \trans \gamma \sigma'(y_\mu)$,
  and thus, by induction, $\mathcal{G}(P) \vdash_{\it ws} \sigma''(u)\trans\gamma r_\mu$ for some
  $r_\mu$ with $\vr_\mu = \sigma'(y_\mu)$. Let $\sigma:V\rightharpoonup\mbox{T}(\hat\Sigma)$ be a
  substitution with $\sigma(x_i)=p_i$ for $i=1,\dots,n$ and $\sigma(y_{\mu})=r_\mu$ for each positive premise $\mu$ in $H$.
  Then $\mathcal{G}(P) \vdash_{\it ws} \sigma(\mu)$ for each premise $\mu$ in $H$, and thus
  $\mathcal{G}(P) \vdash_{\it ws} p \trans\alpha \sigma(t)$. Take $q:=\sigma(t)$. Then \mbox{$\q=\sigma'(t)=q^*$}.

  Next consider the case $P \vdash_{\it ws} \p \ntrans \alpha$.
  Suppose that {$\frac{N}{p {\trans\alpha}}$}\vspace{-2pt} is irredundantly provable from $\mathcal{G}(P)$.
  Then, by Lem.~\ref{lem:O irredundant}, $P \vdash_{\it irr} \frac{\vN \cup N'}{\p {\trans \alpha}}$
  for a set $N'$ of closed negative literals $\lambda$ with $P\vdash_{\it ws}\lambda$.\vspace{-4pt}
  By Def.~\ref{def:wsp} $\vN \cup N'$ contains a literal $r' {\ntrans\gamma}$ such that
  $r'{\trans\gamma}$ is {\it ws}-provable from $P$ by means of a strict subproof of $\pi$. 
  By the consistency of $\vdash_{\it ws}$, $(r' {\ntrans\gamma})\notin N'$.
  Hence $N$ contains a literal $r {\ntrans\gamma}$ with $\vr = r'$.
  By induction $\mathcal{G}(P)\vdash_{\it ws} r {\trans \gamma}$,
  and this literal denies a literal in $N$. From this it follows that $\mathcal{G}(P)\vdash_{\it ws}p {\ntrans\alpha}$.

  ``$\Leftarrow$'': We prove both statements by simultaneous induction on the well-founded proof $\pi$
  of $p \trans \alpha q$ or $p \ntrans \alpha$ from $\mathcal{G}(P)$.
  First consider the case $\mathcal{G}(P) \vdash_{\it ws} p \trans \alpha q$.

  Assume that $p$ has the form $\hat r$, so that $\p=r$.
  Since $\mathcal{G}(P)$ has no rules whose source is a variable, the last step of $\pi$ must be an application of
  a rule \plat{$\hat r \trans\alpha \hat s$}, introduced in step 4 of the construction of $\mathcal{G}(P)$.
  It follows that $q=\hat s$, so $\q=s$, and \plat{$P\vdash_{\it ws} r \trans\alpha s$}.
  
  Alternatively, $p=f(p_1,\dots,p_n)$ with $f\in\Sigma$, so that $\p=f(\p_1,\ldots,\p_n)$.
Then the last step of $\pi$ must be an application of a decent ntyft rule $\frac{H}{f(x_1,\dots,x_n)\trans\alpha t}$
and a substitution $\sigma:V\rightharpoonup\mbox{T}(\hat\Sigma)$ with $\sigma(x_i)=p_i$ for
$i=1,\dots,n$ and $\sigma(t)=q$.
Let $\sigma':V\rightharpoonup\mbox{T}(\Sigma)$ be the substitution
with $\sigma'(x)=\makebox[0pt][l]{\,\v{}}\sigma(x)$ for each $x\in V$
such that $\sigma(x)$ is defined, and undefined otherwise.
For each premise $\mu$ in $H$ we have $\mathcal{G}(P) \vdash_{\it ws} \sigma(\mu)$, by means of a subproof of $\pi$,
and thus, by induction, $P \vdash_{\it ws} \sigma'(\mu)$.
It follows that $P \vdash_{\it ws} \p \trans\alpha \q$.

Finally, consider the case $\mathcal{G}(P) \vdash_{\it ws} p \ntrans \alpha$.
  Suppose that {$\frac{N'}{\p {\trans\alpha}}$} is irredundantly provable from $P$.
  Then, by Lem.~\ref{lem:O irredundant}, either $P \not\vdash_{\it ws} \lambda$ for a literal
  $\lambda\in N'$, or $\mathcal{G}(P) \vdash_{\it irr} \frac{N}{p{\trans \alpha}}$ for some $N$ with $\vN \subseteq N'$.\vspace{-2pt}
  In the first case, by the completeness of $P$, $P\vdash_{\it ws}\mu$ for a literal $\mu$
  denying $\lambda$, and we are done. So assume the second case.
  By Def.~\ref{def:wsp} $N$ contains a literal $r {\ntrans\gamma}$ such that
  $r{\trans\gamma}$ is {\it ws}-provable from $\mathcal{G}(P)$ by means of a strict subproof of $\pi$. 
  By induction $P\vdash_{\it ws} \vr {\trans \gamma}$,
  and this literal denies a literal in $N'$. From this it follows that $P \vdash_{\it ws} \p \ntrans \alpha$.
\end{proof}

\begin{corollary}\label{cor:G complete}
  Let $P=(\Sigma,A_\tau,R)$ be a complete standard TSS in decent ntyft format. Then also
  $\mathcal{G}(P)$ is complete.
\end{corollary}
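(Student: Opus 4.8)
The plan is to unfold the definition of completeness for $\mathcal{G}(P)$ and reduce it to the completeness of $P$ by means of Prop.~\ref{prop:O complete}. Recall that $\mathcal{G}(P)=(\hat\Sigma,A_\tau,R'')$ is a standard TSS (step 4 only adds rules with a positive conclusion and no premises, and the rules inherited from $R$ are standard since $P$ is), so the notion of completeness from Def.~\ref{def:wsp} applies to it, and its closed terms are exactly the elements of $\mbox{T}(\hat\Sigma)$. Moreover, every $p\in\mbox{T}(\hat\Sigma)$ gives rise to a well-defined closed term $\p\in\mbox{T}(\Sigma)$ by replacing each subterm of the form $\hat q$ by $q$. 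So, to establish completeness of $\mathcal{G}(P)$, I would fix an arbitrary $p\in\mbox{T}(\hat\Sigma)$ and an arbitrary $\alpha\in A_\tau$, and show that either $\mathcal{G}(P)\vdash_{\it ws}p\ntrans\alpha$ or $\mathcal{G}(P)\vdash_{\it ws}p\trans\alpha q$ for some $q\in\mbox{T}(\hat\Sigma)$.

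Next I would apply the completeness of $P$ to the closed term $\p$ and the action $\alpha$, which yields two cases. If $P\vdash_{\it ws}\p\ntrans\alpha$, then by the second bullet of Prop.~\ref{prop:O complete} (its left-to-right direction) we immediately get $\mathcal{G}(P)\vdash_{\it ws}p\ntrans\alpha$. If instead $P\vdash_{\it ws}\p\trans\alpha q^*$ for some $q^*\in\mbox{T}(\Sigma)$, then by the first bullet of Prop.~\ref{prop:O complete} (again its left-to-right direction) there is a closed term $q\in\mbox{T}(\hat\Sigma)$ with $q^*=\q$ and $\mathcal{G}(P)\vdash_{\it ws}p\trans\alpha q$. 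In both cases the defining requirement of completeness is met for $p$ and $\alpha$; since $p$ and $\alpha$ were arbitrary, $\mathcal{G}(P)$ is complete.

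There is essentially no obstacle left at this stage: all the substantive work — comparing irredundant provability from $\mathcal{G}(P)$ with that from $P$ while keeping track of the auxiliary negative literals (Lem.~\ref{lem:O irredundant}), and then lifting this to well-supported provability (Prop.~\ref{prop:O complete}) — has already been carried out. The only points worth double-checking while writing up the argument are the bookkeeping facts used implicitly above: that $\mathcal{G}(P)$ is indeed standard (so that completeness is even meaningful for it), and that the map $p\mapsto\p$ is total on $\mbox{T}(\hat\Sigma)$, so that Prop.~\ref{prop:O complete} may legitimately be invoked for every closed term of $\mathcal{G}(P)$.
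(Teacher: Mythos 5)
Your proof is correct and follows essentially the same route as the paper: apply the completeness of $P$ to $\p$ and $\alpha$, then transfer the resulting literal back to $\mathcal{G}(P)$ via the appropriate direction of Prop.~\ref{prop:O complete}. The additional bookkeeping remarks (that $\mathcal{G}(P)$ is standard and that $p\mapsto\p$ is total) are accurate and unproblematic.
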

\begin{proof}
  Let $p\in \mbox{T}(\hat\Sigma)$ and $\alpha\in A_\tau$. Since $P$ is complete, either
  $P \vdash_{\it ws} \p {\trans \alpha}$ or $P \vdash_{\it ws} \p {\ntrans \alpha}$.
  Consequently, by Prop.~\ref{prop:O complete}, either 
  $\mathcal{G}(P) \vdash_{\it ws} p {\trans \alpha}$ or $\mathcal{G}(P)\vdash_{\it ws} p {\ntrans \alpha}$.
\end{proof}

\subsubsection[Comparison of G(P) and AFO(P)]{Comparison of $\mathcal{G}(P)$ and $\AFO(P)$}

  $\mathcal{G}(P)$ and $\AFO(P)$ differ on the account of rules with
  premises and conclusions labelled $\iota$ and $\omega\in O$.
  We note that actions $\omega\in O$ play no role in the next lemma
  and proposition.

\begin{lemma}\label{lem:AFO irredundant}
  Let $P=(\Sigma,A_\tau,R)$ be a standard TSS in ntyft format.
  Let $a$ range over $A$ and $p,q$ over $\mbox{T}(\hat\Sigma)$.
  \begin{itemize}
  \item 
  $\AFO(P) \vdash_{\it irr} \frac{N}{p \trans a q}$\vspace{-2pt} for a set of closed negative
  literals $N$ iff there is a set of closed negative literals $N'$ such that
  {$\mathcal{G}(P) \vdash_{\it irr} \frac{N'}{p \trans a q}$}\vspace{-4pt} and
  $N=\{s {\ntrans{a}} \mid (s {\ntrans{a}}) \in N'\} \cup
   \{t {\ntrans{\tau}},~~{t {\ntrans{\iota}}} \mid (t  {\ntrans{\tau}}) \in N'\}$.
  \item 
  $\AFO(P) \vdash_{\it irr} \frac{N}{p \trans \alpha q}$, with $\alpha=\tau$
  or $\alpha=\iota$, for a set of closed negative literals $N$, iff there is an $N'$
  such that {$\mathcal{G}(P) \vdash_{\it irr} \frac{N'}{p \trans \tau q}$} and
  $N=\{s {\ntrans{a}} \mid (s {\ntrans{a}}) \in N'\} \cup
   \{t {\ntrans{\tau}},~~{t {\ntrans{\iota}}} \mid (t  {\ntrans{\tau}}) \in N'\}$.
  \end{itemize}
\end{lemma}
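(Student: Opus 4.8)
The plan is to show that, once the fresh action $\iota$ is regarded as a disguised copy of $\tau$, the TSS $\AFO(P)$ proves exactly the same $A\cup\{\tau,\iota\}$-transitions as $\mathcal{G}(P)$, with the negative hypotheses related by the stated ``$\tau/\iota$-doubling''. A preliminary observation removes the actions in $O$ from the picture: among the rules of $\AFO(P)$, only the oracle-inheritance rules introduced in step~6 of Def.~\ref{def:AFO} have a premise labelled with some $\omega\in O$, and these, together with the rules from step~5, have an $O$-labelled conclusion, while the rules obtained in steps~1--4 mention $O$ neither in a premise nor in a conclusion. Hence in any irredundant proof from $\AFO(P)$ of a literal $p\trans\alpha q$ with $\alpha\in A\cup\{\tau,\iota\}$ whose hypotheses are all negative, no rule from step~5 or~6 occurs, and every applied rule is either obtained from a rule of $P$ by steps~1--3, or is a premiseless rule $\hat r\trans\beta\hat s$ added in step~4 (so $\beta\in A_\tau$). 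I would record this first.

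For the implication from left to right I would take such an irredundant proof $\pi$ from $\AFO(P)$ of $\frac{N}{p\trans\alpha q}$ --- with $\alpha=a\in A$ in the first item and $\alpha\in\{\tau,\iota\}$ in the second --- and replace every occurrence of $\iota$ by $\tau$ throughout $\pi$ (in node labels and in hypotheses; the term parts, in particular the target $q$, are untouched, and relabelling commutes with substitution). Tracing through steps~1--3 one checks that any rule of $\AFO(P)$, after turning all its $\iota$-labels into $\tau$, equals --- up to a harmless duplication of a negative premise --- the rule of $P$ it was obtained from, whereas step-4 rules are unaffected; so the relabelled tree is an irredundant proof from $\mathcal{G}(P)$, of $p\trans a q$ resp.\ $p\trans\tau q$, with hypothesis set $N' = \{s\ntrans a \mid (s\ntrans a)\in N\}\cup\{t\ntrans\tau\mid (t\ntrans\tau)\in N\}$. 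The crucial point is that step~3 introduces a premise $v\ntrans\iota$ exactly alongside each negative premise $v\ntrans\tau$, while step-4 rules are premiseless: consequently a negative hypothesis $t\ntrans\iota$ occurs in $\pi$ if and only if $t\ntrans\tau$ does, so $N$ is recovered from $N'$ by the formula in the statement.

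For the converse I would start from an irredundant proof $\pi'$ from $\mathcal{G}(P)$ of $\frac{N'}{p\trans\tau q}$ (respectively $\frac{N'}{p\trans a q}$), with $N'$ a set of closed negative literals, and construct an $\AFO(P)$-proof by recursion on $\pi'$. The recipe is: a node labelled $s\trans a s'$ with $a\in A$ is kept as is; a node labelled $s\trans\tau s'$ generated by a $\Gamma$-patience rule or a step-4 rule keeps the label $\tau$ (step~1 may turn a positive $\tau$-premise of a patience rule into an $\iota$-premise, but does not touch the conclusion, and step~2 leaves patience-rule conclusions alone); a node labelled $s\trans\tau s'$ generated by any other rule of $P$ becomes $s\trans\iota s'$, since step~2 relabels its conclusion. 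Accordingly, at each rule application one turns into $\iota$ exactly those positive $\tau$-premises whose recursively transformed subproofs now conclude with an $\iota$-label, and uses the copy of the rule supplied by step~1 for this subset (or the rule itself, relabelled by step~2, when the subset is empty); and for each negative $\tau$-premise one adds, next to the hypothesis $s\ntrans\tau$ already present, the hypothesis $s\ntrans\iota$ demanded by step~3. The result is an irredundant proof from $\AFO(P)$ of $\frac{N}{p\trans\alpha q}$, where $N$ is the $\tau/\iota$-doubling of $N'$ and $\alpha\in\{\tau,\iota\}$ is the label dictated by the rule at the root of $\pi'$ (so $\alpha=a$ in the first item).

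The genuinely routine parts are the well-foundedness and hypothesis-marking bookkeeping for irredundant proofs, and the fact that a label-only relabelling commutes with substitution. I expect the main obstacle to be getting the role of steps~1--3 precisely right: establishing in one direction that $\iota$ can be uniformly collapsed onto $\tau$ without spoiling rule applicability, and in the other direction that the rule copies produced by step~1 are exactly what is needed to feed a $\tau$-premise from a subproof whose conclusion step~2 has already relabelled to $\iota$, while step~3 contributes precisely the extra negative $\iota$-premises that bridge $N'$ and $N$.
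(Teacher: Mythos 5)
Your proof is correct and takes essentially the same route as the paper, whose own proof merely states that both statements follow by a straightforward simultaneous induction on irredundant provability---from $\AFO(P)$ for ``only if'' (your global $\iota\mapsto\tau$ relabelling, checked node by node) and from $\mathcal{G}(P)$ for ``if'' (your recursive construction choosing the step-1 copies and adding the step-3 companion premises $v\ntrans{\iota}$). Your preliminary observation that rules from steps 5--6 cannot occur corresponds to the paper's remark, just before the lemma, that the actions $\omega\in O$ play no role here.
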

\begin{proof}
  ``If'': Both statements follow by a straightforward simultaneous induction on irredundant
  provability from $\mathcal{G}(P)$.

  ``Only if'': Both statements follow by a straightforward simultaneous induction on irredundant
  provability from $\AFO(P)$.
\end{proof}

\begin{proposition}\label{prop:AFO complete}
  Let $P=(\Sigma,A_\tau,R)$ be a standard TSS in ntyft format.
  Let $a$ range over $A$ and $p,q$ over $\mbox{T}(\hat\Sigma)$.
  \begin{itemize}
  \item $\mathcal{G}(P) \vdash_{\it ws} p \trans a q ~~\Leftrightarrow~~ \AFO(P) \vdash_{\it ws} p \trans a q$.

  \item \plat{$\mathcal{G}(P)\vdash_{\it ws} p \trans \tau q ~~\Leftrightarrow~~ (\AFO(P) \vdash_{\it ws} p \trans \tau q ~\vee~
                                                        \AFO(P) \vdash_{\it ws} p \trans \iota q)$}.

  \item $\mathcal{G}(P) \vdash_{\it ws} p \ntrans a ~~\Leftrightarrow~~ \AFO(P) \vdash_{\it ws} p \ntrans a$.

  \item \plat{$\mathcal{G}(P) \vdash_{\it ws} p \ntrans \tau ~~\Leftrightarrow~~ (\AFO(P) \vdash_{\it ws} p \ntrans \tau ~\wedge~
                                                        \AFO(P) \vdash_{\it ws} p \ntrans \iota)$}.
  \end{itemize}
\end{proposition}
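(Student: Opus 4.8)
The plan is to follow the proof of Prop.~\ref{prop:O complete} almost verbatim, with Lem.~\ref{lem:AFO irredundant} now playing the role that Lem.~\ref{lem:O irredundant} had there. I would prove all four equivalences simultaneously, each split into its two implications, establishing the ``$\Rightarrow$'' implications by induction on the well-supported proof from $\mathcal{G}(P)$ and the ``$\Leftarrow$'' implications by induction on the well-supported proof from $\AFO(P)$. Throughout it helps to think of every $\tau$ of $\mathcal{G}(P)$ as corresponding to \emph{either} a $\tau$ \emph{or} an $\iota$ of $\AFO(P)$, and of every $a\in A$ as corresponding to itself: the first two bullets then say that a positive literal of $\mathcal{G}(P)$ is ws-provable iff one of its thus-defined $\AFO$-variants is, and the last two that a negative literal $p\ntrans\tau$ of $\mathcal{G}(P)$ is ws-provable iff \emph{both} variants $p\ntrans\tau$ and $p\ntrans\iota$ are. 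Before starting I would record three facts used everywhere: the rules added in step 4 of Def.~\ref{def:AFO} are the only rules of $\mathcal{G}(P)$ or $\AFO(P)$ with a constant as source, steps 1--3 leave them untouched and they keep their $\tau$-labels; a non-$\Gamma$-patience rule with a $\tau$-conclusion has that label replaced by $\iota$ in $\AFO(P)$ by step 2, whereas $\Gamma$-patience rules keep it; and the actions $\omega\in O$ and the rules added in steps 5--6 never carry a conclusion labelled in $A_\tau\cup\{\iota\}$, so they are irrelevant here.

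For the positive ``$\Rightarrow$'' cases I would inspect the rule applied at the bottom of the well-supported proof of $p\trans\alpha q$ from $\mathcal{G}(P)$. If it is a step-4 rule it lies in $\AFO(P)$ as well and we are done. Otherwise it is a closed instance of some rule $r\in R$. Applying the induction hypothesis to the sub-proofs of the premises of $r$, each positive $a$-premise still gives an $a$-transition, each positive $\tau$-premise gives a $\tau$- or an $\iota$-transition, each negative $a$-premise gives the absence of an $a$-transition and each negative $\tau$-premise the absence of \emph{both} a $\tau$- and an $\iota$-transition. Hence the appropriate rule of $\AFO(P)$ --- namely $r$ itself, or, if some positive $\tau$-premises were matched by $\iota$-transitions, the step-1 copy that renames exactly those to $\iota$, in either case augmented by step 3 with the $\ntrans\iota$ premises just shown derivable --- is applicable under the same substitution, and by step 2 its conclusion is labelled $\alpha$ when $\alpha\in A$ or $r$ is $\Gamma$-patient, and $\iota$ otherwise, so in all cases it is an $\AFO$-variant of $p\trans\alpha q$. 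Stacking the well-supported sub-proofs from the induction hypothesis under this instance gives the desired proof from $\AFO(P)$. The positive ``$\Leftarrow$'' case is the mirror image: the rule used at the bottom of a ws-proof from $\AFO(P)$ is either a step-4 rule (present verbatim in $\mathcal{G}(P)$) or is obtained from some $r\in R$ by steps 1--3; its positive $\iota$-premises correspond, by the induction hypothesis, to $\tau$-transitions of $\mathcal{G}(P)$ and its $\ntrans\iota$ premises impose nothing on $\mathcal{G}(P)$, so $r$ applies in $\mathcal{G}(P)$ under the same substitution, and since step 2 sends a non-patience $\tau$-conclusion to $\iota$ and leaves patience- and step-4 $\tau$-conclusions alone, a conclusion labelled $\iota$ (resp.\ $\tau$, resp.\ $a$) in $\AFO(P)$ comes from a $\tau$- (resp.\ $\tau$-, resp.\ $a$-) conclusion in $\mathcal{G}(P)$, exactly as the bullets require.

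For the negative cases I would copy the pattern of Prop.~\ref{prop:O complete}. To prove $\AFO(P)\vdash_{\it ws}p\ntrans\beta$ for an $\AFO$-variant $\beta$ of $\alpha$, given a well-supported proof $\pi$ of $p\ntrans\alpha$ from $\mathcal{G}(P)$, it suffices by Def.~\ref{def:wsp} to block every irredundant proof $\frac{N}{p\trans\beta q^*}$ from $\AFO(P)$; by Lem.~\ref{lem:AFO irredundant} such a proof arises from an irredundant proof $\frac{N'}{p\trans\alpha q^*}$ from $\mathcal{G}(P)$ with $N$ the ``$\tau/\iota$-doubling'' of $N'$, so by $\pi$ some $(s\ntrans\gamma)\in N'$ has $\mathcal{G}(P)\vdash_{\it ws}s\trans\gamma$ by a strict sub-proof, and then by the induction hypothesis an $\AFO$-variant of $s\trans\gamma$ is ws-provable from $\AFO(P)$ and denies the corresponding $a$-, $\tau$- or $\iota$-literal in $N$; collecting these blocking transitions into one set of children, exactly as in Prop.~\ref{prop:O complete}, yields the well-supported proof of $p\ntrans\beta$. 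The converse, for $\alpha=\tau$, is the place where both conjuncts are genuinely needed: by Lem.~\ref{lem:AFO irredundant} a single irredundant proof $\frac{N'}{p\trans\tau q^*}$ from $\mathcal{G}(P)$ translates, with the \emph{same} premise set $N$, to an irredundant proof of \emph{either} $p\trans\tau q^*$ \emph{or} $p\trans\iota q^*$ from $\AFO(P)$, so blocking all of them requires that $\AFO(P)$ prove both $p\ntrans\tau$ and $p\ntrans\iota$; thereafter the argument runs as above, the three subcases $\gamma\in A$, $\gamma=\tau$, $\gamma=\iota$ for the blocking literal being handled by the matching positive induction hypotheses.

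The main difficulty is not any isolated step but the careful bookkeeping forced by the one-to-two correspondence between $\tau$ in $\mathcal{G}(P)$ and $\{\tau,\iota\}$ in $\AFO(P)$: lining up the step-1 copies and the step-3 augmentation with the premises of the original rule in the positive cases, and --- more delicately --- producing a \emph{single} set of children that blocks \emph{all} relevant irredundant proofs at once, together with the well-foundedness of the resulting tree, in the negative cases. This last point is precisely the subtlety already settled in the proof of Prop.~\ref{prop:O complete}, and I would take it over unchanged.
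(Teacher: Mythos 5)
Your proposal is correct and matches the paper's own argument, which likewise derives all four claims by two simultaneous inductions on well-supported provability (one per direction), invoking Lem.~\ref{lem:AFO irredundant} to handle the negative literals exactly as you describe; you merely spell out the details that the paper leaves as ``straightforward'', following the template of Prop.~\ref{prop:O complete}. The only nitpick is that a step-1 copy of a patience rule whose premise is renamed to $\iota$ is no longer a patience rule, so step 2 relabels its conclusion $\iota$ as well; this does not affect your argument, since either label is an admissible $\AFO$-variant of the original $\tau$-conclusion.
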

\begin{proof}
Using Lem.~\ref{lem:AFO irredundant}, the lemma follows by two straightforward inductions on well-supported provability,
one for $\Rightarrow$, covering all four claims, and one for $\Leftarrow$.
\end{proof}

\begin{corollary}\label{cor:AFO complete}
  Let $P$ be a standard TSS in decent ntyft format.
  If $P$ is complete, then so is $\AFO(P)$.
\end{corollary}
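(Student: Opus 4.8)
The plan is to verify the only non-trivial part of the statement, completeness of $\AFO(P)$; standardness of $\AFO(P)$ is immediate because every rule produced by steps~1--6 of Def.~\ref{def:AFO} has a positive conclusion. Recalling that the actions of $\AFO(P)$ are those of $A_\tau\cup O\cup\{\iota\}$, I must show that for every $p\in\mbox{T}(\hat\Sigma)$ and every $\beta\in A_\tau\cup O\cup\{\iota\}$ either $\AFO(P)\vdash_{\it ws}p\trans\beta q$ for some $q$, or $\AFO(P)\vdash_{\it ws}p\ntrans\beta$. I would split on the shape of $\beta$, the main tools being the completeness of the intermediate TSS $\mathcal G(P)$ (Cor.~\ref{cor:G complete}), the comparison Prop.~\ref{prop:AFO complete}, and Observation~\ref{obs:Liquid oracle}. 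For $\beta=a\in A$ this is immediate: $\mathcal G(P)$ is complete, so it settles $p$ for $a$, and by the first and third items of Prop.~\ref{prop:AFO complete} the corresponding positive or negative $a$-literal is then {\it ws}-provable from $\AFO(P)$.

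For $\beta=\tau$ and for $\beta=\omega\in O$ I would exploit that no new negative reasoning is involved. Inspecting Def.~\ref{def:AFO}, every rule of $\AFO(P)$ with a $\tau$-labelled conclusion is either a $\Gamma$-patience rule (whose sole premise $x_i\trans\tau y$ is positive) or one of the premise-free constant rules $\hat r\trans\tau\hat s$ added in step~4, and every rule with a conclusion labelled by some $\omega\in O$ is either a premise-free rule $\hat r\trans{\zeta(r)}\surd$ (step~5) or an oracle-inheritance rule $\frac{x_k\trans\omega y}{f(x_1,\dots,x_n)\trans\omega y}$ (step~6); none of these rules has a negative premise. Hence, for such a $\beta$, any irredundant proof of a literal $p\trans\beta q'$ from these rules using only negative hypotheses contains only positive literals and must bottom out at a premise-free rule, so it uses no hypotheses at all and is in fact a well-supported proof. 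Consequently $p$ is settled for $\beta$: positively if $\AFO(P)\vdash_{\it ws}p\trans\beta q'$ for some $q'$, and otherwise negatively by the one-node tree with root $p\ntrans\beta$. For $\beta=\omega$ one may additionally make the positive case concrete through Observation~\ref{obs:Liquid oracle}: writing $p=\rho(t)$ with $t\in\mathbb{T}(\Sigma)$ and $\rho:\var(t)\to\mathbb{P}_{\,\H}$, and noting that every state of $\H$ is either $\surd$ (no outgoing transitions) or some $\hat r$ (whose only $O$-labelled transition in $\H$ is $\hat r\trans{\zeta(r)}\surd$ when $\zeta(r)$ is defined), one gets $\AFO(P)\vdash_{\it ws}p\trans\omega q$ for some $q$ iff $t$ has a $\Gamma$-liquid occurrence of a variable $x$ with $\rho(x)=\hat r$ and $\zeta(r)=\omega$.

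The only remaining case, $\beta=\iota$, is where I expect the main difficulty. Here the one-node-tree argument no longer applies, since step~3 of Def.~\ref{def:AFO} equips $\iota$-rules with negative premises ($v\ntrans\iota$ alongside each $v\ntrans\tau$). If $p$ has an outgoing $\iota$-transition in $\AFO(P)$ we are done; otherwise, by the second item of Prop.~\ref{prop:AFO complete} the $\tau$-transitions of $p$ in $\AFO(P)$ coincide with those in $\mathcal G(P)$, so either $\mathcal G(P)\vdash_{\it ws}p\ntrans\tau$ --- and then the fourth item of Prop.~\ref{prop:AFO complete} gives $\AFO(P)\vdash_{\it ws}p\ntrans\iota$ --- or $p$ has a $\tau$-transition but no $\iota$-transition in $\AFO(P)$. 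Dispatching this last subcase, and thereby finishing the proof, seems to require an additional structural observation in the style of Observation~\ref{obs:Liquid oracle}: a $\tau$-transition of a composite term $p$ in $\AFO(P)$ can only be produced through a $\Gamma$-liquid position of $p$ occupied by a constant $\hat r$ with $P\vdash_{\it ws}r\trans\tau s$, and no such constant is forced to admit an $\iota$-transition, so that $\AFO(P)\vdash_{\it ws}p\ntrans\iota$ can still be derived. Making this structural analysis precise --- in particular reconciling it with the negative premises carried by the $\iota$-rules --- is the step I expect to be the main obstacle; the $A$- and $\tau$/$\omega$-cases above, by contrast, should be routine.
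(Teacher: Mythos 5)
Your treatment of the labels $a\in A$ coincides with the paper's (completeness of $\mathcal{G}(P)$ from Cor.~\ref{cor:G complete} plus Prop.~\ref{prop:AFO complete}), and your argument for $\tau$ and $\omega\in O$ is a correct alternative to the paper's: the paper settles these labels by observing that $\rho(t)$ has a $\tau$- (resp.\ $\omega$-) transition iff some $\Gamma$-liquid variable position carries a constant of $\H$ with such a transition, whereas you argue proof-theoretically that every rule of $\AFO(P)$ with a $\tau$- or $\omega$-labelled conclusion has only positive premises with the same label, so any irredundant derivation with negative hypotheses is in fact hypothesis-free; both routes work.

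However, there is a genuine gap, and it sits exactly where the real content of the corollary lies: the label $\iota$. In your remaining subcase ($p$ has a ws-provable $\tau$-transition but no ws-provable $\iota$-transition) you must exhibit a well-supported proof of $p\ntrans\iota$, i.e.\ for \emph{every} irredundant derivation \plat{$\frac{N}{p\trans{\iota}q}$} from $\AFO(P)$ you must find a ws-provable literal denying some member of $N$. Your sketched observation about where the $\tau$-transitions of a composite term come from does not address this: the members of $N$ are negative literals about arbitrary closed terms $r$ and arbitrary labels, and in particular may include literals $r\ntrans\iota$ whose status is precisely what you are trying to establish, so no induction on the structure of $p$ or appeal to liquidity of $\tau$-transitions resolves them. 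The paper closes this by a contradiction argument resting on two ingredients you do not use: (i) Lem.~\ref{lem:AFO irredundant}, which shows that every irredundant $\iota$-derivation for $p$ in $\AFO(P)$ comes from a $\tau$-derivation in $\mathcal{G}(P)$ whose negative $\tau$-premises are duplicated (by step~3 of Def.~\ref{def:AFO}) into \emph{pairs} $r\ntrans\tau$, $r\ntrans\iota$ in $N$; and (ii) the already-settled status of $\tau$-literals together with the completeness of $\mathcal{G}(P)$ and Prop.~\ref{prop:AFO complete}: if $\mathcal{G}(P)\vdash_{\it ws}r\ntrans\tau$ then $r\ntrans\iota$ is ws-provable in $\AFO(P)$, and if $\mathcal{G}(P)\vdash_{\it ws}r{\trans{\tau}}$ then one of $r\ntrans\tau$, $r\ntrans\iota$ is refuted in $\AFO(P)$. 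Assuming an ``ambiguous'' literal $p\ntrans\iota$ (neither it nor its denial ws-provable) and picking an irredundant derivation none of whose hypotheses is refuted, this pairing forces a contradiction by Def.~\ref{def:wsp}. Without this pairing argument (or an equivalent), your case $\beta=\iota$ remains open, as you yourself acknowledge.
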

\begin{proof}
  By Cor.~\ref{cor:G complete}, $\mathcal{G}(P)$ is complete.
  Hence, for each $p \in \mbox{T}(\hat\Sigma)$ and $a\in A$, either $\mathcal{G}(P) \vdash_{\it ws} p \ntrans a$ or
  $\mathcal{G}(P) \vdash_{\it ws} p {\trans a}$.
  So by Prop.~\ref{prop:AFO complete}, either $\AFO(P) \vdash_{\it ws} p \ntrans a$ or
  $\AFO(P) \vdash_{\it ws} p{\trans a}$.

  Each $p\in\mbox{T}(\hat\Sigma)$ can be written as $\rho(t)$ with $t\in\mathbb{T}(\Sigma)$ and
  $\rho:\var(t)\rightarrow\mathbb{P}_{\,\H}$.  Let $\Gamma$ be the largest predicate for which $P$ is $\Gamma$-patient.
  Now $\AFO_\Gamma(P) \vdash_{\it ws} \rho(t){\trans \tau}$ if $t$ has a
  $\Gamma$-liquid argument $x$ for which $\rho(x){\trans{\tau}_{\,\H}}$.
  Otherwise, $\AFO_\Gamma(P) \vdash_{\it ws} \rho(t) \ntrans \tau$.

  Likewise, for any $\omega\mathbin\in O$, $\AFO_\Gamma(P) \vdash_{\it ws}\rho(t) {\trans\omega}$ if $t$ has a
  $\Gamma$-liquid argument $x$ for which $\rho(x) {\trans{\omega}_{\,\H}}$.
  Otherwise, $\AFO_\Gamma(P) \vdash_{\it ws} \rho(t) \ntrans\omega$.

  A closed negative literal $p{\ntrans{\alpha}}$ is called \emph{true} if $\AFO(P) \vdash_{\it ws} p \ntrans a$,
  \emph{false} if $\AFO(P) \vdash_{\it ws} {p \trans a}$, and \emph{ambiguous} if neither applies.
  Above we proved that there are no ambiguous literals labelled $a \in A$ or $\tau$ or $\omega\in O$.
  It remains to show that there are no ambiguous literals labelled $\iota$.
  Towards a contradiction, assume that \plat{$p{\ntrans \iota}$} is ambiguous.

  There must exist a closed term $q$ and set of closed negative literals $N$ such that
  \plat{$\AFO(P)\vdash_{\it irr}\frac{N}{p\trans\iota q}$}\vspace{2pt} and no literal in $N$ is false.
  For if there were no such $q$ and $N$, the literal $p{\ntrans\iota}$ would be true
  by Def.~\ref{def:wsp}. Moreover, $N$ must contain an ambiguous literal \plat{$r{\ntrans\iota}$}, for if all
  literals in $N$ would be true, then the literal \plat{$p{\ntrans\iota}$} would be false by Def.~\ref{def:wsp}.
  By Lem.~\ref{lem:AFO irredundant} there is a set of closed negative literals $N'$ such that
  \plat{$\mathcal{G}(P) \vdash_{\it irr} \frac{N'}{p \trans \tau q}$} and
  $N=\{s {\ntrans{a}} \mid (s {\ntrans{a}}) \mathbin\in N'\} \cup
  \{t {\ntrans{\tau}},~~{t {\ntrans{\iota}}} \mid (t {\ntrans{\tau}}) \mathbin\in N'\}$.\vspace{1pt}
  So $N$ contains both \plat{$r{\ntrans\iota}$} and \plat{$r{\ntrans \tau}$}.
  If $\mathcal{G}(P)\vdash_{\it ws} r{\ntrans\tau}$ then $r{\ntrans{\iota}}$ would be true by Prop.~\ref{prop:AFO complete}.
  Hence $\mathcal{G}(P) \vdash_{\it ws} r{\trans\tau}$, by the completeness of $\mathcal{G}(P)$.
  So by Prop.~\ref{prop:AFO complete}, $\AFO(P) \vdash_{\it ws} r {\trans \tau}$ or
  \plat{$\AFO(P) \vdash_{\it ws} r {\trans\iota}$}. It follows that one of the literals
  \plat{$r{\ntrans\iota}$} or \plat{$r{\ntrans \tau}$} must be false, contradicting the absence of
  false literals in $N$.
\end{proof}

\subsubsection[Comparison of AFO(P) and K]{Comparison of $\AFO(P)$ and $\K$}

The LTS $\K=(\mathbb{P}_{\K},A_\tau\cup O \cup \{\iota\},\rightarrow_{\K})$ has as states
$\mathbb{P}_{\K}=\{{\it dec}(p) \mid p\in \mbox{T}(\hat\Sigma)\}$, and the transitions are the ones
generated by the following two rules:
\[
\frac{x\trans{\alpha}y}{{\it dec}(x)\trans{\alpha}{\it dec}(y)}
\qquad\qquad
\frac{x\trans{\iota}y}{{\it dec}(x)\trans{\tau}{\it dec}(y)}
\]
where $\alpha$ ranges over $A_\tau$. 
The operator ${\it dec}:\mbox{T}(\hat \Sigma) \rightarrow \mathbb{P}_{\K}$ simply sends any process
$p\in\mbox{T}(\hat \Sigma)$ to the state ${\it dec}(p) \in \mathbb{P}_{\K}$.
Thus, {\it dec} erases all transitions with labels from $O$ and
renames labels $\iota$ into $\tau$. All other transitions are preserved.

\begin{proposition}\label{prop:D complete}
  Let $P=(\Sigma,A_\tau,R)$ be a standard TSS\@.
  Let $a$ range over $A$ and $p,q$ over $\mbox{T}(\hat\Sigma)$.
  \begin{itemize}
  \item ${\it dec}(p) \trans a_\K q^* ~~\Leftrightarrow~~ \exists q \in\mbox{T}(\hat\Sigma).\;
        q^* = {\it dec}(q) \wedge \AFO(P) \vdash_{\it ws} p \trans a q$.

  \item \plat{${\it dec}(p) \trans \tau_\K q^* ~~\Leftrightarrow~~  \exists q \in\mbox{T}(\hat\Sigma).\;
        q^* = {\it dec}(q) \wedge \mbox{}$}\raisebox{-6pt}[0pt][0pt]{$\left(\begin{array}{@{}l@{}}\phantom{\vee~} \AFO(P) \vdash_{\it ws} p \trans \tau q\\[-1pt]
                                                                           \vee~ \AFO(P) \vdash_{\it ws} p \trans \iota q)\end{array}\right)$.}
  \end{itemize}
\end{proposition}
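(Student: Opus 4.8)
The plan is to read both equivalences straight off the definition of $\rightarrow_\K$; no genuine induction is needed. The key preliminary observation is that the symbol ${\it dec}$ does not occur in the signature $\hat\Sigma$ of $\AFO(P)$, so no rule of $\AFO(P)$ has a source of the form ${\it dec}(\cdot)$, and hence the two displayed rules are the only ones that can generate a transition out of a state ${\it dec}(p)$. Unfolding their closure under closed substitution: ${\it dec}(p)$ makes a $\gamma$-transition in $\K$ to a state $q^*$ exactly when there is a closed substitution instance of one of the two rules with conclusion ${\it dec}(p)\trans{\gamma}q^*$ and with a premise $\lambda$ such that $\AFO(P)\vdash_{\it ws}\lambda$; in particular $q^*$ is always of the form ${\it dec}(q)$, since both conclusions have the shape ${\it dec}(y)$.

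For the first bullet, I would observe that for $a\in A$ (so $a\neq\tau$) the only rule able to produce a conclusion labelled $a$ is the left one with $\alpha:=a$, the right rule always yielding a $\tau$-label. Hence ${\it dec}(p)\trans{a}_\K q^*$ holds iff $q^*={\it dec}(q)$ for some $q\in\mbox{T}(\hat\Sigma)$ with $\AFO(P)\vdash_{\it ws}p\trans{a}q$, which is precisely the stated right-hand side; the converse direction is just that rule applied the other way. For the second bullet, a conclusion ${\it dec}(p)\trans{\tau}q^*$ can be produced in exactly two ways: by the left rule with $\alpha:=\tau$, whose premise requires $\AFO(P)\vdash_{\it ws}p\trans{\tau}q$ with $q^*={\it dec}(q)$; or by the right rule, whose premise requires $\AFO(P)\vdash_{\it ws}p\trans{\iota}q$ with $q^*={\it dec}(q)$. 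Taking the disjunction over these two cases gives the claimed equivalence, and conversely each disjunct yields the $\K$-transition by applying the corresponding rule.

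Since the whole argument is a matter of inspecting the two generating rules, there is no real obstacle; the only points that deserve an explicit word are that these two rules exhaust the ways of leaving a ${\it dec}$-state (because ${\it dec}$ is fresh for $\AFO(P)$, as noted above) and that every $\K$-transition lands on a ${\it dec}$-state (immediate from the form of the conclusions). If one prefers to treat $\K$ as generated by a TSS consisting of $\AFO(P)$ together with the two ${\it dec}$-rules, the same facts can be phrased as a one-step induction on $\vdash_{\it ws}$-proofs: since $p$ contains no ${\it dec}$, the root of a well-supported proof of ${\it dec}(p)\trans{\gamma}q^*$ is an application of one of the two ${\it dec}$-rules whose unique child is a well-supported proof of the required $\AFO(P)$-transition, and there is nothing further to do.
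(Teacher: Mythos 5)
Your proposal is correct and matches the paper's intent: the paper dismisses this proposition with the single word ``Straightforward'', and your unfolding of the two generating rules for $\K$ (noting that they are the only source of transitions out of a ${\it dec}$-state, that labels in $O$ are erased, and that the $\tau$-case splits into the $\alpha=\tau$ instance of the first rule and the $\iota$-rule) is exactly the routine argument being alluded to. No genuine induction is needed, as you observe, since the premises of the ${\it dec}$-rules refer directly to $\vdash_{\it ws}$-derivable transitions of $\AFO(P)$.
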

\begin{proof}
Straightforward.
\end{proof}

\subsubsection{Verifying requirements \ref{Completeness} and \ref{Enc-dec} of Thm.~\ref{thm:Main}}

That requirement \ref{Completeness} of Thm.~\ref{thm:Main} is met is
immediate by Cor.~\ref{cor:AFO complete}.

We end this section by verifying requirement~\ref{Enc-dec} of Thm.~\ref{thm:Main}.  Intuitively, the
behaviour of a process $f(p_1,\dots,p_n)$ in $P$ is the same as that of the process $f(\hat p_1,\dots,\hat p_n)$
in $\AFO(P)$, except that some $\tau$-transitions of the former are turned into
$\iota$-transitions of the latter process, and that some oracle transitions may have been added in
the latter. Since any rule in $\AFO(P)$ with a conclusion labelled by \mbox{$A_\tau\cup\{\iota\}$} has
positive and negative premises with labels from $A_\tau\cup\{\iota\}$ only, these oracle
transitions have no influence on the derivation of any transitions from $\AFO(P)$ with
labels in $A_\tau\cup\{\iota\}$.
The operator {\it dec} removes all oracle transitions and renames $\iota$ into $\tau$,
thereby returning the behaviour of $f(\hat p_1,\dots,\hat p_n)$ to match that of $f(p_1,\dots,p_n)$ exactly.

\begin{proposition}
Let $P$ be a complete standard TSS in decent ntyft format.\\ Then
 $f(p_1,\dots,p_n) \bis{P\uplus\K} {\it dec}(f(\hat p_1,\dots,\hat p_n))$
    for any $n$-ary $f\in\Sigma$ and $p_1,\dots,p_n\in \mbox{T}(\Sigma)$.
\end{proposition}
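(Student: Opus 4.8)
The plan is to exhibit a single strong bisimulation on $P\uplus\K$ that establishes $\p\bis{P\uplus\K}{\it dec}(p)$ for \emph{every} $p\in\mbox{T}(\hat\Sigma)$ simultaneously; the proposition is then the instance $p=f(\hat p_1,\dots,\hat p_n)$, for which $\p=f(p_1,\dots,p_n)$ (recall $\p$ is obtained from $p$ by replacing every subterm $\hat q$ by $q$). Concretely, I would take
\[
  \mathcal{B}\;:=\;\{(\p,{\it dec}(p)),\ ({\it dec}(p),\p)\mid p\in\mbox{T}(\hat\Sigma)\}
\]
and show that it is a strong bisimulation between the LTS associated with $P$ and the LTS $\K$, both seen as components of $P\uplus\K$. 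The relation $\mathcal{B}$ is symmetric by construction, and $\K$ is automatically disjoint from $P$ since each of its states has the form ${\it dec}(p)$, so $\mathcal{B}$ does live on $P\uplus\K$.

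To verify the transfer property I would chain the three correspondence results of Sect.~\ref{sec:Abstraction-free}: Prop.~\ref{prop:O complete} relates $\vdash_{\it ws}$-provability from $P$ to that from $\mathcal{G}(P)$, Prop.~\ref{prop:AFO complete} relates $\mathcal{G}(P)$ to $\AFO(P)$, and Prop.~\ref{prop:D complete} relates $\AFO(P)$ to $\K$ (completeness of $\AFO(P)$, which these presuppose, is Cor.~\ref{cor:AFO complete}). Forward direction: a transition $\p\trans{a}r^*$ in $P$ with $a\in A$ gives, by Prop.~\ref{prop:O complete}, some $r\in\mbox{T}(\hat\Sigma)$ with $\vr=r^*$ and $\mathcal{G}(P)\vdash_{\it ws}p\trans{a}r$; hence $\AFO(P)\vdash_{\it ws}p\trans{a}r$ by Prop.~\ref{prop:AFO complete}; hence ${\it dec}(p)\trans{a}{\it dec}(r)$ in $\K$ by Prop.~\ref{prop:D complete}; and $(\vr,{\it dec}(r))=(r^*,{\it dec}(r))\in\mathcal{B}$. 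A transition $\p\trans{\tau}r^*$ is handled by exactly the same chain, the only wrinkle being that the middle step yields $\AFO(P)\vdash_{\it ws}p\trans{\tau}r$ \emph{or} $\AFO(P)\vdash_{\it ws}p\trans{\iota}r$, both of which give ${\it dec}(p)\trans{\tau}{\it dec}(r)$. The backward direction, starting from a transition of ${\it dec}(p)$ in $\K$, is symmetric, traversing the same three correspondences in the opposite order; again a $\tau$-transition of ${\it dec}(p)$ comes from a $\tau$- or $\iota$-transition of $p$ in $\AFO(P)$, hence a $\tau$-transition of $p$ in $\mathcal{G}(P)$, hence a $\tau$-transition of $\p$ in $P$. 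This exhausts all labels in $A_\tau$, which are the only ones carried by transitions of $P$ or of $\K$. The oracle transitions (labels in $O$) introduced by steps~4--6 of Def.~\ref{def:AFO} are erased by ${\it dec}$, and, as noted just before the proposition, never feed into the derivation of an $A_\tau\cup\{\iota\}$-labelled transition of $\AFO(P)$; they therefore play no role here.

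The argument is thus largely routine bookkeeping on top of the results of Sect.~\ref{sec:Abstraction-free}. I expect the only genuinely delicate point to be the systematic handling of the $\tau/\iota$ split --- one $\tau$ in $P$ corresponding to either a $\tau$ or an $\iota$ in $\AFO(P)$, with ${\it dec}$ folding both back onto $\tau$ --- which must be threaded consistently through each case of the transfer property; beyond this no conceptual obstacle is anticipated.
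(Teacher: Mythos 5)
Your proposal is correct and coincides with the paper's own argument: the paper likewise exhibits the relation $\{(\p,{\it dec}(p)) \mid p\in\mbox{T}(\hat\Sigma)\}$ as a strong bisimulation and verifies the transfer property by chaining Prop.~\ref{prop:O complete}, Prop.~\ref{prop:AFO complete} and Prop.~\ref{prop:D complete}, with the $\tau$/$\iota$ split handled exactly as you describe. The only cosmetic difference is that you make the symmetric closure of the relation explicit, which the paper leaves implicit.
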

\begin{proof}
It suffices to show that the relation
$\{(\p,{\it dec}(p) \mid p\in\mbox{T}(\hat\Sigma)\}$ is a strong bisimulation.

So let $p\in\mbox{T}(\hat\Sigma)$.
Suppose that  ${\it dec}(p) \trans a_\K q^*$, with $a\in A$.
Then, by Prop.~\ref{prop:D complete}, there is a $q \in\mbox{T}(\hat\Sigma)$ with
$q^* = {\it dec}(q)$ and $\AFO(P) \vdash_{\it ws} p \trans a q$.
Hence, by Prop.~\ref{prop:AFO complete}, $\mathcal{G}(P) \vdash_{\it ws} p \trans a q$.
So, by Prop.~\ref{prop:O complete}, $P \vdash_{\it ws} \p \trans a \q$, which had to be shown.

The case that ${\it dec}(p) \trans \tau_\K q^*$ proceeds in the same way.

Now suppose that $P \vdash_{\it ws} \p \trans a q^*$, with $a\in A$.
Then, by Prop.~\ref{prop:O complete}, there is a $q \in\mbox{T}(\hat\Sigma)$ with
$q^* = \q$ and $\mathcal{G}(P) \vdash_{\it ws} p \trans a q$.
Hence, by Prop.~\ref{prop:AFO complete}, $\AFO(P) \vdash_{\it ws} p \trans a q$.
So, by Prop.~\ref{prop:D complete}, ${\it dec}(p) \trans a_\K {\it dec}(q)$, which had to be shown.
Again the case $P \vdash_{\it ws} \p \trans \tau q^*$ proceeds in the same way.
\end{proof}

\subsection{Application of the general framework to divergence-preserving semantics}
\label{sec:Apply}

We now apply Thm.~\ref{thm:Main} to derive that the stability-respecting
branching bisimulation format and its rooted variant are congruence
formats for $\bis[\Delta\top]{b}$ and $\bis[\Delta]{b}$, and
$\bis[\Delta\top]{rb}$ and $\bis[\Delta]{rb}$, respectively.

As congruence format $\mathfrak{F}$ in Thm.~\ref{thm:Main} we
take the (rooted) stability-respecting branching bisimulation format intersected
with the decent ntyft format.
If a TSS $P$ is in this format, there exist predicates $\aleph$ and $\Lambda$ such that 
$P$ is $\aL$-patient and only contains rules that are rooted stability-respecting branching bisimulation safe
w.r.t.\ $\aleph$ and $\Lambda$. Although there may be some freedom in the choice of $\aleph$ and $\Lambda$,
the predicate $\aL$ is completely determined by $P$. Namely, each $\aL$-liquid argument of an
operator symbol $f$ must have a patience rule, and conversely, each argument of $f$ that has a
patience rule must be $\aL$-liquid, by conditions~\ref{rhs} and~\ref{aleph} of
Def.~\ref{def:rooted_bra_bisimulation_safe}.
Hence there can be no ambiguity in the choice of $\Gamma$ in $\AFO_\Gamma(P)$.

It is straightforward to check that the conversion $\AFO$ on standard TSSs defined in
Sect.~\ref{sec:Abstraction-free} preserves the (rooted) stability-respecting branching bisimulation format,
and thus satisfies requirement~\ref{Preserved under AFO} of Thm.~\ref{thm:Main}.
Here it is important that in step 6 of Def.~\ref{def:AFO} a term $f(x_1,\dots,x_n)$
inherits oracle transitions only from its $\Gamma$-liquid arguments---else
condition \ref{aleph} of Def.~\ref{def:rooted_bra_bisimulation_safe} would be violated.
Furthermore, condition \ref{main1} of Def.~\ref{def:rooted_bra_bisimulation_safe} is preserved
because premises \plat{$v\ntrans{\tau}$} are kept in place in step 3 of Def.~\ref{def:AFO}; and
condition \ref{main2} of Def.~\ref{def:rooted_bra_bisimulation_safe} is preserved
because the transformation in Def.~\ref{def:AFO} does not introduce new positive premises with the label $\tau$.

\subsubsection{A congruence format for weakly divergence-preserving branching bisimilarity}

We first apply Thm.~\ref{thm:Main} with $\bis[\Delta\top]{b}$ and $\bis[s]{b}$ in the roles of
$\sim$ and $\approx$.
In the construction of the LTS $\H$ out of the LTS $\G$ (cf.\
Sect.~\ref{sec:Abstraction-free}) we take $O\mathbin=\{\Delta\top\}$ and
let $\zeta(p)\mathbin={\Delta\top}$ iff $p$ is divergent.
As observed in Ex.~\ref{exa:oracle}, $\hat p \sim_{\,\H} \hat
q$ iff $\hat p \sim_\G \hat q$. Hence, with Cor.~\ref{cor:req3}, requirement~\ref{Enc} of Thm.~\ref{thm:Main} is satisfied.
We proceed to show that also requirement~\ref{Coincide} is satisfied.
\begin{lemma}\label{lem:Liquid divergence}
Let $t\in\mathbb{T}(\Sigma)$ and
$\rho:\var(t)\rightarrow\mathbb{P}_{\,\H}$.
In $\AFO_\Gamma(P)$, $\rho(t)$ is divergent
  iff
$t$ has a $\Gamma$-liquid occurrence of a variable $x$
such that $\rho(x)$ is divergent.
\end{lemma}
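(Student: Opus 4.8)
The plan is to prove Lemma~\ref{lem:Liquid divergence} by exploiting the structure of $\AFO_\Gamma(P)$: the only rules with a conclusion labelled $\tau$ are $\Gamma$-patience rules and rules without premises whose conclusion has the form $\hat r \trans\tau \hat s$ (coming from step~4 of Def.~\ref{def:AFO}). Write $p=\rho(t)$ with $t\in\mathbb{T}(\Sigma)$ and $\rho:\var(t)\rightarrow\mathbb{P}_{\,\H}$, as is always possible in $\AFO_\Gamma(P)$. The key preliminary observation is that any $\tau$-transition $\rho(t)\trans\tau p'$ in $\AFO_\Gamma(P)$ is obtained by applying a $\Gamma$-patience rule for some $\Gamma$-liquid argument: indeed since $t$ is a genuine open term over $\Sigma$ with at least one function symbol (the degenerate case $t=x$ is trivial, as then $\rho(t)=\rho(x)$ and the claim is immediate), the rule used at the root is a $\Gamma$-patience rule $\frac{x_{i}\trans\tau y}{f(\vec x)\trans\tau f(\dots,y,\dots)}$ with $\Gamma(f,i)$, so $p'=\rho'(t')$ where $t'$ is obtained from $t$ by descending into the $i$-th argument, $\mathrm{var}(t')=\mathrm{var}(t)$, and $\rho'$ agrees with $\rho$ except on the one variable whose occurrence we have moved along a $\tau$-step of $\mathbb{P}_{\,\H}$. (Here I use that in $\mathbb{P}_{\,\H}$ the $\tau$-transitions of the $\hat r$ are exactly those of $\G$, i.e. of $P$, and $\surd$ has no $\tau$-transitions.)

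For the direction ``$\Leftarrow$'', suppose $t$ has a $\Gamma$-liquid occurrence of a variable $x$ with $\rho(x)$ divergent in $\mathbb{P}_{\,\H}$, say $\rho(x)=\hat r_0\trans\tau_{\,\H}\hat r_1\trans\tau_{\,\H}\hat r_2\trans\tau_{\,\H}\cdots$. Since the occurrence of $x$ is $\Gamma$-liquid, there is a $\Gamma$-patience ruloid $\frac{x\trans\tau y}{t\trans\tau t^{(1)}}$ (obtained by composing patience rules along the liquid path), and more generally iterating this, a sequence $t=t^{(0)},t^{(1)},t^{(2)},\dots$ and substitutions $\rho=\rho_0,\rho_1,\dots$ with $\rho_k(x)=\hat r_k$, $\rho_k$ agreeing with $\rho$ on the other variables, and $\AFO_\Gamma(P)\vdash_{\it ws}\rho_k(t^{(k)})\trans\tau\rho_{k+1}(t^{(k+1)})$ for every $k\in\N$. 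This is the required divergence of $\rho(t)$.

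For the direction ``$\Rightarrow$'', suppose $\rho(t)=p_0\trans\tau p_1\trans\tau p_2\trans\tau\cdots$ in $\AFO_\Gamma(P)$. By the preliminary observation each step moves one $\Gamma$-liquid occurrence of some variable along a $\tau$-step in $\mathbb{P}_{\,\H}$, leaving $t$'s other variable occurrences and all other components untouched; more precisely $p_k=\rho_k(t^{(k)})$ with $t^{(k)}$ obtained from $t$ by sliding along $\Gamma$-liquid positions and $\mathrm{var}(t^{(k)})=\mathrm{var}(t)$. Since $\mathrm{var}(t)$ is finite, by the pigeonhole principle some single variable $x\in\mathrm{var}(t)$ is the one being advanced at infinitely many steps $k_0<k_1<k_2<\cdots$, and the occurrence of $x$ that gets advanced stays $\Gamma$-liquid throughout. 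Tracking the values $\rho_{k_j}(x)$ along these steps yields an infinite $\tau$-path $\rho(x)=\rho_{k_0}(x)\trans\tau_{\,\H}\rho_{k_1}(x)\trans\tau_{\,\H}\cdots$ (possibly after first reaching $\rho_{k_0}(x)$ from $\rho(x)$ via finitely many $\tau$-steps, which we prepend), hence $\rho(x)$ is divergent, and $x$ has a $\Gamma$-liquid occurrence in $t$. The main obstacle is making the bookkeeping of ``which occurrence is being slid, and along which liquid position'' precise enough that the pigeonhole step is rigorous — in particular ensuring that an occurrence, once identified as $\Gamma$-liquid, remains $\Gamma$-liquid under the sliding, which is exactly what Def.~\ref{def:liquid/frozen} guarantees since sliding happens strictly inside $\Gamma$-liquid argument positions.
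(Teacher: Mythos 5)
Your proof is correct and follows essentially the same route as the paper's: abstraction-freeness of $\AFO_\Gamma(P)$ forces every $\tau$-step of $\rho(t)$ to originate, via patience rules, from a $\tau$-step of $\rho(x)$ at a $\Gamma$-liquid occurrence of a variable $x$ in $t$, a pigeonhole argument then yields a divergent $\rho(x)$, and the converse is immediate from $\Gamma$-patience. The only refinement worth making in your bookkeeping is to pigeonhole over the finitely many $\Gamma$-liquid occurrences rather than over variables, so that the tracked values form a single $\tau$-path from $\rho(x)$ even when $x$ occurs $\Gamma$-liquid more than once in $t$.
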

\begin{proof}
   For ``only if'', suppose $\rho(t)$ is divergent, i.e., there is
     an infinite sequence of $\tau$-transitions from $\rho(t)$. Since
   $\AFO_\Gamma(P)$ is abstraction-free, each of these transitions
   must originate from a $\tau$-transition from a process $\rho(x)$, where $x$ occurs $\Gamma$-liquid in $t$.
   One of the variables $x$ that occurs $\Gamma$-liquid in $t$ must contribute infinitely many of
   these transitions, so that $\rho(x)$ is divergent.

  ``If'' follows immediately from the fact that the TSS $\AFO_\Gamma(P)$ is $\Gamma$-patient.
\end{proof}

\begin{lemma}\label{lem:Oracle divergence}
  Let $p\in\mbox{T}(\hat\Sigma)$.
  Then $\AFO_\Gamma(P) \vdash_{\it ws} p {\trans{\Delta\top}}$ iff
  $p$ is divergent.
\end{lemma}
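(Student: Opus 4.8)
The plan is to relate the $\Delta\top$-transitions derivable in $\AFO_\Gamma(P)$ to divergence in the LTS $\H$ (where the oracle $\zeta$ lives), using the structural machinery already established. Recall that any $p\in\mbox{T}(\hat\Sigma)$ can be written uniquely (up to renaming of variables) as $\rho(t)$ with $t\in\mathbb{T}(\Sigma)$ and $\rho:\var(t)\rightarrow\mathbb{P}_{\,\H}$; this is the decomposition used in Obs.~\ref{obs:Liquid oracle} and Lem.~\ref{lem:Liquid divergence}. The key observation is that $\Delta\top\in O$, so by Obs.~\ref{obs:Liquid oracle}, $\AFO_\Gamma(P)\vdash_{\it ws}\rho(t){\trans{\Delta\top}}$ iff $t$ has a $\Gamma$-liquid occurrence of a variable $x$ with $\rho(x){\trans{\Delta\top}_{\,\H}}$.

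\emph{First}, I would reduce the statement about $\H$-transitions to a statement about divergence of the leaves $\rho(x)$. By the choice of $O=\{\Delta\top\}$ and the oracle $\zeta$ in Sect.~\ref{sec:Apply} (and Ex.~\ref{exa:oracle}), the only $\Delta\top$-transitions in $\H$ are those of the form $\hat r\trans{\Delta\top}\surd$, added precisely when $r$ is divergent in $\G$, i.e.\ when $P\vdash_{\it ws}\hat r{\trans{\tau}}\cdots$ forms an infinite chain---equivalently, when $\hat r$ is divergent in $\G$. Hence $\rho(x){\trans{\Delta\top}_{\,\H}}$ iff $\rho(x)$ is divergent (in $\G$, equivalently in $\H$, since the added $\surd$-state is a deadlock and does not affect divergence). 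So the right-hand side of Obs.~\ref{obs:Liquid oracle} becomes: $t$ has a $\Gamma$-liquid occurrence of a variable $x$ with $\rho(x)$ divergent.

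\emph{Second}, I would invoke Lem.~\ref{lem:Liquid divergence} directly: this last condition is precisely equivalent to ``$\rho(t)$ is divergent in $\AFO_\Gamma(P)$''. Chaining the two equivalences gives $\AFO_\Gamma(P)\vdash_{\it ws}\rho(t){\trans{\Delta\top}}$ iff $\rho(t)$ is divergent in $\AFO_\Gamma(P)$, which is exactly the claim once we observe $p=\rho(t)$. The one technical point to nail down is the alignment of ``divergent in $\G$'' (the condition under which step~4/5 of Def.~\ref{def:AFO} adds the $\Delta\top$-transition for $\hat r$) with ``divergent as a process $\hat r$ in $\AFO_\Gamma(P)$'': since $\AFO(P)$ has no rules with a variable source, the $\tau$-transitions of the constant $\hat r$ in $\AFO_\Gamma(P)$ are exactly those inherited via step~4 from $P$ (the label $\tau$ is preserved on these rules because they are rules without premises, falling under the relaxed notion of abstraction-freeness), so $\hat r$ is divergent in $\AFO_\Gamma(P)$ iff $r$ is divergent in $P$ iff $\hat r$ is divergent in $\G$. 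This is the part of the argument that requires a moment of care, but it follows from the structure of $\AFO$ and Prop.~\ref{prop:AFO complete}; everything else is a direct composition of the cited lemmas.
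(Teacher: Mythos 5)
Your proof is correct and takes essentially the same route as the paper's: write $p=\rho(t)$ with $t\in\mathbb{T}(\Sigma)$ and $\rho:\var(t)\rightarrow\mathbb{P}_{\,\H}$, and chain Obs.~\ref{obs:Liquid oracle}, the construction of $\H$ (the choice of $O$ and $\zeta$), and Lem.~\ref{lem:Liquid divergence}. The additional care you take in aligning divergence of the constants $\hat r$ in $\AFO_\Gamma(P)$ with divergence in $\G$/$\H$ is left implicit in the paper's proof, and your justification of it (no rules of $\AFO(P)$ have a variable source, so $\hat r$ has exactly the transitions inherited from $\H$) is sound.
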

\begin{proof}
In $\AFO_\Gamma(P)$ any term $p\in\mbox{T}(\hat\Sigma)$ can be
written as $\rho(t)$ with $t\in\mathbb{T}(\Sigma)$ and $\rho:\var(t)\rightarrow\mathbb{P}_{\,\H}$.

Suppose $\rho(t)$ is divergent.
Then $t$ has a $\Gamma$-liquid occurrence of a variable $x$ such
that $\rho(x)$ is divergent, by Lem~\ref{lem:Liquid divergence}. Hence $\rho(x)\trans{\Delta\top}_{\,\H}$
by the construction of $\H$. Thus
$\AFO_\Gamma(P) \vdash_{\it ws} \rho(t) {\trans{\Delta\top}}$,
by Obs.~\ref{obs:Liquid oracle}. The other direction proceeds likewise.
\end{proof}
The following proposition states that requirement~\ref{Coincide} of Thm.~\ref{thm:Main} is satisfied indeed.
\begin{proposition}\label{prop:Coincide}
On $\AFO(P)$ the equivalences $\bis[\Delta\top]{b}$ and $\bis[s]{b}$ coincide.
\end{proposition}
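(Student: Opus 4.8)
The plan is to show that on $\AFO(P)$ the finer equivalence $\bis[\Delta\top]{b}$ already contains $\bis[s]{b}$; since we always have ${\bis[\Delta\top]{b}}\subseteq{\bis[s]{b}}$, this gives coincidence. So let $p,q\in\mbox{T}(\hat\Sigma)$ with $p\bis[s]{b}q$ in $\AFO(P)$, let $\brel$ be a stability-respecting branching bisimulation on $\AFO(P)$ containing $(p,q)$, and prove that $\brel$ is in fact weakly divergence-preserving. The key observation is that in the abstraction-free TSS $\AFO(P)$, every $\tau$-transition of a process $\rho(t)$ arises via a $\Gamma$-patience rule from a $\tau$-transition of one of its $\Gamma$-liquid arguments $\rho(x)$, which by the construction of $\H$ also supplies an outgoing oracle transition labelled $\Delta\top$; hence, by Lem.~\ref{lem:Oracle divergence}, a process $r$ in $\AFO(P)$ is divergent iff $\AFO(P)\vdash_{\it ws}r\trans{\Delta\top}$.

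The main step is then: if $r\brel s$ and $r$ is divergent, then $s$ is divergent. Indeed, $r$ divergent gives $r\trans{\Delta\top}$ in $\AFO(P)$. Since $\brel$ is a branching bisimulation and $\Delta\top\neq\tau$, we get $s\epsarrow s'\trans{\Delta\top}s''$ with $r\brel s'$ and (what matters) $s'\trans{\Delta\top}$, so $s'$ is divergent. But $s\epsarrow s'$ means there is a finite path $s=s_0\trans{\tau}\cdots\trans{\tau}s_n=s'$, and prepending this path to a divergence of $s'$ yields a divergence of $s$. Thus $\brel$ relates divergent processes only to divergent processes, and symmetrically. From this, given an infinite $\tau$-sequence $p=p_0\trans{\tau}p_1\trans{\tau}\cdots$ with $p_k\brel q$ for all $k$: each $p_k$ is divergent, hence each is $\brel$-related only to divergent processes, and in particular $q$ is divergent, so there is an infinite $\tau$-sequence $q=q_0\trans{\tau}q_1\trans{\tau}\cdots$. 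Every $q_\ell$ occurring on this sequence is again divergent. This verifies the weak divergence-preservation condition (recall that for the weak version one only needs the existence of such a $(q_\ell)$; the condition ``$p_k\brel q_\ell$ for all $k,\ell$'' is dropped). Hence $\brel$ is a weakly divergence-preserving branching bisimulation on $\AFO(P)$, so $p\bis[\Delta\top]{b}q$.

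The one subtlety I expect to need care with is the precise link between ``$\AFO(P)\vdash_{\it ws}r\trans{\Delta\top}$'' and ``$r$ is divergent''. This is exactly Lem.~\ref{lem:Oracle divergence}, so I can cite it directly; it in turn rests on Lem.~\ref{lem:Liquid divergence} and Obs.~\ref{obs:Liquid oracle}, which exploit that $\AFO_\Gamma(P)$ is $\Gamma$-patient and abstraction-free. The argument ``$s\epsarrow s'$ and $s'$ divergent $\Rightarrow$ $s$ divergent'' is the only genuinely new manipulation, and it is a triviality: concatenate a finite $\tau$-path with an infinite one. Everything else is bookkeeping about branching bisimulation matches of the non-$\tau$ action $\Delta\top$. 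I would write the proof roughly as follows.

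\begin{trivlist}\item[\hspace{\labelsep}\bf Proof:]
  Since ${\bis[\Delta\top]{b}}\subseteq{\bis[s]{b}}$ always holds, it suffices to show the reverse inclusion on $\AFO(P)$.
  So let $p,q\in\mbox{T}(\hat\Sigma)$ with $p\bis[s]{b}q$ on $\AFO(P)$, and let $\brel$ be a stability-respecting branching bisimulation on $\AFO(P)$ with $p\brel q$. We claim that $\brel$ is in fact weakly divergence-preserving, which yields $p\bis[\Delta\top]{b}q$.

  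First note that a process $r$ in $\AFO(P)$ is divergent iff $\AFO(P)\vdash_{\it ws}r\trans{\Delta\top}$, by Lem.~\ref{lem:Oracle divergence}. Also, if $s\epsarrow s'$ in $\AFO(P)$ and $s'$ is divergent, then so is $s$: prepending the finite $\tau$-path witnessing $s\epsarrow s'$ to an infinite $\tau$-sequence from $s'$ yields an infinite $\tau$-sequence from $s$.

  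Now suppose $r\brel s$ and $r$ is divergent. Then $r\trans{\Delta\top}$ in $\AFO(P)$, and $\Delta\top\neq\tau$, so since $\brel$ is a branching bisimulation there are $s',s''$ with $s\epsarrow s'\trans{\Delta\top}s''$, $r\brel s'$ and (the transition $r\trans{\Delta\top}$ being matched) $s'\trans{\Delta\top}$. Hence $s'$ is divergent, and therefore $s$ is divergent. By symmetry of $\brel$, the same holds with the roles of $r$ and $s$ interchanged.

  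Finally, take any infinite sequence $(p_k)_{k\in\N}$ with $p=p_0$, $p_k\trans{\tau}p_{k+1}$ and $p_k\brel q$ for all $k\in\N$. Then $p_0$ is divergent, so by the above $q$ is divergent, i.e.\ there is an infinite sequence $(q_\ell)_{\ell\in\N}$ with $q=q_0$ and $q_\ell\trans{\tau}q_{\ell+1}$ for all $\ell\in\N$. This is exactly what condition~(\ref{cnd:rvgdivsim}) requires in its weak form. Hence $\brel$ is a weakly divergence-preserving branching bisimulation on $\AFO(P)$, so $p\bis[\Delta\top]{b}q$.
\end{trivlist}
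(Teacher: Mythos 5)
Your proof is correct and follows essentially the same route as the paper: divergence is transferred across the relation via the $\Delta\top$ oracle transitions of Lem.~\ref{lem:Oracle divergence}, matched through the branching-bisimulation condition for the non-$\tau$ action $\Delta\top$, and then a finite $\tau$-path is prepended to a divergence of $q'$ to obtain one of $q$. The only (inessential) difference is that you argue with an arbitrary stability-respecting branching bisimulation witnessing $p\bis[s]{b}q$, whereas the paper shows directly that the relation $\bis[s]{b}$ on $\AFO(P)$ is itself a weakly divergence-preserving branching bisimulation.
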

\begin{proof}
It suffices to show that the relation $\bis[s]{b}$ on $\AFO(P)$ is a weakly
divergence-preserving branching bisimulation. By definition it is a (stability-respecting) branching
bisimulation. Hence it remains to show that it is weakly divergence-preserving.
To this end, it suffices to show that if $p$ is divergent and $p
\bis[s]{b} q$ then also $q$ is divergent.

Suppose $p$ is divergent. Then by Lem.~\ref{lem:Oracle divergence} $\AFO(P) \vdash_{\it ws} p {\trans{\Delta\top}}$.
So $\AFO(P) \vdash_{\it ws} q \epsarrow q' {\trans{\Delta\top}}$ for some $q'$, by the
definition of a branching bisimulation. So $q'$ is divergent, by Lem.~\ref{lem:Oracle divergence},
and thus $q$ is divergent.
\end{proof}
The following example shows that Prop.~\ref{prop:Coincide} would not hold if we had
skipped step 6 of Def.~\ref{def:AFO}, inheriting oracle transitions for $\Gamma$-liquid arguments,
or if we had not used the oracle transitions at all.
\begin{example}
Let $p\in \mbox{T}(\Sigma)$ be deadlock (a process without outgoing transitions) and $q\in \mbox{T}(\Sigma)$ a process having only
a $\tau$-transition to itself, and a $\tau$-transition to $p$. Then in the LTS $\G$ we have
$\hat p \bis[s]{b}_{\,\G} \hat q$ but $\hat p \notbis[\Delta\top\!\!]{b}_{\!\!\!\G} \hat q$.\vspace{-3pt}
After translation to $\H$, the processes $\hat p$ and
$\hat q$ are distinguished by means of oracle transitions, so that we have
$\hat p \notbis[s]{b}_{\,\H} \hat q$ and $\hat p\notbis[\Delta\top\!\!]{b}_{\!\!\!\H} \hat q$.
Now let $\Sigma$ feature a unary operator $f$ with
as only rule $\frac{x \trans\tau y}{f(x) \trans\tau f(y)}$. If oracle transitions would not be
inherited in $\AFO(P)$, then we would have $f(\hat p) \bis[s]{b}_{\AFO(P)} f(\hat q)$ but
\plat{$f(\hat p) \notbis[\Delta\top\!]{b}_{\!\!\!\!\AFO(P)} f(\hat q)$}.
\end{example}

\noindent
We now verify requirement~\ref{dec} of Thm.~\ref{thm:Main}.

\begin{proposition}\label{prop:Abstraction congruence}
 $p \mathrel{\bis[\Delta\top\!\!\!]{b}}_{\!\!\!\!\AFO(P)} q  ~~\Rightarrow~~
 {\it dec}(p) \mathrel{\bis[\Delta\top\!\!\!]{b}}_{\!\!\!\!\K} {\it dec}(q)$.
\end{proposition}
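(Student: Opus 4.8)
The statement asserts that the decoding operation $\mathit{dec}$, which erases oracle transitions and renames $\iota$ into $\tau$, is compositional for $\bis[\Delta\top]{b}$ when going from $\AFO(P)$ to $\K$. The natural strategy is to exhibit an explicit witness relation on $\K$ and verify it is a weakly divergence-preserving branching bisimulation. Concretely, suppose $\brel$ is a weakly divergence-preserving branching bisimulation on $\AFO(P)$ with $p \brel q$; I would take
\[
  \mathcal{S} = \{(\mathit{dec}(p'),\mathit{dec}(q')) \mid p' \brel q'\}
\]
on $\K$ and show $\mathcal{S}$ is a weakly divergence-preserving branching bisimulation relating $\mathit{dec}(p)$ and $\mathit{dec}(q)$. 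Since we already know (Prop.~\ref{prop:D complete}) exactly how transitions of $\mathit{dec}(p')$ in $\K$ correspond to $\vdash_{\it ws}$-provable transitions of $p'$ in $\AFO(P)$ — an $a$-transition of $\mathit{dec}(p')$ comes from an $a$-transition of $p'$, and a $\tau$-transition of $\mathit{dec}(p')$ comes from either a $\tau$- or an $\iota$-transition of $p'$, in both cases with matching targets up to $\mathit{dec}$ — the branching-bisimulation transfer conditions for $\mathcal{S}$ reduce straightforwardly to the branching-bisimulation transfer conditions for $\brel$.

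**Key steps, in order.** First I would set up $\brel$ and $\mathcal{S}$ as above, noting symmetry is inherited. Second, for the branching-bisimulation clause: given $\mathit{dec}(p') \trans{\alpha} r^*$ in $\K$ with $(\mathit{dec}(p'),\mathit{dec}(q')) \in \mathcal{S}$, use Prop.~\ref{prop:D complete} to lift this to a transition $p' \trans{\beta} r$ in $\AFO(P)$ with $r^* = \mathit{dec}(r)$, where $\beta = \alpha$ if $\alpha \in A$, and $\beta \in \{\tau,\iota\}$ if $\alpha = \tau$. Apply the branching-bisimulation property of $\brel$ to $p' \brel q'$ and $p' \trans{\beta} r$: either $\beta = \tau$ and $r \brel q'$ — but then also $\mathit{dec}(r) \mathrel{\mathcal{S}} \mathit{dec}(q')$, discharging the clause since $\alpha = \tau$ in this case — or $q' \epsarrow q'' \trans{\beta} q'''$ in $\AFO(P)$ with $p' \brel q''$ and $r \brel q'''$. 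Applying $\mathit{dec}$ and using Prop.~\ref{prop:D complete} once more, each $\tau$-step in $q' \epsarrow q''$ survives as a $\tau$-step in $\K$ (an $\iota$-step maps to $\tau$, a $\tau$-step stays $\tau$), so $\mathit{dec}(q') \epsarrow \mathit{dec}(q'')$ in $\K$; and $q'' \trans{\beta} q'''$ maps to $\mathit{dec}(q'') \trans{\alpha} \mathit{dec}(q''')$ in $\K$. Since $p' \brel q''$ and $r \brel q'''$ we get $\mathit{dec}(p') \mathrel{\mathcal{S}} \mathit{dec}(q'')$ and $\mathit{dec}(r) \mathrel{\mathcal{S}} \mathit{dec}(q''')$, as required. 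Third, for weak divergence preservation: an infinite $\tau$-sequence from $\mathit{dec}(p')$ in $\K$ lifts, step by step via Prop.~\ref{prop:D complete}, to an infinite sequence of $\tau$- and $\iota$-transitions from $p'$ in $\AFO(P)$; but — and here is the subtle point — $\AFO(P)$ is abstraction-free, so (by the reasoning of Lem.~\ref{lem:Liquid divergence} / Lem.~\ref{lem:Oracle divergence}) such an infinite sequence forces $p'$ to be divergent in the sense of $\bis[\Delta\top]{b}$, i.e.\ $\AFO(P) \vdash_{\it ws} p' \trans{\Delta\top}$ by Lem.~\ref{lem:Oracle divergence}; equivalently, $p'$ admits an infinite sequence of \emph{$\tau$}-transitions. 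Now weak divergence preservation of $\brel$ gives an infinite $\tau$-sequence from $q'$ in $\AFO(P)$, which $\mathit{dec}$ sends to an infinite $\tau$-sequence from $\mathit{dec}(q')$ in $\K$. Finally, $(\mathit{dec}(p),\mathit{dec}(q)) \in \mathcal{S}$ by construction, so $\mathit{dec}(p) \bis[\Delta\top]{b}_{\K} \mathit{dec}(q)$.

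**Main obstacle.** The delicate part is the weak-divergence clause: an infinite $\tau$-run in $\K$ corresponds a priori to a \emph{mixed} infinite run of $\tau$'s and $\iota$'s in $\AFO(P)$, not to a pure $\tau$-run, and the definition of weakly divergence-preserving refers only to $\tau$-transitions. The resolution is exactly that $\AFO(P)$ is abstraction-free: $\iota$-labelled transitions arise from impatient (non-patience) rules, so an infinite mixed run must in fact contain infinitely many genuine $\tau$-steps contributed through patience rules from some divergent liquid argument, which one then packages as a bona fide $\bis[\Delta\top]{b}$-divergence using Lemmas~\ref{lem:Liquid divergence} and~\ref{lem:Oracle divergence} together with the oracle transition $\trans{\Delta\top}$. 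One must be a little careful that $\mathit{dec}$ acts injectively enough on the relevant transition structure — but since $\mathit{dec}$ only erases $O$-labelled transitions and relabels $\iota$ to $\tau$ while preserving all targets, and since we only ever need to \emph{reflect} $\tau$- and $a$-transitions (never $O$-transitions) back into $\AFO(P)$, Prop.~\ref{prop:D complete} supplies precisely what is needed.
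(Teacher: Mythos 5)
Your witness relation and your treatment of the branching-bisimulation transfer clause are fine, and they match the paper's proof. The genuine gap is in the divergence clause. You claim that an infinite $\tau$-run of ${\it dec}(p')$ in $\K$, once lifted via Prop.~\ref{prop:D complete} to an infinite run of $\tau$- and $\iota$-transitions from $p'$ in $\AFO(P)$, forces $p'$ to be divergent (i.e.\ to admit an infinite run of genuine $\tau$-transitions, equivalently $p'\trans{\Delta\top}$ by Lem.~\ref{lem:Oracle divergence}), on the grounds that abstraction-freeness makes the mixed run contain infinitely many patience-rule $\tau$-steps. This is false. Abstraction-freeness constrains which rules may carry $\tau$ in their conclusion, but the $\iota$-transitions come from impatient rules whose premises need not mention $\tau$ at all, so the lifted run may contain no $\tau$-steps whatsoever. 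For instance, if $P$ has the rule $\frac{x\trans{a}y}{f(x)\trans{\tau}f(y)}$, it becomes $\iota$-labelled in $\AFO(P)$; taking $p$ with an infinite $a$-run and no $\tau$-transitions, $f(\hat p)$ has an infinite pure $\iota$-run, hence ${\it dec}(f(\hat p))$ diverges in $\K$, yet $f(\hat p)$ is not divergent in $\AFO(P)$ and has no $\Delta\top$-transition. So the step ``$p'$ is divergent, hence weak divergence preservation of $\brel$ yields a divergence of $q'$'' does not go through; Lemmas~\ref{lem:Liquid divergence} and~\ref{lem:Oracle divergence} cannot repair it.

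The correct argument (the paper's) splits on the shape of the lifted run. If it contains infinitely many $\iota$-steps, no divergence preservation is needed: $\iota$ is a visible action in $\AFO(P)$, so the ordinary branching-bisimulation clause, applied repeatedly to the steps $\epsarrow\trans{\iota}$, produces an infinite matching run from $q'$ whose ${\it dec}$-image is an infinite $\tau$-run in $\K$. If instead the run contains only finitely many $\iota$-steps, first match those finitely many steps using the branching clause, and only then invoke weak divergence preservation of the relation on $\AFO(P)$ for the remaining tail, which is a genuine $\tau$-divergence. Your proof is missing exactly this case distinction; as written, the first of the two cases is not covered by any valid step.
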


\begin{trivlist} \item[\hspace{\labelsep}{\bf Proof:}]
Define the relation $\brel$ on the states of $\K$ by
\[ {\it dec}(p) \mathrel{\brel} {\it dec}(q) ~~\Leftrightarrow~~ p \bis[\Delta\top]{b} q\;. \]
It suffices to show that $\brel$ is a weakly divergence-preserving branching bisimulation.
\begin{itemize}
\item Suppose $p \bis[\Delta\top\!]{b} q$ and ${\it dec}(p)\trans\alpha p^\dagger$.
  By the semantics of ${\it dec}$ there are two possibilities.
  \begin{description}
    \item[{\sc Case 1:}] $p \trans\alpha p'$ for some $p'$ with $p^\dagger={\it dec}(p')$.
    Since $p \bis[\Delta\top\!]{b} q$, either $\alpha = \tau$ and $p'\bis[\Delta\top\!]{b}q$, or
    \plat{$q\epsarrow q' \trans{\alpha} q''$} for some $q'$ and $q''$ with $p\bis[\Delta\top\!]{b} q'$ and $p' \bis[\Delta\top\!]{b} q''$.
    So either $\alpha = \tau$ and ${\it dec}(\hspace{-0.45587pt}p')\mathbin{\brel}{\it dec}(q)$, or
    \plat{${\it dec}(q)\mathbin{\epsarrow} {\it dec}(q') \mathbin{\trans{\alpha}} {\it dec}(q'')$}
    with ${\it dec}(p)\mathbin{\brel}{\it dec}(q')$ and ${\it dec}(p') \mathrel{\brel} {\it dec}(q'')$.
    \item[{\sc Case 2:}] $\alpha=\tau$, and $p \trans\iota p'$ for some $p'$ with $p^\dagger={\it dec}(p')$.
    Since $p \bis[\Delta\top\!]{b} q$,
    \plat{$q\epsarrow q' \trans{\iota} q''$} for some $q'$ and $q''$ with $p\bis[\Delta\top\!]{b} q'$ and $p' \bis[\Delta\top\!]{b} q''$.
    So \plat{${\it dec}(q)\epsarrow {\it dec}(q') \trans{\tau} {\it dec}(q'')$}
    with ${\it dec}(p)\mathrel{\brel} {\it dec}(q')$ and ${\it dec}(p') \mathrel{\brel} {\it dec}(q'')$.
  \end{description}
\item Suppose $p\bis[\Delta\top\!]{b} q$ and there is an infinite sequence $(p_k^\dagger)_{k\in\N}^{\mbox{}}$
    such that ${\it dec}(p)=p_0^\dagger$, $p_k^\dagger\trans{\tau}p_{k+1}^\dagger$ and
    $p_k^\dagger\mathrel{\brel} {\it dec}(q)$ for all $k\in\N$.
    Then there is an infinite sequence $(p_k)_{k\in\N}$
    such that $p_0=p$ and, for all $k\in\N$, $p_k^\dagger={\it dec}(p_k)$, $p_k \bis[\Delta\top\!]{b} q$
    and either \plat{$p_k\trans{\tau}p_{k+1}$} or \plat{$p_k\trans{\iota}p_{k+1}$}.
    We distinguish two cases.
  \begin{description}
    \item[{\sc Case 1:}] 
    For infinitely many of the $k$ we have \plat{$p_k\trans{\iota}p_{k+1}$}.
    Then there is an infinite sequence $(p'_j)_{j\in\N}^{\mbox{}}$
    such that $p'_0=p_0$ and \plat{$p'_j\epsarrow\trans{\iota}p'_{j+1}$} for all $j\in\N$.
    Since $p \bis{b} q$, there must be an infinite sequence $(q'_j)_{j\in\N}^{\mbox{}}$
    such that $q=q'_0$, \plat{$q'_j\epsarrow\trans{\iota}q'_{j+1}$} and $p'_j\bis{b}q'_j$ for all $j\in\N$.
    It follows that ${\it dec}(q)={\it dec}(q'_0)$ and \plat{${\it dec}(p'_j)\epsarrow\trans{\tau}{\it dec}(p'_{j+1})$} for all $j\in\N$.
    In other words, there exists an infinite sequence $(q_\ell^\dagger)_{\ell\in\N}^{\mbox{}}$
    such that ${\it dec}(q)=q_0^\dagger$ and \plat{$q_\ell^\dagger\trans{\tau}q_{\ell+1}^\dagger$} for all $\ell\in\N$.\vspace{2pt}
    \item[{\sc Case 2:}] 
    There is an $n\in \N$ such that \plat{$p_k\trans{\tau}p_{k+1}$} for all $k\geq n$.
    Since $p \bis[\Delta\top\!]{b} q$, there must be a finite sequence $(q_k)_{k=0}^{n}$
    such that $q=q_0$ and, for all $0\leq k<n$, either
    \plat{$q_k\epsarrow q_{k+1}$} or \plat{$q_k\epsarrow\trans{\iota} q_{k+1}$}, and $p_{k+1}\mathbin{\bis[\Delta\top\!\!\!]{b}}q_{k+1}$.
    Moreover, since $p_n \bis[\Delta\top\!\!\!]{b} q_n$, and
    $p_k \mathbin{\bis[\Delta\top\!\!\!]{b}} q \mathbin{\bis[\Delta\top\!\!\!]{b}} p_n \mathbin{\bis[\Delta\top\!\!\!]{b}} q_n$ for each $k\geq n$,
    there must be an infinite sequence $(q_\ell)_{\ell> n}$
    such that \plat{$q_\ell\trans{\tau}q_{\ell+1}$} for all $\ell\geq n$.
    It follows that \plat{${\it dec}(q)={\it dec}(q_0)\epsarrow {\it dec}(q_n)$} and
    \plat{${\it dec}(q_\ell)\trans{\tau}{\it dec}(q_{\ell+1})$} for all $\ell\geq n$.
  \hfill$\Box$
  \end{description}
\end{itemize}
\end{trivlist}

\begin{corollary}
$\mathfrak{F}$ is a congruence format for $\bis[\Delta\top]{b}$.
\end{corollary}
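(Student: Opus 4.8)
The plan is to invoke Theorem~\ref{thm:Main} with the instantiation $\sim := \bis[\Delta\top]{b}$ and $\approx := \bis[s]{b}$, taking $\mathfrak{F}$ to be the stability-respecting branching bisimulation format intersected with the decent ntyft format, and taking the operation $\AFO$ to be the conversion $\AFO^{O,\zeta}_\Gamma$ of Def.~\ref{def:AFO} with $O=\{\Delta\top\}$, with $\zeta(p)={\Delta\top}$ iff $p$ is divergent, and with $\Gamma$ the largest predicate for which $P$ is $\Gamma$-patient (well defined, since $\Gamma$ is determined by $P$ via conditions~\ref{rhs} and~\ref{aleph} of Def.~\ref{def:rooted_bra_bisimulation_safe}, as argued at the start of Sect.~\ref{sec:Apply}). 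It then remains to check the hypotheses of the theorem, all of which have been prepared in the preceding subsections.

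First I would record the standing inclusions ${\bis{}} \subseteq {\bis[\Delta\top]{b}} \subseteq {\bis[s]{b}}$: the second is the inclusion noted right after Def.~\ref{def:bb}, and the first holds because any strong bisimulation is trivially a weakly divergence-preserving branching bisimulation. That $\mathfrak{F}$ is a congruence format for $\approx = \bis[s]{b}$ is Thm.~\ref{thm:congruence1} (the extra restriction to decent ntyft rules only shrinks the class of admissible TSSs), and by construction $\mathfrak{F}$ is included in the decent ntyft format; the signature requirement on $\AFO$ is immediate from Def.~\ref{def:AFO}. Next I would tick off the six numbered requirements: requirement~\ref{Completeness} is Cor.~\ref{cor:AFO complete}; requirement~\ref{Preserved under AFO} is the observation at the start of Sect.~\ref{sec:Apply} that $\AFO$ preserves the (rooted) stability-respecting branching bisimulation format, since oracle transitions are inherited only from $\Gamma$-liquid arguments, negative $\tau$-premises are kept in place, and no new positive $\tau$-premises are introduced; requirement~\ref{Enc} follows from Cor.~\ref{cor:req3} together with the observation in Ex.~\ref{exa:oracle} that for this $O$ and $\zeta$ one has $\hat p \sim_\G \hat q \Rightarrow \hat p \sim_{\,\H} \hat q$; requirement~\ref{Coincide} is Prop.~\ref{prop:Coincide}; the LTS $\K$ and the function ${\it dec}$ are those of Sect.~\ref{sec:Abstraction-free}, with requirement~\ref{dec} being Prop.~\ref{prop:Abstraction congruence}; and requirement~\ref{Enc-dec} is the proposition immediately preceding this corollary. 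With all hypotheses verified, Theorem~\ref{thm:Main} yields that $\mathfrak{F}$ is a congruence format for $\bis[\Delta\top]{b}$.

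Since every ingredient has already been assembled, the corollary itself is essentially a bookkeeping exercise; the conceptually substantial steps were Prop.~\ref{prop:Coincide} (showing that $\bis[s]{b}$ is automatically weakly divergence-preserving on $\AFO(P)$, which is exactly what the oracle transitions and step~6 of Def.~\ref{def:AFO} were engineered to achieve) and Prop.~\ref{prop:Abstraction congruence} (compositionality of ${\it dec}$ for $\bis[\Delta\top]{b}$). The only mild care needed here is to match the generic hypotheses of Theorem~\ref{thm:Main} to the concrete data and to confirm that the choice of $O$ and $\zeta$ indeed makes requirements~\ref{Enc} and~\ref{Coincide} hold for this particular pair $\sim,\approx$, which is precisely the content of Ex.~\ref{exa:oracle} and Prop.~\ref{prop:Coincide}.
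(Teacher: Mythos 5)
Your proposal is correct and follows essentially the same route as the paper: the corollary is obtained by instantiating Thm.~\ref{thm:Main} with $\sim\,=\,\bis[\Delta\top]{b}$, $\approx\,=\,\bis[s]{b}$, $O=\{\Delta\top\}$ and $\zeta(p)=\Delta\top$ iff $p$ is divergent, and checking the six requirements via Cor.~\ref{cor:AFO complete}, the format-preservation argument at the start of Sect.~\ref{sec:Apply}, Cor.~\ref{cor:req3} with Ex.~\ref{exa:oracle}, Prop.~\ref{prop:Coincide}, Prop.~\ref{prop:Abstraction congruence}, and the unnumbered proposition closing Sect.~\ref{sec:Abstraction-free}. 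The only slip is a harmless misattribution: requirement~\ref{Enc-dec} is verified by that last-mentioned proposition at the end of Sect.~\ref{sec:Abstraction-free}, not by the proposition immediately preceding the corollary (which is Prop.~\ref{prop:Abstraction congruence}, covering requirement~\ref{dec}).
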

As indicated in Sect.~\ref{sec:ruloids}, each standard TSS $P$ in ready simulation format can be
converted to a TSS $P'$ in decent ntyft format. In \cite{BFvG04} it is
shown that this transformation preserves the set of $ws$-provable
closed literals, hence $\sim$ is a congruence for $P$ iff $P'$ is. It
is not hard to check that if $P$ is in (rooted) stability-respecting
branching bisimulation format then so is $P'$. Thus we obtain the
following congruence theorem.

\begin{theorem}\label{thm:Congruence3}
Let $P$ be a complete standard TSS in stability-respecting branching bisimulation
format. Then $\bis[\Delta\top]{b}$ is a congruence for $P$.
\hfill $\Box$
\end{theorem}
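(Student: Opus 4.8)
The plan is to deduce Theorem~\ref{thm:Congruence3} from the corollary just proved, namely that $\mathfrak{F}$ --- the stability-respecting branching bisimulation format intersected with the decent ntyft format --- is a congruence format for $\bis[\Delta\top]{b}$. The only gap between the hypothesis of the theorem and that corollary is that a TSS in stability-respecting branching bisimulation format is merely required to be in ready simulation format, not in decent ntyft format; so the first step is to replace $P$ by an equivalent decent-ntyft presentation.

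So let $P=(\Sigma,A_\tau,R)$ be a complete standard TSS in stability-respecting branching bisimulation format, witnessed by predicates $\aleph$ and $\Lambda$ with $\Lambda$ universal: $P$ is $\aL$-patient and every rule of $P$ is rooted stability-respecting branching bisimulation safe w.r.t.\ $\aleph$ and $\Lambda$. I would apply the conversion of \cite{GV92} recalled at the start of Sect.~\ref{sec:ruloids}, which replaces free variables in rules by arbitrary closed terms and replaces every rule with a variable source $x$ by the family of rules obtained, for each $f\in\Sigma$, by substituting $f(x_1,\dots,x_{\ar(f)})$ (with $x_1,\dots,x_{\ar(f)}$ fresh) for $x$ throughout. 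The result $P'$ is a standard TSS in decent ntyft format over the same signature $\Sigma$, and by \cite{BFvG04} it proves exactly the same closed literals under $\vdash_{\it ws}$; hence $P'$ is complete and generates the same LTS as $P$, so $\bis[\Delta\top]{b}$ is a congruence for $P$ if and only if it is a congruence for $P'$.

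It then remains to check that $P'$ is again in stability-respecting branching bisimulation format, for the same $\aleph$ and the same universal $\Lambda$, so that $P'\in\mathfrak{F}$ and the corollary applies. Neither $\aleph$, nor $\Lambda$, nor the $\aL$-patience rules are affected by the conversion, so $P'$ is still $\aL$-patient with $\Lambda$ universal; what needs verification is conditions~\ref{rhs}--\ref{main} of Def.~\ref{def:rooted_bra_bisimulation_safe} for the new rules. Deleting occurrences of free variables (by substituting closed terms) can only make those conditions easier to satisfy, and substituting a fresh $x_i$ for a source variable $x$ turns the $\Lambda$-liquid, $\aleph$-frozen and $\aleph$-liquid occurrences of $x$ in source, target and premises into the corresponding occurrences of $x_i$, so safeness carries over; a premise $v\ntrans\tau$ witnessing condition~\ref{main1} for $x$ becomes a premise $v'\ntrans\tau$, with $v'$ the image of $v$ under the substitution, witnessing it for $x_i$. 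This is the same verification as for rooted branching bisimulation safeness in \cite{FvGdW12}, with condition~\ref{main1} handled exactly as its new case in Lem.~\ref{lem:preservation_branching_bisimulation_safe}.

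I expect this bookkeeping --- in particular tracking the interplay of $\aleph$ and $\Lambda$ in condition~\ref{main} through the substitution $f(x_1,\dots,x_{\ar(f)})/x$ for the ntyxt rules --- to be the main, though routine, obstacle; everything else is immediate from the cited results. Granting it, $P'$ is a complete standard TSS in $\mathfrak{F}$-format, the corollary gives that $\bis[\Delta\top]{b}$ is a congruence for $P'$, and therefore $\bis[\Delta\top]{b}$ is a congruence for $P$, which is what had to be shown.
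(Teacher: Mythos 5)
Your proposal matches the paper's own argument: the paper likewise derives Thm.~\ref{thm:Congruence3} from the corollary that $\mathfrak{F}$ is a congruence format for $\bis[\Delta\top]{b}$, by converting $P$ to a decent ntyft TSS $P'$ via the transformation recalled in Sect.~\ref{sec:ruloids}, invoking \cite{BFvG04} for preservation of $ws$-provable closed literals, and noting (as you verify in somewhat more detail) that the stability-respecting branching bisimulation format is preserved by this conversion. So the proposal is correct and takes essentially the same route.
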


\subsubsection{A congruence format for rooted weakly divergence-preserving branching bisimilarity}

We now apply Thm.~\ref{thm:Main} with $\bis[\Delta\top]{rb}$ and $\bis[s]{rb}$ in the roles of
$\sim$ and $\approx$. The choice of $O$ and $\zeta$ in the construction of $\H$ is the same as above.

Again requirement~\ref{Enc} of Thm.~\ref{thm:Main} is satisfied: since any two processes $g,h\in\mathbb{P}_\G$ are
$\bis[\Delta\top]{b}$-equivalent in $\G$ iff they are $\bis[\Delta\top]{b}$-equivalent in $\H$,
it follows immediately from Def.~\ref{def:rbb} that any two processes $g,h\in\mathbb{P}_\G$ are
$\bis[\Delta\top]{rb}$-equivalent in $\G$ iff they are $\bis[\Delta\top]{rb}$-equivalent in $\H$.

Requirement~\ref{Coincide} of Thm.~\ref{thm:Main} is also satisfied:
since on $\AFO(P)$ the equivalences 
$\bis[\Delta\top]{b}$ and $\bis[s]{b}$ coincide, it follows immediately from Def.~\ref{def:rbb}
that also $\bis[\Delta\top]{rb}$ and $\bis[s]{rb}$ coincide.

To check requirement~\ref{dec}  of Thm.~\ref{thm:Main} define the relation $\rrel$ on the states of $\rm K$ by
\[ {\it dec}(p) \mathrel{\rrel} {\it dec}(q) ~~\Leftrightarrow~~ p \bis[\Delta\top]{rb} q\;. \]
With Def.~\ref{def:rbb} it is straightforward to check that $\rrel$ is a rooted weakly
divergence-preserving branching bisimulation. Hence requirement~\ref{dec} is satisfied.

\begin{corollary}
Rooted $\mathfrak{F}$ is a congruence format for $\bis[\Delta\top]{rb}$.
\end{corollary}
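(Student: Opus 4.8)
The plan is to invoke Theorem~\ref{thm:Main} with $\sim$ instantiated as $\bis[\Delta\top]{rb}$, $\approx$ as $\bis[s]{rb}$, the format $\mathfrak{F}$ as the rooted stability-respecting branching bisimulation format intersected with the decent ntyft format, and the operation $\AFO$ with $O=\{\Delta\top\}$ and $\zeta(p)={\Delta\top}$ iff $p$ is divergent, exactly as in the preceding subsection. First I would check the blanket hypothesis ${\bis{}}\subseteq{\bis[\Delta\top]{rb}}\subseteq{\bis[s]{rb}}$: strong bisimilarity is trivially a rooted weakly divergence-preserving branching bisimulation, and the second inclusion follows from ${\bis[\Delta\top]{b}}\subseteq{\bis[s]{b}}$ together with Definition~\ref{def:rbb}. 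That $\mathfrak{F}$ is a congruence format for $\bis[s]{rb}$ contained in the decent ntyft format is Theorem~\ref{thm:congruence2} (restricted to decent ntyft TSSs).

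Next I would verify the seven requirements of Theorem~\ref{thm:Main}. Requirements~\ref{Completeness} and~\ref{Enc-dec} were established in Section~\ref{sec:Abstraction-free} for every complete standard TSS in decent ntyft format, and requirement~\ref{Preserved under AFO}---that $\AFO$ preserves the rooted stability-respecting branching bisimulation format---was argued at the start of Section~\ref{sec:Apply}. For requirement~\ref{Enc}, by Corollary~\ref{cor:req3} it suffices to show $\hat p\sim_\G\hat q \Rightarrow \hat p\sim_{\,\H}\hat q$; since $\bis[\Delta\top]{b}$-equivalence coincides on $\G$ and $\H$ (Example~\ref{exa:oracle}), Definition~\ref{def:rbb} immediately gives that $\bis[\Delta\top]{rb}$-equivalence coincides on $\G$ and $\H$ as well. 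For requirement~\ref{Coincide}, Proposition~\ref{prop:Coincide} shows that $\bis[\Delta\top]{b}$ and $\bis[s]{b}$ coincide on $\AFO(P)$, and Definition~\ref{def:rbb} lifts this to the rooted variants $\bis[\Delta\top]{rb}$ and $\bis[s]{rb}$.

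It then remains to verify requirement~\ref{dec}: $p\sim_{\AFO(P)}q \Rightarrow {\it dec}(p)\sim_\K {\it dec}(q)$. I would define $\rrel$ on the states of $\K$ by ${\it dec}(p)\rrel{\it dec}(q)$ iff $p\bis[\Delta\top]{rb}q$ and show that $\rrel$ is a rooted weakly divergence-preserving branching bisimulation. For the root step, suppose ${\it dec}(p)\trans\alpha p^\dagger$; by the two defining rules of $\K$ either $p\trans\alpha p'$ with $\alpha\in A_\tau$, or $\alpha=\tau$ and $p\trans\iota p'$, in both cases with $p^\dagger={\it dec}(p')$. Since $p\bis[\Delta\top]{rb}q$, there is a matching transition $q\trans\alpha q'$ (respectively $q\trans\iota q'$) with $p'\bis[\Delta\top]{b}q'$, so ${\it dec}(q)\trans\alpha{\it dec}(q')$ and ${\it dec}(p')\brel{\it dec}(q')$, where $\brel$ is the weakly divergence-preserving branching bisimulation from the proof of Proposition~\ref{prop:Abstraction congruence}. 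Hence $\rrel$ satisfies the rootedness clause, and since $\brel$ is a weakly divergence-preserving branching bisimulation, $\rrel$ is a rooted one. With all requirements met, Theorem~\ref{thm:Main} yields that the rooted format $\mathfrak{F}$ is a congruence format for $\bis[\Delta\top]{rb}$.

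I expect the only point needing care to be the root case just described: because ${\it dec}$ renames $\iota$ into $\tau$, a $\tau$-transition at the root of ${\it dec}(p)$ may in fact arise from an $\iota$-transition of $p$, so the case split on whether the underlying transition of $p$ carries the label $\alpha$ or $\iota$ is essential; the remaining obligations are routine bookkeeping paralleling the unrooted development.
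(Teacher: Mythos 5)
Your proposal is correct and follows essentially the same route as the paper: instantiate Thm.~\ref{thm:Main} with $\bis[\Delta\top]{rb}$ and $\bis[s]{rb}$, reuse the choice of $O$ and $\zeta$, lift requirements~\ref{Enc} and~\ref{Coincide} from the unrooted case via Def.~\ref{def:rbb}, and verify requirement~\ref{dec} with the relation ${\it dec}(p)\rrel{\it dec}(q)\Leftrightarrow p\bis[\Delta\top]{rb}q$. Your explicit treatment of the root step (the $\tau$/$\iota$ case split, discharging the post-root obligation via Prop.~\ref{prop:Abstraction congruence}) just spells out what the paper dismisses as "straightforward to check".
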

Exactly as above, this yields the following congruence theorem.

\begin{theorem}\label{thm:Congruence4}
Let $P$ be a complete standard TSS in rooted stability-respecting branching bisimulation
format. Then $\bis[\Delta\top]{rb}$ is a congruence for $P$.
\hfill $\Box$
\end{theorem}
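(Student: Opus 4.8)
The plan is to derive Theorem~\ref{thm:Congruence4} from Theorem~\ref{thm:Main} in exactly the way Theorem~\ref{thm:Congruence3} was derived, now with $\sim\,:=\,\bis[\Delta\top]{rb}$ and $\approx\,:=\,\bis[s]{rb}$, and taking for $\mathfrak{F}$ the rooted stability-respecting branching bisimulation format intersected with the decent ntyft format. Theorem~\ref{thm:Main} will yield that this $\mathfrak{F}$ is a congruence format for $\bis[\Delta\top]{rb}$; to upgrade this to the full rooted stability-respecting branching bisimulation format I would invoke, as in the paragraph preceding Theorem~\ref{thm:Congruence3}, the conversion from Sect.~\ref{sec:ruloids} of a TSS in ready simulation format to one in decent ntyft format, which by \cite{BFvG04} preserves the $ws$-provable closed literals (hence preserves congruence of $\bis[\Delta\top]{rb}$) and, as is routine to check against Def.~\ref{def:rooted_bra_bisimulation_safe}, also preserves membership in the rooted stability-respecting branching bisimulation format.

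To apply Theorem~\ref{thm:Main} I would first collect the format-independent inputs: ${\bis{}}\subseteq{\bis[\Delta\top]{rb}}\subseteq{\bis[s]{rb}}$, immediate from Def.~\ref{def:bb}--\ref{def:rbb}; $\mathfrak{F}$ is a congruence format for $\approx=\bis[s]{rb}$ by Theorem~\ref{thm:congruence2}; $\mathfrak{F}$ lies in the decent ntyft format by construction; requirement~\ref{Completeness} is Cor.~\ref{cor:AFO complete}; requirement~\ref{Preserved under AFO} is the observation at the start of Sect.~\ref{sec:Apply} that $\AFO$ preserves the (rooted) stability-respecting branching bisimulation format; and requirement~\ref{Enc-dec} is the last proposition of Sect.~\ref{sec:Abstraction-free}, proved for every complete standard TSS in decent ntyft format independently of the choice of $\sim$.

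What remains are the two $\sim$/$\approx$-dependent requirements, and the point I would stress is that each reduces to the unrooted case already handled for $\bis[\Delta\top]{b}$ and $\bis[s]{b}$. I would keep $O=\{\Delta\top\}$ with $\zeta(p)=\Delta\top$ iff $p$ is divergent, so that $\H$ is the LTS of Ex.~\ref{exa:oracle}. For requirement~\ref{Enc}, by Cor.~\ref{cor:req3} it suffices that $\hat p\sim_\G\hat q$ implies $\hat p\sim_\H\hat q$; since the corresponding implication for $\bis[\Delta\top]{b}$ holds (Ex.~\ref{exa:oracle}), Def.~\ref{def:rbb} gives it for $\bis[\Delta\top]{rb}$. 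For requirement~\ref{Coincide}, Prop.~\ref{prop:Coincide} gives coincidence of $\bis[\Delta\top]{b}$ and $\bis[s]{b}$ on $\AFO(P)$, and Def.~\ref{def:rbb} lifts this to $\bis[\Delta\top]{rb}$ and $\bis[s]{rb}$. For requirement~\ref{dec} I would set ${\it dec}(p)\mathrel{\rrel}{\it dec}(q)\Leftrightarrow p\bis[\Delta\top]{rb}q$ and check, using Def.~\ref{def:rbb} and Prop.~\ref{prop:Abstraction congruence} for the residual unrooted obligations, that $\rrel$ is a rooted weakly divergence-preserving branching bisimulation on $\K$, which is requirement~\ref{dec}.

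I do not expect a genuine conceptual obstacle: the heavy lifting — coincidence on $\AFO(P)$ and compositionality of ${\it dec}$ — was already done in Prop.~\ref{prop:Coincide} and Prop.~\ref{prop:Abstraction congruence}, and the passage to the rooted variants is frictionless precisely because Def.~\ref{def:rbb} constrains only the initial transitions before deferring to the corresponding unrooted relation. The one place demanding care is the bookkeeping that both the decent ntyft conversion and the $\AFO$ transformation respect conditions~\ref{rhs}--\ref{main} of Def.~\ref{def:rooted_bra_bisimulation_safe} together with $\aL$-patience; here one uses that the liquid predicate $\aL$ is pinned down by $P$ itself, as noted at the start of Sect.~\ref{sec:Apply}, so there is no ambiguity in the $\Gamma$ used to build $\AFO_\Gamma(P)$.
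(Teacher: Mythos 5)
Your proposal is correct and follows essentially the same route as the paper: it applies Thm.~\ref{thm:Main} with $\bis[\Delta\top]{rb}$ and $\bis[s]{rb}$ as $\sim$ and $\approx$, the same oracle $O=\{\Delta\top\}$, lifts requirements~\ref{Enc} and~\ref{Coincide} from the unrooted case via Def.~\ref{def:rbb}, verifies requirement~\ref{dec} with the relation $\rrel$ induced by $\bis[\Delta\top]{rb}$, and concludes via the conversion to decent ntyft format exactly as in the derivation of Thm.~\ref{thm:Congruence3}. No gaps.
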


\subsubsection{A congruence format for divergence-preserving branching bisimilarity}
\label{sec:Divergence-preserving bb}

Next, we apply Thm.~\ref{thm:Main} with $\bis[\Delta]{b}$ and $\bis[s]{b}$ in the roles of
$\sim$ and $\approx$. In the construction of the LTS $\H$, we let $O$ contain a unique name for
each $\bis[\Delta]{b}$-equivalence class of processes in $\G$, and let $\zeta(p)$ be the name of the 
$\bis[\Delta]{b}$-equivalence class of $\hat p\in\mathbb{P}_\G$, for any $p\in T(\Sigma)$.
Thus in $\H$ all states of $\G$ have a fresh outgoing transition,
labelled with the name of its $\sim$-equivalence class in $\G$.

With this definition of $\H$, requirement~\ref{Enc} of
Thm.~\ref{thm:Main} is satisfied: any divergence-preserving branching bisimulation $\brel$ on $\G$ relates
states in the same $\bis[\Delta]{b}$-equivalence class only, and thus is also a
divergence-preserving branching bisimulation on $\H$ (when adding $\surd\brel\surd$).

We proceed to show that also requirement~\ref{Coincide} is satisfied. A few lemmas are needed.

\begin{lemma}
\label{lem:vGLT09a}
\cite{GLT09a}
Condition (D) in Def.~\ref{def:bb} can be replaced by the following equivalent condition:
\begin{itemize}
\item[(D$'$)]
if $p\brel q$ and there is an infinite sequence of processes $(p_k)_{k\in\N}$ such that
$p=p_0$, $p_k\trans{\tau}p_{k+1}$ and $p_k \brel q$ for all $k\in\N$,
then there is a process $q'$ such that $q\epsarrow\trans{\tau}q'$ and $p_k\brel q'$ for some $k\in\N$.
\end{itemize}
That is, the resulting definition also yields the relation $\bis[\Delta]{b}$.
\end{lemma}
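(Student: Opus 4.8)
The plan is to prove the two inclusions between $\bis[\Delta]{b}$ and the relation $\bis[\Delta']{b}$ obtained by using \textup{(D$'$)} instead of \textup{(D)} in Def.~\ref{def:bb}, and to conclude using that $\bis[\Delta]{b}$ is the largest branching bisimulation satisfying \textup{(D)}. The inclusion ${\bis[\Delta]{b}}\subseteq{\bis[\Delta']{b}}$ is routine: any branching bisimulation $\brel$ satisfying \textup{(D)} also satisfies \textup{(D$'$)}, since \textup{(D)} and \textup{(D$'$)} have identical hypotheses, and from the sequence $(q_\ell)_{\ell\in\N}$ furnished by \textup{(D)} one reads off a witness for \textup{(D$'$)} by taking $q':=q_1$: then $q\epsarrow q_0\trans\tau q'$ and $p_0\brel q'$. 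Hence every branching bisimulation witnessing $p\bis[\Delta]{b}q$ also witnesses $p\bis[\Delta']{b}q$.

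For the converse it suffices to show that $\bis[\Delta']{b}$ — being a union of branching bisimulations, hence itself a branching bisimulation — satisfies \textup{(D)}. So fix $p\bis[\Delta']{b}q$, say via a branching bisimulation $\brel$ satisfying \textup{(D$'$)} with $p\brel q$, together with an infinite sequence $(p_k)_{k\in\N}$ with $p=p_0$, $p_k\trans\tau p_{k+1}$ and $p_k\bis[\Delta']{b}q$ for all $k$. I would first locate such a divergence entirely inside the $\brel$-relatedness of the current state and then build an infinite $\tau$-path $q=q_0\trans\tau q_1\trans\tau\cdots$ by iterating \textup{(D$'$)}: maintaining as an invariant that the current state $q_\ell$ is $\brel$-related to $p_k$ for all indices $k$ beyond some $m_\ell$, one applies \textup{(D$'$)} to the tail $(p_k)_{k\ge m_\ell}$ to obtain a strictly longer $\tau$-path from $q_\ell$ ending in a state again $\brel$-related to some $p_{m'}$ with $m'\ge m_\ell$. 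The stuttering property of branching bisimilarity — available since ${\brel}\subseteq{\bis{b}}$ — propagates relatedness along the newly added $\tau$-steps, and a short case analysis of the branching transfer condition along the $p$-chain re-establishes the invariant at each new state (if that bootstrapping never terminates, an infinite $\tau$-path on the $q$-side has already been produced). A König-type argument rules out getting stuck, since at a state admitting no further progress \textup{(D$'$)} would be contradicted. Concatenating these paths yields a sequence $(q_\ell)_{\ell\in\N}$ with $q=q_0$, $q_\ell\trans\tau q_{\ell+1}$ and $q_\ell\bis{b}q$ for all $\ell$.

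It then remains to upgrade this to a genuine witness for \textup{(D)}, which carries the far stronger clause ``$p_k\brel q_\ell$ for all $k,\ell$''. The way to cope is to note that, since $\bis[\Delta']{b}\subseteq\bis{b}$ and $\bis{b}$ is an equivalence, all the $p_k$ and all the constructed $q_\ell$ lie in a single $\bis{b}$-class; I expect that a careful tracking of the relations used in the construction above keeps them, moreover, all $\bis[\Delta']{b}$-related to $q$, so that the universal clause of \textup{(D)} becomes automatic by transitivity of $\bis[\Delta']{b}$. Thus $\bis[\Delta']{b}$ is a branching bisimulation satisfying \textup{(D)}, giving ${\bis[\Delta']{b}}\subseteq{\bis[\Delta]{b}}$. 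The main obstacle throughout is exactly this passage from the existential ``for some $k$'' of \textup{(D$'$)} to the universal ``for all $k$ and all $\ell$'' of \textup{(D)}: it forces one to iterate \textup{(D$'$)} while maintaining delicate invariants and to exploit the stuttering and composition-closure properties of branching bisimilarity. The complete argument is \cite{GLT09a}, on which this lemma relies.
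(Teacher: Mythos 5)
The paper does not actually prove this lemma: it is imported verbatim from \cite{GLT09a}, and the citation is the entire ``proof''. So your decision to lean on that reference for the substance is consistent with the paper's own treatment, and your easy direction is correct --- from the sequence $(q_\ell)_{\ell\in\N}$ supplied by (D), taking $q':=q_1$ indeed witnesses (D$'$). But everything that makes the lemma non-trivial sits in the converse direction, and there your sketch has concrete gaps rather than arguments; they are exactly the points on which \cite{GLT09a} spends its effort.

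Concretely: (i) you fix a single branching bisimulation $\brel$ satisfying (D$'$) with $p\brel q$, but the divergence hypothesis you must discharge only gives $p_k\bis[\Delta']{b}q$, i.e.\ relatedness via possibly different bisimulations for different $k$; before you can apply (D$'$) at all you need that the union $\bis[\Delta']{b}$ is itself a branching bisimulation satisfying (D$'$) --- a step your phrase ``locate such a divergence entirely inside the $\brel$-relatedness'' names but does not supply. (ii) (D$'$) returns a $q'$ related to \emph{some} $p_k$ only, and says nothing about the intermediate states of the segment $q\epsarrow\trans{\tau}q'$; your invariant (``the current state is related to all $p_k$ beyond some index'') is therefore not re-established by one application of (D$'$), and the stuttering property you invoke is the one for plain $\bis{b}$, which yields only $\bis{b}$-relatedness of those intermediate states --- insufficient for the clause ``$p_k\brel q_\ell$ for all $k,\ell$'' in (D), which needs relatedness in the divergence-sensitive relation being constructed. (iii) Your final upgrade to the universal clause appeals to ``transitivity of $\bis[\Delta']{b}$''; transitivity is known only for $\bis[\Delta]{b}$ (itself a theorem of \cite{GLT09a}), i.e.\ only once the lemma is already proved, so using it for the (D$'$)-variant here is circular unless proved separately. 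Since you explicitly defer the complete argument to \cite{GLT09a}, your proposal ends up matching the paper (a citation), but the sketch in between should not be mistaken for a proof: points (i)--(iii) are the actual content of the cited result.
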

Furthermore, we will employ the following property of $\bis[\Delta]{b}$.

\begin{lemma}
\label{lem:Stuttering}
\cite{GLT09a}
If $p_0\trans{\tau}p_1\trans{\tau}\cdots\trans{\tau}p_n$ and $p_0\bis[\Delta]{b}p_n$, then $p_0\bis[\Delta]{b}p_i$ for all $i=0,\ldots,n$.
\end{lemma}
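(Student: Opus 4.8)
The plan is to prove the lemma by producing a single divergence-preserving branching bisimulation that relates $p_0$ to every $p_i$. Since $\bis[\Delta]{b}$ is an equivalence, I would take $\brel := {\bis[\Delta]{b}}$ itself; it is a divergence-preserving branching bisimulation and an equivalence, and $p_0\brel p_n$. Writing $[p]$ for the $\brel$-class of $p$, put $C:=\bigcup_{i=0}^{n}[p_i]$ and define $\mathcal{S} := {\brel}\cup(C\times C)$. Because $p_0\brel p_n$ and $p_k\trans{\tau}p_{k+1}$, the $\brel$-classes $[p_0]=[p_n],[p_1],\dots,[p_{n-1}]$ are arranged in a $\tau$-cycle, so $\mathcal{S}$ is an equivalence, it is $\brel$-closed on $C$, and it contains every pair $(p_0,p_i)$. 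It therefore suffices to show that $\mathcal{S}$ is a divergence-preserving branching bisimulation: since $\brel$ is the largest such relation, this gives $\mathcal{S}\subseteq{\brel}$ and hence $p_0\bis[\Delta]{b}p_i$ for all $i$. (If all $p_i$ already lie in one $\brel$-class, then $\mathcal{S}={\brel}$ and there is nothing to prove, so I may assume throughout that this is not the case.)

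For the branching clauses I would first record the routine \emph{matching-path} property: if $r\brel s$ and $r\epsarrow r'$ along $r=r_0\trans{\tau}\cdots\trans{\tau}r_m=r'$, then there are $s=s_0,\dots,s_m=s'$ with $s_{j-1}\epsarrow s_j$ and $r_j\brel s_j$ for all $j$ (induction on $m$ from the defining clause of a branching bisimulation). Now take $q\mathrel{\mathcal{S}}q'$ and $q\trans{\alpha}q^\sharp$. If $q\brel q'$, use that $\brel$ is a branching bisimulation and note the reply states are again $\brel$-related. If instead $q,q'\in C$, say $q\brel p_i$ and $q'\brel p_j$, the defining clause for $\brel$ applied to $q\brel p_i$ and $q\trans{\alpha}q^\sharp$ either yields $\alpha=\tau$ with $q^\sharp\brel p_i\in C$ (so $q^\sharp\in C$ and the first disjunct holds since $q'\in C$), or yields $p_i\epsarrow p_i'\trans{\alpha}p_i''$ with $q\brel p_i'$ and $q^\sharp\brel p_i''$. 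In the latter case, simulating the cycle $p_j\epsarrow p_n\brel p_0\epsarrow p_i\epsarrow p_i'\trans{\alpha}p_i''$ from $q'\brel p_j$ via the matching-path property gives $q'\epsarrow\tilde q\trans{\alpha}\tilde q'$ with $p_i'\brel\tilde q$ and $p_i''\brel\tilde q'$; then $q\brel p_i'\brel\tilde q$ and $q^\sharp\brel p_i''\brel\tilde q'$, so $q\mathrel{\mathcal{S}}\tilde q$ and $q^\sharp\mathrel{\mathcal{S}}\tilde q'$, i.e.\ the second disjunct. The point that makes this work is that the branching clause relates the \emph{source} $q$ to the intermediate state $\tilde q$, not merely the targets.

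The remaining and most delicate obligation is divergence-preservation of $\mathcal{S}$, and here I would invoke the reformulation (D$'$) of condition (D) from Lemma~\ref{lem:vGLT09a}, both for $\brel$ and as the property to be verified for $\mathcal{S}$. The crucial auxiliary fact is that, under the assumption that the $p_i$ are not all $\brel$-equivalent, \emph{every} $r\in C$ admits a nonempty $\tau$-path back into $C$: writing $r\brel p_j$ and simulating the cycle $p_j\trans{\tau}p_{j+1},p_{j+1}\trans{\tau}p_{j+2},\dots$ (closed via $p_n\brel p_0$), the reply of $r$ cannot stay put for a whole lap, as that would force $r\brel p_k$ for all $k$; at the first step where $r$ moves it reaches, by one $\tau$-step, a state $\brel$-equivalent to some $p_{k+1}\in C$. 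Granting this, (D$'$) for $\mathcal{S}$ follows: given $q\mathrel{\mathcal{S}}q'$ and an infinite $\tau$-sequence $q=q_0\trans{\tau}q_1\trans{\tau}\cdots$ with all $q_k\mathrel{\mathcal{S}}q'$, either all $q_k$ are $\brel$-related to $q'$, in which case (D$'$) for $\brel$ supplies the reply, or some $q_k$ is not, forcing $q'\in C$ and, since $C$ is $\brel$-closed, all $q_k\in C$; then the nonempty $\tau$-path from $q'$ into $C$ yields $q'\epsarrow\trans{\tau}q''$ with $q''\in C$, whence $q_k\mathrel{\mathcal{S}}q''$ for every $k$. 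I expect the careful bookkeeping in this last paragraph—disentangling the interplay between the $\tau$-cycle, the $\brel$-classes it meets, and condition (D$'$)—to be where the real work lies; the earlier steps are essentially standard branching-bisimulation reasoning.
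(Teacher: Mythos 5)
A preliminary remark on the comparison you ask for: the paper does not prove this lemma at all --- it is imported verbatim from \cite{GLT09a}, so there is no in-paper proof to measure your argument against, and your proof is necessarily an independent route. Judged on its own merits, your argument is essentially sound: taking $\brel:={\bis[\Delta]{b}}$, $C:=\bigcup_i[p_i]$ and $\mathcal{S}:={\brel}\cup(C\times C)$, reducing to the case that not all $p_i$ are $\brel$-equivalent, verifying the branching clauses by simulating the $\tau$-cycle, and discharging divergence via condition (D$'$) of Lemma~\ref{lem:vGLT09a} together with your observation that every state of $C$ has a nonempty $\tau$-path $\epsarrow\trans{\tau}$ back into $C$ --- all of this checks out, and the final appeal to Lemma~\ref{lem:vGLT09a} correctly turns ``$\mathcal{S}$ is a branching bisimulation satisfying (D$'$)'' into $\mathcal{S}\subseteq{\bis[\Delta]{b}}$. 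Do note, however, that besides the equivalence property you also use that $\bis[\Delta]{b}$ is \emph{itself} a divergence-preserving branching bisimulation (you apply its branching clause and (D$'$) to arbitrary pairs $q\brel p_i$); the existence of a largest such bisimulation is a nontrivial theorem of \cite{GLT09a}, intertwined there with the stuttering property itself, so your proof should be presented explicitly as a derivation from that input rather than a from-scratch argument.

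There is one step that is not correct as literally written, though it is easily repaired. When you simulate the concatenated path $p_j\epsarrow p_n\brel p_0\epsarrow p_i\epsarrow p_i'\trans{\alpha}p_i''$ from $q'$, the $\epsarrow$-segments are indeed handled by your matching-path property, but the final single step $p_i'\trans{\alpha}p_i''$ is matched only via the ordinary branching clause, which for $\alpha=\tau$ may answer with the stuttering disjunct: the simulation reaches some $u$ with $p_i'\brel u$ and then $p_i''\brel u$ with no further transition, so your claim that one always obtains $q'\epsarrow\tilde q\trans{\alpha}\tilde q'$ with $p_i'\brel\tilde q$ and $p_i''\brel\tilde q'$ fails in that subcase. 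The proof survives, because there $q^\sharp\brel p_i''\brel u\brel p_i'\brel q$, hence $q^\sharp\brel q\in C$, so $q^\sharp\in C$ and the first disjunct of the clause for $\mathcal{S}$ ($\alpha=\tau$ and $q^\sharp\mathrel{\mathcal{S}}q'$) applies --- but this subcase must be spelled out rather than subsumed under the matching-path step.
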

The following proposition tells that requirement~\ref{Coincide} of Thm.~\ref{thm:Main} is satisfied indeed.

\begin{proposition}
On $\AFO(P)$ the equivalences $\bis[\Delta]{b}$ and $\bis[s]{b}$ coincide.
\end{proposition}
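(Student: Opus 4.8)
The plan is to exploit the inclusion chain ${\bis[\Delta]{b}}\subseteq{\bis[\Delta\top]{b}}\subseteq{\bis[s]{b}}$. Since ${\bis[\Delta]{b}}\subseteq{\bis[s]{b}}$ holds on every LTS, it suffices to prove the reverse inclusion on $\AFO(P)$, and for that it is enough to show that on $\AFO(P)$ the relation ${\bis[s]{b}}$ is itself a divergence-preserving branching bisimulation. We already know that ${\bis[s]{b}}$ is a (stability-respecting) branching bisimulation, so by Lem.~\ref{lem:vGLT09a} it remains to verify divergence preservation, which we may do by checking condition~(D) or, interchangeably, condition~(D$'$) for ${\bis[s]{b}}$.

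The key tool is an oracle characterisation of divergence, in the spirit of Lem.~\ref{lem:Oracle divergence}, but now using that $\zeta$ assigns to $\hat p$ the name of the $\bis[\Delta]{b}$-class of $p$ in $\G$. I would first establish: a process $r\in\mbox{T}(\hat\Sigma)$ is divergent in $\AFO(P)$ iff $\AFO(P)\vdash_{\it ws} r\trans{\zeta(g)}\surd$ for some divergent $g\in\mbox{T}(\Sigma)$. For ``only if'', write $r=\rho(t)$ with $t\in\mathbb{T}(\Sigma)$ and $\rho:\var(t)\to\mathbb{P}_{\,\H}$; by Lem.~\ref{lem:Liquid divergence} some variable occurs $\Gamma$-liquid in $t$ with divergent $\rho$-image, which, $\surd$ being a deadlock, must be some $\hat g$ with $g$ divergent in $P$, so $r\trans{\zeta(g)}\surd$ by Obs.~\ref{obs:Liquid oracle}. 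For ``if'', a transition $r\trans{\zeta(g)}\surd$ forces (Obs.~\ref{obs:Liquid oracle}) a $\Gamma$-liquid component $\hat h$ of $r$ with $\zeta(h)=\zeta(g)$, i.e.\ $h\bis[\Delta]{b}^{\,\G}g$; as $g$ is divergent and $\bis[\Delta]{b}$ is (weakly) divergence-preserving, $h$ is divergent, hence so are $\hat h$ and $r$ by Lem.~\ref{lem:Liquid divergence}.

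To verify the divergence condition, suppose $p\bis[s]{b}q$ and $p=p_0\trans{\tau}p_1\trans{\tau}\cdots$ with $p_k\bis[s]{b}q$ for all $k$. Then $p_0$ is divergent, so $p_0\trans{\zeta(g)}\surd$ for some divergent $g$; since $p_0\bis[s]{b}q$ and $\zeta(g)\neq\tau$, there are $q^*,\surd^*$ with $q\epsarrow q^*\trans{\zeta(g)}\surd^*$ and $p_0\bis[s]{b}q^*$, whence $q^*$ is divergent, $q^*\trans{\tau}q'$, and $q\epsarrow\trans{\tau}q'$. If $q'\bis[s]{b}p_0$ this already gives the witness for~(D$'$) (with $k=0$). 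The main obstacle is the remaining case, where matching $q^*\trans{\tau}q'$ against $q^*\bis[s]{b}p_0$ yields only $p_0\epsarrow\bar p\trans{\tau}\bar p'$ with $q^*\bis[s]{b}\bar p$ and $q'\bis[s]{b}\bar p'$, so $q'$ is a priori $\bis[s]{b}$-related to a state $\bar p'$ that need not lie on the witnessing sequence $(p_k)$. My plan to close this is to iterate the argument into a genuine construction of an infinite $\tau$-path out of $q$ staying inside $[q]_{\bis[s]{b}}$ (condition~(D)): because $\AFO(P)$ is abstraction-free, every divergence is driven by a $\Gamma$-liquid constant $\hat h$ whose $\bis[\Delta]{b}$-class is pinned down by its fixed oracle label, so all divergences encountered remain within a single $\bis[s]{b}$-class; and the stuttering property of $\bis[s]{b}$ (the analogue of Lem.~\ref{lem:Stuttering}) ensures that the intermediate states traversed along the $\epsarrow$-segments also stay in that class. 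Reconciling the divergence extracted on the $q$-side with the given witnessing sequence on the $p$-side is the technically hardest step, and is exactly where Lem.~\ref{lem:vGLT09a} and Lem.~\ref{lem:Stuttering} are needed; the overall structure otherwise parallels Prop.~\ref{prop:Coincide}.
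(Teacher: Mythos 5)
Your reduction (show that $\bis[s]{b}$ on $\AFO(P)$ satisfies (D$'$)) and your oracle characterisation of divergence (a term is divergent iff it has an oracle transition labelled with the name of a divergent $\bis[\Delta]{b}$-class) are both sound, and they do yield weak divergence preservation of $\bis[s]{b}$ on $\AFO(P)$ — essentially the analogue of Prop.~\ref{prop:Coincide}. But for $\bis[\Delta]{b}$ the whole difficulty is the step you defer: producing $q'$ with $q\epsarrow\trans{\tau}q'$ and $q'\bis[s]{b}p_k$ for some $k$ (or an infinite $\tau$-path from $q$ all of whose states are $\bis[s]{b}$-related to $q$). Your plan for closing this rests on the assertion that the divergence of $p$ is driven by a $\Gamma$-liquid constant ``whose $\bis[\Delta]{b}$-class is pinned down by its fixed oracle label, so all divergences encountered remain within a single $\bis[s]{b}$-class''. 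That is not true as stated: the driving constant is precisely the component that performs the $\tau$-steps, so it changes at every step of the divergence, and its $\bis[\Delta]{b}$-class (hence its oracle label) may change with it. Proving that this class eventually stabilises is the actual core of the argument; in the paper it is the Claim inside the proof, and it is obtained only \emph{by contradiction}: assuming no suitable $q'$ exists, infinitely many distinct class labels along the divergence would (via Obs.~\ref{obs:Liquid oracle} and the branching-bisimulation transfer, with $q''=q$ forced by the contradiction hypothesis) give $q$ infinitely many outgoing oracle transitions, impossible since $u$ has only finitely many $\Gamma$-liquid variables; Lem.~\ref{lem:Stuttering} is used exactly there. Your sketch assumes this stabilisation outright, which begs the question.

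A second missing ingredient: even once the class of the driving component $h^z_{j_z}$ stabilises and you have located (by matching its oracle label) a $\Gamma$-liquid variable $y$ of $u$ with $\rho(y)\bis[\Delta]{b}h^z_{j_z}$, applying Lem.~\ref{lem:vGLT09a} in $\G$ only moves that \emph{component}: $\rho(y)\epsarrow\trans{\tau}g$ with $g$ in the same class. To conclude $q=\rho(u)\epsarrow\trans{\tau}\rho'(u)=q'$ with $q\bis[s]{b}q'$ one needs the patience rules \emph{and} the fact that $\bis[s]{b}$ is a congruence on $\AFO(P)$ (which holds because $\AFO(P)$ is in the stability-respecting branching bisimulation format). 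This congruence step appears nowhere in your proposal, and without it knowing that one liquid component stays in its $\bis[\Delta]{b}$-class gives you no handle on the $\bis[s]{b}$-class of the whole term $q'$. So the proposal identifies the right target and the right tools at the periphery, but the two decisive steps — the stabilisation Claim proved by contradiction, and the congruence-based lifting of the component move — are missing.
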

\begin{proof}
It suffices to show that the relation $\bis[s]{b}$ on $\AFO(P)$ is a
divergence-preserving branching bisimulation. By definition it is a (stability-respecting) branching
bisimulation. Hence it remains to show that it is divergence-preserving.
By Lem.~\ref{lem:vGLT09a} it suffices to show that it satisfies condition (D$'$).
So assume that $p\bis[s]{b} q$ and there is an infinite sequence of processes $(p_k)_{k\in\N}$ such that
$p=p_0$, \plat{$p_k\trans{\tau}p_{k+1}$} and $p_k\bis[s]{b} q$ for all
$k\in\N$. We need to find a process
$q'$ such that \plat{$q\epsarrow\trans{\tau}q'$} and $p_k\bis[s]{b} q'$ for some $k\in\N$.
Towards a contradiction, assume that there is no $q'$ with \plat{$q\epsarrow\trans{\tau}q'$} and $p_k\bis[s]{b} q'$ for some $k\in\N$.

Let $p\mathbin=\rho(t)$ and $q\mathbin=\rho(u)$ for univariate $t,u\mathbin\in\mathbb{T}(\Sigma)$ with
$\var(t)\cap\var(u)\mathbin=\emptyset$ and $\rho\mathord{:}\var(t){\cup}\var(u)\mathbin\rightarrow\mathbb{P}_{\,\H}$.
Since $\AFO(P)$ is abstraction-free, all $\tau$-transitions in the infinite sequence
\plat{$(p_k\trans{\tau}p_{k+1})_{k\in\N}$} originate, through patience rules, from $\tau$-transitions of the
processes $\rho(x)$ for variables $x$ occurring $\Gamma$-liquid in $t$.
Let $L$ be the set of variables $x$ that occur $\Gamma$-liquid in $t$.
Then $\rho(x)=h_0^x \trans\tau_{\,\H} h_1^x  \trans\tau_{\,\H} h_2^x \trans\tau_{\,\H} \cdots$ for each $x\in L$, 
and for at least one $x\in L$---call it $z$---this sequence is infinitely long.
For each $k \in\N$ and $x\in L$, let $k_x\leq k$ keep track of how many of the $\tau$-transitions
in the sequence $p_0\trans{\tau}\cdots\trans{\tau}p_k$ originate from $\rho(x)$. Hence
$p_k=\rho_k(t)$ where $\rho_k(x) = h_{k_x}^x$ for each $x\in L$ and $\rho_k(x)=\rho(x)$ for each $x\in\var(t)\setminus L$.
Since, by assumption, the sequence of $\tau$-transitions
originating from $\rho(z)$ is infinite, for each $\ell\in\N$ there exists a $k\in\N$ such that $k_z=\ell$.
\vspace{1ex}

\textbf{Claim:} There is a $j\in \N$ such that $h_n^z \bis[\Delta]{b} h_{j_z}^z$ for all $n\geq j_z$.
\vspace{1ex}

\textbf{Proof of claim:} Suppose there is no such $j$.
Then, using Lem.~\ref{lem:Stuttering}, the processes $h_\ell^z$ for $\ell\in \N$ belong to infinitely many
$\bis[\Delta]{b}$-equivalence classes. So there are infinitely many actions $\omega\in O$ such that
$\exists \ell.\; h^z_\ell \trans{\omega}_{\,\H}$. For each of those actions $\omega$,
\plat{$\AFO(P)\vdash_{\it ws} p_k {\trans{\omega}}$} for some $k\in\N$, by Obs.~\ref{obs:Liquid oracle},
and hence \plat{$\AFO(P)\vdash_{\it ws} q \epsarrow q'' {\trans{\omega}}$} for some $q''$ with $p_k\bis[s]{b} q''$,
since $p_k \bis[s]{b} q$. As we have assumed that there is no $q'$ with \plat{$q\epsarrow\trans{\tau}q'$}
and $p_k\bis[s]{b} q'$ for some $k\in\N$, we have $q''=q$, and thus \plat{$\AFO(P)\vdash_{\it ws} q {\trans{\omega}}$}.
However, according to Obs.~\ref{obs:Liquid oracle} there can only be finitely many actions $\omega\in O$ with
\plat{$\AFO(P)\vdash_{\it ws} q {\trans{\omega}}$}, namely one for each variable occurring $\Gamma$-liquid in $u$.
\hfill \rule{1ex}{7pt}
\vspace{1ex}

Let $\omega$ be the name of the $\bis[\Delta]{b}$-equivalence class of the process \plat{$\rho_j(z) = h_{j_z}^z$}.
Then \plat{$\rho_{j}(z)\trans{\omega}_{\,\H}$}, and consequently $\AFO(P)\vdash_{\it ws}\rho_{j}(t){\trans{\omega}}$
by Obs.~\ref{obs:Liquid oracle}.
Since $\rho_{j}(t)\bis[s]{b}q$, it follows that $\AFO(P)\vdash_{\it ws}q \mathop{\epsarrow} q''\!\trans{\omega}$
for some process $q''$ with $\rho_{j}(t)\bis[s]{b}q''$.
As we assumed that there is no $q'$ with \plat{$q\epsarrow\trans{\tau}q'$}
and $p_k\bis[s]{b} q'$ for some $k\in\N$, we have $q''=q$, and thus \plat{$\AFO(P)\vdash_{\it ws} q=\rho(u) {\trans{\omega}}$}.
So by Obs.~\ref{obs:Liquid oracle} $u$ has a $\Gamma$-liquid
occurrence of a variable $y$ with $\rho(y){\trans{\omega}_{\,\H}}$. It follows that in $\G$ we have
$h_{j_z}^z=\rho_{j}(z) \bis[\Delta]{b} \rho(y)$. Thus, by
Lem.~\ref{lem:vGLT09a}, there exists a $g \in\mathbb{P}_\G$
with \plat{$\rho(y) \epsarrow\trans{\tau} g$} and $h_\ell^z \bis[\Delta]{b} g$ for some $\ell \geq j_z$.
Since the TSS $\AFO(P)$ is $\Gamma$-patient, we have \plat{$q=\rho(u) \epsarrow\trans{\tau} \rho'(u)$}
for a substitution $\rho'$ with $\rho'(y)=g$ and $\rho'(x)=\rho(x)$ for all variables $x\neq y$.
As, using the claim, $\rho(y) \bis[s]{b} \rho_{j}(z) = h_{j_z}^z \bis[s]{b} h_\ell^z  \bis[s]{b} g =\rho'(y)$, and $\bis[s]{b}$ is a
congruence on $\AFO(P)$, it follows that $\rho(u) \bis[s]{b} \rho'(u)$.
Since $p \bis[s]{b} q = \rho(u) \bis[s]{b} \rho'(u)$, it suffices to take $q':=\rho'(u)$, so that \plat{$q \epsarrow\trans{\tau} q'$}
and $p \bis[s]{b} q'$. This contradicts our assumption that no such $q'$ exists.
\end{proof}

\noindent
We now verify requirement~\ref{dec} of Thm.~\ref{thm:Main}.

\begin{proposition}
 $p \mathrel{\bis[\Delta]{b}}_{\AFO(P)} q  ~~\Rightarrow~~ {\it dec}(p) \mathrel{\bis[\Delta]{b}}_{\rm K} {\it dec}(q)$.
\end{proposition}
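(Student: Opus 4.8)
The plan is to mimic the proof of Prop.~\ref{prop:Abstraction congruence} for $\bis[\Delta\top]{b}$: define a relation $\brel$ on the states of $\K$ by ${\it dec}(p)\mathrel{\brel}{\it dec}(q)\Leftrightarrow p\bis[\Delta]{b}_{\AFO(P)}q$ --- this is well defined and symmetric because ${\it dec}$ is injective and $\bis[\Delta]{b}$ is symmetric --- and show that $\brel$ is a divergence-preserving branching bisimulation on $\K$. Since it contains $({\it dec}(p),{\it dec}(q))$ whenever $p\bis[\Delta]{b}_{\AFO(P)}q$, this establishes the proposition, and with it requirement~\ref{dec} of Thm.~\ref{thm:Main}, whence $\mathfrak{F}$ (and, via Def.~\ref{def:rbb}, its rooted variant) is a congruence format for $\bis[\Delta]{b}$.

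That $\brel$ is a branching bisimulation is verbatim the first bullet of the proof of Prop.~\ref{prop:Abstraction congruence}: a transition ${\it dec}(p)\trans\alpha_\K p^\dagger$ arises, by the two rules defining $\K$, either from $p\trans\alpha p'$ in $\AFO(P)$ or, when $\alpha=\tau$, from $p\trans\iota p'$, and in each case $p\bis[\Delta]{b}_{\AFO(P)}q$ supplies a matching move of $q$ in $\AFO(P)$ --- using $\iota\neq\tau$ in the second case --- which ${\it dec}$ turns into the required $\tau_\K$- or $\alpha$-move of ${\it dec}(q)$, with the residuals again $\brel$-related. For divergence preservation I would invoke Lem.~\ref{lem:vGLT09a}, so that it suffices to verify condition (D$'$) for $\brel$. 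So suppose $p\bis[\Delta]{b}_{\AFO(P)}q$ and there is an infinite sequence $(a_k)_{k\in\N}$ in $\K$ with $a_0={\it dec}(p)$, $a_k\trans\tau_\K a_{k+1}$ and $a_k\mathrel\brel{\it dec}(q)$ for all $k$. Applying Prop.~\ref{prop:D complete} repeatedly, and using the injectivity of ${\it dec}$, lift this to an infinite sequence $(p_k)_{k\in\N}$ in $\AFO(P)$ with $p_0=p$, ${\it dec}(p_k)=a_k$, $p_k\bis[\Delta]{b}_{\AFO(P)}q$ for all $k$, and, for each $k$, either $p_k\trans\tau p_{k+1}$ or $p_k\trans\iota p_{k+1}$.

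Now split on how many of these steps carry the label $\iota$. If infinitely many do, consider the initial segment $p_0\trans\tau\cdots\trans\tau p_{m-1}\trans\iota p_m$ ending at the first $\iota$-step. Since $p_0=p\bis[\Delta]{b}_{\AFO(P)}q$, matching the $\tau$-prefix $p_0\epsarrow p_{m-1}$ and then, because $\iota\neq\tau$, the step $p_{m-1}\trans\iota p_m$, gives $q\epsarrow q'\trans\iota q''$ in $\AFO(P)$ with $p_m\bis[\Delta]{b}_{\AFO(P)}q''$; decoding, ${\it dec}(q)\epsarrow_\K{\it dec}(q')\trans\tau_\K{\it dec}(q'')$ with $a_m\mathrel\brel{\it dec}(q'')$, which is condition (D$'$). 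If only finitely many steps carry $\iota$, pick $n$ with $p_k\trans\tau p_{k+1}$ for all $k\geq n$; then $(p_{n+j})_{j\in\N}$ is a $\tau$-divergence of $p_n$ in $\AFO(P)$ with $p_{n+j}\bis[\Delta]{b}_{\AFO(P)}q$ for all $j$, so condition (D$'$) for $\bis[\Delta]{b}$ itself (again via Lem.~\ref{lem:vGLT09a}), applied at $p_n\bis[\Delta]{b}_{\AFO(P)}q$, produces $q^*$ with $q\epsarrow\trans\tau q^*$ in $\AFO(P)$ and $p_{n+j}\bis[\Delta]{b}_{\AFO(P)}q^*$ for some $j$; decoding, ${\it dec}(q)\epsarrow_\K\trans\tau_\K{\it dec}(q^*)$ with $a_{n+j}\mathrel\brel{\it dec}(q^*)$, again as required. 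The main obstacle is this lifting step: turning every $\tau_\K$-step of ${\it dec}(p_k)$ back into a $\tau$- or $\iota$-step of $p_k$ in $\AFO(P)$ is exactly Prop.~\ref{prop:D complete}, hence rests on the whole comparison of $P$, $\mathcal{G}(P)$, $\AFO(P)$ and $\K$ carried out in Sect.~\ref{sec:Abstraction-free}; and one must exploit that $\iota$ is a \emph{visible} action, so that each $\iota$-step on the $p$-side is forced to be matched by a genuine transition on the $q$-side that becomes a $\tau_\K$-step after decoding.
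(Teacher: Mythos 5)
Your proposal is correct and follows essentially the same route as the paper: the same relation $\brel$ on the states of $\K$, the same reuse of the branching-bisimulation part of Prop.~\ref{prop:Abstraction congruence}, reduction to condition (D$'$) via Lem.~\ref{lem:vGLT09a}, lifting the $\tau_\K$-divergence to a $\tau/\iota$-sequence in $\AFO(P)$, and a case split on the $\iota$-steps (the paper splits on ``some $\iota$-step vs.\ none'' rather than ``infinitely vs.\ finitely many'', but the two arguments per case are the same). No gaps.
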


\begin{trivlist} \item[\hspace{\labelsep}{\bf Proof:}]
Define the relation $\brel$ on the states of $\rm K$ by
\[ {\it dec}(p) \mathrel{\brel} {\it dec}(q) ~~\Leftrightarrow~~ p \bis[\Delta]{b} q\;. \]
By Lem.~\ref{lem:vGLT09a} it suffices to show that $\brel$ is a branching bisimulation satisfying condition (D$'$).
That it is a branching bisimulation follows exactly as in the proof of Prop.~\ref{prop:Abstraction congruence}.
To show that it satisfies (D$'$), suppose $p\bis[\Delta]{b} q$ and there is an infinite
sequence $(p_k^\dagger)_{k\in\N}^{\mbox{}}$ such that ${\it dec}(p)=p_0^\dagger$,
$p_k^\dagger\trans{\tau}p_{k+1}^\dagger$ and $p_k^\dagger\mathrel{\brel} {\it dec}(q)$ for all $k\in\N$.
    Then there is an infinite sequence $(p_k)_{k\in\N}$
    such that $p_0=p$ and, for all $k\in\N$, $p_k^\dagger={\it dec}(p_k)$, $p_k \bis[\Delta]{b} q$
    and either \plat{$p_k\trans{\tau}p_{k+1}$} or \plat{$p_k\trans{\iota}p_{k+1}$}.
    We distinguish two cases.
  \begin{description}
    \item[{\sc Case 1:}] There is a $k\in\N$ with \plat{$p_k\trans{\iota}p_{k+1}$}---consider the first such $k$.
    Then $p\epsarrow p_k \trans{\iota} p_{k+1}$.
    Since $p \bis[\Delta]{b} q$, there is a $q'$ with $q\epsarrow \trans{\iota} q'$ and $p_{k+1}\bis[\Delta]{b} q'$.
    Hence there is a ${\it dec}(q')$ with ${\it dec}(q)\epsarrow \trans{\tau} {\it dec}(q')$ and
    ${\it dec}(p_{k+1})\brel{\it dec}(q')$.
    \item[{\sc Case 2:}] There is no such $k$. Then, since $\bis[\Delta]{b}$ satisfies (D$'$),
    there is a $q'$ such that $q\epsarrow\trans{\tau}q'$ and $p_k \bis[\Delta]{b}q'$ for some $k\in\N$.
    It follows that there is a ${\it dec}(q')$ such that ${\it dec}(q)\epsarrow \trans{\tau} {\it dec}(q')$ and
    ${\it dec}(p_{k+1})\brel{\it dec}(q')$.
    \hfill$\Box$
  \end{description}
\end{trivlist}

\begin{corollary}
$\mathfrak{F}$ is a congruence format for $\bis[\Delta]{b}$.
\end{corollary}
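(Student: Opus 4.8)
The plan is to obtain this corollary as a direct instance of Theorem~\ref{thm:Main}, taking $\sim$ to be $\bis[\Delta]{b}$ and $\approx$ to be $\bis[s]{b}$ (which is coarser, by the inclusion chain recorded after Def.~\ref{def:bb}, and which contains $\bis{}$, so the hypothesis ${\bis{}}\subseteq{\sim}\subseteq{\approx}$ holds), $\mathfrak{F}$ the (rooted) stability-respecting branching bisimulation format intersected with the decent ntyft format, and $\AFO$ the conversion of Def.~\ref{def:AFO} with the oracle data chosen as in Sect.~\ref{sec:Divergence-preserving bb}: $O$ supplies one fresh name per $\bis[\Delta]{b}$-equivalence class of processes of the auxiliary LTS $\G$, and $\zeta(p)$ returns the name of the class of $\hat p$. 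All that then remains is to collect, for this instantiation, the verifications of the seven requirements of Theorem~\ref{thm:Main}, each of which has already been prepared in the preceding material.

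Concretely: requirement~\ref{Completeness} is Cor.~\ref{cor:AFO complete}; requirement~\ref{Preserved under AFO} follows from the format-preservation discussion opening Sect.~\ref{sec:Apply} (oracle transitions are inherited only through $\Gamma$-liquid arguments, negative $\tau$-premises are retained, and no new positive $\tau$-premises are added); requirement~\ref{Enc} follows because, for this $\H$, any divergence-preserving branching bisimulation on $\G$ relates only states in the same $\bis[\Delta]{b}$-class and hence survives as one on $\H$, which together with Cor.~\ref{cor:req3} yields $p\sim_P q\Rightarrow\hat p\sim_{\AFO(P)}\hat q$; requirement~\ref{Coincide} is the Proposition just established, that $\bis[\Delta]{b}$ and $\bis[s]{b}$ coincide on $\AFO(P)$; requirement~\ref{dec} is the Proposition immediately preceding the corollary, namely that ${\it dec}$ is compositional for $\bis[\Delta]{b}$; and the LTS $\K$, the map ${\it dec}$, and requirement~\ref{Enc-dec} (that $f(p_1,\dots,p_n)\bis{P\uplus\K}{\it dec}(f(\hat p_1,\dots,\hat p_n))$) are supplied by the construction and final proposition of Sect.~\ref{sec:Abstraction-free}. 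Feeding these into Theorem~\ref{thm:Main} delivers that $\mathfrak{F}$ is a congruence format for $\bis[\Delta]{b}$, which is the claim.

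I expect the assembly step itself to be routine; the genuine difficulty is entirely contained in requirement~\ref{Coincide}, i.e.\ in showing that on an abstraction-free TSS equipped with divergence-class oracles $\bis[s]{b}$ already preserves divergences in the sense of condition~(D$'$). That is where one must, via Obs.~\ref{obs:Liquid oracle}, Lem.~\ref{lem:vGLT09a} and the stuttering property Lem.~\ref{lem:Stuttering}, exclude the possibility of a divergence respected by $\bis[s]{b}$ yet not reproducible as required for $\bis[\Delta]{b}$; but that work has already been carried out above. A final bookkeeping remark: $\mathfrak{F}$ here is the format intersected with the decent ntyft format, so to state the result for the full (rooted) stability-respecting branching bisimulation format one additionally invokes the conversion to decent ntyft format of \cite{GV92,BFvG04}, which preserves both $ws$-provability and the format, exactly as was done for Thm.~\ref{thm:Congruence3}.
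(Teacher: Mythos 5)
Your proposal matches the paper exactly: the corollary is obtained as an instance of Thm.~\ref{thm:Main} with $\bis[\Delta]{b}$ and $\bis[s]{b}$ in the roles of $\sim$ and $\approx$, the divergence-class oracle for $O$ and $\zeta$, and the requirements discharged precisely by the results you cite (Cor.~\ref{cor:AFO complete}, the format-preservation discussion opening Sect.~\ref{sec:Apply}, Cor.~\ref{cor:req3} with the observation that divergence-preserving branching bisimulations on $\G$ carry over to $\H$, the coincidence proposition, the compositionality of ${\it dec}$, and the final proposition of Sect.~\ref{sec:Abstraction-free}). Your closing remark on the conversion to decent ntyft format is also exactly how the paper lifts the result to the full format in Thm.~\ref{thm:Congruence5}.
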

Exactly as above, this yields the following congruence theorem.

\begin{theorem}\label{thm:Congruence5}
Let $P$ be a complete standard TSS in stability-respecting branching bisimulation
format. Then $\bis[\Delta]{b}$ is a congruence for $P$.
\hfill $\Box$
\end{theorem}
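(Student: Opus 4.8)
The plan is to obtain the theorem as an essentially immediate consequence of the Corollary just established --- that $\mathfrak{F}$, the stability-respecting branching bisimulation format intersected with the decent ntyft format, is a congruence format for $\bis[\Delta]{b}$ --- together with the standard reduction from the ready simulation format to the decent ntyft format already used in the proof of Thm.~\ref{thm:Congruence3}.

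First I would recall why that Corollary holds, since it carries the real content: it is the instance of Thm.~\ref{thm:Main} with $\sim$ taken to be $\bis[\Delta]{b}$ and $\approx$ taken to be $\bis[s]{b}$. The inclusions ${\bis{}}\subseteq{\bis[\Delta]{b}}\subseteq{\bis[s]{b}}$ demanded there hold by Def.~\ref{def:bb} and the remark after it. Requirement~\ref{Completeness} is Cor.~\ref{cor:AFO complete}. Requirement~\ref{Preserved under AFO} was verified at the start of Sect.~\ref{sec:Apply}: the $\AFO$ transformation keeps every premise $v\ntrans\tau$ in place, never creates a new $\tau$-labelled positive premise, and adds oracle-inheritance rules only for $\Gamma$-liquid arguments, so conditions~\ref{rhs}--\ref{main2} of Def.~\ref{def:rooted_bra_bisimulation_safe} and $\aL$-patience are retained. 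Requirement~\ref{Enc} follows from Cor.~\ref{cor:req3}, once $\zeta$ labels each state of $\G$ by the name of its $\bis[\Delta]{b}$-class. Requirement~\ref{Coincide} is the Proposition that $\bis[\Delta]{b}$ and $\bis[s]{b}$ coincide on $\AFO(P)$, proved via the alternative characterisation (D$'$) of Lem.~\ref{lem:vGLT09a}, the stuttering property Lem.~\ref{lem:Stuttering}, the fact that $\bis[s]{b}$ is already a congruence on $\AFO(P)$ (Thm.~\ref{thm:congruence1} applied to $\AFO(P)$, which lies in the format by requirement~\ref{Preserved under AFO}), and a careful tracking of which liquid argument supplies an infinite $\tau$-run. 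Requirements~\ref{dec} and~\ref{Enc-dec} are the remaining two Propositions of Sect.~\ref{sec:Divergence-preserving bb} and Sect.~\ref{sec:Abstraction-free}. Feeding these into Thm.~\ref{thm:Main} yields the Corollary.

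The only step genuinely specific to this theorem is lifting from $\mathfrak{F}$, which insists on the decent ntyft format, to the full stability-respecting branching bisimulation format, which asks only for the ready simulation format. As recalled in Sect.~\ref{sec:ruloids}, following \cite{GV92}, any standard TSS $P$ in ready simulation format can be converted to a standard TSS $P'$ in decent ntyft format by instantiating free variables with arbitrary closed terms and replacing each rule with a variable source $x$ by the family of rules obtained by substituting $f(x_1,\dots,x_{\ar(f)})$ for $x$, $f$ ranging over the signature. By \cite{BFvG04} this conversion preserves the set of $ws$-provable closed literals, so $P$ and $P'$ generate the same LTS on $\mbox{T}(\Sigma)$, whence $\bis[\Delta]{b}$ is a congruence for $P$ precisely when it is one for $P'$. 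It then remains to note --- exactly as for Thm.~\ref{thm:Congruence3} --- that this conversion preserves the (rooted) stability-respecting branching bisimulation format: the instantiating substitutions introduce no new $\aleph$-liquid or $\Lambda$-liquid occurrences of the source variables, so conditions~\ref{rhs}--\ref{main2} of Def.~\ref{def:rooted_bra_bisimulation_safe} and $\aL$-patience survive, and $P'$ lands in $\mathfrak{F}$. Applying the Corollary to $P'$ and transporting back to $P$ closes the argument.

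I expect no real obstacle at the level of this theorem itself; the difficulty is entirely upstream, in requirement~\ref{Coincide} --- showing that $\bis[\Delta]{b}$ collapses onto $\bis[s]{b}$ once abstraction has been eliminated and an oracle transition exposing each $\bis[\Delta]{b}$-class has been attached. That argument must cope with the fact that an infinite $\tau$-run of a compound term can only pass through patience rules of its liquid arguments, and must pin down one such argument whose $\bis[\Delta]{b}$-class eventually stabilises; this is where Lemmas~\ref{lem:vGLT09a} and~\ref{lem:Stuttering} and the congruence of $\bis[s]{b}$ on $\AFO(P)$ are indispensable. Everything else --- completeness preservation under $\AFO$ and under the ntyft-reduction, format preservation in both cases, and the two decoding Propositions --- is routine induction on derivations, largely parallel to \cite{FvGdW12,BFvG04}.
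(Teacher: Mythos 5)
Your proposal is correct and follows the paper's own route exactly: the theorem is obtained by applying the Corollary that $\mathfrak{F}$ (the format intersected with decent ntyft) is a congruence format for $\bis[\Delta]{b}$ --- itself the instance of Thm.~\ref{thm:Main} verified in Sect.~\ref{sec:Divergence-preserving bb} --- and then lifting to the full ready-simulation-based format via the conversion to decent ntyft form, which preserves {\it ws}-provable closed literals and the format, just as for Thm.~\ref{thm:Congruence3}. No gaps; the substance indeed lies upstream in requirement~\ref{Coincide}, as you note.
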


\subsubsection{A congruence format for rooted divergence-preserving branching bisimilarity}

We now apply Thm.~\ref{thm:Main} with $\bis[\Delta]{rb}$
and $\bis[s]{rb}$ in the roles of
$\sim$ and $\approx$. The choice of $O$ and $\zeta$ in the construction of $\H$ is the same as in Sect.~\ref{sec:Divergence-preserving bb}.

Again requirement~\ref{Enc} of Thm.~\ref{thm:Main} is satisfied: since any two processes $g,h\in\mathbb{P}_\G$ are
$\bis[\Delta]{b}$-equivalent in $\G$ iff they are $\bis[\Delta]{b}$-equivalent in $\H$,
it follows immediately from Def.~\ref{def:rbb} that any two processes $g,h\in\mathbb{P}_\G$ are
$\bis[\Delta]{rb}$-equivalent in $\G$ iff they are $\bis[\Delta]{rb}$-equivalent in $\H$.

Requirement~\ref{Coincide} of Thm.~\ref{thm:Main} is also satisfied:
since on $\AFO(P)$ the equivalences 
$\bis[\Delta]{b}$ and $\bis[s]{b}$ coincide, it follows immediately from Def.~\ref{def:rbb}
that also $\bis[\Delta]{rb}$ and $\bis[s]{rb}$ coincide.

To check requirement~\ref{dec} of Thm.~\ref{thm:Main} define the relation $\rrel$ on the states of $\rm K$ by
\[ {\it dec}(p) \mathrel{\rrel} {\it dec}(q) ~~\Leftrightarrow~~ p \bis[\Delta]{rb} q\;. \]
With Def.~\ref{def:rbb} it is straightforward to check that $\rrel$ is a rooted weakly
divergence-preserving branching bisimulation. Hence requirement~\ref{dec} is satisfied.

\begin{corollary}
Rooted $\mathfrak{F}$ is a congruence format for $\bis[\Delta]{rb}$.
\end{corollary}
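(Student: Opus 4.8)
The plan is to apply Theorem~\ref{thm:Main} one last time, instantiating $\sim$ with $\bis[\Delta]{rb}$ and $\approx$ with $\bis[s]{rb}$, and taking $\mathfrak{F}$ to be the rooted stability-respecting branching bisimulation format intersected with the decent ntyft format. That $\mathfrak{F}$ is a congruence format for $\bis[s]{rb}$, included in the decent ntyft format, is Theorem~\ref{thm:congruence2} together with the trivial fact that a subclass of a congruence format is again a congruence format. The inclusions ${\bis{}}\subseteq{\bis[\Delta]{rb}}\subseteq{\bis[s]{rb}}$ required by Theorem~\ref{thm:Main} follow from Definition~\ref{def:rbb} and the corresponding unrooted inclusions ${\bis{}}\subseteq{\bis[\Delta]{b}}\subseteq{\bis[s]{b}}$: a strong bisimulation, respectively a witness for $\bis[\Delta]{rb}$, matches initial transitions with targets related by strong bisimilarity, respectively by $\bis[\Delta]{b}$, hence also by the coarser relation, so it is a witness for the coarser rooted relation.

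It then remains to verify requirements~\ref{Completeness}--\ref{Enc-dec} of Theorem~\ref{thm:Main}, where---exactly as in Section~\ref{sec:Divergence-preserving bb}---the oracle $(O,\zeta)$ is chosen so that $O$ contains a fresh name for each $\bis[\Delta]{b}$-equivalence class of states of $\G$ and $\zeta(p)$ is the name of the class of $\hat p$. Requirements~\ref{Completeness} and~\ref{Preserved under AFO} (Corollary~\ref{cor:AFO complete}, and the observation opening Section~\ref{sec:Apply}) and requirement~\ref{Enc-dec} (the proposition establishing $f(p_1,\dots,p_n)\bis{P\uplus\K}{\it dec}(f(\hat p_1,\dots,\hat p_n))$) do not depend on the choice of $\sim$ and $\approx$, and are therefore already available. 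So I would only have to treat requirements~\ref{Enc}, \ref{Coincide} and~\ref{dec}, and in each case I would reduce the claim for the rooted relations to the claim for the unrooted relations $\bis[\Delta]{b}$ and $\bis[s]{b}$ settled in Section~\ref{sec:Divergence-preserving bb}.

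In detail: for requirement~\ref{Enc}, Corollary~\ref{cor:req3} reduces it to $\hat p\bis[\Delta]{rb}_{\,\G}\hat q\Rightarrow\hat p\bis[\Delta]{rb}_{\,\H}\hat q$; since $\bis[\Delta]{b}$ coincides on $\G$ and $\H$ (any divergence-preserving branching bisimulation on $\G$ relates only states in the same $\bis[\Delta]{b}$-class, hence survives the addition of the oracle transitions), Definition~\ref{def:rbb} gives that $\bis[\Delta]{rb}$ coincides on $\G$ and $\H$ too. For requirement~\ref{Coincide}, the proposition of Section~\ref{sec:Divergence-preserving bb} states that $\bis[\Delta]{b}$ and $\bis[s]{b}$ coincide on $\AFO(P)$, and Definition~\ref{def:rbb} lifts this at once to $\bis[\Delta]{rb}$ and $\bis[s]{rb}$. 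For requirement~\ref{dec}, with $\K$ and ${\it dec}$ as in Section~\ref{sec:Abstraction-free}, I would put ${\it dec}(p)\mathrel{\rrel}{\it dec}(q)\Leftrightarrow p\bis[\Delta]{rb}_{\AFO(P)}q$ and verify $\rrel\subseteq{\bis[\Delta]{rb}_\K}$ by maximality: given ${\it dec}(p)\trans{\alpha}p^\dagger$, Proposition~\ref{prop:D complete} and the semantics of ${\it dec}$ (which erases all $O$-labelled transitions and renames $\iota$ into $\tau$) show that this stems either from $p\trans{\alpha}p'$ with $\alpha\in A_\tau$ or from $p\trans{\iota}p'$ with $\alpha=\tau$, where $p^\dagger={\it dec}(p')$; the rooted clause of $\bis[\Delta]{rb}_{\AFO(P)}$ supplies a matching $q\trans{\alpha}q'$ or $q\trans{\iota}q'$ with $p'\bis[\Delta]{b}_{\AFO(P)}q'$, and the unrooted decoding proposition of Section~\ref{sec:Divergence-preserving bb} turns the latter into ${\it dec}(p')\bis[\Delta]{b}_\K{\it dec}(q')$. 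With all six requirements verified, Theorem~\ref{thm:Main} yields the corollary.

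The genuine work is imported rather than redone here: it is the unrooted coincidence $\bis[\Delta]{b}={\bis[s]{b}}$ on $\AFO(P)$, whose proof hinges on the alternative formulation $(D')$ (Lemma~\ref{lem:vGLT09a}), the stuttering property (Lemma~\ref{lem:Stuttering}), the fact that along an infinite $\tau$-sequence the oracle labels emitted by a $\Gamma$-liquid variable eventually settle into one $\bis[\Delta]{b}$-class, and the congruence of $\bis[s]{b}$ on $\AFO(P)$. Within the present corollary the only point requiring care is requirement~\ref{dec}: one must make sure that ${\it dec}$ creates no $O$-labelled transitions in $\K$, and that the rooted clause of Definition~\ref{def:rbb} applied inside $\AFO(P)$ is read with $\alpha$ ranging over all of $A_\tau\cup O\cup\{\iota\}$, so that an initial $\iota$-move of $p$ is matched by an initial $\iota$-move of $q$ before the decoding collapses both to $\tau$.
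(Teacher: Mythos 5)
Your proposal is correct and takes essentially the same route as the paper: instantiate Thm.~\ref{thm:Main} with $\bis[\Delta]{rb}$ and $\bis[s]{rb}$ in the roles of $\sim$ and $\approx$, keep the oracle of Sect.~\ref{sec:Divergence-preserving bb}, lift requirements~\ref{Enc} and~\ref{Coincide} from the unrooted results via Def.~\ref{def:rbb}, and verify requirement~\ref{dec} with the relation ${\it dec}(p)\mathrel{\rrel}{\it dec}(q)\Leftrightarrow p\bis[\Delta]{rb}q$, the remaining requirements being already established generically. Your added remarks on matching initial $\iota$-moves before decoding and on the absence of $O$-labelled transitions in $\K$ are consistent with what the paper leaves implicit.
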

Exactly as above, this yields the following congruence theorem.

\begin{theorem}\label{thm:Congruence6}
Let $P$ be a complete standard TSS in rooted stability-respecting branching bisimulation
format. Then $\bis[\Delta]{rb}$ is a congruence for $P$.
\hfill $\Box$
\end{theorem}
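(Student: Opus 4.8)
The plan is to deduce Theorem~\ref{thm:Congruence6} from the preceding Corollary in exactly the way Theorem~\ref{thm:Congruence5} is deduced from its Corollary, and in the way the paragraph following Theorem~\ref{thm:Congruence3} handles the passage from the decent ntyft format to the full ready simulation format. The Corollary already asserts that rooted $\mathfrak{F}$ --- the rooted stability-respecting branching bisimulation format intersected with the decent ntyft format --- is a congruence format for $\bis[\Delta]{rb}$; so the only gap to close is that a TSS in the rooted stability-respecting branching bisimulation format need not be in decent ntyft format.

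First I would take a complete standard TSS $P$ in rooted stability-respecting branching bisimulation format; in particular $P$ is in ready simulation format. As recalled in Sect.~\ref{sec:ruloids}, following \cite{GV92}, $P$ can be converted to a TSS $P'$ in decent ntyft format by instantiating free variables with arbitrary closed terms and replacing a variable source $x$ by $f(x_1,\dots,x_{\ar(f)})$ for each $f\in\Sigma$, with the $x_i$ fresh. By \cite{BFvG04} this conversion preserves the set of $ws$-provable closed literals, so $\bis[\Delta]{rb}$ is a congruence for $P$ if and only if it is for $P'$. Next I would check --- by straightforward inspection of the conversion --- that the rooted stability-respecting branching bisimulation safety conditions of Def.~\ref{def:rooted_bra_bisimulation_safe} are preserved: the predicates $\aleph$ and $\Lambda$ carry over, $\aL$-patience is maintained, and instantiating free variables and expanding variable sources only introduces occurrences compatible with conditions~\ref{rhs}--\ref{main}. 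Thus $P'$ is a complete standard TSS in rooted $\mathfrak{F}$-format, and the Corollary yields that $\bis[\Delta]{rb}$ is a congruence for $P'$, hence for $P$.

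For completeness I would also recall why the Corollary holds: it is Theorem~\ref{thm:Main} instantiated with $\sim := \bis[\Delta]{rb}$ and $\approx := \bis[s]{rb}$. Requirement~\ref{Completeness} is Cor.~\ref{cor:AFO complete}; requirement~\ref{Preserved under AFO} was checked at the start of Sect.~\ref{sec:Apply}; requirement~\ref{Enc} follows since with the chosen $O$ and $\zeta$ (naming $\bis[\Delta]{b}$-equivalence classes) $\bis[\Delta]{b}$ agrees on $\G$ and $\H$, whence by Def.~\ref{def:rbb} so does $\bis[\Delta]{rb}$, and then Cor.~\ref{cor:req3} applies; requirement~\ref{Coincide} follows from the coincidence of $\bis[\Delta]{b}$ and $\bis[s]{b}$ on $\AFO(P)$ again via Def.~\ref{def:rbb}; requirement~\ref{dec} is witnessed by the relation $\rrel$ with ${\it dec}(p)\mathrel{\rrel}{\it dec}(q)\Leftrightarrow p\bis[\Delta]{rb}q$, which one checks (using Def.~\ref{def:rbb} and Prop.~\ref{prop:D complete}) to be a rooted divergence-preserving branching bisimulation; and requirement~\ref{Enc-dec} is the last proposition of Sect.~\ref{sec:Abstraction-free}. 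The genuinely hard step in this whole chain is requirement~\ref{Coincide} --- that $\bis[s]{b}$ on $\AFO(P)$ already satisfies condition (D$'$), which rests on abstraction-freeness of $\AFO(P)$, the oracle transitions naming $\bis[\Delta]{b}$-classes, the stuttering lemma (Lem.~\ref{lem:Stuttering}), and the fact that $\bis[s]{b}$ is a congruence for $\AFO(P)$ --- but for the rooted divergence-preserving case this is inherited from the non-rooted case treated in Sect.~\ref{sec:Divergence-preserving bb}, so for this particular theorem the only remaining work is the routine verification that the ntyft conversion preserves the rooted format.
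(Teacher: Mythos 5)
Your proposal is correct and takes essentially the same route as the paper: Thm.~\ref{thm:Congruence6} is obtained from the corollary that rooted $\mathfrak{F}$ is a congruence format for $\bis[\Delta]{rb}$ (itself the instance of Thm.~\ref{thm:Main} with $\sim\,=\,\bis[\Delta]{rb}$ and $\approx\,=\,\bis[s]{rb}$, verified exactly as you list, with requirement~\ref{Coincide} inherited from the unrooted case via Def.~\ref{def:rbb}), combined with the conversion of a ready-simulation-format TSS to decent ntyft format, which preserves the $ws$-provable closed literals and the rooted stability-respecting branching bisimulation format. No gaps.
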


\section{Related work}
\label{sec:related}

Ulidowski \cite{Uli92,UP02,UY00} proposed congruence formats, inside GSOS \cite{BIM95}, for weak
semantics that take into account non-divergence, called \emph{convergence} in \cite{vGl93}.
In \cite{Uli92} he introduces the \emph{ISOS} format, and shows that the weak convergent
\emph{refusal simulation} preorder is a precongruence for all TSSs in the ISOS format.
The GSOS format---in our terminology the \emph{decent nxyft} format---allows only decent ntyft
rules with variables as the left-hand sides of premises. The ISOS format is contained in the
intersection of the GSOS format and our stability-preserving branching bisimulation format.
Its additional restriction is that no variable may occur multiple times as the left-hand side of a
positive premise, or both as the left-hand side of a positive premise and in the conclusion of a rule.

In \cite{UP02,UY00} he employs \emph{Ordered SOS} (OSOS) TSSs \cite{MPRU09}.
An OSOS TSS allows no negative premises, but includes
priorities between rules: $r<r'$ means that $r$ can only be applied if $r'$ cannot.
An OSOS specification can be seen as, or translated into, a GSOS specification with negative premises.
Each rule $r$ with exactly one higher-priority rule $r'>r$
is replaced by a number of rules, one for each (positive) premise of $r'$; in the copy of $r$, this premise is negated.
For a rule $r$ with multiple higher-priority rules $r'$, this replacement is carried out for each such $r'$.

The {\tt ebo} and {\tt bbo} formats from \cite{UP02} target unrooted convergent \emph{delay} and branching bisimulation equivalence, respectively.
The {\tt bbo} format is more liberal than the {\tt ebo} format, which in turn is more liberal than the ISOS format.
The {\tt rebo} and {\tt rbbo} formats from \cite{UY00} target rooted convergent delay and branching bisimulation equivalence, respectively.
These rooted formats are more liberal than their unrooted counterparts, and the {\tt rbbo} format is more liberal than the {\tt rebo} format.

If patience rules are not allowed to have a lower priority than other rules,
then the {\tt (r)bbo} format, upon translation from OSOS to GSOS, can be seen as
a subformat of our (rooted) stability-respecting branching bisimulation format.
The basic idea is that in the {\tt rbbo} format all arguments of so-called $\tau$-preserving function
symbols \cite{UY00}, which are the only ones allowed to occur in targets, are declared $\Lambda$-liquid;
in the {\tt bbo} format, all arguments of function symbols are declared $\Lambda$-liquid.
Moreover, all arguments of function symbols that occur as the left-hand side of a positive premise are declared $\aleph$-liquid.
Patience rules are in the {\tt (r)bbo} format however, under strict conditions, allowed to be dominated by other rules, which in our setting gives rise
to patience rules with negative premises. This is outside the realm of our rooted stability-respecting branching bisimulation format.
On the other hand, the TSSs of the process algebra BPA$_{\epsilon\delta\tau}$, the binary Kleene star and deadlock testing (see \cite{FvG16,FvG17}),
for which rooted convergent branching bisimulation equivalence is a congruence, are outside the {\tt rbbo} format but within
the rooted stability-respecting branching bisimulation format.

\section{Conclusions}

We showed how the method from \cite{FvGdW12} for deriving congruence formats through modal decomposition can be applied to weak semantics that are stability-respecting.
We used (rooted and unrooted) stability-respecting branching bisimulation equivalence as a notable example.
Moreover, we developed a general method for lifting congruence formats from a weak semantics to a finer semantics, and used it to show that congruence formats 
for stability-respecting branching bisimulation equivalence are also congruence formats for their divergence-preserving counterparts.
This research provides a deeper insight into the link between modal logic and congruence formats,
and strengthens the framework from \cite{FvGdW12} for the derivation
of congruence formats for weak semantics.

Almost every weak semantics has stability-respecting and
divergence-preserving variants. Such variants have been studied most
widely in the literature for branching, $\eta$-, delay and weak bisimulation equivalence, but
they have for instance also been considered for decorated trace
semantics, such as \emph{exhibited behaviour equivalence} 
\cite{Pm86}, the \emph{generalised failure preorder} \cite{Lk89},
\emph{refusal equivalence} \cite{Ph87}, and \emph{copy+refusal
  equivalence} \cite{Uli92}.
We expect that the methods developed in this paper, combined with the
methods from \cite{FvGdW12,FvG17},  can serve as a cornerstone in the
generation of congruence formats for such semantics. In particular, we conjecture
that this will straightforwardly yield congruence formats for
stability-respecting and divergence-preserving variants of $\eta$-,
delay and weak bisimulation equivalence.

Admittedly, the whole story is quite technical and intricate. Partly this is because we build on a rich body of earlier work
in the realm of structural operational semantics: the notions of well-supported proofs and complete TSSs from \cite{vGl04};
the ntyft/ntyxt format from \cite{Gro93,BolG96}; the transformation to ruloids, which for the main part goes
back to \cite{FvG96}; and the work on modal decomposition and congruence formats from \cite{BFvG04}.
In spite of these technicalities, we have arrived at a relatively simple framework for the derivation of congruence
formats for weak semantics. Namely, for this one only needs to: (1) provide a modal characterisation of the weak semantics under
consideration; (2) study the class of modal formulas that result from decomposing this modal characterisation, and formulate
syntactic restrictions on TSSs to bring this class of modal formulas within the original modal characterisation; and (3) check
that these syntactic restrictions are preserved under the transformation to ruloids.
Steps (2) and (3) are very similar in structure for different weak semantics, as exemplified by the way we obtained a congruence format
for stability-respecting branching bisimulation equivalence.
And the resulting congruence formats tend to be more liberal and elegant than existing congruence formats in the literature.

Our intention is to carve out congruence formats for all weak semantics in the spectrum from \cite{vGl93} that have reasonable congruence properties.
At first we expected that the current third instalment would allow us to do so. However, it turns out that convergent weak semantics as considered
in for instance \cite{UP02,UY00,Wal90} still need extra work. The modal characterisations of these
semantics are three-valued \cite{vGl93}, which requires an extension of the modal decomposition technique to a three-valued setting.

\bibliographystyle{plain}

\newpage

\appendix

\section{Modal characterisations}

We prove Thm.~\ref{thm:characterisation}, which states that $\IO{b}^s$ is a modal
characterisation of $p\bis[s]{b}q$, and $\IO{rb}^s$ of $p\bis[s]{rb}q$.
So we need to prove, given an LTS $(\mathbb{P},{\it Act},\rightarrow)$, that
$p\bis[s]{b}q\Leftrightarrow p\sim_{\IO{b}^s}q$ and
$p\bis[s]{rb}q\Leftrightarrow p\sim_{\IO{rb}^s}q$ for all $p,q\in\mathbb{P}$.

\begin{proof}
  ($\Rightarrow$) We prove by structural induction on $\phi$, resp.\ $\overline\phi$, that
  $p\bis[s]{b}q \wedge p\models\phi \Rightarrow q\models\phi$ for all $\phi\in\IO{b}^s$,
  and $p\bis[s]{rb}q \wedge p\models\overline\phi \Rightarrow q\models\overline\phi$ for all
  $\overline\phi\in\IO{rb}^s$,
  The converse implications ($q\models\phi \Rightarrow p\models\phi$ and
  $q\models\overline \phi \Rightarrow p\models\overline \phi$) follow by symmetry.
\begin{itemize}
\item
$\phi=\bigwedge_{i\in I}\phi_i$. Then $p\models\phi_i$ for all $i\in I$. By induction
  $q\models\phi_i$ for all $i\in I$, so $q\models\bigwedge_{i\in I}\phi_i$.
\item
$\phi=\neg\phi'$. Then $p\not\models\phi'$. By induction $q\not\models\phi'$, so $q\models\neg\phi'$.
\item
$\phi=\eps\phi_1\diam{\hat\tau}\phi_2$. Then for some $n$ there are $p_0,\ldots,p_n\in\mathbb{P}$ with $p_0=p$, \plat{$p_i\trans{\tau}p_{i+1}$} for $i\in\{0,\ldots,n-1\}$, and $p_n\models\phi_1\diam{\hat\tau}\phi_2$. We apply induction on $n$.
\begin{description}
\item
[$\boldsymbol{n=0}$] Then $p\models\phi_1$, so by induction on formula size, $q\models\phi_1$. Furthermore, either (1) $p\models\phi_2$ or (2) there is a $p'\in\mathbb{P}$ with \plat{$p\trans{\tau}p'$} and $p'\models\phi_2$. In case (1), by induction on formula size, $q\models\phi_2$, so $q\models\eps\phi_1\diam{\hat\tau}\phi_2$. In case (2), since $p\bis[s]{b}q$, by Def.~\ref{def:bb} either (2.1) $p'\bis[s]{b}q$ or (2.2) \plat{$q\epsarrow q'\trans{\tau}q''$} with $p\bis[s]{b}q'$ and $p'\bis[s]{b}q''$. In case (2.1), by induction on formula size, $q\models\phi_2$. In case (2.2), by induction on formula size, $q'\models\phi_1$ and $q''\models\phi_2$. In both cases, $q\models\eps\phi_1\diam{\hat\tau}\phi_2$.
\item
[$\boldsymbol{n>0}$] Since \plat{$p\trans{\tau}p_1$}, and $p\bis[s]{b}q$, according to Def.~\ref{def:bb} there are two possibilities.
\begin{enumerate}
\item
Either $p_1\bis[s]{b}q$. Since $p_1\models\eps\phi_1\diam{\hat\tau}\phi_2$, by induction on $n$, $q\models\eps\phi_1\diam{\hat\tau}\phi_2$.
\item
Or \plat{$q\epsarrow q'\trans{\tau}q''$} with $p_1\bis[s]{b}q''$. Since $p_1\models\eps\phi_1\diam{\hat\tau}\phi_2$, by induction on $n$, $q''\models\eps\phi_1\diam{\hat\tau}\phi_2$. Hence $q\models\eps\phi_1\diam{\hat\tau}\phi_2$.
\end{enumerate}
\end{description}
\item
$\phi=\eps\phi_1\diam{a}\phi_2$. Then for some $n$ there are $p_0,\ldots,p_n\in\mathbb{P}$ with $p_0=p$, \plat{$p_i\trans{\tau}p_{i+1}$} for $i\in\{0,\ldots,n-1\}$, and $p_n\models\phi_1\diam{a}\phi_2$. We apply induction on $n$.
\begin{description}
\item
[$\boldsymbol{n=0}$] Then $p\models\phi_1$, and there is a $p'\in\mathbb{P}$ with \plat{$p\trans{a}p'$} and $p'\models\phi_2$. Since $p\bis[s]{b}q$, by Def.~\ref{def:bb} \plat{$q\epsarrow q'\trans{a}q''$} with $p\bis[s]{b}q'$ and $p'\bis[s]{b}q''$. By induction on formula size, $q'\models\phi_1$ and $q''\models\phi_2$. Hence $q\models\eps\phi_1\diam{a}\phi_2$.
\item
[$\boldsymbol{n>0}$] This case goes exactly as the case $n>0$ above.
\end{description}
\item
$\phi=\eps(\neg\diam{\tau}\top\land\,\overline\phi)$ with $\overline\phi\in\IO{rb}^s$.
Then for some $n$ there are $p_0,\ldots,p_n\in\mathbb{P}$ with $p_0=p$, \plat{$p_i\trans{\tau}p_{i+1}$} for $i\in\{0,\ldots,n-1\}$, and $p_n\models\neg\diam{\tau}\top\land\,\overline\phi$. We apply induction on $n$.
\begin{description}
\item
[$\boldsymbol{n=0}$] Then $p\models\overline\phi$ and \plat{$p\ntrans{\tau}$}.
Since $p$ and $q$ are related by a \emph{stability-respecting} branching bisimulation,
there is a $q'$ with $q \epsarrow q'\ntrans\tau$ and $p\bis[s]{b}q'$.
Because $p$ and $q'$ are both stable, $p\bis[s]{b}q'$ implies $p\bis[s]{rb}q'$.
So by induction  on formula size, $q'\models\overline\phi$. Hence $q \models
\eps(\neg\diam{\tau}\top\land\,\overline\phi)$.
\item
[$\boldsymbol{n>0}$] This case goes exactly as the case $n>0$ above.
\end{description}
\item
$\overline\phi=\bigwedge_{i\in I}\overline\phi_i$. Then $p\models\overline\phi_i$ for all $i\in I$. By induction
  $q\models\overline\phi_i$ for all $i\in I$, so $q\models\bigwedge_{i\in I}\overline\phi_i$.
\item
$\overline\phi=\neg\overline\phi'$. Then $p\not\models\overline\phi'$. By induction
  $q\not\models\overline\phi'$, so $q\models\neg\overline\phi'$.
\item
  $\overline\phi=\diam\alpha\phi$ with $\phi\in\IO{b}^s$.
  Then $p \trans\alpha p'$ for some $p'$ with $p \models\phi$.
  By Def.~\ref{def:rbb}, $q \trans\alpha q'$ for some $q'$ with $p'\bis[s]{b}q'$.
  So by induction $q' \models\phi$, and hence $q\models\diam\alpha\phi$.
\item
  $\overline\phi=\phi \in \IO{b}^s$. Since $p\bis[s]{rb}q$ implies $p\bis[s]{b}q$, we obtain
  $q\models \phi$ by the cases treated above.
\end{itemize}

\vspace{2mm}
\noindent
($\Leftarrow$)
We first prove that $\sim_{\IO{b}^s}$ is a branching bisimulation. The relation is clearly symmetric. Let $p\sim_{\IO{b}^s}q$. Suppose \plat{$p\trans{\alpha}p'$}. If $\alpha=\tau$ and $p'\sim_{\IO{b}^s}q$, then the first condition of Def.~\ref{def:bb} is fulfilled. So we can assume that either (i) $\alpha\neq\tau$ or (ii) $p'\not\sim_{\IO{b}^s}q$. We define two sets:
\[
\begin{array}{lcl}
Q' &=& \{q'\in\mathbb{P}\mid q\epsarrow q'\land p\not\sim_{\IO{b}^s}q'\} \\
Q'' &=& \{q''\in\mathbb{P}\mid\exists q'\in\mathbb{P}:q\epsarrow q'\trans\alpha q''\land p'\not\sim_{\IO{b}^s}q''\}
\end{array}
\]
For each $q'\in Q'$, let $\phi_{q'}$ be a formula in $\IO{b}^s$ such that $p\models\phi_{q'}$ and $q'\not\models\phi_{q'}$. (Such a formula always exists because $\IO{b}^s$ is closed under negation $\neg$.) We define
\[
\phi=\bigwedge_{q'\in Q'}\phi_{q'}
\]
Similarly, for each $q''\in Q''$, let $\psi_{q''}$ be a formula in $\IO{b}^s$ such that $p'\models\psi_{q''}$ and $q''\not\models\psi_{q''}$. We define
\[
\psi=\bigwedge_{q''\in Q''}\psi_{q''}
\]
Clearly, $\phi,\psi\in\IO{b}^s$, $p\models\phi$ and $p'\models\psi$. We distinguish two cases.
\begin{enumerate}
\item
$\alpha\neq\tau$. Since $p\models\eps\phi\diam\alpha\psi\in\!\IO{b}^s$ and $p\!\sim_{\IO{b}^s}\! q$, also $q\models\eps\phi\diam\alpha\psi$. Hence \plat{$q\epsarrow q'\trans\alpha q''$} with $q'\models\phi$ and $q''\models\psi$. By the definition of $\phi$ and $\psi$ it follows that $p\sim_{\IO{b}^s}q'$ and $p'\sim_{\IO{b}^s}q''$.
\item
$\alpha=\tau$ and $p'\not\sim_{\IO{b}^s}q$. Let $\tilde\phi\in\IO{b}^s$ such that $p'\models\tilde\phi$ and $p,q\not\models\tilde\phi$. Since $p\models\eps\phi\diam{\hat\tau}(\tilde\phi\land\psi)\in\IO{b}^s$ and $p\sim_{\IO{b}^s}q$, also $q\models\eps\phi\diam{\hat\tau}(\tilde\phi\land\psi)$. So \plat{$q\epsarrow q'$} with $q'\models\phi\diam{\hat\tau}(\tilde\phi\land\psi)$. By definition of $\phi$ it follows that $p\sim_{\IO{b}^s}q'$. Thus $q'\not\models\tilde\phi$, so $q'\trans\tau q''$ with $q''\models\tilde\phi\land\psi$. By the definition of $\psi$ it follows that $p'\sim_{\IO{b}^s}q''$.
\end{enumerate}
Both cases imply that the first condition of Def.~\ref{def:bb} is fulfilled, i.e.\
that $\sim_{\IO{b}^s}$ is a branching bisimulation.
Next we show that $\sim_{\IO{b}^s}$ is stability-respecting.
Let $p\sim_{\IO{b}^s}q$ and $p\ntrans\tau$. Define $Q'$ and $\phi\in\IO{b}^s\subseteq\IO{rb}^s$ as above.
Then $p\models \eps(\neg\diam{\tau}\top\land\,\phi)$, and thus also 
$q\models \eps(\neg\diam{\tau}\top\land\,\phi)$.
Hence \plat{$q\epsarrow q'$} for some $q'$ with $q'\models\neg\diam{\tau}\top\land\,\phi$.
So $q\ntrans\tau$ and by the definition of $\phi$ it follows that $p\sim_{\IO{b}^s}q'$,
which had to be shown.

Finally we show that $\sim_{\IO{rb}^s}$ is a rooted stability-respecting branching bisimulation.
Let $p\sim_{\IO{rb}^s}q$ and \plat{$p\trans\alpha p'$}.
Let \plat{$Q_\alpha = \{q'\in\mathbb{P}\mid q\trans\alpha q'\land p'\not\sim_{\IO{b}^s}q'\}$}.
For each $q'\in Q'$, let $\chi_{q'}$ be a formula in $\IO{b}^s$ such that $p'\models\chi_{q'}$ and
$q'\not\models\chi_{q'}$. We define $\chi=\bigwedge_{q'\in Q'}\chi_{q'}$.
Since $p \models \diam\alpha\chi$ we have $q \models \diam\alpha\chi$,
so \plat{$q \trans\alpha q'$} for some $q'$ with $q'\models \chi$.
By the definition of $\chi$ it follows that $p'\sim_{\IO{b}^s}q'$,
which had to be shown.
\end{proof}

\end{document}